%
\documentclass[10pt,twoside]{article} 
\usepackage{latexsym}
\usepackage{lscape}
\usepackage{epstopdf,caption,subcaption,graphicx,xcolor}
\usepackage{tikz}
\usepackage{pgfgantt}
\usepackage{hyperref}

\usetikzlibrary{shapes,arrows,fit,calc,positioning}
\usetikzlibrary{decorations.pathreplacing}

\setcounter{totalnumber}{50}
\setcounter{topnumber}{50}
\setcounter{bottomnumber}{50}
\floatsep=-10pt
\intextsep=2pt
 \setlength{\evensidemargin}{+0.00in}   
 \setlength{\oddsidemargin} {+0.00in}
 \setlength{\textwidth}     {+6.50in}
 \setlength{\topmargin}     {-0.50in}   
 \setlength{\textheight}    {+9.00in}
\normalsize
\bibliographystyle{plainurl}
%
%
%
%
\newtheorem{theorem}{Theorem}
\newtheorem{lemma}{Lemma}

\newtheorem{fact}{Fact}

\newtheorem{remark}{Remark}
\newtheorem{notation}{Notation}
\newtheorem{invariant}{Invariant}
\newtheorem{operation}{Operation}
\newtheorem{definition}{Definition}

\newenvironment{proof}{{\sc Proof. }}{\hfill$\Box$\vspace{0.1in}}
%
%

\def\mcF{\mathcal{F}}
\def\mcG{\mathcal{G}}
\def\mcK{\mathcal{K}}
\def\mcM{\mathcal{M}}
\def\mcP{\mathcal{P}}
\pagestyle{myheadings}
\thispagestyle{plain}
%
\markboth
{\sc M. Gong et al. /v:\today}
{\sc Covering vertices by $5^+$-paths (v: \today)}
%
\title{Approximately covering vertices by order-$5$ or longer paths\thanks{An extended abstract appears in {\em Proceedings of COCOON 2024}.}}
\author{
Mingyang~Gong\thanks{Department of Computing Science, University of Alberta.  Edmonton, Alberta T6G 2E8, Canada.
	Emails: \texttt{\{mgong4, guohui\}@ualberta.ca}}
\and
Zhi-Zhong~Chen\thanks{Division of Information System Design, Tokyo Denki University. Saitama 350-0394, Japan.
  Email: \texttt{zzchen@mail.dendai.ac.jp}}
	\thanks{Correspondence authors.}
\and
Guohui~Lin$^*$$^\S$
\and
Lusheng~Wang\thanks{Department of Computer Science, City University of Hong Kong. Hong Kong SAR, China.
  Email: \texttt{cswangl@city.edu.hk}}
}
\date{}
\begin{document}
\maketitle
\begin{abstract}
This paper studies $MPC^{5+}_v$, which is to cover as many vertices as possible 
in a given graph $G=(V,E)$ by vertex-disjoint $5^+$-paths (i.e., paths each with at least five vertices). 
$MPC^{5+}_v$ is NP-hard and admits an existing local-search-based approximation algorithm which 
achieves a ratio of $\frac {19}7\approx 2.714$ and runs in $O(|V|^6)$ time. 
In this paper, we present a new approximation algorithm for $MPC^{5+}_v$ which achieves a ratio of $2.511$ and runs in $O(|V|^{2.5} |E|^2)$ time.
Unlike the previous algorithm, the new algorithm is based on maximum matching, maximum path-cycle cover, and recursion.

\paragraph{Keywords:}
Path cover; long path; maximum matching; recursion; approximation algorithm 
\end{abstract}

\section*{Acknowledgments}
This research is supported by the NSERC Canada,
the Grant-in-Aid for Scientific Research of the Ministry of Education, Science, Sports and Culture of Japan (Grant No. 18K11183),
the National Science Foundation of China (Grant No. 61972329),
the GRF grants for Hong Kong Special Administrative Region, China (CityU 11206120, CityU 11210119, CityU 11218821),
a grant from City University of Hong Kong (CityU 11214522),
and the Ministry of Science and Technology of China (G2022040004L, G2023016016L).

\newpage 

\section{Introduction}\label{sec:intro}
Throughout this paper, a graph always means an undirected graph without parallel edges or self-loops,
and an approximation algorithm always means one running in time polynomial in the input size.

Let $k$ be a positive integer, and $G$ be a graph. 
A {\em $k$-path} (respectively, {\em $k^+$-path}) in $G$ is a path in $G$ having exactly (respectively, at least) $k$ vertices. 
$MPC^{k+}_v$ is the problem of finding a collection of vertex-disjoint $k^+$-paths so that the total number of vertices in these paths is maximized. 
Clearly, $MPC^{1+}_v$ is trivially solvable by choosing $|V|$ vertices of $G$,
while $MPC^{2+}_v$ is equivalent to finding a path cover of $G$ that minimizes the number of $1$-paths and hence can be solved in $O(|V||E|)$ time~\cite{CGL19a}.
$MPC^{3+}_v$ has been extensively studied before, too (see~\cite{HHS06} and the references therein).
Indeed, it is equivalent to the $2$-piece packing problem~\cite{HHS06} which aims to cover the maximum number of vertices by vertex-disjoint $2$-pieces,  
where a $2$-piece of $G$ is a connected subgraph of $G$ such that the degree of each vertex is at most $2$ and 
at least one of them has degree exactly $2$.
Since the 2-piece packing problem can be solved in polynomial time, so is $MPC^{3+}_v$.
Unfortunately, $MPC^{k+}_v$ is NP-hard for every $k\ge 4$ \cite{KLM22,KLM23}.
For other problems related to $MPC^{k+}_v$, the reader is referred to \cite{BK06,AN07,PH08,AN10,RTM14,CCC18,GW20,CCL22} for more details.

The NP-hardness of $MPC^{k+}_v$ for $k \ge 4$ has motivated researchers to design approximation algorithms for it. 
As observed in \cite{GFL22,GEF24,GCL23}, there is a trivial reduction from $MPC^{k+}_v$ to the maximum weighted $(2k-1)$-set packing problem, 
and this reduction together with the best-known approximation algorithm for the latter problem yields an approximation 
algorithm for $MPC^{k+}_v$ achieving a ratio of $k$. 
Gong {\em et al.} \cite{GFL22,GEF24} has improved the ratio to $0.4394k + 0.6576$; their algorithm runs in $O(|V|^{k+1})$ time.

Since $MPC^{4+}_v$ is the simplest NP-hard case among $MPC^{k+}_v$ for various values of $k$, 
several approximation algorithms for $MPC^{4+}_v$ have been designed \cite{KLM22,KLM23,GFL22,GEF24,GCL23}. 
Kobayashi {\em et al.} \cite{KLM22,KLM23} design an approximation algorithm for $MPC^{4+}_v$ achieving a ratio of~$4$.
Afterwards, Gong {\em et al.}  \cite{GFL22,GEF24} present an approximation algorithm for $MPC^{4+}_v$ achieving a ratio of~$2$; 
their algorithm runs in $O(|V|^8)$ time and is based on time-consuming local search.  
As an open question, Gong {\em et al.} \cite{GFL22,GEF24} ask whether one can design better approximation algorithms for $MPC^{4+}_v$ by completely different approaches. 
Their question has been answered in the affirmative recently in \cite{GCL23}. 
In more details, Gong {\em et al.} \cite{GCL23} design an approximation algorithm for $MPC^{4+}_v$ which achieves a ratio of~$1.874$ and 
runs in $O(\min\{|V|^2 |E|^2, |V|^5\})$ time. 
Unlike the previously known algorithms,
their algorithm starts with a maximum matching $M$ of the input graph $G$ and then tries to connect a large portion of the edges in $M$ into a feasible solution.
If the try fails, then their algorithm reduces the problem to a smaller instance which is solved by a recursive call.

Actually, if we fix $k=5$, Gong {\em et al.}~\cite{GFL22,GEF24}'s approximation algorithm for $MPC^{k+}_v$ achieves a ratio of $\frac{19}{7}\approx 2.714$ and runs in $O(|V|^6)$ time. 
Other than this implied one for $MPC^{5+}_v$, no specific approximation algorithm for $MPC^{5+}_v$ has been previously designed. 
As an open question, Gong {\em et al.}~\cite{GFL22,GEF24} ask
whether the local search ideas in their $2$-approximation algorithm for $MPC^{4+}_v$ can be extended to $MPC^{k+}_v$ for $k\ge 5$.
Unfortunately, to the best of our efforts, it seems very difficult to extend the design and analysis.

In this paper, we focus on $MPC^{5+}_v$ and show that the ideas in the $1.874$-approximation algorithm for $MPC^{4+}_v$~\cite{GCL23}
can be nontrivially extended to obtain a $2.511$-approximation algorithm for $MPC^{5+}_v$.
To see the main differences between the two algorithms, we here sketch the former algorithm. 
Roughly speaking, the $1.874$-approximation algorithm has four stages. 
In the first stage, it computes a maximum matching $M$ in the input graph $G$. 
The intuition behind this idea is that the paths in an optimal solution for $G$ can cover at most $\frac 52 |M|$ vertices. 
So, it suffices to find a solution for $G$ of which the paths cover a large fraction of the endpoints of the edges in $M$. 
Thus, in the second stage, the algorithm uses certain edges of $G$ to connect the edges of $M$ into connected components 
in which the longest paths are $5$-paths.
In the third stage, it uses certain edges of $G$ to connect as many {\em bad} components (each of which contains no $4^+$-path) to other components as possible. 
In the last stage, it tries to use only the edges in the finally-obtained components to compute a set $\mcP$ of vertex-disjoint $4^+$-paths. 
If the total number of vertices in the paths in $\mcP$ is large enough, the algorithm outputs $\mcP$ as the solution; 
otherwise, it makes a recursive call on a smaller graph and uses the returned solution to construct a solution for the original graph $G$.

Our new algorithm for $MPC^{5+}_v$ has four similar stages but only the first stage is the same while the other three are significantly different. 
In particular, in the second stage,
the new algorithm tries to connect the edges of $M$ into as many $5$- or $4$-paths as possible by finding {\em augmenting triples and pairs} (cf. Definitions~\ref{def01}--\ref{def03}).
In order to find an augmenting pair, we may need to modify the existing $5$-paths and count how many $4$-paths can be formed in the mean time.
Such a process does not appear in the algorithm for $MPC^{4+}_v$ at all.
In the third stage, it has to deal with new types of bad components, which are now defined as those components containing $4$-paths but no $5$-paths.
In the algorithm for $MPC^{4+}_v$, a bad component is one contains only one edge of $M$,
while in the new algorithm a bad component may contain two edges of $M$ and hence needs to be handled more carefully in order.
Indeed, because of the new types of bad components, the analysis of our new algorithm appears much more complex.

The remainder of the paper is organized as follows.
Section~2 gives basic definitions.
Section~3 presents the algorithm for $MPC^{5+}_v$ and its performance analysis.
Section~4 concludes the paper.

\section{Basic Definitions}\label{sec:def}
In the remainder of this paper, we fix an instance $G$ of $MPC^{5+}_v$ for discussion.
Let $V(G)$ and $E(G)$ denote the vertex and the edge sets of $G$, respectively, and further let $n = |V(G)|$ and $m = |E(G)|$.

For a subset $F$ of $E(G)$, we use $V(F)$ to denote the endpoints of edges in $F$.
A {\em spanning subgraph} of $G$ is a subgraph $H$ of $G$ with $V(H)=V(G)$. 
For a set $F$ of edges in $G$, $G - F$ denotes the spanning subgraph $(V(G),E(G)\setminus F)$.  
In contrast, for a set $F$ of edges with $V(F)\subseteq V(G)$ and $F\cap E(G) = \emptyset$, 
$G + F$ denotes the graph $(V(G),E(G)\cup F)$.
The {\em degree} of a vertex $v$ in $G$, denoted by $d_G(v)$, is the number of edges incident to $v$ in $G$. 
A vertex $v$ of $G$ is {\em isolated} in $G$ if $d_G(v) = 0$. 
The {\em subgraph induced by} a subset $U$ of $V(G)$, denoted by $G[U]$, is the graph $(U, E_U)$, where $E_U=\{\{u,v\}\in E(G) \mid u, v\in U\}$.

A {\em cycle} in $G$ is a connected subgraph of $G$ in which each vertex is of degree~2. 
A {\em path} in $G$ is either a single vertex of $G$ or a connected subgraph of $G$ 
in which exactly two vertices (called the {\em endpoints}) are of degree~$1$ and 
the others (called the {\em internal vertices}) are of degree~$2$. 
A {\em path component} of $G$ is a connected component of $G$ that is a path. 
If a path component is an edge, then it is called an {\em edge component}. 
The {\em order} of a cycle or path $C$, denoted by $|C|$, is the number of vertices in $C$. 
A {\em triangle} of $G$ is a cycle of order~$3$ in $G$. 
A {\em $k$-path} of $G$ is a path of order $k$ in $G$, while a {\em $k^+$-path} of $G$ is a path of order $k$ or more in $G$. 
A {\em matching} of $G$ is a (possibly empty) set of edges of $G$ in which no two edges share an endpoint. 
A {\em maximum matching} of $G$ is a matching of $G$ whose size is maximized over all matchings of $G$. 
A {\em path-cycle cover} of $G$ is a set $F$ of edges in $G$ such that 
in the spanning subgraph $(V(G), F)$, the degree of each vertex is at most~$2$.
A {\em star} (respectively, {\em bi-star}) is a connected graph in which exactly one  vertex is 
(respectively, two vertices are) of degree $\ge 2$ and each of the remaining vertices is of degree~$1$.
The vertices of degree~1 are the {\em satellites} of the star or bi-star, while each other vertex is 
a {\em center} of the star or bi-star.
Clearly, a $3$-path is a star and a $4$-path is a bi-star.

\begin{notation}
\label{nota01}
For a graph $G$, 
\begin{itemize}
\parskip=0pt
\item
	$OPT(G)$ denotes an optimal solution for the instance graph $G$ of $MPC^{5+}_v$,
	and $opt(G)$ denotes the total number of vertices in $OPT(G)$;
\item
	$ALG(G)$ denotes the solution for $G$ outputted by a specific algorithm,
	and $alg(G)$ denotes the total number of vertices in $ALG(G)$.
\end{itemize}
\end{notation}

\section{The Algorithm}\label{sec:1stAlgo}
Our algorithm for $MPC^{5+}_v$ consists of multiple phases.
In the first phase, it computes a maximum matching $M$ in $G$ in $O(\sqrt{n} m)$ time \cite{MV80}, initializes a subgraph $H = (V(M), M)$, 
and then repeatedly modifies $H$ and $M$ as described in Section~\ref{subsec:k=5}.
The next lemma shows that $|V(M)|$ is relatively large compared to $opt(G)$.

\begin{lemma}
\label{lemma01}
$|V(M)| \ge \frac 45 opt(G)$. 
\end{lemma}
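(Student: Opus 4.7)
The plan is to extract a matching from $OPT(G)$ itself and then invoke maximality of $M$. For each path $P$ in $OPT(G)$, write $|P|=k\ge 5$, and take a maximum matching $M_P$ of $P$, which has size $\lfloor k/2\rfloor$ and therefore covers $2\lfloor k/2\rfloor$ vertices of $P$.

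The key arithmetic step is to verify that $2\lfloor k/2\rfloor \ge \tfrac{4}{5}k$ for every integer $k\ge 5$. For even $k$ this is $k\ge \tfrac{4}{5}k$, which is trivial. For odd $k$ this is $k-1\ge \tfrac{4}{5}k$, equivalently $k\ge 5$, which holds by hypothesis. (The bound is tight exactly at $k=5$, which is the worst case and gives the $\tfrac{4}{5}$ factor in the lemma.) So $|V(M_P)|\ge \tfrac{4}{5}|P|$ for every path $P$ of $OPT(G)$.

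Since the paths of $OPT(G)$ are pairwise vertex-disjoint, the union $M' = \bigcup_{P\in OPT(G)} M_P$ is a matching of $G$, and
\[
|V(M')| \;=\; \sum_{P\in OPT(G)} |V(M_P)| \;\ge\; \sum_{P\in OPT(G)} \tfrac{4}{5}|P| \;=\; \tfrac{4}{5}\,opt(G).
\]
Because $M$ is a maximum matching of $G$, we have $|M|\ge |M'|$, and hence $|V(M)| = 2|M|\ge 2|M'| = |V(M')| \ge \tfrac{4}{5}\,opt(G)$, which is the desired inequality.

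There is no real obstacle here; the only point that needs care is making sure the constant $\tfrac{4}{5}$ is chosen to match the worst case $k=5$, and confirming that extracting $M_P$ from each optimal path is legitimate because the optimal paths are vertex-disjoint so the extracted edge sets do not interact.
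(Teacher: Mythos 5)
Your argument is correct and is essentially the same as the paper's: both extract the odd-indexed edges (equivalently, a maximum matching) from each optimal path, observe that this covers at least $\tfrac{4}{5}$ of the path's vertices because each path has order at least $5$, union these into a matching, and invoke maximality of $M$. The only cosmetic difference is that you spell out the step $|V(M)|=2|M|\ge 2|M'|=|V(M')|$, which the paper leaves implicit.
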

\begin{proof}
Consider an arbitrary $(\ell+1)$-path $P$ in $OPT(G)$.
Let $e_1$, \ldots, $e_\ell$ be the edges of $P$ and suppose that they appear in $P$ in this order from one endpoint to the other.
Obviously, $P_o = \{e_i \mid i \mbox{ is odd}\}$ is a matching. 
If $\ell$ is odd, $V(P) = V(P_o)$; otherwise, exactly one vertex of $P$ is not in $V(P_o)$. 
So, $|V(P_o)| \ge \frac \ell{\ell+1} |V(P)| $ always holds.
Since $\ell + 1 \ge 5$ and thus we have $|V(P_o)| \ge \frac 45 |V(P)| $.
Note that $\cup_{P \in OPT(G)} P_o$ is a matching and $opt(G) = \sum_{P \in OPT(G)} |V(P)|$.
So, $|V(M)| \ge |\cup_{P \in OPT(G)} V(P_o)| \ge \frac 45 opt(G)$. 
\end{proof}

\subsection{Modifying $H$ and $M$}\label{subsec:k=5}
We here describe a process for modifying $H$ and $M$ iteratively. 
The process consists of several steps, during which the following will be an invariant.

\begin{invariant}
\label{inva01}
$M$ is both a maximum matching of $G$ and a subset of $E(H)$. 
Each connected component $K$ of $H$ is an edge, a triangle, a star, a bi-star, or a $5$-path.
Moreover, if $K$ is a $5$-path, then the two edges of $E(K)$ incident to the endpoints of $K$ are in $M$;
if $K$ is a bi-star, then $K$ contains exactly two edges of $M$ and each of them connects a center and one of its satellites;
otherwise, exactly one edge of $K$ is in $M$. 
\end{invariant}

Initially, Invariant~\ref{inva01} clearly holds.
In the sequel, we use $p_5(H)$ to denote the number of $5$-path components in $H$. 
The next definition of {\em augmenting triple} is for modifying $H$ and $M$ to increase $p_5(H)$.

\begin{definition}
\label{def01}
An {\em augmenting triple} with respect to $H$ is a triple $(u_0, e_0 = \{v_0, w_0\}, e_1 = \{v_1, w_1\})$ such that
$u_0 \in V(G) \setminus V(H)$, both $e_0$ and $e_1$ are edge components of $H$, 
and they can be merged into a $5$-path. 
{\em Modifying $H$ and $M$ with the triple} merges the triple into a $5$-path and modifies $M$ to contain the two end-edges of the $5$-path.
\end{definition}

Note that if $G$ has an edge connecting two edge components of $H$, then one could connect the two components into a $4$-path component.
The quantity $q_4(H)$ counts the maximum number of such $4$-paths that can be obtained simultaneously for the current $H$.

\begin{definition}
\label{def02}
We define $q_4(H)$ to be $|\mcM|$, where $\mcM$ is a maximum matching in an auxiliary graph $\mcG$
whose vertices one-to-one correspond to the edge components of $H$ and
whose edge set consists of all $\{N_1, N_2\}$ such that $G$ has an edge between the edge components corresponding to $N_1$ and $N_2$.
\end{definition}

The next definition of {\em augmenting pair} is for modifying $H$ and $M$ so that either a new augmenting triple appears, or $q_4(H)$ increases.

\begin{definition}
\label{def03}
An {\em augmenting pair} with respect to $H$ is a pair of a $5$-path component $K=v_1$-$v_2$-$v_3$-$v_4$-$v_5$ and an edge component $e=\{u,w\}$ of $H$
such that $G$ has an edge $\{x,y\}$ with $x\in\{u,w\}$ and $y\in\{v_1,v_3,v_5\}$ and the following S1 and S2 hold:
\begin{itemize}
\parskip=0pt
\item[S1.] For some $i\in\{1,4\}$, $(v_3, \{u,w\}, \{v_i, v_{i+1}\})$ becomes an augmenting triple with respect to $H'$, where $H'$ is the graph obtained from $H$ by removing $v_3$.
\item[S2.] If we modify $H'$ and $M$ with the triple $(v_3, \{u,w\}, \{v_i, v_{i+1}\})$, then $q_4(H') > q_4(H)$ or a new augmenting triple with respect to $H'$ appears.
\end{itemize}
{\em Modifying $H$ and $M$ with an augmenting pair} is done by first removing $v_3$, then 
modifying $H$ and $M$ with the triple $(v_3, \{u,w\}, \{v_i, v_{i+1}\})$, and further 
modifying $H$ and $M$ with a new augmenting triple specified in S2 if it exists.
\end{definition}

One sees that modifying $H$ and $M$ with an augmenting triple increases $p_5(H)$ but may decrease $q_4(H)$. 
Fortunately, $q_4(H)$ can decrease by at most~$2$ because the modification involves only two edge components of $H$.
Moreover, modifying $H$ and $M$ with an augmenting pair either increases $p_5(H)$, or does not change $p_5(H)$ but then increases $q_4(H)$.
In the former case, $q_4(H)$ may decrease, but can decrease by at most~$3$ because the modification involves at most $3$ edge components of $H$ other than the edges of $K$.
It follows that the total number of modifications with augmenting triples and augmenting pairs is at most $O(n)$,
which are done in the following Step~1.1 for modifying $H$ and $M$.
\begin{description}
\parskip=0pt
\item[Step 1.1] 
	Repeatedly modify $H$ and $M$ with an augmenting triple or an augmenting pair until neither of them exists.
	({\em Comment:} During this step, each connected component of $H$ is a $5$-path or an edge.
	If both an augmenting triple and an augmenting pair exist, then we choose the augmenting triple to modify $H$ and $M$.)
\end{description}

An illustration of a single repetition in Step~1.1 is shown in Figure~\ref{fig01}.

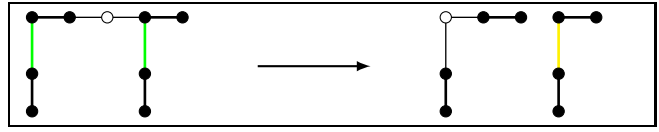
\begin{figure}[thb]
\begin{center}
\framebox{
\begin{minipage}{0.5\textwidth}
\begin{tikzpicture}[scale=0.5,transform shape]

\draw [thick, line width = 1pt] (-8, 2) -- (-7, 2);
\draw [thick, line width = 1pt] (-5, 2) -- (-4, 2);
\draw [thin, line width = 0.5pt] (-7, 2) -- (-6, 2);
\draw [thin, line width = 0.5pt] (-6, 2) -- (-5, 2);
\draw [green, thick, line width = 1pt] (-8, 2) -- (-8,0.5);
\draw [thick, line width = 1pt] (-8, 0.5) -- (-8, -0.5);
\draw [green, thick, line width = 1pt] (-5, 2) -- (-5, 0.5);
\draw [thick, line width = 1pt] (-5, 0.5) -- (-5, -0.5);
\filldraw (-8, 2) circle(.15);
\filldraw (-7, 2) circle(.15);
\filldraw[fill = white] (-6, 2) circle(.15);
\filldraw (-5, 2) circle(.15);
\filldraw (-4, 2) circle(.15);
\filldraw (-8,0.5) circle(.15);
\filldraw (-8,-0.5) circle(.15);
\filldraw (-5,0.5) circle(.15);
\filldraw (-5,-0.5) circle(.15);

\draw [-latex, thick] (-2, 0.7) to (1, 0.7);

\draw [thin, line width = 0.5pt] (3, 2) -- (4, 2);
\draw [thick, line width = 1pt] (6, 2) -- (7, 2);
\draw [thick, line width = 1pt] (4, 2) -- (5, 2);
\draw [thin, line width = 0.5pt] (3, 2) -- (3, 0.5);
\draw [thick, line width = 1pt] (3, 0.5) -- (3, -0.5);
\draw [yellow, thick, line width = 1pt] (6, 2) -- (6, 0.5);
\draw [thick, line width = 1pt] (6, 0.5) -- (6, -0.5);
\filldraw[fill = white] (3, 2) circle(.15);
\filldraw (4, 2) circle(.15);
\filldraw (5, 2) circle(.15);
\filldraw (6, 2) circle(.15);
\filldraw (7, 2) circle(.15);
\filldraw (3, 0.5) circle(.15);
\filldraw (3, -0.5) circle(.15);
\filldraw (6, 0.5) circle(.15);
\filldraw (6, -0.5) circle(.15);

\end{tikzpicture}
\end{minipage}}
\end{center}
\captionsetup{width=1.0\linewidth}
\caption{A typical case for a single repetition of Step~1.1 that involves four $2$-paths.
	The filled (respectively, blank) vertices are in (respectively, not in) $V(M)$.
	The thick (respectively, thin) edges are in the matching $M$ (respectively, in $H$ but not in $M$) and
	the yellow (respectively, green) edges are in $\mcM$ (respectively, not in $E(H)$).
	In the left-hand-side graph, we have $q_4(H) = 0$ and the left green edge and the $5$-path form an augmenting pair;
	modifying with this augmenting pair results in the right-hand-side graph for which $q_4(H) = 1$.
	The yellow edge emerges and will be added to $H$ afterwards.\label{fig01}}
\end{figure}
%

\begin{description}
\item[Step 1.2]
	Construct the auxiliary graph $\mcG$ from $H$ as in Definition~\ref{def02} and
	compute a maximum matching $\mcM$ in $\mcG$.
	Then, for each edge $\{N_1,N_2\} \in \mcM$, add an edge of $G$ to connect the two edge components of $H$ corresponding to $N_1$ and $N_2$ to form a $4$-path in $H$.

\item[Step 1.3]
	For each edge $\{u,v\} \in E(G)$ such that $u \in V(G)\setminus V(H)$ and $v$ is in a $4$-path or an edge component of $H$,
	add $u$ into a set $U$ and add the edge $\{u,v\}$ into a set $F$.
	Afterwards, add all the vertices of $U$ and all the edges of $F$ into $H$.
\end{description}

\begin{lemma}
\label{lemma02}
Steps 1.1--1.3 take $O(n^{1.5} m^2)$ time.
\end{lemma}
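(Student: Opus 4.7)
The plan is to bound the cost of Step~1.1, which dominates the other two steps, by first bounding the number of iterations and then bounding the work done in each iteration.

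First I would argue the number of iterations of Step~1.1 is $O(n)$ via a potential argument that was already foreshadowed in the text immediately preceding the lemma. Choose a potential of the form $\Phi(H,M) = C\cdot p_5(H) + q_4(H)$ for a large enough constant $C$ (say $C=4$ suffices, given that modifying with an augmenting triple decreases $q_4$ by at most~$2$ and modifying with an augmenting pair decreases $q_4$ by at most~$3$ when $p_5$ also increases). Then every augmenting-triple or augmenting-pair modification strictly increases $\Phi$ by at least $1$, and since $p_5(H)\le n/5$ and $q_4(H)\le n/4$, the potential never exceeds $O(n)$. This gives at most $O(n)$ iterations.

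Next I would bound the per-iteration work. Computing $q_4(H)$ reduces to a maximum matching in the auxiliary graph $\mathcal{G}$, which has $O(n)$ vertices and $O(m)$ edges, so it takes $O(\sqrt{n}\,m)$ time via Micali--Vazirani~\cite{MV80}. Finding an augmenting triple costs $O(n+m)$: for every vertex $u_0 \in V(G)\setminus V(H)$, scan its neighbors and check whether two of them lie in two different edge components of $H$. For augmenting pairs, the key observation is that each candidate pair is determined by an edge $\{x,y\}\in E(G)$ with $x$ in some edge component of $H$ and $y\in\{v_1,v_3,v_5\}$ of some $5$-path component, together with the two choices $i\in\{1,4\}$; hence there are only $O(m)$ candidates per iteration. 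For each candidate, checking S1 is $O(1)$, and checking S2 requires either scanning for a new augmenting triple in the tentatively-modified graph ($O(n+m)$) or recomputing $q_4(H')$ ($O(\sqrt{n}\,m)$). The dominating cost per candidate is $O(\sqrt{n}\,m)$, so each iteration takes $O(m)\cdot O(\sqrt{n}\,m)=O(\sqrt{n}\,m^2)$ time, and Step~1.1 totals $O(n)\cdot O(\sqrt{n}\,m^2)=O(n^{1.5} m^2)$.

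Finally Step~1.2 consists of a single matching computation in $\mathcal{G}$ ($O(\sqrt{n}\,m)$ time) followed by at most $O(n)$ edge additions, and Step~1.3 is a single pass over $E(G)$ in $O(n+m)$ time. Both are dominated by the bound for Step~1.1, so the overall bound is $O(n^{1.5} m^2)$.

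The main obstacle I expect is the per-iteration accounting for augmenting pairs: one has to verify carefully that the search really only needs to examine $O(m)$ candidate pairs (rather than iterating over all pairs of a $5$-path component and an edge component, which would give $\Theta(n^2)$) and that the S2 check can be implemented by a single matching computation on $\mathcal{G}$ rather than some more expensive global reconfiguration. Once the candidate-count and the matching-based implementation of S2 are pinned down, everything else is a straightforward sum.
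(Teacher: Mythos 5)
Your proposal is correct and follows essentially the same outline as the paper's: an $O(n)$ bound on the number of iterations of Step~1.1, an $O(m)$ bound on candidate augmenting pairs per iteration, and an $O(\sqrt{n}\,m)$ matching computation per candidate, giving $O(n^{1.5} m^2)$ overall. The explicit potential $\Phi = 4p_5(H) + q_4(H)$ is a clean formalization of the counting argument the paper sketches in the paragraph preceding the lemma.

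One point in your per-iteration accounting is under-specified and worth tightening. Your $O(n+m)$ scan for an augmenting triple --- ``for every $u_0 \in V(G)\setminus V(H)$, scan its neighbors and check whether two of them lie in two different edge components of $H$'' --- covers only the arrangement in which $u_0$ is the middle vertex of the resulting $5$-path. Definition~\ref{def01} also permits $u_0$ to be an endpoint, where the $5$-path is $u_0$--$a$--$a'$--$b$--$b'$ with $e_0 = \{a,a'\}$, $e_1 = \{b,b'\}$, and the two new edges are $\{u_0,a\}$ and $\{a',b\}$; that case requires a separate pass (e.g., precompute for each matched vertex whether it has a neighbor outside $V(H)$, then iterate over edges of $G$ that join two edge components). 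Both cases are handled in $O(n+m)$ with this extra preprocessing, so your final bound survives, but the scan as written would miss triples, and your dominating-cost claim for the S2 check relies on this $O(n+m)$ figure being correct. The paper sidesteps the issue by charging $O(m^2)$ to the global triple search (absorbed since it is paid once per iteration, not once per candidate) and, for the S2 check specifically, by observing that the prioritization of triples over pairs means any newly appearing triple must involve the freshly created edge component $\{v_1,v_2\}$ or $\{v_4,v_5\}$, which brings that check down to $O(m)$.
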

\begin{proof} 
Step~1.2 takes $O(\sqrt{n} m)$ time~\cite{MV80} as computing a maximum matching is dominant;
Step~1.3 takes $O(m)$ time.
As aforementioned, Step~1.1 has $O(n)$ repetitions.
Below, we show that each repetition in Step~1.1 takes $O(\sqrt{n} m^2)$ time.

First we note that an augmenting triple (see Definition~\ref{def01}) can be located, if it exists, in $O(m^2)$ time
by examining a pair of edges that interconnecting the three components in the triple. 
Next, to check if there is an augmenting pair with respect to $H$ (see Definition~\ref{def03}), it suffices to check if $E(G)\setminus E(H)$ contains an edge $\{x,y\}$ 
such that $x$ appears in an edge component $e=\{u,w\}$ of $H$, $y$ appears in a $5$-path component $K=v_1$-$v_2$-$v_3$-$v_4$-$v_5$ of $H$, and S1 and S2 hold.
There are at most $m$ such edges $\{x,y\}$ and it takes $O(1)$ (respectively, $O(\sqrt{n} m)$) time to check if S1 (respectively, $q_4(H')>q_4(H)$) holds for each of them. 
Because we prefer augmenting triples to augmenting pairs in Step~1.1, 
checking whether a new augmenting triple with respect to $H'$ appears is equivalent to checking whether there is 
an augmenting triple involving $\{v_1,v_2\}$ or $\{v_4,v_5\}$, and hence takes $O(m)$ time.
That is, locating an augmenting pair takes $O(\sqrt{n} m^2)$ time.
\end{proof}

Hereafter, $H_i$ denotes the graph $H$ obtained at the termination of Step~1.$i$, for $i = 1, 2$.
In contrast, $H$ will always mean the graph $H$ obtained after Step~1.3, 
and $M$ will always mean the maximum matching $M$ obtained at the termination of Step~1.1, which then is untouched in Steps~1.2 and~1.3.
Clearly, Invariant~\ref{inva01} holds for $H_2$ and $M$.
Moreover, each connected component of $H_2$ is a $5$-path, $4$-path, or edge.

\begin{lemma}
\label{lemma03}
For each vertex $u$ of $V(G) \setminus V(H_2)$, the following statements hold:
\begin{enumerate}
\parskip=0pt
\item
	If $\{u, v\} \in E(G)$ is added to $H_2$ in Step~1.3, then $v$ is an internal vertex of a $4$-path component or an endpoint of an edge component of $H_2$;
\item
	At most two edges incident to $u$ are added to $H_2$ in Step~1.3.
	Moreover, if exactly two edges $\{u, v\}, \{u, w\}$ are added to $H_2$ in Step~1.3, then $\{v, w\}$ is an edge component of $H_2$.
\end{enumerate}
\end{lemma}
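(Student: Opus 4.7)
My plan is to derive both statements from the termination condition of Step~1.1---namely that no augmenting triple with respect to $H_1$ exists---together with two structural facts: first, $V(H_1)=V(H_2)$ because Step~1.2 only inserts edges; second, for every vertex $v$ lying in an edge or $4$-path component of $H_2$, the $M$-edge $e_v$ incident to $v$ is an edge component of $H_1$. The second fact holds because 4-paths of $H_2$ are built in Step~1.2 by taking two edge components of $H_1$ and joining them with a single new middle edge at their non-$M$ endpoints, so the two $M$-end-edges of each 4-path of $H_2$ are precisely the two edge components of $H_1$ that were merged.

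For statement~1, I will argue by contradiction. Suppose an edge $\{u,v\}$ was added with $v$ an endpoint of a 4-path $v$-$b$-$c$-$d$ of $H_2$. Then $\{v,b\}$ and $\{c,d\}$ are edge components of $H_1$ and $\{b,c\}\in E(G)$ was the edge inserted in Step~1.2, so $u$-$v$-$b$-$c$-$d$ is a 5-path entirely in $G$. Combined with $u\notin V(H_1)$, this exhibits $(u,\{v,b\},\{c,d\})$ as an augmenting triple with respect to $H_1$, contradicting the termination of Step~1.1. The statement then follows because Step~1.3 only adds $\{u,v\}$ when $v$ lies in a 4-path or edge component of $H_2$, and in an edge component every vertex is an endpoint.

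For statement~2, consider any two edges $\{u,v_1\},\{u,v_2\}\in F$ with $v_1\ne v_2$, and write $e_{v_i}=\{v_i,b_i\}$ for the $M$-edge incident to $v_i$, which is an edge component of $H_1$ by the observation. If $e_{v_1}\ne e_{v_2}$, then the 5-path $b_1$-$v_1$-$u$-$v_2$-$b_2$ shows that $(u,e_{v_1},e_{v_2})$ is an augmenting triple with respect to $H_1$, contradiction. Hence $e_{v_1}=e_{v_2}$, which (since $v_1\ne v_2$) forces the common edge component of $H_1$ to be $\{v_1,v_2\}$. A short subcase analysis then shows that if either $v_i$ were internal to a 4-path of $H_2$, the edge $\{v_1,v_2\}$ would have been merged in Step~1.2 and therefore could not simultaneously be the home of the other $v_j$ as either an edge-component endpoint or the internal vertex of a different 4-path; so $\{v_1,v_2\}$ remains an edge component of $H_2$, giving the moreover part. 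Finally, if a third edge $\{u,v_3\}\in F$ existed, the same argument would force each of $\{v_1,v_2\},\{v_1,v_3\},\{v_2,v_3\}$ to be an edge component of $H_2$, which is impossible since components of $H_2$ are pairwise vertex-disjoint.

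I expect the main obstacle to be less any single hard step and more the bookkeeping required to ensure that the augmenting triples I construct are valid with respect to $H_1$ rather than $H_2$: namely, checking in every subcase that both edge components used and the connecting edges indeed lie in $H_1$ and $E(G)$ respectively, and that $u\notin V(H_1)$ so that Definition~\ref{def01} applies verbatim.
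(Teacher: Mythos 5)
Your proof is correct and follows essentially the same approach as the paper's: in both, the key is that the $M$-edges incident to vertices of $4$-path or edge components of $H_2$ were edge components of $H_1$ (since a $4$-path of $H_2$ arises from Step~1.2 joining two edge components of $H_1$), so any unwanted configuration exhibits an augmenting triple with respect to $H_1$, contradicting the termination of Step~1.1. Your write-up is slightly more streamlined---you state this observation up front and invoke it directly rather than splitting into the paper's explicit two cases (both vertices in edge components vs. at least one in a $4$-path)---but the arguments are substantively identical.
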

\begin{proof} 
Note that there is no augmenting triple or pair in $H_1$ as otherwise Step~1.1 continues,
neither in $H_2$ since Step~1.2 does not generate any new edge or $5$-path component to $H_1$.
We prove the two statements separately as follows.

{\em Statement~1.} Since $\{u, v\}$ is added to $H_2$ in Step~1.3 and $u \in V(G) \setminus V(H_2)$, $v$ is in a $4$-path component or an edge component of $H$. 
Because of Step~1.1, $v$ cannot be an endpoint of a $4$-path component or else an augmenting triple would exist.
Hence, $v$ is an interval vertex of a $4$-path or an endpoint of an edge component of $H_2$.

{\em Statement~2.} We first prove that if $\{u, v\}$ and $\{u, w\}$ are added to $H_2$ in Step~1.3, then $\{v, w\}$ is an edge component of $H_2$.
For a contradiction, we assume $\{v, w\}$ is not an edge component of $H_2$.
Then, by Statement~1, either $v$ and $w$ appear in two different edge components $e_1, e_2$ of $H_2$,
or at least one of $v$ and $ w$ is an internal vertex of a $4$-path component $P$ of $H_2$.
In the first case, $(u, e_1, e_2)$ would have been an augmenting triple, a contradiction against Step~1.1. 
In the second case, without loss of generality, we assume $v$ is an internal vertex of a $4$-path $P$ and is on the edge $e_3 \in P \cap M$ by Invariant~\ref{inva01}.
If $w$ is on an edge component $e_4$ of $H_2$, then $(u, e_3, e_4)$ would have been an augmenting triple, a contradiction against Step~1.1.
Hence, we may further assume that $w$ is an internal vertex of a $4$-path $P'$ and is on the edge $e_5 \in P' \cap M$ by Invariant~\ref{inva01} again.
It follows that $(u, e_3, e_5)$ is an augmenting triple.
That is, in both cases, an augmenting triple exists and thus Step~1.1 would have not terminated.
This proves that $\{v, w\}$ is an edge component of $H_2$.

Secondly, suppose that the three edges $\{u, x\}, \{u, y\}, \{u, z\}$ incident to $u$ are added to $H_2$ in Step~1.3. 
Then, by the discussion in the last paragraph, $\{x, y\}$ and $\{x, z\}$ are edge components of $H_2$ and thus they are in $M$, a contradiction to $M$ being a matching.
\end{proof}

\begin{lemma}
\label{lemma04}
At the termination of Step~1.3, Invariant~\ref{inva01} holds.
\end{lemma}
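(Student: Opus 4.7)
The plan is to verify the two parts of Invariant~\ref{inva01} at the end of Step~1.3 separately: the matching conditions and the component structure. The matching part should be immediate, because neither Step~1.2 nor Step~1.3 touches $M$, and both steps only add edges to $H$; so $M$ is still a maximum matching of $G$ and $M \subseteq E(H_1) \subseteq E(H_2) \subseteq E(H)$.

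For the structural part, I would go through the possible effects of Step~1.3 on each component type present in $H_2$. By the comment preceding this lemma, every component of $H_2$ is a 5-path, a 4-path, or an edge. Because Step~1.3 attaches only edges with one endpoint in a 4-path or an edge component, the 5-path components should carry over unchanged, with their two end-edges still in $M$. For a 4-path $v_1$-$v_2$-$v_3$-$v_4$, whose $M$-edges are $\{v_1,v_2\}$ and $\{v_3,v_4\}$, Lemma~\ref{lemma03} Statement~1 restricts every new neighbor of an exposed vertex to be $v_2$ or $v_3$, and Statement~2 forbids a single exposed $u$ from attaching to both (because $\{v_2,v_3\}$ is not an edge component of $H_2$). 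From this I expect to conclude that the resulting component is a bi-star with centers $v_2,v_3$ whose $M$-edges are the two end-edges, each of them a center-satellite edge as required.

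The main obstacle will be the edge components $\{v,w\}\in M$, since Step~1.3 might attach multiple new vertices to $v$ and to $w$, and Lemma~\ref{lemma03} Statement~2 even allows a single $u$ to become adjacent to both $v$ and $w$ at the same time. My plan is to rule out any hybrid configuration via the maximality of $M$: if distinct exposed vertices $u_1,u_2$ were to attach to $v$ and to $w$ respectively, then $u_1$-$v$-$w$-$u_2$ would be an $M$-augmenting path in $G$, a contradiction; the same observation will rule out having one $u$ adjacent to both of $v,w$ together with any further exposed attachment to $\{v,w\}$. After these exclusions, the only surviving possibilities should be that the component remains an edge, becomes a star centered at one endpoint (with the other endpoint and the newly attached vertices as satellites), or becomes a triangle on $\{u,v,w\}$ for a unique exposed $u$. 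In each of these cases the single $M$-edge $\{v,w\}$ fits the invariant's requirements, which completes the argument.
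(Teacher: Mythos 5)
Your proposal is correct and follows essentially the same route as the paper's proof: both start from the observation that $M$ is untouched in Steps~1.2--1.3, both invoke Lemma~\ref{lemma03} (Statements~1 and~2) to constrain which vertices of $H_2$ can acquire new neighbors and how, and both case-split on $5$-path, $4$-path, and edge components of $H_2$ to conclude the resulting component is a $5$-path, bi-star, star, triangle, or edge with the required $M$-edge placement. The only cosmetic difference is that you make the maximality argument for the edge-component case explicit via $M$-augmenting paths, where the paper simply asserts a contradiction with $M$ being a maximum matching; the underlying reasoning is identical.
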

\begin{proof}
Recall that Invariant~\ref{inva01} holds for $H_2$ and each connected component of $H_2$ is a $5$-path, $4$-path, or edge.
The maximum matching $M$ is untouched in Steps~1.2 and 1.3, and thus it remains so at the termination of Step~1.3.

For each vertex $v\in V(H_2)$, let $N(v)$ be the set of its neighbors in $V(G) \setminus V(H_2)$ that are added in Step~1.3.
By Statement~1 of Lemma~\ref{lemma03}, if $v$ is not an internal vertex of a $4$-path component or an endpoint of an edge component of $H_2$, then $N(v) = \emptyset$.

Consider an edge component $e = \{v_1, v_2\}$ of $H_2$. 
There are three cases depending on $|N(v_1)|$ and $|N(v_2)|$. 
In the first case, both $|N(v_1)|$ and $|N(v_2)|$ are~$0$, and thus $e$ remains an edge component of $H$ at the termination of Step~1.3. 
In the second case where both $|N(v_1)|$ and $|N(v_2)|$ are nonzero, $|N(v_1) \cup N(v_2)| = 1$ because otherwise a contradiction to $M$ being a maximum matching. 
Let $u$ be the unique vertex of $N(v_1) \cup N(v_2)$. 
By Statement~2 of Lemma~\ref{lemma03}, $\{u, v_1\}$ and $\{u, v_2\}$ are added to $H$ in Step~1.3 and form a triangle together with $e$.
In the last case where exactly one of $|N(v_1)|$ and $|N(v_2)|$ is~$0$, and we assume without loss of generality that $|N(v_1)|=0$ and $|N(v_2)|>0$.
For each $u \in N(v_2)$, exactly one edge $\{u, v_2\}$ is added to $H$ by Statement~2 of Lemma~\ref{lemma03},
and the edge component $e=\{v_1, v_2\}$ becomes a star with center $v_2$ at the termination of Step~1.3.

Consider a $4$-path component $P = v_1$-$v_2$-$v_3$-$v_4$ of $H_2$, for which $N(v_1) = N(v_4) = \emptyset$.
By Statement~2 of Lemma~\ref{lemma03}, $N(v_2) \cap N(v_3) = \emptyset$.
Moreover, for each $u \in N(v_2) \cup N(v_3)$, exactly one edge ($\{u, v_2\}$ or $\{u, v_3\}$) incident to $u$ is added to $H_2$ in Step~1.3,
suggesting $P$ becomes a bi-star at the termination of Step~1.3. 

In summary, Invariant~\ref{inva01} holds at the termination of Step~1.3.
\end{proof}

\subsection{Bad components and rescuing them}\label{subsec:rescue}
By Invariant~\ref{inva01}, a connected component $K$ of $H$ is an edge, a triangle, a star, a bi-star or a $5$-path.
Moreover, the above discussion in Section~\ref{subsec:k=5} states that,
if $K$ is not a $5$-path then it is impossible to form a $5^+$-path in the induced subgraph $G[V(K)]$.
This motivates the following definition of {\em bad component}.

\begin{definition}
\label{def04}
A {\em bad component} of $H$ is a connected component that is an edge, a triangle, a star, or a bi-star.
\end{definition}

In the next lemma, we focus on the edges connecting a vertex of a bad component $K$ with another vertex not in $K$, for the purpose of {\em rescuing} some vertices in $K$.
We remark that these edges are vital in the subsequent algorithm design and analysis.

\begin{lemma}
\label{lemma05}
Let $\{v, w \}$ be an edge of $G$ such that $v$ is in a bad component $K$ of $H$ but $w$ is not in $K$.
Then the following statements hold:
\begin{enumerate}
\parskip=0pt
\item If $w$ is not in a $5$-path of $H$, then $w \in V(M)$.
\item If $v \notin V(M)$ or $K$ is a triangle, then $w$ is an internal but not middle vertex of a $5$-path of $H$.
\item If $K$ is an edge or a star, then $w$ is in a bi-star or a $5$-path of $H$.
\end{enumerate}
\end{lemma}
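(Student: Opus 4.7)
The plan relies on four sources of contradiction that survive to the final $H$: (i)~the absence of an augmenting triple in $H_1$, (ii)~the absence of an augmenting pair in $H_1$, (iii)~the maximality of the matching $\mcM$ chosen in Step~1.2, and (iv)~the maximality of $M$ itself, invoked by exhibiting short augmenting paths in $G$. For each edge $\{v,w\}$ I will track the \emph{base} edge component(s) of $K$ in $H_1$: the unique $M$-edge inside $K$ when $K$ is an edge, triangle, or star, and the pair of $M$-edges of the 4-path of $H_2$ from which $K$ arose when $K$ is a bi-star. The general tactic is to splice $\{v,w\}$ together with these base edges (and with the corresponding structure on $w$'s side) into either a 5-path that yields an augmenting triple of $H_1$, or into an alternating path both of whose endpoints lie outside $V(M)$ (an augmenting path for $M$).

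For Statement~1, assume $w\notin V(M)$. If $w\in V(G)\setminus V(H)$, then $v\in V(M)$ (else $\{v,w\}$ already augments $M$), and I case-split on $v$'s location in $H_2$: if $v$ is an endpoint of an edge component or an internal vertex of a 4-path of $H_2$, Step~1.3 would have added $\{v,w\}$ to $H$, contradicting $w\notin V(H)$; and if $v$ is an endpoint of a 4-path $v$-$v'$-$v''$-$v'''$ of $H_2$, the 5-path $w$-$v$-$v'$-$v''$-$v'''$ (whose middle edge $\{v',v''\}$ was supplied in Step~1.2) witnesses the augmenting triple $(w,\{v,v'\},\{v'',v'''\})$ of $H_1$, contradicting Step~1.1. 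Otherwise $w\in V(H)\setminus V(M)$, so $w$ is a triangle top or non-special satellite that was added in Step~1.3 and is therefore adjacent in $G$ to an endpoint $z$ of some base edge component $e_w=\{z,z'\}$ of $H_1$. If $v\in V(M)$, let $e_v=\{v,v'\}$ be a base edge component of $K$ containing $v$; the 5-path $v'$-$v$-$w$-$z$-$z'$ witnesses the augmenting triple $(w,e_v,e_w)$, again contradicting Step~1.1. If $v\notin V(M)$, the edge $\{v,w\}$ itself is an augmenting path of length~1, contradicting the maximality of $M$. Distinctness $e_v\ne e_w$ follows from $w\notin V(K)$.

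For Statement~2, I fix a witness $x\in V(G)\setminus V(M)$ adjacent to an $M$-endpoint of $K$'s base in $H_1$: take $x=v$ when $v\notin V(M)$, and $x=t_3$ (the triangle top, which lies in $V(G)\setminus V(H_1)$) when $K$ is a triangle with base $\{t_1,t_2\}\in M$ and $v\in\{t_1,t_2\}$. First, $w\in V(M)$, else $\{v,w\}$ (length~1) or $t_3$-$t_2$-$v$-$w$ (length~3) is an augmenting path of $M$. Second, $w$ is not in a bad component of $H$: as $w\in V(M)$, it would be an endpoint of a base edge component $e_w=\{w,w'\}$ of $H_1$, and then an augmenting triple of the shape $(v,e_v,e_w)$ or $(t_3,\{t_1,t_2\},e_w)$ in $H_1$, witnessed by a 5-path through the bridge $\{v,w\}$, contradicts Step~1.1. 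Thus $w$ lies in a 5-path $P=p_1$-$p_2$-$p_3$-$p_4$-$p_5$ of $H$, and $w\ne p_3$ since $p_3\notin V(M)$. Third, if $w=p_1$ (or $p_5$ by symmetry), extend the preceding alternating path by two edges along $P$ to reach the middle vertex $p_3$: $v$-$w$-$p_2$-$p_3$ (length~3) in the first subcase and $t_3$-$t_2$-$v$-$w$-$p_2$-$p_3$ (length~5) in the second; both endpoints ($x$ and $p_3$) lie outside $V(M)$, so each is an augmenting path of $M$, a contradiction. Therefore $w\in\{p_2,p_4\}$.

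For Statement~3, $K$ is an edge or a star, and the unique $M$-edge inside $K$ is an edge component of $H_2$ that corresponds to an $\mcM$-unmatched vertex of $\mcG$. I first rule out $w\in V(G)\setminus V(H)$ exactly as in Statement~1: Step~1.3 would add $\{v,w\}$ when $v\in V(M)$, and $\{v,w\}$ is itself an augmenting path when $v\notin V(M)$ (so $v$ is a non-special star satellite of $K$). Now suppose $w$ lies in an edge component $e_w=\{w,w'\}$ of $H_2$ (equivalently, in an edge, triangle, or star of $H$). If $v\in V(M)$, let $e_v$ be the $M$-edge inside $K$; both $e_v$ and $e_w$ are $\mcM$-unmatched and $\{v,w\}$ joins them in $\mcG$, contradicting the maximality of $\mcM$. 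If $v\notin V(M)$, let $\{c,s_1\}$ be the $M$-edge inside $K$ with $v$ adjacent to the center $c$; the 5-path $s_1$-$c$-$v$-$w$-$w'$ realizes the augmenting triple $(v,\{c,s_1\},e_w)$ of $H_1$, again a contradiction. Hence $w$'s component in $H_2$ is a 4-path or 5-path, i.e., $w$ lies in a bi-star or 5-path of $H$. The subtlest step throughout will be verifying, in Statement~2, that the extended alternating path indeed terminates at $p_3\notin V(M)$; this is precisely what excludes $w$ from being an endpoint of a 5-path.
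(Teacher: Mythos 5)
Your proof follows the same overall strategy as the paper's (augmenting triples, augmenting pairs, maximality of $\mcM$ from Step~1.2, and maximality of $M$ via short alternating paths), and Statements~1 and~2 are correct, just more verbose than the paper's versions.

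There is, however, a genuine gap in your treatment of Statement~3. You rule out $w\in V(G)\setminus V(H)$ and then consider ``$w$ lies in an edge component $e_w=\{w,w'\}$ of $H_2$ (equivalently, in an edge, triangle, or star of $H$).'' That parenthetical equivalence is false: a vertex $w$ can lie in a triangle or star of $H$ without lying in any component of $H_2$, namely when $w$ is the triangle top or a star/bi-star satellite that was adjoined in Step~1.3 and thus $w\in V(H)\setminus V(H_2) = V(H)\setminus V(M)$. Your final ``Hence $w$'s component \emph{in $H_2$} is a $4$-path or $5$-path'' silently presupposes $w\in V(H_2)$, so the subcase $w\in V(H)\setminus V(M)$ with $w$ in a triangle or star of $H$ is simply not addressed. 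The paper closes this hole cleanly by invoking Statement~1 (which you have already proved): if $w$ is not in a $5$-path of $H$ then $w\in V(M)$, hence $w\in V(H_2)$, and the case analysis you give then becomes exhaustive. Alternatively, you could observe that the argument you used to rule out $w\in V(G)\setminus V(H)$ (``Step~1.3 would add $\{v,w\}$ when $v\in V(M)$; $\{v,w\}$ augments $M$ when $v\notin V(M)$'') in fact rules out every $w\notin V(H_2)$, since Step~1.3's criterion is $w\notin V(H_2)$ rather than $w\notin V(H)$; but as written you only claim to rule out $w\in V(G)\setminus V(H)$.

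One stylistic remark: in Statement~3 you handle the case that $w$'s component is a triangle by the $\mcM$-maximality argument (both $e_v$ and $e_w$ are $\mcM$-unmatched since triangles and stars of $H$ come from edge components of $H_2$), whereas the paper separates out the triangle case and uses an augmenting-triple contradiction there. Your unified $\mcM$-argument is valid and arguably cleaner; the paper's split into edge/star vs.\ triangle is unnecessary for that step. Likewise, in Statement~1 your extra sub-case ``$v$ is an endpoint of a $4$-path of $H_2$'' is redundant: Step~1.3 adds $\{v,w\}$ whenever $v$ lies in \emph{any} edge or $4$-path component of $H_2$ and $w\notin V(H_2)$, so the Step~1.3 contradiction already covers it without appealing to an augmenting triple.
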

\begin{proof} 
We prove the statements separately as follows.

{\em Statement~1.} 
Suppose that $w$ is not in a $5$-path of $H$ and $w \notin V(M)$. 
Then, $v \in V(M)$ because otherwise $M$ would have not been a maximum matching of $G$.
Recall that each $5$-path of $H$ remains untouched in Step~1.3 and hence is not a bad component.
So, $v$ is in an edge or $4$-path component of $H_2$.
Since $w \notin V(M)$ but each vertex of an edge or $4$-path component of $H_2$ is in $V(M)$, we know that $w \notin V(H_2)$. 
But then, the edge $\{v, w\}$ would have been added to $H_2$ in Step~1.3, a contradiction against the fact that $w$ is not in $K$. 
Hence, $w \in V(M)$. 

{\em Statement~2.}
Assume that $v \notin V(M)$ or $K$ is a triangle. 
If $w$ were an endpoint or the middle vertex of a $5$-path in $H$, then by Invariant~\ref{inva01} $M$ would have not been a maximum matching of $G$, a contradiction.
So, we next prove the statement by contradiction, by assuming that $w$ is not in a $5$-path of $H$.
Then, by Statement~1, $w \in V(M)$ and $w$ is in a bad component of $H$.
Consequently, since $K$ is not a $5$-path, Statement~1 implies that $v \in V(M)$ and thus $K$ is a triangle.
Since $w \in V(M)$, there exists $e_1 \in M$ incident to $w$.
By Invariant~\ref{inva01}, $K$ contains an edge $e_2 \in M$ and a vertex $u \notin V(M)$. 
One can easily verify that $(u, e_1, e_2)$ is an augmenting triple with respect to $H_1$. 
This implies that Step~1.1 would have not stopped, a contradiction. 

{\em Statement~3.}
If $v \notin V(M)$, the statement follows from Statement~2 immediately.
So, we can assume $v \in V(M)$. 
Similarly, we can further assume $w \in V(M)$ by Statement~1. 
Let $e_1$ (respectively, $e_2$) be the edge of $M$ incident to $v$ (respectively, $w$), and $K'$ be the connected component of $H$ containing $w$. 
If $K'$ were a triangle, then $(u,e_1,e_2)$ would have been an augmenting triple with respect to $H_1$, where $u$ is the vertex of $K'$ not incident to $e_2$, a contradiction. 
So, $K'$ is not a triangle. 
Thus, by Invariant~\ref{inva01}, $K'$ is an edge, star, bi-star, or $5$-path. 
Now, if $K'$ were an edge or a star, then we could have been able to include $\{N_1,N_2\}$ in $\mcM$ (cf. Definition~\ref{def02})
where $N_1$ (respectively, $N_2$) is the vertex of $\mcG$ corresponding to $e_1$ (respectively, $e_2$), a contradiction against the maximality of $\mcM$.
Therefore, $K'$ is a bi-star or $5$-path.
\end{proof}

For rescuing the maximum possible number of bad components of $H$, we give the next definition and then present Steps~2.1--2.3 of our algorithm.

\begin{definition}
\label{def05} 
Let $K$ be a bad component of $H$. 
We define the {\em weight} of $K$ to be $|E(K) \cap M|$. 
Let $F$ be a set of edges in $G$ for each of which the two endpoints belong to two different connected components of $H$. 
We say that $F$ {\em rescues} $K$ if at least one edge in $F$ is incident to a vertex of $K$.
We define the {\em weight} of $F$ to be the total weight of the bad components of $H$ rescued by $F$.
\end{definition}

By Invariant~\ref{inva01}, the weight of a bad component $K$ is~$2$ if it is a bi-star, or is~$1$ otherwise.

\begin{description}
\parskip=0pt
\item[Step 2.1] 
	Construct a spanning subgraph $G'$ of $G$ of which the edge set consists of all $\{v_1, v_2\} \in E(G)$ 
	such that $v_1$ and $v_2$ appear in two different components of $H$ of which at least one component is bad. 
\item[Step 2.2]
	Compute a maximum-weighted path-cycle cover $C$ of $G'$ (cf. the proof of Lemma~\ref{lemma06}).
\end{description}

\begin{lemma}
\label{lemma06} 
Step~2.2 takes $O(mn\log n)$ time.
\end{lemma}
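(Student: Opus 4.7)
The plan is to solve Step~2.2 by reducing the computation of a maximum-weighted path-cycle cover of $G'$ to a maximum-weight matching instance on an auxiliary graph whose size is $O(n)$ vertices and $O(m)$ edges, and then invoking Gabow's algorithm for maximum-weight matching in general graphs, which runs in $O(mn\log n)$ time.

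The first subtlety I would address is that the weight function of Definition~\ref{def05} is not directly edge-additive: if two edges of a candidate $F$ are both incident to the same bad component $K$, the desired contribution is still only $w_K=|E(K)\cap M|$, not $2w_K$. To encode this correctly, for each bad component $K$ of $H$ I would introduce a dedicated rescue gadget built on a fresh dummy vertex $d_K$, connected by weight-$w_K$ edges to (a copy of) each $v\in V(K)$, while the original edges of $G'$ are given weight $0$. Because $d_K$ is saturated by at most one matching edge, each bad component contributes its weight at most once. A careful placement of $d_K$ relative to the $2$-matching gadget on $V(K)$ (using an extra copy of each real vertex wired so that the dummy edge is matchable only when the real vertex is actually used by the underlying path-cycle cover) guarantees that the matching weight equals the Definition~\ref{def05} weight of the induced cover of $G'$.

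The second ingredient is the standard reduction of maximum-weight $2$-matching (equivalently, maximum-weighted path-cycle cover) to maximum-weight matching: each vertex of $G'$ is replaced by two copies and each edge is replaced by a short gadget, so that a maximum-weight matching in the expanded graph corresponds to a maximum-weight subgraph of $G'$ of maximum degree $2$. Composing this with the rescue gadget above keeps the auxiliary graph at $O(n)$ vertices and $O(m)$ edges, whence Gabow's $O(mn\log n)$-time algorithm yields the stated bound. I expect the main obstacle to be neither the matching invocation nor the asymptotic accounting, but rather the bookkeeping required to verify that the rescue gadget and the $2$-matching gadget compose correctly, so that the optimal matching of the final auxiliary graph projects back to a genuine maximum-weighted path-cycle cover of $G'$ in the sense of Definition~\ref{def05}.
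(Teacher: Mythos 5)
Your overall plan---shrink the path-cycle cover computation to an $O(n)$-vertex, $O(m)$-edge matching instance and invoke Gabow---is plausible in spirit, but the construction is left as a sketch exactly where the real work lies. The paper's own proof reduces Step~2.2 not to maximum-weight matching but directly to the maximum-weight $[f,g]$-factor problem: for each bad component $B_i$ it adds gadget vertices $x_i,y_i,z_i$, zero-weight edges from $x_i$ and $y_i$ to every vertex of $B_i$, weight-$|E(B_i)\cap M|$ edges $\{x_i,z_i\}$ and $\{y_i,z_i\}$, sets $f(v)=g(v)=2$ for each $v\in V(B_i)$ and $g(z_i)=1$, and then proves a two-sided correspondence between $[f,g]$-factors of the auxiliary graph and path-cycle covers of $G'$ with matching weights. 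The crux is the exact degree constraint $f(v)=g(v)=2$: it forces the dummy edges $\{v,x_i\}$, $\{v,y_i\}$ to absorb precisely the slack not consumed by real edges, so $z_i$ becomes reachable from $x_i$ or $y_i$ exactly when some vertex of $B_i$ has positive degree in the underlying cover, i.e., exactly when $B_i$ is rescued.

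Your proposal never actually implements such a mechanism. As written---weight-$w_K$ edges from one dummy $d_K$ to copies of the vertices of $K$, with all $G'$-edges assigned weight $0$---the optimal matching would simply saturate every $d_K$ regardless of whether the underlying cover touches $K$, so the matching weight would not equal the Definition~\ref{def05} weight of the induced cover. You acknowledge that a ``careful placement'' of $d_K$ and ``extra copies'' wired so that the dummy edge is matchable only when the real vertex is actually used are needed, and you defer this to ``bookkeeping.'' But that is not bookkeeping: it is the entire content of the lemma. Without an explicit gadget and a two-directional argument showing that optimal matchings of the auxiliary graph project to maximum-weight path-cycle covers of $G'$ (and conversely), neither correctness nor the $O(mn\log n)$ bound is established.
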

\begin{proof}
The proof is a reduction to the maximum-weight $[f, g]$-factor problem. 
Given two functions $f$ and $g$ mapping each vertex $v$ of an edge-weighted graph $G_1$ to two non-negative integers $f(v), g(v)$ with $f(v) \le g(v)$.
An {\em $[f,g]$-factor} of a graph $G_1$ is a set $F$ of edges in $G_1$ such that in the spanning subgraph $(V(G_1), F)$, 
the degree of each vertex $v$ is at least $f(v)$ and at most $g(v)$. 
The {\em weight} of an $[f,g]$-factor $F$ of $G_1$ is the total weight of  edges in $F$. 
A maximum-weighted $[f,g]$-factor of $G_1$ can be computed in $O(m_1n_1\log n_1)$ time~\cite{Gab83}, where $m_1 = |E(G_1)|$ and $n_1 = |V(G_1)|$. 

Let $B_1, B_2, \ldots, B_h$ be the bad components of $H_5$.
We construct an auxiliary edge-weighted graph $G_1 = (V(G) \cup X, E(G') \cup F_1 \cup F_2)$ as follows:
\begin{itemize}
\parskip=0pt
\item $X = \{x_i, y_i, z_i \mid 1 \le i \le h \}$. 
\item $F_1 = \{\{x_i, v\}, \{y_i, v\} \mid v \in V(B_i), 1 \le i \le h\}$ and $F_2 = \{\{x_i, z_i\}, \{y_i, z_i\} \mid 1 \le i \le h\}$.
\item The weight of each edge in $E(G_1) \cup F_1$ is $0$.
\item For the edges of $F_2$, if $B_i$ is a bi-star, then the weight of each of $\{x_i, z_i\}, \{y_i, z_i\}$ is $2$;
	otherwise, the weight of each of $\{x_i, z_i\}, \{y_i, z_i\}$ is $1$.
\item For each vertex $v \in \bigcup_{i = 1}^h V(B_i)$, let $f(v) = g(v) = 2$. 
\item For each $v \in V(G) - \bigcup_{i = 1}^h V(B_i)$, let $f(v) = 0$ and $g(v) = 2$.
\item For each $i \in\{ 1, 2, \ldots, h\}$, $f(x_i) = f(y_i) = f(z_i) = 0$ and $g(x_i) = g(y_i) = |V(B_i)|$, $g(z_i) = 1$.
\end{itemize}

We next prove that the maximum weight of an $[f,g]$-factor of $G_1$ equals the maximum weight of a path-cycle cover of $G'$. 

Given a maximum-weighted path-cycle cover $C$ of $G'$, we can obtain an $[f, g]$-factor $F$ for $G_1$ as follows: 
Initially, we set $F = C$. 
Then, for each bad component $B_i$ and each vertex $v$ in $B_i$, we perform one of the following according to the degree of $v$ in the graph $(V(G), F)$. 
\begin{itemize}
\parskip=0pt
\item If the degree of $v$ in the graph $(V(G), F)$ is~$0$, then add the edges $\{v, x_i\}, \{v, y_i\}$ to $F$.
\item If the degree of $v$ in the graph $(V(G), F)$ is~$1$, then add the edge $\{v, x_i\}$ to $F$, and further add the edge $\{y_i, z_i\}$ to $F$ if it has not been added to $F$.
\item If the degree of $v$ in the graph $(V(G), F)$ is~$2$, then add the edge $\{y_i, z_i\}$ to $F$ if it has not been added to $F$.
\end{itemize}
Clearly, $F$ is an $[f,g]$-factor of $G'$. 
We claim that the weight of $F$ is no less than that of $C$.
To see this, consider a bad component $B_i$ rescued by $C$. 
Then, there exists a vertex $v$ in $B_i$ such that $C$ contains an edge incident to $v$. 
Hence, by the construction of $F$, $F$ contains $\{y_i, z_i\}$. 
If $B_i$ is a bi-star, then the weight of $\{y_i, z_i\}$ is $2$; otherwise, the weight of $\{y_i, z_i\}$ is $1$.
So, the claim holds. 

Conversely, given a maximum-weight $[f, g]$-factor $F$ of $G_1$, we obtain a subset $C$ of $E(G')$ with $C = E(G') \cap F$. 
Since $g(v) = 2$ for each vertex $v \in V(G_1)$, $C$ is a path-cycle cover of $G'$.
We claim that the weight of $C$ is no less than that of $F$. 
To see this, consider a bad component $B_i$ such that $\{x_i, z_i\}$ or $\{y_i, z_i\}$ is in $F$.
Since $g(z_i)=1$, exactly one of $\{x_i, z_i\}$ and $\{y_i, z_i\}$ is in $F$. 
Without loss of generality, we assume $\{x_i, z_i\}$ is in $F$. 
Then, there exists a vertex $v$ in $B_i$ such that the edge $\{v, x_i\}$ is not in $F$. 
Since $f(v) = g(v) = 2$, $C$ contains an edge incident to $v$ and hence $C$ rescues $B_i$. 
So, the claim holds. 

By the above two claims,  the maximum weight of an $[f,g]$-factor of $G_1$ equals the maximum weight of a path-cycle cover of $G'$. 
Now, since $|V(G_1)|\le 4n$ and $|E(G_1)|\le m+4n$, the running time is at most $O(\max\{mn, n^2\}\log n) = O(mn \log n)$ because $G$ is a connected graph and hence $m\ge n-1$.
So, the lemma holds.
\end{proof}

\begin{notation}
\label{nota02}
For the spanning subgraph graph $G'$ constructed in Step~2.1,
\begin{itemize}
\parskip=0pt
\item $C$ denotes a maximum-weighted path-cycle cover of $G'$;
\item $M_C$ denotes the set of all the edges of $M$ that appear in a $5$-path of $H$ or in a bad component of $H$ rescued by $C$. 
\end{itemize}
\end{notation}

The following lemma shows that $|V(M_C)|$ is relatively large compared with $opt(G)$.

\begin{lemma}
\label{lemma07}
$|V(M_C)| \ge \frac 45 opt(G)$.
\end{lemma}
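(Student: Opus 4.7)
The plan is to prove the equivalent inequality $opt(G) \le \frac{5}{4}|V(M_C)|$ by analyzing how $OPT(G)$ can intersect the structure of $H$. The main leverage point is $C$'s maximum-weight optimality compared against a natural path-cycle cover derived from $OPT(G)$ itself.

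First, I would observe that $OPT(G) \cap E(G')$ is a path-cycle cover of $G'$, since every vertex has degree at most $2$ in $OPT(G)$ and restricting to a subset of edges can only decrease degrees. By the maximum-weight property of $C$,
\[
w(C) \;\ge\; w(OPT(G) \cap E(G')).
\]
Next, I would show that any bad component $K$ not rescued by $OPT(G) \cap E(G')$ must satisfy $V(K) \cap V(OPT(G)) = \emptyset$: any path $P$ of $OPT(G)$ touching $K$ has at least five vertices while $K$ (being bad) has no $5^+$-path, so $P$ must leave $K$ through some cross-edge $\{v,w\} \in E(G')$, and such an edge would rescue $K$ in $OPT(G)\cap E(G')$, a contradiction.

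Let $R_O$ (respectively, $R_C$) denote the set of bad components rescued by $OPT(G) \cap E(G')$ (respectively, $C$). The two observations above together imply
\[
V(OPT(G)) \;\subseteq\; V_5 \;\cup\; \bigcup_{K \in R_O} V(K) \;\cup\; \bigl(V(G) \setminus V(H)\bigr),
\]
where $V_5$ is the union of vertices of the $5$-path components of $H$. I would then adapt the odd-index matching argument from the proof of Lemma~\ref{lemma01} to the portion of each $OPT(G)$-path lying in this set, producing a matching whose edges all correspond to $M$-edges sitting inside $5$-paths or inside rescued bad components. Using Invariant~\ref{inva01} to count $|V(K) \cap V(M)|$ for each type of bad component (two for bi-stars, one for edges, triangles, and stars), combined with the weight inequality $w(R_O) \le w(R_C)$, should convert the vertex upper bound on $opt(G)$ into the desired bound against $|V(M_C)|$.

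The main obstacle is that $R_O$ and $R_C$ need not coincide, so the estimate must use only the weight inequality rather than set inclusion; one must carefully pair the $OPT(G)$ contribution from components in $R_O \setminus R_C$ against the $M_C$ contribution from $R_C \setminus R_O$. A further subtlety is handling vertices in $V(G) \setminus V(H)$ that appear on $OPT(G)$-paths: by Lemma~\ref{lemma03} and the construction of Step~1.3, such vertices have no neighbor among the endpoints of edge components or the internal vertices of $4$-path components of $H_2$, which tightly constrains how $OPT(G)$ can route through them and lets their contribution be absorbed into the final $\frac{4}{5}$ factor.
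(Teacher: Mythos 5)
There is a genuine gap. You have the right preliminary observations --- $OPT(G)\cap E(G')$ is a path-cycle cover, and a bad component untouched by it contributes nothing to $OPT(G)$ --- but the crucial closing step is missing, and the step you sketch would not work as stated.

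The problem is the sentence proposing to "adapt the odd-index matching argument ... producing a matching whose edges all correspond to $M$-edges sitting inside $5$-paths or inside rescued bad components." Taking odd-indexed edges of each $OPT(G)$-path gives a matching $M'$ consisting of $OPT(G)$-edges, not $M$-edges; there is no canonical correspondence between $M'$ and $M$, and $M_C$ is by definition a subset of $M$. The weight inequality alone (your $w(R_O) \le w(R_C)$, which is the paper's $2s+t \ge 2\ell+h$) compares how many bad components are rescued, but it does not by itself translate into a comparison of matching sizes on the $M$ side. Your correct worry that "$R_O$ and $R_C$ need not coincide" is a symptom of this, but the real missing ingredient is the \emph{maximality of $M$ as a matching in $G$}, which your proof never uses. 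Likewise, the vertices of $V(G)\setminus V(H)$ appearing on $OPT$-paths are not "absorbed" by Lemma~\ref{lemma03}; they genuinely escape the bound $|V(M_C)|$ unless one routes through a different argument.

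The paper closes the gap via an indirection you are missing: it forms $G_o$ by deleting exactly the untouched bad components $A_1,\ldots,A_{s+t}$, notes that $OPT(G)$ is then also optimal for $G_o$, applies Lemma~\ref{lemma01} to $G_o$ to get a maximum matching $M_o$ of $G_o$ with $|V(M_o)| \ge \frac 45 opt(G)$, and then observes that $M_o \cup \bigl(\bigcup_i E(A_i)\cap M\bigr)$ is a matching of $G$ of size $|M_o|+2s+t \le |M| = |M_C|+2\ell+h$. Combined with $2s+t \ge 2\ell+h$, this yields $|M_o|\le|M_C|$, hence the claim. This use of $M$'s maximality in $G$ is what lets a matching inside $OPT(G)$ (or rather, a max matching of $G_o$) be compared to $M_C$; your plan, which tries to relate the odd-index matching to $M_C$ directly, has no mechanism for that comparison.
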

\begin{proof}
Let $m_b$ be the number of edges of $M$ each appears in a bad component of $H$. 
Let $A_1, \ldots, A_s$ be the bi-stars of $H$ for each of which none of its vertices is incident to any edge in $OPT(G)$;
similarly, let $A_{s+1}, \ldots, A_{s+t}$ be the edge, triangle, or star components of $H$ for each of which none of its vertices is incident to any edge in $OPT(G)$.
Note that $E(G') \cap E(OPT(G))$ is a path-cycle cover of $G'$ with weight $m_b - 2s - t$.

Let $\ell$ be the number of bi-stars not rescued by $C$, and $h$ be the total number of edge, triangle, or star components not rescued by $C$.
Then, $|M| = |M_C| + 2\ell + h$ since each bi-star has two edges in $M$ and each other bad component has one edge in $M$. 
Note that the weight of the path-cycle cover $C$ is $m_b - 2\ell - h$.
Since $C$ is a maximum-weight path-cycle cover of $G'$, $m_b - 2s - t \le m_b - 2\ell - h$ and in turn $2s + t \ge 2\ell + h$. 

A crucial point is that for each bad component $A_i$, for $1\le i\le s+t$, no vertex of $A_i$ can appear in $OPT(G)$
because $opt(A_i) = 0$ and $OPT(G)$ has no edge connecting $A_i$ to the outside.
It follows that $OPT(G)$ is actually an optimal solution for the graph $G_o$ obtained from $G$ by removing the vertices of $A_i$ for every $i\in\{1,\ldots, s+t\}$. 
So, by Lemma~\ref{lemma01}, $|V(M_o)| \ge \frac 45 opt(G)$, where $M_o$ is a maximum matching in $G_o$. 

Note that $M_o\bigcup \left(\bigcup^{s+t}_{i=1} E(A_i) \cap M\right)$ is a matching of $G$,
and its size is $|M_o| + 2s + t$ because $|E(A_i) \cap M|=2$ for $1\le i\le s$ and $|E(A_{s+j}) \cap M|=1$ for $1\le j\le t$ (Invariant~\ref{inva01}).
Since $M$ is a maximum matching of $G$, $|M_o| + 2s + t \le |M| = |M_C| + 2\ell + h$.
It follows from $2s + t \ge 2\ell + h$ that $|M_o| \le |M_C|$, and hence $|V(M_C)| \ge \frac 45 opt(G)$.
\end{proof}

We will modify $C$ by several operations later on, but always maintain it to be a maximum-weighted path-cycle cover of $G'$.
During these modifications, the graph $H$ is unchanged and thus Lemma~\ref{lemma07} continues to hold for the corresponding $M_C$.

\begin{notation}
\label{nota03}
For the graph $H$ and a maximum-weighted path-cycle cover $C$ of $G'$,
\begin{itemize}
\parskip=0pt
\item
	$H+C$ denotes the spanning subgraph $(V(G), E(H)\cup C)$.
\item
	$\widehat{H+C}$ denotes the graph obtained from $H+C$ by contracting each connected component of $H$ into a single node. 
	In other words, the nodes of $\widehat{H+C}$ one-to-one correspond to the connected components of $H$ and
	two nodes are adjacent in $\widehat{H+C}$ if and only if $C$ contains an edge between the two corresponding connected components.
\item
	This way, each connected component $K$ of $H+C$ is partially contracted into a connected component 	$\widehat{K}$ of $\widehat{H+C}$.
	In the sequel, we always use $K$ to refer to a connected component of $H+C$,
	and $\widehat{K}$ denotes the connected component of $\widehat{H+C}$ corresponding to $K$. 
\end{itemize}
\end{notation}

%
\begin{description}
\parskip=0pt
\item[Step 2.3]
	Repeatedly remove an edge $e$ from $C$, that is, $C$ is updated to $C \setminus \{e\}$, if $C \setminus \{e\}$ has the same weight as $C$.
\end{description}

\begin{lemma}
\label{lemma08} 
The following statements hold at the termination of Step~2.3:
\begin{enumerate}
\parskip=0pt
\item $\widehat{K}$ is an isolated node, an edge, or a star.
\item If $\widehat{K}$ is an edge, then at least one endpoint of $\widehat{K}$ corresponds to a bad component of $H$.
\item If $\widehat{K}$ is a star, then each satellite of $\widehat{K}$ corresponds to a bad component of $H$.
\end{enumerate}
\end{lemma}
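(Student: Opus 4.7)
The plan is to first extract from Step~2.3 a structural invariant I will call the \emph{key property}, and then derive the three statements as short consequences. By Definition~\ref{def05} the weight of $C$ is the total weight of the bad components rescued by $C$, so $C\setminus\{e\}$ has the same weight as $C$ unless removing $e$ unrescues some bad component. Since Step~2.3 stops only when every $e\in C$ is non-removable, every $e\in C$ must be the unique edge of $C$ incident to some bad component $K_e$ of $H$. A side observation is that $C$ cannot contain two edges between the same pair of components of $H$ after Step~2.3, because either copy would then be redundant for every incident bad component; hence $\widehat{H+C}$ has at most one edge per pair of nodes. Translating the invariant gives the key property: \emph{every edge of $\widehat{H+C}$ has at least one endpoint that is a bad-component node of degree $1$ in $\widehat{H+C}$.}

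For Statement~1, consider a connected component $\widehat K$ with at least one edge and let $u$ be a node of $\widehat K$ of maximum degree. If $u$ has degree $1$, connectivity forces $\widehat K$ to be a single edge, and we are done. Otherwise $u$ has degree $\ge 2$, and the key property applied to each edge incident to $u$ forces every neighbour of $u$ to have degree $1$ (otherwise both endpoints would have degree $\ge 2$). Since leaves have no further neighbours, connectivity of $\widehat K$ makes it the star centred at $u$. Statement~2 is immediate from the construction of $G'$ in Step~2.1: every edge of $C\subseteq E(G')$ has at least one endpoint in a bad component of $H$, so the same holds for every edge of $\widehat{H+C}$. For Statement~3, in a star $\widehat K$ with centre $u$ and satellite $v$, the centre has degree $\ge 2$ in $\widehat{H+C}$ and therefore fails the degree-$1$ half of the key property applied to the edge $\{u,v\}$; so $v$ must be the promised bad-component endpoint.

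The main obstacle is in the first paragraph: one has to argue carefully that the weight of $C\setminus\{e\}$ drops precisely when removing $e$ unrescues some bad component, and one has to rule out parallel $C$-edges between the same pair of components of $H$ before speaking of degrees in the simple graph $\widehat{H+C}$. Once these technicalities are dispatched, Statements~1--3 follow from the key property in a few lines of degree bookkeeping.
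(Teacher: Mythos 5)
Your proposal is correct, and it is the same approach the paper has in mind: the paper's own proof is a one-line remark ("a simple contradiction as no edge of $C$ can be removed while keeping its weight"), and your argument is exactly the unpacking of that remark — termination of Step~2.3 forces every $C$-edge to be the unique rescue-edge of some bad component, which is your ``key property,'' and the three statements then follow by the short degree arguments you give. Your side observation about parallel $C$-edges between the same pair of $H$-components is a worthwhile clarification that the paper glosses over (and in fact it also follows from the uniqueness you establish, since a parallel edge would witness a second rescue-edge for the same bad component).
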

\begin{proof}
It is done by a simple contradiction as no edge of $C$ can be removed while keeping its weight.
\end{proof}

We call $K$ a {\em composite component} of $H+C$ if it contains two or more connected components of $H$.
By Lemma~\ref{lemma08}, for convenience,
if $\widehat{K}$ is an edge, we choose an endpoint corresponding to a bad component of $H$ as the {\em satellite} of $\widehat{K}$, while the other endpoint as the {\em center}.
This way, all the satellites of a composite component $\widehat{K}$ correspond to bad components of $H$.

\begin{definition}
\label{def06}
For each composite component $K$ of $H+C$, its {\em center element} denoted as $K_c$ is the component of $H$ corresponding to the center of $\widehat{K}$;
the other components of $H$ contained in $K$ are the {\em satellite-elements} of $K$.
\end{definition}

\begin{lemma}
\label{lemma09}
The following statements hold:
\begin{enumerate}
\parskip=0pt
\item A satellite-element of $K$ is an edge, a triangle, a star, or a bi-star, but not a $5$-path.
	If it is a triangle, then $K_c$ is a $5$-path.
\item $K_c$ is an edge, a star, a bi-star, or a $5$-path, but not a triangle.
	If $K_c$ is an edge or a star, then each satellite-element of $K$ is a bi-star.
\end{enumerate}
\end{lemma}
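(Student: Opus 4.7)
The plan is to derive everything from Lemma~\ref{lemma08} together with Statements~2 and~3 of Lemma~\ref{lemma05}, applied to the edges of $C$ that interconnect the components of $H$ inside a given composite component $K$.

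First, I would establish the structural content of Statement~1. By our convention after Lemma~\ref{lemma08}, the satellites of $\widehat{K}$ all correspond to bad components of $H$, regardless of whether $\widehat{K}$ is an edge or a star. Hence each satellite-element of $K$ is an edge, a triangle, a star, or a bi-star by Definition~\ref{def04}, ruling out a $5$-path immediately. To handle the triangle case, I would pick any satellite-element $T$ that is a triangle and look at the (unique) edge $\{v,w\}\in C$ between $T$ and $K_c$, with $v\in V(T)$ and $w\in V(K_c)$. Since $T$ is a triangle, Statement~2 of Lemma~\ref{lemma05} forces $w$ to lie as an internal but non-middle vertex of a $5$-path component of $H$, so $K_c$ must be that $5$-path.

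Next, for Statement~2, I would first rule out $K_c$ being a triangle. If $K_c$ were a triangle, then for any edge $\{v,w\}\in C$ from $K_c$ to some satellite-element $A$ (and such an edge exists because $K$ is composite), applying Statement~2 of Lemma~\ref{lemma05} with $K=K_c$ would force $w$ to lie in a $5$-path component of $H$; but $A$ is a bad component by the first paragraph, a contradiction. The remaining cases for $K_c$ (edge, star, bi-star, or $5$-path) are exactly the non-triangle options from Invariant~\ref{inva01}, so this completes the first assertion of Statement~2.

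Finally, to show that if $K_c$ is an edge or a star then each satellite-element is a bi-star, I would fix a satellite-element $A$ of $K$ and an edge $\{v,w\}\in C$ with $v\in V(K_c)$ and $w\in V(A)$. Since $K_c$ is an edge or a star, Statement~3 of Lemma~\ref{lemma05}, applied with the roles reversed by reading the edge as $\{w,v\}$ (so $K_c$ plays the role of the bad component containing the source endpoint), yields that the component of $H$ containing $w$ is either a bi-star or a $5$-path. Because $A$ is a bad component by the first paragraph, the $5$-path option is excluded and $A$ must be a bi-star. The only mild subtlety I anticipate is the bookkeeping about which endpoint of the $C$-edge plays the role of ``$v$'' versus ``$w$'' in Lemma~\ref{lemma05}, but since Lemma~\ref{lemma05} only requires that one endpoint sits inside a bad component while the other does not, the hypotheses are satisfied in each invocation above.
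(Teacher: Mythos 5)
Your proof is correct and follows essentially the same route as the paper: observe that satellite-elements are bad components by the post-Lemma~\ref{lemma08} convention, then use Statement~2 of Lemma~\ref{lemma05} (applied across a rescue-edge) for the triangle claims in both directions, and Statement~3 of Lemma~\ref{lemma05} for the edge/star case. The only small blemish is the remark about ``reading the edge as $\{w,v\}$'' near the end: with your own labeling ($v\in V(K_c)$, $w\in V(A)$), the endpoint in the bad component $K_c$ is already the one Lemma~\ref{lemma05} calls $v$, so no role-reversal is actually needed, but this does not affect the validity of the argument.
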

\begin{proof}
Recall that each satellite-element of $K$ is a bad component and hence cannot be a $5$-path. 
If a satellite-element of $K$ is a triangle, then $K_c$ is a $5$-path by Statement~2 of Lemma~\ref{lemma05}.
This proves the first statement.

If $K_c$ were a triangle, then by Statement~2 of Lemma~\ref{lemma05}, each satellite-element of $K$ would be a $5$-path, a contradiction against the first statement.
If $K_c$ is an edge or a star, then each satellite-element of $K$ is a bi-star because of Statement~3 of Lemma~\ref{lemma05} and the first statement. 
That is, the second statement holds.
\end{proof}

\subsection{Critical components} \label{sec:2ndAlgo}
For convenience, we define the {\em trunk} of $K$, denoted by $\widetilde{K}$, to be the subgraph of $K$ obtained by modifying $K$ as follows (cf. Figure~\ref{fig02}). 
\begin{itemize}
\parskip=0pt
\item For each satellite-element $S$ of $K$ that is a star or a bi-star, remove all vertices of $V(S) \setminus V(M \cup C)$ together with the edges incident to them.
\item If $K_c$ is a bad component of $H$, remove all vertices of $V(K_c) \setminus V(M)$ together with the edges incident to them.
\end{itemize}

\begin{figure}[thb]
\begin{center}
\framebox{
\begin{minipage}{0.5\textwidth}
\begin{tikzpicture}[scale=0.5,transform shape]
\draw [thick, line width = 1pt] (-8, 2) -- (-7, 2);
\draw [yellow, thick, line width = 1pt] (-7, 2) -- (-6, 2);
\draw [thick, line width = 1pt] (-6, 2) -- (-5, 2);
\draw [green, thick, line width = 1pt] (-6, 2) -- (-6, 0.5);
\draw [yellow, thick, line width = 1pt] (-6, 0.5) -- (-6, -0.5);
\draw [thick, line width = 1pt] (-6, 0.5) -- (-5, 0.5);
\draw [thick, line width = 1pt] (-6, -0.5) -- (-5, -0.5);
\draw [thin, line width = 0.5pt] (-6, 0.5) -- (-6.6, 0.5);
\draw [thin, line width = 0.5pt] (-6, 0.5) -- (-6.6, 0.8);
\draw [thin, line width = 0.5pt] (-6, 0.5) -- (-6.6, 0.2);
\draw [thin, line width = 0.5pt] (-7, 2) -- (-7, 1.4);
\draw [thin, line width = 0.5pt] (-7, 2) -- (-7.3, 1.4);
\draw [thin, line width = 0.5pt] (-7, 2) -- (-6.7, 1.4);
\filldraw (-8, 2) circle(.15);
\filldraw (-7, 2) circle(.15);
\filldraw (-6, 2) circle(.15);
\filldraw (-5, 2) circle(.15);
\filldraw (-6,0.5) circle(.15);
\filldraw (-6,-0.5) circle(.15);
\filldraw (-5,0.5) circle(.15);
\filldraw (-5,-0.5) circle(.15);
\filldraw[fill = white] (-7,1.4) circle(.15);
\filldraw[fill = white] (-7.3,1.4) circle(.15);
\filldraw[fill = white] (-6.7,1.4) circle(.15);
\filldraw[fill = white] (-6.6,0.2) circle(.15);
\filldraw[fill = white] (-6.6,0.5) circle(.15);
\filldraw[fill = white] (-6.6,0.8) circle(.15);
\node[font=\fontsize{18}{6}\selectfont] at (-4,0.7) {$K$};

\draw [-latex, thick] (-2, 0.7) to (1, 0.7);

\draw [thick, line width = 1pt] (3, 2) -- (4, 2);
\draw [yellow, thick, line width = 1pt] (4, 2) -- (5, 2);
\draw [thick, line width = 1pt] (5, 2) -- (6, 2);
\draw [green, thick, line width = 1pt] (5, 2) -- (5, 0.5);
\draw [yellow, thick, line width = 1pt] (5, 0.5) -- (5, -0.5);
\draw [thick, line width = 1pt] (5, 0.5) -- (6, 0.5);
\draw [thick, line width = 1pt] (5, -0.5) -- (6, -0.5);
\filldraw (3, 2) circle(.15);
\filldraw (4, 2) circle(.15);
\filldraw (5, 2) circle(.15);
\filldraw (6, 2) circle(.15);
\filldraw (5,0.5) circle(.15);
\filldraw (5,-0.5) circle(.15);
\filldraw (6,0.5) circle(.15);
\filldraw (6,-0.5) circle(.15);
\node[font=\fontsize{18}{6}\selectfont] at (7,0.7) {$\widetilde{K}$};
\end{tikzpicture}
\end{minipage}}
\end{center}
\captionsetup{width=1.0\linewidth}
\caption{An illustration of modifying $K$ into $\widetilde{K}$.
	The filled (respectively, blank) vertices are in (respectively, not in) $V(M)$.
	The thick (thin, respectively) edges are in the matching $M$ (not in $M \cup C$ but in $H$, respectively) and
	the green (yellow, respectively) edges are in $C$ (in $\mcM$, see Step~1.2, respectively).\label{fig02}}
\end{figure}
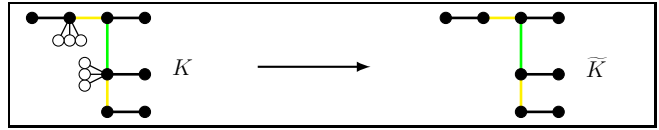

\begin{lemma}
\label{lemma10}
$C \cap E(K) = C \cap E(\widetilde{K})$ and $|V(\widetilde{K})| \le 55$. 
\end{lemma}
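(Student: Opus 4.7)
The plan is to establish the two assertions separately. For the identity $C\cap E(K)=C\cap E(\widetilde{K})$, I will verify that no vertex removed in the transformation $K\to\widetilde{K}$ is incident to any edge of $C$. For the bound on $|V(\widetilde{K})|$, I will case-split on the type of $K_c$ and carry out an accounting argument using the fact that $C$, being a path-cycle cover of $G'$, has maximum degree $2$ at every vertex.

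For the first assertion, the two kinds of removed vertices must be handled differently. A vertex $v$ deleted from a satellite-element $S$ that is a star or bi-star satisfies $v\notin V(M\cup C)$ by definition of the trunk, so trivially $v$ carries no $C$-edge. The subtler sub-case is when $K_c$ is bad and the deleted vertex $v$ lies in $V(K_c)\setminus V(M)$. Any $C$-edge incident to $v$ must land in a satellite-element of $K$, because $\widehat{K}$ is a star centered at $K_c$ by Lemma~\ref{lemma08} and Definition~\ref{def06}; but by Lemma~\ref{lemma09} no satellite-element is a $5$-path, whereas Statement~2 of Lemma~\ref{lemma05} applied to the bad component $K_c$ forces every neighbor of $v$ outside $K_c$ to be an internal non-middle vertex of a $5$-path of $H$. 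This contradiction rules out any $C$-edge at such $v$.

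For the bound $|V(\widetilde{K})|\le 55$, I claim the extremal case is $K_c$ being a $5$-path. There $|V(K_c)\cap V(\widetilde{K})|=5$, and since each of the five vertices of $K_c$ has $C$-degree at most $2$, $K_c$ is incident to at most $10$ edges of $C$. Write $c_S$ for the number of $C$-edges between $K_c$ and a satellite-element $S$. Every satellite has $c_S\ge 1$ (as $S$ is adjacent to $K_c$ in $\widehat{K}$), so the number $s$ of satellite-elements satisfies $s\le \sum_S c_S\le 10$. A routine case check bounds $|V(S)\cap V(\widetilde{K})|$ by $2$ (edge satellite), $3$ (triangle), $2+c_S$ (star, using $|V(S)\cap V(M)|=2$), or $4+c_S$ (bi-star, using $|V(S)\cap V(M)|=4$); the bi-star bound dominates, so
\[
\sum_S |V(S)\cap V(\widetilde{K})|\le 4s+\sum_S c_S\le 40+10=50,
\]
and therefore $|V(\widetilde{K})|\le 5+50=55$. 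When $K_c$ is a bad component, the argument of the first assertion confines $C$-edges at $K_c$ to $V(K_c)\cap V(M)$, of cardinality at most $4$, producing a strictly smaller bound in each remaining case.

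The main obstacle is the bad-$K_c$ sub-case of the first assertion: one must simultaneously invoke Statement~2 of Lemma~\ref{lemma05} and the non-$5$-path status of satellite-elements from Lemma~\ref{lemma09} to exclude any $C$-edge at a removed vertex $v\in V(K_c)\setminus V(M)$. Once this is in hand, the vertex count reduces to the straightforward bookkeeping above, with the worst-case arithmetic $5+4\cdot 10+10=55$ being essentially tight only when $K_c$ is a $5$-path attached through $10$ bi-star satellites each via a single $C$-edge.
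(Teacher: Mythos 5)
Your proof is correct and follows essentially the same approach as the paper's. The one small variation is in the first assertion: the paper shows directly that both endpoints of any $C$-edge in $K$ survive the trimming (applying Statement~1 of Lemma~\ref{lemma05} to the satellite side), whereas you show that no removed vertex carries a $C$-edge (applying Statement~2 of Lemma~\ref{lemma05} from the $K_c$ side); these are equivalent dual viewpoints, and the bounding argument for $|V(\widetilde{K})|\le 55$ matches the paper's accounting of at most five surviving $K_c$-vertices, at most ten satellites via the degree-$2$ constraint of $C$, and at most five surviving vertices per satellite.
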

\begin{proof}
Clearly, $C \cap E(\widetilde{K}) \subseteq C \cap E(K)$. 
To prove that $C \cap E(K) \subseteq C \cap E(\widetilde{K})$, consider an edge $e = \{v, w\} \in C \cap E(K)$. 
Without loss of generality, we assume $v \in V(K_c)$ and $w$ is in a satellite-element $S$ of $K$.
Since $S$ is a bad component of $H$, Statement~1 of Lemma~\ref{lemma05} implies that either $K_c$ is a $5$-path or $v \in V(M)$. 
In both cases, $v \in V(\widetilde{K})$. 
By the construction of $\widetilde{K}$, $e$ remains in $\widetilde{K}$. 
Hence, $C \cap E(K) \subseteq C \cap E(\widetilde{K})$ and in turn $C \cap E(K) = C \cap E(\widetilde{K})$.  

We next prove $|V(\widetilde{K})| \le 55$. 
By Statement~2 of Lemma~\ref{lemma09}, $K_c$ is an edge, a star, a bi-star, or a $5$-path. 
If $K_c$ is an edge, a star, or a bi-star, then $|V(K_c) \cap V(M)| \le 4$ by Invariant~\ref{inva01}, and the vertices in $V(K_c) \setminus V(M)$ are not in $\widetilde{K}$.
If $K_c$ is a $5$-path, then clearly it remains in $\widetilde{K}$. 
Thus, at most five vertices of $K_c$ remain in $\widetilde{K}$.

By Statement~1 of Lemma~\ref{lemma09}, each satellite-element $S$ of $K$ is an edge, a triangle, a star, or a bi-star.
If $S$ is an edge or a triangle, then at most three vertices of $S$ remain in $\widetilde{K}$.
If $S$ is a star or a bi-star, then $|V(S) \cap V(M)| \le 4$ and $|V(S) \cap V(C)| = 1$ by Invariant~\ref{inva01},
and the vertices in $V(S) \setminus V(M \cup C)$ are not in $\widetilde{K}$. 
Thus, at most five vertices of $S$ remain in $\widetilde{K}$.

Since $C$ is a path-cycle cover of $G'$, each vertex of $K_c$ can be connected to at most two satellite-elements of $K$ by the edges of $C \cap E(K)$.
From the above $C \cap E(K) = C \cap E(\widetilde{K})$ and at most five vertices of $K_c$ remain in $\widetilde{K}$, at most ten satellite-elements of $K$ remain in $\widetilde{K}$.
Therefore, $|V(\widetilde{K})| \le 5 + 5 \times 10 = 55$.
\end{proof}

By Lemma~\ref{lemma10}, $|V(\widetilde{K})| \le 55$ and thus $OPT(\widetilde{K})$ can be computed in $O(1)$ time.
We define {\em critical components} of $H+C$ in the following.

\begin{definition}
\label{def07}
For a composite component $K$ of $H+C$, set $s(K) = |V(K)\cap V(M_C)|$ and $\eta(K) = opt(\widetilde{K})$.
Set $\alpha = \frac {15}8$. 
A {\em critical component} of $H+C$ is a connected component $K$ of $H+C$ with $\frac {s(K)}{\eta(K)} \ge \alpha$.
(Figure~\ref{fig03} shows all possible structures of a critical component.)
\end{definition}

Although $\widetilde{K}$ has a simpler structure than $K$, the exact value of $\eta(K)$ still takes time to compute.
In the remainder of this subsection, we establish a lower bound on $\eta(K)$ by finding a feasible solution for $\widetilde{K}$.
We need the following definitions and notations.

\begin{definition}
\label{def08}
For a composite component $K$ of $H+C$, a vertex in both $K_c$ and $\widetilde{K}$ is called an {\em anchor} of $K$.

For each satellite-element $S$ of $K$, the unique edge $e\in C$ connecting $S$ to an anchor $v$ in $K$ is called the {\em rescue-edge} for $S$ and 
$v$ is called the {\em supporting-anchor} for $S$. 
We say that $e$ {\em rescues} $S$ and $v$ {\em anchors} $S$.

For a positive integer $j$, an anchor $v$ is a {\em $j$-anchor} if $v$ anchors exactly $j$ satellite-elements of $K$.
\end{definition}

We now focus on an anchor $v$ of $K$.
Since $C$ is a path-cycle cover of $G'$, $v$ can anchor at most two distinct satellite-elements of $K$. 
That is, $v$ is a $0$-, $1$-, or $2$-anchor.
By Statement~1 of Lemma~\ref{lemma09}, each satellite-element of $K$ is either a triangle, an edge, a star, or a bi-star.
In the next definition, we distinguish several types of $1$-anchors and $2$-anchors.

\begin{definition}
\label{def09}
For each $i\in\{0,1\}$, $O_i$ denotes the set of all vertices $v$ in $H+C$ such that 
$v$ is a $1$-anchor of some connected component in $H+C$ and $v$ anchors exactly $i$ bi-stars. 
Similarly, for each $i\in\{0,1,2\}$, $T_i$ denotes the set of all vertices $v$ in $H+C$ such that 
$v$ is a $2$-anchor of some connected component in $H+C$ and $v$ anchors exactly $i$ bi-stars.
\end{definition}

As we mentioned before, we want to establish a lower bound on $\eta(K)$.
Consider an anchor $v$ of $K$ and a satellite-element $S$ of $K$ anchored by $v$.
Let $\{v,u\}$ be the rescue-edge for $S$. 
It is possible that a longest path in $S$ starting at $u$ can be combined with the edge $\{v,u\}$ to form a $4^+$-path starting at $v$. 
This motivates the following notation.

\begin{notation}
\label{nota04}
Let $v$ be an anchor of $K$. 
\begin{itemize}
\parskip=0pt
\item If $v$ is a $0$-anchor, then $Q_v$ denotes the $1$-path consisting of $v$ only;
\item if $v$ is a $1$- or $2$-anchor, then $Q_v$ denotes a longest path in $\widetilde{K}$ that starts at $v$, then crosses an edge $\{v,u\} \in C$, and 
	further traverses a longest path in the satellite-element $S$ of $K$ containing $u$.
\item If $v$ is a $2$-anchor, then $P_v$ denotes a longest path in $\widetilde{K}$ that contains the two edges of $C$
	incident to $v$.
\end{itemize}
\end{notation}

\begin{remark}
\label{remark01}
When $v$ is a $2$-anchor, $Q_v$ can be a portion of $P_v$.
So, if $v \in O_0 \cup T_0$, then $Q_v$ is a $3^+$-path; 
if $v\in O_1 \cup T_1 \cup T_2$, then $Q_v$ is a $4^+$-path.
Moreover, if $v \in T_0$, then $P_v$ is a $5^+$-path; 
if $v\in T_1$, then $Q_v$ is a $6^+$-path;
lastly, if $v\in T_2$, then $P_v$ is a $7^+$-path.
\end{remark}

By Invariant~\ref{inva01}, if $K_c$ is an edge or star, then exactly two vertices of $K_c$ (which are the endpoints of the unique edge in both $M$ and $K_c$) may be anchors of $K$. 
Similarly, if $K_c$ is a bi-star, then exactly four vertices of $K_c$ (which are the endpoints of the two edges in both $M$ and $K_c$) may be anchors of $K$. 
Otherwise, $K_c$ is a $5$-path and all vertices of $K_c$ may be anchors of $K$. 
For ease of presentation, we define the following notation.

\begin{notation}
\label{nota05}
If $K_c$ is an edge or star, then $v_1$-$v_2$ denotes the unique edge in both $M$ and $K_c$.
If $K_c$ is a bi-star, then $v_1$-$v_2$-$v_3$-$v_4$ denotes a $4$-path in $K_c$ such that $\{v_1, v_2\}$ and $\{v_3, v_4\}$ are in $M$.
Lastly, if $K_c$ is a $5$-path, then $v_1$-$v_2$-$v_3$-$v_4$-$v_5$ denotes the path, where $\{v_1, v_2\}$ and $\{v_4, v_5\}$ are in $M$. 

In each case, an anchor of $K$ is a $v_i$ for some $i$.
\end{notation}

\begin{lemma}
\label{lemma11}
$K$ has at most five anchors.
Moreover, the following statements hold:
\begin{enumerate}
\parskip=0pt
\item If $K_c$ is an edge or star, then no anchor of $K$ is in $O_0 \cup T_0 \cup T_1$.
\item If $K_c$ is a bi-star, then at most two anchors are in $O_0 \cup T_0 \cup T_1$. 
	Moreover, if exactly two anchors $v_i, v_j$ with $j \ge i+1$ are in $O_0 \cup T_0 \cup T_1$, then $(i, j) \in \{(1, 2), (3, 4)\}$.
\item If $K_c$ is a $5$-path, then at most two anchors are in $O_0 \cup T_0 \cup T_1$. 
	Moreover, if exactly two anchors $v_i, v_j$ with $j \ge i+1$ are in $O_0 \cup T_0 \cup T_1$, then $(i, j) \in \{(1, 2), (2, 4), (4, 5)\}$.
\end{enumerate}
\end{lemma}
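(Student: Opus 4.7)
The five-anchors bound is immediate: anchors lie in $V(K_c)\cap V(\widetilde{K})$ and $|V(K_c)|\le 5$ by Lemma~\ref{lemma09}. For Statement~1 when $K_c$ is an edge or a star, my plan is to show that every satellite-element of $K$ must be a bi-star. Statement~3 of Lemma~\ref{lemma05} forces the supporting-anchor of any edge or star satellite to lie in a bi-star or a $5$-path, so no such satellite is possible when $K_c$ is an edge or a star; Statement~2 of Lemma~\ref{lemma05} similarly forbids triangle satellites (which demand a $5$-path center). Consequently every anchor only anchors bi-stars and hence lies in $O_1\cup T_2$, disjoint from $O_0\cup T_0\cup T_1$.

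For Statement~2, I first exclude triangle satellites via Statement~2 of Lemma~\ref{lemma05} (since $K_c$ is not a $5$-path), so an anchor lies in $O_0\cup T_0\cup T_1$ iff it anchors at least one edge or star satellite. The key observation is structural: the bi-star $K_c$ originates from Step~1.2 pairing the two edge components $\{v_1,v_2\}$ and $\{v_3,v_4\}$ through an edge of the matching $\mcM$ of $\mcG$. Every edge or star satellite-element $S$ corresponds to an unmatched $\mcG$-node, and its rescue-edge, which by Lemma~\ref{lemma05} lands on a vertex of $V(M)\cap S$, witnesses a $\mcG$-edge from $S$'s node to the $\mcG$-node containing the anchor. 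If two anchors in $O_0\cup T_0\cup T_1$ span different $M$-sides of $K_c$, then deleting the $\mcM$-edge between $\{v_1,v_2\}$ and $\{v_3,v_4\}$ and inserting two new matching edges to the corresponding satellites would enlarge $\mcM$, contradicting its maximality. This forces any two anchors in $O_0\cup T_0\cup T_1$ to share an $M$-side, yielding only $(1,2)$ and $(3,4)$; the bound ``at most two'' then follows by pigeonhole on the two sides.

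For Statement~3 the argument is significantly more intricate because $K_c$ is already a $5$-path in $H_1$ and does not appear as a node of $\mcG$. The main tool I would use is the non-existence of augmenting triples and pairs with respect to $H_1$ at the termination of Step~1.1. For each forbidden pair $(i,j)\notin\{(1,2),(2,4),(4,5)\}$, I plan to construct an explicit augmenting pair $(K_c,e)$ for some edge component $e$ of $H_1$, with connecting edge $\{x,y\}$ (where $y\in\{v_1,v_3,v_5\}$) drawn from a rescue-edge, and verify S1 and S2 of Definition~\ref{def03}. Condition S1 typically follows by combining the rescue-edge with the known $K_c$-edges from $v_3$ to $v_2$ and $v_4$, while S2 is obtained by showing that after removing $v_3$ from $K_c$ one of $\{v_1,v_2\}$ or $\{v_4,v_5\}$ re-emerges as an edge component which, via the other anchor's rescue-edge, extends $\mcM$ and strictly increases $q_4$, or yields a new augmenting triple. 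The allowed pairs $(1,2),(2,4),(4,5)$ are precisely those in which no such augmenting pair can be forced. The principal obstacle is the case-analysis bookkeeping: several forbidden pairs, each crossed with edge, star, or triangle satellite types, with triangles handled separately via Statement~2 of Lemma~\ref{lemma05} (which pins triangle anchors to $v_2$ or $v_4$); the ``at most two'' bound then follows because any three anchors in $O_0\cup T_0\cup T_1$ include a pair outside the allowed set.
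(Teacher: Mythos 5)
Your proposal follows essentially the same route as the paper's proof: Statement~1 via Statements~2 and~3 of Lemma~\ref{lemma05}, Statement~2 via the maximality of the Step~1.2 matching $\mcM$, and Statement~3 via exhibiting an augmenting pair $(K_c, e)$ in $H_1$ built from a rescue-edge landing on an odd-indexed vertex $v_1$, $v_3$, or $v_5$, with S2 discharged either by increasing $q_4$ through the other anchor's rescued satellite or by producing a new augmenting triple (the latter covers the triangle-satellite case at $v_2$ or $v_4$). One small inaccuracy: the ``at most five anchors'' bound does not come from ``$|V(K_c)|\le 5$'' (which is false, since a star or bi-star center element may have arbitrarily many satellites); it comes from the definition of $\widetilde{K}$, which when $K_c$ is a bad component deletes $V(K_c)\setminus V(M)$, combined with Invariant~\ref{inva01}, which bounds $|V(K_c)\cap V(M)|$ by $2$ or $4$, while a $5$-path center element has exactly $5$ vertices.
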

\begin{proof}
The paragraph before Notation~\ref{nota05} concludes that $K$ has at most five anchors.
For each $v_j \in O_0 \cup T_0 \cup T_1$, we use $S_j$ to denote an arbitrary satellite-element of $K$ that is anchored by $v_j$ and is an edge, triangle, or star.
The existence of $S_j$ together with Statement~3 of Lemma~\ref{lemma05} implies Statement~1 of this lemma. 

We use $w_j$ to denote the unique vertex of $S_j$ with $\{v_j,w_j\}\in C$.
If $w_j \in V(M)$, then by Invariant~\ref{inva01}, a unique edge in $M$ is incident to $w_j$ and we use $e_j$ to denote this edge.
We use these notations to prove the next two statements separately.

{\em Statement~2.}
It suffices to show that if $v_1$ or $v_2$ is in $O_0 \cup T_0 \cup T_1$, then neither $v_3$ nor $v_4$ is in $O_0 \cup T_0 \cup T_1$.
For a contradiction, assume that both $v_i$ and $v_j$ are in $O_0 \cup T_0 \cup T_1$ for some $i\in\{1,2\}$ and some $j\in\{3,4\}$.
We suppose that $i=1$ and $j=3$; the other cases are similar. 
By Statement~2 of Lemma~\ref{lemma05}, both $w_1$ and $w_3$ are in $V(M)$. 
Recall that we compute a maximum matching $\mcM$ in a graph $\mcG$ and use it to create as many $4$-path components in $H$ as possible in Step~1.2. 
Indeed, $\{v_2, v_3\}$ was added to $H$ in Step~1.2 because $\mcM$ contains an edge $e$ whose endpoints correspond to the two edges $\{v_1,v_2\}$ and $\{v_3,v_4\}$ of $M$. 
But now, we would have been able to modify $\mcM$ by replacing $e$ with two other edges so that instead of adding $\{v_2,v_3\}$ to $H$ in Step~1.2,
both $\{v_1, w_1\}$ and $\{v_3, w_3\}$ are added to $H$ in Step~1.2. 
In other words, we would have been able to increase the size of $\mcM$, a contradiction against the maximality of $\mcM$.
So, the statement is proven.

{\em Statement~3.}  
We first claim that if $v_3 \in O_0 \cup T_0 \cup T_1$, then none of $v_1$, $v_2$, $v_4$, and $v_5$ is in $O_0 \cup T_0 \cup T_1$.
For a contradiction, assume that both $v_3$ and $v_i$ are in $O_0 \cup T_0 \cup T_1$ for some $i\in\{1,2,4,5\}$. 
We suppose that $i=2$; the other cases are similar.
Then, $K$ has a satellite-element $S_3$ anchored by $v_3$ that is an edge or a star. 
Let $e_3$ be the unique edge in both $M$ and $S_3$.
Now, one can easily verify that $H_1$ (i.e., the graph $H$ immediately after Step~1.1) would have contained an augmenting pair consisting of the $5$-path $K_c$ and $e_3$, a contradiction.
So, the claim holds.

We next claim that if $v_1 \in O_0 \cup T_0 \cup T_1$, then none of $v_3$, $v_4$, and $v_5$ is in $O_0 \cup T_0 \cup T_1$.
To this end, suppose $v_1\in O_0 \cup T_0 \cup T_1$. 
Then, $K$ has a satellite-element $S_1$ anchored by $v_1$ that is an edge or a star. 
Let $e_1$ be the unique edge in both $M$ and $S_1$.
Clearly, $v_3 \notin O_0 \cup T_0 \cup T_1$ because $v_1 \in O_0 \cup T_0 \cup T_1$.
If $v_4$ or $v_5$ were in $O_0 \cup T_0 \cup T_1$, then $H_1$ would have contained an augmenting pair consisting of the $5$-path $K_c$ and $e_1$, a contradiction.
Thus, the second claim holds.
By this claim and symmetry, we also know that if $v_5\in O_0 \cup T_0 \cup T_1$, none of $v_1$, $v_2$, and $v_3$ is in $O_0 \cup T_0 \cup T_1$. 
Therefore, the statement holds.
\end{proof}

We now start bounding $\eta(K)$ and $\frac {s(K)}{\eta(K)}$, for different cases of a composite component $K$.

\begin{lemma}
\label{lemma12}
Suppose that $K_c$ is a bi-star or $5$-path.
Then, the following statements hold:
\begin{enumerate}
\parskip=0pt
\item If $|(O_1\cup T_1\cup T_2)\cap V(K_c)| \ge 4$, then $\eta(K)\ge 16$.
\item If $|(O_1\cup T_1\cup T_2)\cap V(K_c)| \ge 3$ and $|(O_0\cup T_0)\cap V(K_c)| \ge 1$, then $\eta(K)\ge 15$.
\item If $|(O_1\cup T_1\cup T_2)\cap V(K_c)| \ge 3$, then $\eta(K)\ge 13$.
\item If the total number of $1$- and $2$-anchors in $K_c$ is at least two, then $\eta(K) \ge 6$.
	Moreover, if $|(O_1 \cup T_1 \cup T_2)\cap V(K_c)| \ge 1$, then $\eta(K) \ge 7$.
\item If $|(T_0 \cup T_1)\cap V(K_c)| \ge 1$ and $|(O_0 \cup O_1)\cap V(K_c)| \ge 2$, then $\eta(K) \ge 9$.
\end{enumerate}
\end{lemma}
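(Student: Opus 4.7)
The plan is to prove each part of the lemma by exhibiting a set of vertex-disjoint $5^+$-paths in $\widetilde{K}$ whose total number of vertices matches the claimed lower bound on $\eta(K)=opt(\widetilde{K})$. The raw material consists of the paths $Q_v$ and $P_v$ of Notation~\ref{nota04} with the length estimates from Remark~\ref{remark01}, together with the edges of $K_c$, which let me concatenate two $Q_v$'s into one longer path. Two structural observations make the concatenation work: each satellite-element has a unique rescue-edge, so the $Q$-paths and $P$-paths of distinct anchors visit pairwise disjoint satellite-elements; and each $Q_v$ meets $K_c$ only at its starting vertex $v$. Consequently, for two anchors $v,v'$ of $K_c$ that are adjacent in $K_c$ or separated by one or two non-anchor vertices of $K_c$, the reverse of $Q_v$ together with the bridging $K_c$-sub-path and $Q_{v'}$ form a single path in $\widetilde{K}$ of length at least $|Q_v|+|Q_{v'}|$.

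Using this building block I will dispatch the five statements. For Statement~1, I partition the four $O_1\cup T_1\cup T_2$-anchors of $K_c$ into two pairs joined through $K_c$ and concatenate each pair, producing two paths of at least $4+4=8$ vertices, totalling $\ge 16$. For Statement~2, Lemma~\ref{lemma11} guarantees that the unique $O_0\cup T_0$-anchor is adjacent in $K_c$ (or separated by one non-anchor vertex) to some $O_1\cup T_1\cup T_2$-anchor; concatenating these two gives a path of at least $3+4=7$ vertices, while pairing the remaining two $O_1\cup T_1\cup T_2$-anchors gives $\ge 8$ vertices, summing to $\ge 15$. For Statement~3, I pair two of the three $O_1\cup T_1\cup T_2$-anchors via $K_c$ for an $8^+$-path, and extend the third $Q_v$ by appending one or two adjacent non-anchor vertices of $K_c$ into a $5^+$-path, for a total of $\ge 13$. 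Statement~4 follows from a single pairwise concatenation which yields $\ge 6$ vertices when both anchors lie in $O_0\cup T_0$ and $\ge 7$ as soon as one lies in $O_1\cup T_1\cup T_2$. For Statement~5, when $v_*\in T_0$ I use $P_{v_*}$ as a standalone $5^+$-path; combined with a concatenation of the two $O_0\cup O_1$-anchors (which always yields $\ge 6$ vertices because every $Q$-path of an $O$-anchor has $\ge 3$ vertices), this totals $\ge 11$. When $v_*\in T_1$, the sharper estimates of Remark~\ref{remark01} for $P_{v_*}$ yield even more.

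The principal obstacle is the combinatorial enumeration driven by Lemma~\ref{lemma11}: depending on whether $K_c$ is a bi-star or a $5$-path, only specific position pairs may simultaneously host $O_0\cup T_0\cup T_1$-anchors, and for each admissible placement the constructed paths must be verified to be vertex-disjoint and individually to have $\ge 5$ vertices. Care is especially required in tight configurations, such as three $O_1\cup T_1\cup T_2$-anchors placed at $v_2,v_3,v_4$ of a $5$-path $K_c$ for Statement~3, where pairing through the middle anchor $v_3$ would waste its satellite contribution; the correct construction is to pair two consecutive anchors into an $8^+$-path and extend the third $Q_v$ via one adjacent non-anchor vertex of $K_c$.
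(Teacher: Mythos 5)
Your plan for Statements 1--4 matches the paper's proof essentially line for line: partition the relevant anchors into consecutive pairs, concatenate each pair's $Q$-paths through the bridging $K_c$-subpath, and sum the guaranteed lengths from Remark~\ref{remark01}. (For Statement~2 you invoke Lemma~\ref{lemma11} to place the lone $O_0\cup T_0$-anchor, but this is an unnecessary detour: simply pairing $(v_{j_1},v_{j_2})$ and $(v_{j_3},v_{j_4})$ among the four ordered anchors works regardless of which of the four is the $O_0\cup T_0$-anchor, since only one pair can contain it and so yields $\ge 7$ while the other yields $\ge 8$.)

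There is a genuine gap in your argument for Statement~5. You propose to use $P_{v_*}$ (where $v_*$ is the $T_0\cup T_1$-anchor) together with a concatenation of the two $O_0\cup O_1$-anchors through their $K_c$-bridge, claiming the two paths total $\ge 11$. This fails whenever $v_*$ lies strictly between the two $O$-anchors on $K_c$: the bridging subpath must pass through $v_*$, while $P_{v_*}$ also uses $v_*$, so the two paths are not vertex-disjoint. The paper handles this via a case split. If $v_*$ is the first or last of the three anchors in $K_c$-order, your construction is valid and yields $\ge 11$. If $v_*$ is the middle anchor, one cannot use $P_{v_*}$ at all; instead the paper shows that pairing the two outer $O$-anchors alone gives a $9^+$-path (the bridge has at least one interior vertex, and by Lemma~\ref{lemma11} the $O$-anchors' positions force $|Q|$-sums of $\ge 8$ whenever the bridge is short), except in a tight subcase with $(j_1,j_2,j_3)=(1,2,3)$, $v_1\in O_0$, $v_2\in T_0\cup T_1$, $v_3\in O_1$, where the paper constructs $P_{v_2}$ together with $Q_{v_3}$ extended by $v_4$ to get $\ge 10$. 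Without this case analysis, your construction does not establish the stated bound, and your claimed lower bound of $11$ is in fact false in the middle-anchor case (the lemma only asserts $\ge 9$).
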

\begin{proof} We prove the statements separately as follows.

{\em Statement~1.}
Assume that $|(O_1\cup T_1\cup T_2)\cap V(K_c)| \ge 4$.
Then, no matter $K_c$ is a bi-star or $5$-path, we can always find four anchors $v_{j_1}, v_{j_2}, v_{j_3}, v_{j_4}$ in $O_1\cup T_1\cup T_2$ such that $1 \le j_1 < j_2 < j_3 < j_4 \le 5$.
So, we can construct two vertex-disjoint $8^+$-paths by using the subpath of $K_c$ from $v_{j_i}$ to $v_{j_{i+1}}$ to connect $Q_{v_{j_i}}$ and $Q_{v_{j_{i+1}}}$ for each $j\in\{1, 3\}$.
Therefore, $\eta(K) \ge 16$ and we are done.

{\em Statement~2.}
Assume that $|(O_1\cup T_1\cup T_2)\cap V(K_c)| \ge 3$ and $|(O_0\cup T_0)\cap V(K_c)| \ge 1$.
Then, no matter $K_c$ is a bi-star or $5$-path, 
we can always find four anchors $v_{j_1}, v_{j_2}, v_{j_3}, v_{j_4}$ such that
$1 \le j_1 < j_2 < j_3 < j_4 \le 5$ and three of them are in $O_1\cup T_1\cup T_2$ and the other one is in $O_0\cup T_0$.
Then, for each $i\in\{1,3\}$, we can use the subpath of $K_c$ from $v_{j_i}$ to $v_{j_{i+1}}$ to connect $Q_{v_{j_i}}$ and $Q_{v_{j_{i+1}}}$ into a path; 
one of the paths is an $8^+$-path and the other is a $7^+$-path.
Therefore, $\eta(K) \ge 15$ and we are done.

{\em Statement~3.}
Assume that $|(O_1\cup T_1\cup T_2)\cap V(K_c)| \ge 3$.
Then, no matter $K_c$ is a bi-star or $5$-path, we can always find three anchors $v_{j_1}, v_{j_2}, v_{j_3}$ in $O_1\cup T_1\cup T_2$ such that $1 \le j_1 < j_2 < j_3 \le 5$.
First suppose that $v_{j_1}$ or $v_{j_3}$ is not of degree~1 in $K_c$.
Without loss of generality, we assume $v_{j_3}$ is not of degree~1 in $K_c$.
Then, we can use the subpath of $K_c$ from $v_{j_1}$ to $v_{j_2}$ to connect $Q_{v_{j_1}}$ and $Q_{v_{j_2}}$ into an $8^+$-path
and the edge $\{v_{j_3}, v_{j_3+1}\}$ to connect $Q_{v_{j_3}}$ into a $5^+$-path.
Therefore, $\eta(K) \ge 13$ and we are done.

Next suppose that both $v_{j_1}$ and $v_{j_3}$ are of degree~1 in $K_c$.
Then, either $j_2 \ge j_1+2$ or $j_2 \le j_3-2$.
We assume $j_2 \le j_3-2$; the other case is similar.
Since $j_2 \le j_3-2$, we can use the subpath of $K_c$ from $v_{j_1}$ to $v_{j_2}$ to connect $Q_{v_{j_1}}$ and $Q_{v_{j_2}}$ into an $8^+$-path and
use the edge $\{v_{j_3}, v_{j_3-1}\}$ to connect $Q_{v_{j_3}}$ and $Q_{v_{j_3-1}}$ into a $5^+$-path.
Therefore, $\eta(K) \ge 13$ again and we are done.

{\em Statement~4.}
Assume that the total number of $1$- and $2$-anchors is at least two.
Then, no matter $K_c$ is a bi-star or $5$-path, we can always find two anchors $v_{j_1}, v_{j_2}$ such that $1 \le j_1 < j_2 \le 5$ and
neither $v_{j_1}$ nor $v_{j_2}$ is a $0$-anchor.
So, we can use the subpath of $K_c$ from $v_{j_1}$ to $v_{j_2}$ to connect $Q_{v_{j_1}}$ and $Q_{v_{j_2}}$ into a $6^+$-path.
Thus, $\eta(K) \ge 6$ and we are done.
Similarly, we can prove $\eta(K) \ge 7$ when $|(O_1 \cup T_1 \cup T_2)\cap V(K_c)| \ge 1$.

{\em Statement~5.}
Assume that $|(T_0 \cup T_1)\cap V(K_c)| \ge 1$ and $|(O_0 \cup O_1)\cap V(K_c)| \ge 2$.
Then, no matter $K_c$ is a bi-star or $5$-path, we can always find three anchors $v_{j_1}, v_{j_2}, v_{j_3}$ such that $1 \le j_1 < j_2 < j_3 \le 5$ and
one of them is in $T_0\cup T_1$ but the other two are in $O_0\cup O_1$.
First suppose $v_{j_1} \in T_0 \cup T_1$.
Then, we can construct a $6^+$-path by using the subpath of $K_c$ between $v_{j_2}$ and $v_{j_3}$ to connect $Q_{v_{j_2}}$ and $Q_{v_{j_3}}$.
This $6^+$-path together with the $5^+$-path $P_{v_{j_1}}$ implies that $\eta(K) \ge 11$ and we are done.
Similarly, if $v_{j_3} \in T_0 \cup T_1$, then $\eta(K) \ge 11$ too.

Next suppose that $v_{j_2} \in T_0 \cup T_1$.
By Statements~2 and~3 in Lemma~\ref{lemma11}, $K_c$ has at most two anchors in $O_0 \cup T_0 \cup T_1$.
It follows that either $|O_1 \cap V(K_c)| \ge 2$ or $|O_0 \cap V(K_c)|=|O_1 \cap V(K_c)|=1$.
In the former case, we can further assume $\{v_{j_1}, v_{j_3}\} \subseteq O_1$.
Then, we can use the subpath of $K_c$ between $v_{j_1}$ and $v_{j_3}$ to connect $Q_{v_{j_1}}$ and $Q_{v_{j_3}}$ into a $9^+$-path.
It follows that $\eta(K) \ge 9$ and we are done.
In the latter case, we can further assume $v_{j_1} \in O_0$ and $v_{j_3} \in O_1$;  
the case where $v_{j_1} \in O_1$ and $v_{j_3} \in O_0$ is similar. 
If $j_3 \ge j_1+3$, then we can still use the subpath of $K_c$ between $v_{j_1}$ and $v_{j_3}$ to connect $Q_{v_{j_1}}$ and $Q_{v_{j_3}}$ into a $9^+$-path,
implying $\eta(K) \ge 9$ and we are done. 
So, we may assume  $j_3 \le j_1+2$.
Then, $j_3=j_2+1=j_1+2$.
By Statements~2 and~3 of Lemma~\ref{lemma11} again, $(j_1, j_2, j_3)=(1, 2, 3)$.
Now, we can use the edge $\{v_3, v_4\}$ to connect $Q_{v_3}$ and $Q_{v_4}$ into a $5^+$-path. 
This $5^+$-path together with the $5^+$-path $P_{v_{j_2}}$ implies that $\eta(K) \ge 10$ and we are done.
\end{proof}

\begin{lemma}
\label{lemma13}
Suppose that $K_c$ is a bi-star. Then, the following statements hold:
\begin{enumerate}
\parskip=0pt
\item If $K_c$ has no $0$-anchor, then $\eta(K) \ge 14$.
\item If $|(O_1 \cup T_2)\cap V(K_c)|\ge 1$ and $|(O_0\cup T_0 \cup T_1)\cap V(K_c)|\ge 2$, then $\eta(K)\ge 11$.
\item If $|(O_1 \cup T_1 \cup T_2)\cap V(K_c)| \ge 2$, then $\eta(K) \ge 8$.
	Moreover, $\eta(K) \ge 10$ unless either $v_1 \in O_1$ and $v_2 \in O_1 \cup T_1 \cup T_2$, or $v_4 \in O_1$ and $v_3 \in O_1 \cup T_1 \cup T_2$.
\item If $|T_2\cap V(K_c)| = 1$ and $|O_0\cap V(K_c)| = 1$, then $\eta(K) \ge 8$ unless either $v_2 \in T_2$ and $|\{v_1, v_3\} \cap O_0|=1$,
	or $v_3 \in T_2$ and $|\{v_2, v_4\} \cap O_0|=1$.
\end{enumerate}
\end{lemma}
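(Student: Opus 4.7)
My plan is to prove each statement constructively, exhibiting a collection of vertex-disjoint $5^+$-paths in $\widetilde{K}$ whose total size meets the claimed bound. All four of $v_1,v_2,v_3,v_4$ survive in $\widetilde{K}$ because they lie in $V(M)$, so the $4$-path $v_1$-$v_2$-$v_3$-$v_4$ is always available to bridge anchors; the other building blocks are the paths $Q_v$ and $P_v$ from Notation~\ref{nota04}, whose minimum sizes come from Remark~\ref{remark01}. Statement~2 of Lemma~\ref{lemma11} will be used throughout to confine the ``weak'' anchors in $O_0 \cup T_0 \cup T_1$ to $\{v_1,v_2\}$ or $\{v_3,v_4\}$ whenever there are two of them.

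For Statement~1, every $v_i$ is a $1$- or $2$-anchor, so I would pair $(v_1,v_2)$ and $(v_3,v_4)$ and splice each pair along its shared $M$-edge into one $5^+$-path. Lemma~\ref{lemma11}(2) forces the weak anchors to coincide with one of these pairs, so the pair without weak anchors contributes at least $4+4=8$ while the other contributes at least $3+3=6$, totalling $\ge 14$. For Statement~2, the two weak anchors must be exactly $(v_1,v_2)$ or $(v_3,v_4)$; assuming the former by symmetry, splicing gives one $6^+$-path, and attaching the unused $M$-edge $\{v_3,v_4\}$ to the $Q$-path of the $(O_1 \cup T_2)$-anchor $v_3$ or $v_4$ yields a disjoint $5^+$-path, totalling $\ge 11$.

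For Statement~3, I would case-split on the indices $a<b$ of the two $(O_1 \cup T_1 \cup T_2)$-anchors. When $(a,b) \notin \{(1,2),(3,4)\}$ an unused $K_c$-vertex is always adjacent to each of $v_a,v_b$, which lets me extend each of $Q_{v_a}, Q_{v_b}$ individually into a disjoint $5^+$-path of $\ge 5$ vertices, totalling $\ge 10$. When $(a,b)\in\{(1,2),(3,4)\}$ only the combined splice across the shared $M$-edge is available, giving $|Q_{v_a}|+|Q_{v_b}|\ge 8$; this already reaches $\ge 10$ when one of the anchors sits in $T_1$ (so its $Q$ is $\ge 6$), or when the outer anchor $v_1$ (in the first sub-case) or $v_4$ (in the second) lies in $T_2$, since then its $P$-path is a $7^+$-path disjoint from an extended $Q$ of the inner anchor through the free $K_c$-vertices. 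The remaining configurations are precisely those listed in the exception.

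Statement~4 is the most delicate. Writing $v_p$ for the unique $T_2$-anchor and $v_q$ for the unique $O_0$-anchor, I would split on $(p,q)$ and apply one of two constructions: (i) a single combined path obtained by splicing the reverse of $Q_{v_p}$, a $K_c$-bridge, and $Q_{v_q}$, of length $|Q_{v_p}|+|Q_{v_q}|+k \ge 8$ where $k \ge 1$ counts the internal vertices of the bridge; or (ii) $P_{v_p}$ (length $\ge 7$) together with $Q_{v_q}$ extended by the two $K_c$-vertices bypassing $v_p$ to reach $\ge 5$ vertices, giving two disjoint $5^+$-paths. The exceptions arise precisely when $v_p \in \{v_2,v_3\}$ and $v_q$ is a $K_c$-neighbour of $v_p$: the bridge in~(i) is then trivial, so (i) only yields $|Q_{v_p}|+|Q_{v_q}|\ge 7$, and $v_p$ being an internal vertex of $P_{v_p}$ blocks the $K_c$-route in~(ii), leaving only a single $7^+$-path. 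The hardest part will be systematically verifying, for each non-exception $(p,q)$, that one of (i) or (ii) succeeds; the asymmetric exception condition then falls out precisely from the cases where $v_p$ is an inner vertex of the $4$-path $v_1$-$v_2$-$v_3$-$v_4$.
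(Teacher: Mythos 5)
Your constructions for Statements~1, 2, and 4 match the paper's proof in spirit (the paper splices $Q_{v_3}$ and $Q_{v_4}$ across the $M$-edge rather than appending just the vertex $v_4$ in Statement~2, but the bounds are identical, and your $(p,q)$ dichotomy for Statement~4 reproduces the paper's ``distance~$\ge 2$ versus adjacent'' case split exactly).

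The gap is in Statement~3, in the sub-case $(a,b)\in\{(1,2),(3,4)\}$ with an anchor in $T_1$. You invoke ``$Q_v$ is a $6^+$-path for $v\in T_1$'' and conclude $|Q_{v_a}|+|Q_{v_b}|\ge 10$. But $Q_v$ is a \emph{one-sided} path from $v$ through a single satellite-element, and for $v\in T_1$ it can have as few as $4$ vertices: going into the bi-star satellite of $v$, the truncated bi-star $\widetilde{S}$ has at most $5$ vertices and the longest path from the rescue-endpoint $u$ inside $\widetilde{S}$ has at most $4$, giving $|Q_v|\le 5$; going into the non-bi-star satellite gives at most $4$. (Remark~\ref{remark01} does say ``$Q_v$'' for $T_1$, but the surrounding entries are $P_v$ bounds and the paper's own proofs consistently use $P_v\ge 6$ for $v\in T_1$; it is evidently $P_v$ that is meant.) So the single spliced path $Q_{v_a}$-$v_a$-$v_b$-$Q_{v_b}$ is only guaranteed $\ge 8$, not $\ge 10$, and your argument does not close this case. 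The fix is the paper's: when $v_1\in T_1\cup T_2$ (for $(a,b)=(1,2)$), take the \emph{two-sided} path $P_{v_1}$, which is a $6^+$-path (or $7^+$ for $T_2$), disjoint from $Q_{v_2}$ extended through $v_3$-$v_4$, which is also a $6^+$-path; this gives $\eta(K)\ge 12$. Your exception characterization is otherwise correct, so the conclusion survives once this construction replaces the faulty $Q_v\ge 6$ bound.
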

\begin{proof} 
We prove the statements separately as follows.

{\em Statement 1.}
Assume that $K_c$ has no $0$-anchor. 
By Statement~2 of Lemma~\ref{lemma11}, there is an $i\in\{1,3\}$ such that neither $v_i$ nor $v_{i+1}$ is in $O_0 \cup T_0$. 
Without loss of generality, we may assume that $i=1$. 
Then, we can construct an $8^+$-path by connecting $Q_{v_1}$ and $Q_{v_2}$ with the edge $\{v_1, v_2\}$, 
and another $6^+$-path by connecting $Q_{v_3}$ and $Q_{v_4}$ with the edge $\{v_3, v_4\}$; 
the two paths imply that $\eta(K) \ge 14$ and we are done.

{\em Statement 2.} 
Assume that $|(O_1 \cup T_2)\cap V(K_c)|\ge 1$ and $|(O_0\cup T_0 \cup T_1)\cap V(K_c)|\ge 2$.
Then, by Statement~2 of Lemma~\ref{lemma11}, either $\{v_1, v_2\} \subseteq O_0 \cup T_0 \cup T_1$ or $\{v_3, v_4\} \subseteq O_0 \cup T_0 \cup T_1$.
Without loss of generality, we assume the former. 
Then, $v_3$ or $v_4$ is in $O_1 \cup T_2$.
For each $j\in\{1,3\}$, we can use the edge $\{v_j, v_{j+1}\}$ to connect $Q_{v_j}$ and $Q_{v_{j+1}}$ into a path. 
One of the paths is a $5^+$-path and the other is a $6^+$-path. 
So, $\eta(K) \ge 11$ and we are done.

{\em Statement 3.}
Assume that $|(O_1 \cup T_1 \cup T_2)\cap V(K_c)| \ge 2$.
Then, we can find two anchors $v_{j_1}, v_{j_2}$ in $O_1 \cup T_1 \cup T_2$ such that $1 \le j_1 < j_2 \le 4$.
We can construct an $8^+$-path by connecting $Q_{v_{j_1}}$ and $Q_{v_{j_2}}$ with the subpath of $K_c$ from $v_{j_1}$ to $v_{j_2}$, implying that $\eta(K) \ge 8$.
If $j_1 \in \{1, 2\}$ and $j_2 \in \{3, 4\}$, then we can construct a $5^+$-path by connecting $Q_{v_i}$ and $Q_{v_{i+1}}$ with the edge $\{v_i, v_{i+1}\}$ for each $i\in\{1,3\}$. 
Thus, $\eta(K) \ge 10$ and we are done.
It remains to consider the case where either $(j_1, j_2)=(1, 2)$ and $v_1 \in T_1 \cup T_2$, or $(j_1, j_2)=(3, 4)$ and $v_4 \in T_1 \cup T_2$.
By symmetry, we may assume the former case.
Then, we can use the path $v_2$-$v_3$-$v_4$ to connect $Q_{v_2}$ and $Q_{v_4}$ into a $6^+$-path.
This $6^+$-path together with the $6^+$-path $P_{v_1}$ implies $\eta(K) \ge 12$ and we are done.

{\em Statement 4.}
Assume that  $|T_2\cap V(K_c)| = 1$ and $|O_0\cap V(K_c)| = 1$.
Then, we can find two anchors $v_{j_1}, v_{j_2}$ such that $1 \le j_1 < j_2 \le 4$ and one of them is in $T_2$ but the other is in $O_0$.
If $j_2 \ge j_1+2$, then we can construct an $8^+$-path by connecting $Q_{v_{j_1}}$ and $Q_{v_{j_2}}$ with the subpath of $K_c$ from $v_{j_1}$ to $v_{j_2}$, 
implying that $\eta(K) \ge 8$ and we are done.
So, we can assume $j_2=j_1+1$ and thus $(j_1, j_2)=(1, 2)$, $(2, 3)$ or $(3, 4)$.
It suffices to show that if either $v_1 \in T_2$ and $v_2 \in O_0$, or $v_4 \in T_2$ and $v_3 \in O_0$, then $\eta(K) \ge 8$.
By symmetry, we may assume the former case.
Then, we can use the path $v_2$-$v_3$-$v_4$ to connect $Q_{v_2}$ and $Q_{v_4}$ into a $5^+$-path.
This $5^+$-path together with the $7^+$-path $P_{v_1}$ implies that $\eta(K) \ge 12$ and we are done.
\end{proof}

\begin{lemma}
\label{lemma14}
Suppose that $K_c$ is a $5$-path.
Then, the following statements hold:
\begin{enumerate}
\parskip=0pt
\item If $|(O_1 \cup T_1 \cup T_2)\cap V(K_c)| \ge 2$, then $\eta(K) \ge 8$.
	Moreover, $\eta(K) \ge 10$ unless either $v_1 \in O_1$ and $v_2 \in O_1 \cup T_1 \cup T_2$, or $v_5 \in O_1$ and $v_4 \in O_1 \cup T_1 \cup T_2$.
\item If $|T_2\cap V(K_c)| = 1$ and $|O_0\cap V(K_c)| = 1$, then $\eta(K) \ge 8$ unless (1) $v_3 \in T_2$ and $\{v_2, v_4\} \cap O_0 \ne \emptyset$,
	or (2) $v_2 \in T_2$ and $v_1 \in O_0$,
	or (3) $v_4 \in T_2$ and $v_5 \in O_0$.
\item If $|T_2\cap V(K_c)| = 1$ and $|O_0\cap V(K_c)| = 2$, then $\eta(K) \ge 13$ unless $v_3 \in T_2$ and $\{v_2, v_4\} \subseteq O_0$.
\end{enumerate}
\end{lemma}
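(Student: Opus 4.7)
My plan follows the anchor-based construction used in Lemmas~\ref{lemma12} and~\ref{lemma13}, specialized to $K_c$ being the $5$-path $v_1$-$v_2$-$v_3$-$v_4$-$v_5$. The three tools I will rely on are: the path-length lower bounds from Remark~\ref{remark01}, namely $|Q_v|\ge 4$ for $v\in O_1\cup T_1\cup T_2$, $|Q_v|\ge 3$ for $v\in O_0$, $|P_v|\ge 7$ for $v\in T_2$, and $|P_v|\ge 6$ for $v\in T_1$; the position restriction from Statement~3 of Lemma~\ref{lemma11}, which confines any pair of anchors in $O_0\cup T_0\cup T_1$ to positions $(1,2)$, $(2,4)$, or $(4,5)$; and the stitching identity that combining $Q_{v_{j_1}}$ and $Q_{v_{j_2}}$ along the $K_c$-subpath between them yields a path on $|Q_{v_{j_1}}|+|Q_{v_{j_2}}|+(j_2-j_1-1)$ vertices, provided the subpath's interior vertices are available.

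For Statement~1, I pick any two anchors $v_{j_1},v_{j_2}\in O_1\cup T_1\cup T_2$ with $j_1<j_2$. The direct stitching yields at least $4+4+(j_2-j_1-1)\ge 8$ vertices, so $\eta(K)\ge 8$. For $\eta(K)\ge 10$ I case on $j_2-j_1$: a gap of $\ge 3$ gives $\ge 10$ vertices outright; a gap of $2$ admits a split into two $5^+$-paths by peeling off a $K_c$-endpoint (e.g., at $(1,3)$ the path $Q_{v_1}$ extended by $v_2$ has $\ge 5$ vertices and the path from the satellite of $v_3$ through $v_3$, $v_4$, $v_5$ has $\ge 6$ vertices, totaling $\ge 11$); a gap of $1$ at $(2,3)$ or $(3,4)$ forces at least one anchor into $O_1\cup T_2$ by Lemma~\ref{lemma11} (both in $T_1$ being disallowed), so using $P_v\ge 6$ or $\ge 7$ for that anchor combined with a $K_c$-extension of the other yields $\ge 10$; a gap of $1$ at $(1,2)$ or $(4,5)$ with the end anchor in $T_1\cup T_2$ uses $|P_v|\ge 6$ and extends the other anchor across the rest of $K_c$. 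The residual configuration, in which the end anchor lies specifically in $O_1$, is exactly the claimed exception, since then $|Q_v|$ may be only $4$ and admits no further $K_c$-extension beyond the already-consumed adjacent anchor.

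For Statement~2, I take $v_i\in T_2$ with $|P_{v_i}|\ge 7$ as the first $5^+$-path. To reach $\eta(K)\ge 8$ I only need a second vertex-disjoint $5^+$-path covering at least one more vertex, which I obtain by extending $Q_{v_j}$ through the $K_c$-subpath from $v_j$ toward the $K_c$-endpoint on the side opposite $v_i$. The extended path has $|Q_{v_j}|+r\ge 3+r$ vertices, where $r$ is the number of $K_c$-vertices on $v_j$'s free side, so it is a $5^+$-path iff $r\ge 2$. A direct enumeration of the eight ordered pairs $(i,j)$ with $|i-j|=1$ shows that $r\le 1$ occurs in precisely the four pairs listed in the statement's exceptions, while $|i-j|\ge 2$ automatically gives $r\ge 2$.

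For Statement~3, Lemma~\ref{lemma11} forces $(j,k)\in\{(1,2),(2,4),(4,5)\}$, yielding nine cases once combined with the three admissible positions of $v_i\in T_2\setminus\{v_j,v_k\}$. Using $|P_{v_i}|\ge 7$ I construct a second $5^+$-path on at least $6$ vertices by stitching $Q_{v_j}$ and $Q_{v_k}$ through the $K_c$-subpath between them, which has $|Q_{v_j}|+|Q_{v_k}|+(k-j-1)\ge 3+3+(k-j-1)\ge 6$ as long as the stitching subpath avoids $v_i$. This avoidance succeeds in every case except when $(j,k)=(2,4)$ and $v_i=v_3$, the claimed exception, since then any $K_c$-path from $v_2$ to $v_4$ must pass through the blocked vertex $v_3$; the best alternative extensions, from the satellite of $v_2$ through $v_2$ to $v_1$ and from the satellite of $v_4$ through $v_4$ to $v_5$, have only $4$ vertices each and are not $5^+$-paths. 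The main obstacle throughout will be the fine-grained case analysis: for each configuration I must verify that the constructed paths are simultaneously vertex-disjoint and individually of order at least $5$, a balance that depends sensitively on the satellite types (bi-star versus triangle/edge/star) and on which $K_c$-neighbors remain available after the first path is fixed.
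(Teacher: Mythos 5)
Your Statement~3 argument and most of your Statement~1 argument are sound and run close to the paper's (the paper organizes Statement~1 by whether $v_3$ is a $0$-anchor rather than by the gap $j_2-j_1$, but the constructions coincide). Statement~2, however, has a genuine gap. You reserve $P_{v_i}$ as your first path and then insist on a \emph{second disjoint} $5^+$-path obtained by extending $Q_{v_j}$ along $K_c$, asserting that $|i-j|\ge 2$ automatically gives $r\ge 2$. That assertion is false for $(i,j)=(2,4)$ and $(i,j)=(4,2)$: with $P_{v_i}$ occupying $v_i$, the anchor $v_j\in O_0$ has exactly one usable $K_c$-vertex on each side ($v_3$ on one side, an endpoint on the other), so $r_{\max}=1$ and no disjoint $5^+$-path through $Q_{v_j}$ exists. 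Since neither $(2,4)$ nor $(4,2)$ appears in the lemma's list of exceptions, your argument does not establish $\eta(K)\ge 8$ there. The paper's proof sidesteps this precisely because for $j_2\ge j_1+2$ it does \emph{not} use $P_{v_{j_1}}$ at all: it stitches $Q_{v_{j_1}}$ and $Q_{v_{j_2}}$ across the $K_c$-subpath between them, yielding a single path of order $\ge 4+3+(j_2-j_1-1)\ge 8$; for $(2,4)$ in particular, $Q_{v_2}$-$v_3$-$Q_{v_4}$ already has $\ge 8$ vertices. Reserving $P_{v_i}$ and demanding a separate $5^+$-path is strictly weaker in this position, and you need the paper's single-long-path construction for the $|i-j|\ge 2$ subcase.

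A smaller imprecision in Statement~1: in the gap-$1$ cases at $(2,3)$ or $(3,4)$ you conclude one anchor lies in $O_1\cup T_2$ and then write ``using $P_v\ge 6$ or $\ge 7$ for that anchor,'' but $P_v$ is only defined for $2$-anchors. When the non-$T_1$ anchor lies in $O_1$ the argument must instead extend $Q_v$ toward a $K_c$-endpoint; the conclusion $\eta(K)\ge 10$ still holds by a short case check, but the step as written is not correct.
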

\begin{proof} 
{\em Statement~1.}
First suppose that $v_3$ is not a $0$-anchor. 
Then, since $|(O_1 \cup T_1 \cup T_2)\cap V(K_c)| \ge 2$, $v_i \in O_1 \cup T_1 \cup T_2$ for some $i \in \{1, 2, 4, 5\}$.
Without loss of generality, we can assume $i\in\{1,2\}$; the other case is similar. 
We can construct two vertex-disjoint $5^+$-paths by connecting $Q_{v_1}$ and $Q_{v_2}$ with the edge $\{v_1, v_2\}$ and connecting $Q_{v_3}$ and $Q_{v_5}$ with the path $v_3$-$v_4$-$v_5$.
It follows that $\eta(K) \ge 10$ and we are done.
Next suppose that $v_3$ is a $0$-anchor. 
Then, at least two of $v_1$, $v_2$, $v_4$, and $v_5$ are in $O_1 \cup T_1 \cup T_2$.
Similarly to Statement~3 of Lemma~\ref{lemma13}, we can prove the statement. 

{\em Statement~2.} 
Assume that $|T_2\cap V(K_c)| = 1$ and $|O_0\cap V(K_c)| = 1$.
Then, we can find two anchors $v_{j_1}, v_{j_2}$ such that $1 \le j_1 < j_2 \le 5$ and one of them is in $T_2$ but the other is in $O_0$.
Similarly to Statement~4 of Lemma~\ref{lemma13}, we can assume $j_2=j_1+1$; otherwise, we are done. 
Therefore, $(j_1, j_2)=(1, 2)$, $(2, 3)$, $(3, 4)$, or $(4, 5)$.
It suffices to consider four cases:
$v_1 \in T_2$ and $v_2 \in O_0$, $v_2 \in T_2$ and $v_3 \in O_0$, $v_4 \in T_2$ and $v_3 \in O_0$, or $v_5 \in T_2$ and $v_4 \in O_0$.
We assume the case where $v_1 \in T_2$ and $v_2 \in O_0$; the other cases are similar.
Then, we can use the path $v_2$-$v_3$-$v_4$-$v_5$ to connect $Q_{v_2}$ and $Q_{v_5}$ into a $6^+$-path. 
This $6^+$-path together with the $7^+$-path $P_{v_1}$ implies that $\eta(K) \ge 13$ and we are done.

{\em Statement~3.}
Assume that $|T_2\cap V(K_c)| = 1$ and $|O_0\cap V(K_c)| = 2$.
By Statement~3 of Lemma~\ref{lemma11}, we can assume $\{v_1, v_2\} \subseteq O_0$ or $\{v_2, v_4\} \subseteq O_0$ or $\{v_4, v_5\} \subseteq O_0$ (but $v_3 \notin T_2$).
We prove the first case while the other two can be proven the same.
Let $v_j$ be the anchor in $T_2$ and thus $j \in \{3, 4, 5\}$.
We can use the edge $\{v_1,v_2\}$ to connect $Q_{v_1}$ and $Q_{v_2}$ into a $6^+$-path. 
This $6^+$-path together with the $7^+$-path $P_{v_j}$ implies that $\eta(K) \ge 13$ and we are done.
\end{proof}

\begin{lemma}
\label{lemma15}
Suppose that $K_c$ is a $5$-path.
Then, the following statements hold:
\begin{enumerate}
\parskip=0pt
\item If $K_c$ has at most one $0$-anchor, then $\eta(K) \ge 14$.
\item If $K_c$ has no $0$-anchor, then $\eta(K) \ge 16$.
	Moreover, if $v_1$, $v_3$, or $v_5$ is a $2$-anchor, then $\eta(K) \ge 17$.
\item If $|(O_1 \cup T_1 \cup T_2)\cap V(K_c)| = 5$, then $\eta(K) \ge 17$.
	Moreover, if $v_1$, $v_3$, or $v_5$ is a $2$-anchor, then $\eta(K) \ge 22$.
\end{enumerate}
\end{lemma}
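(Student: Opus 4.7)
My plan is to prove the three statements of Lemma~\ref{lemma15} by case analysis on the types $O_0, O_1, T_0, T_1, T_2$ of the anchors $v_1,\ldots,v_5$ of $K$ and on the distribution of $0$-anchors along $K_c$. In each case I will exhibit an explicit collection of vertex-disjoint $5^+$-paths of $\widetilde{K}$ whose total vertex count meets the required bound. The building blocks are the paths $Q_{v_i}$ and, for $2$-anchors $v_j$, the paths $P_{v_j}$, whose sizes are furnished by Remark~\ref{remark01}; these blocks will be stitched together along subpaths of $K_c=v_1$-$v_2$-$v_3$-$v_4$-$v_5$. The structural input beyond the previous lemmas is Statement~3 of Lemma~\ref{lemma11}, which severely restricts how many anchors can lie in $O_0\cup T_0\cup T_1$ and in which positions.

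Statement~3 is the cleanest. Since all five $v_i$ lie in $O_1\cup T_1\cup T_2$, every $Q_{v_i}$ has at least four vertices, so the vertex-disjoint union of $Q_{v_1}$-$v_1v_2v_3$-$Q_{v_3}$ (at least $4+1+4=9$ vertices) and $Q_{v_4}$-$v_4v_5$-$Q_{v_5}$ (at least $4+4=8$ vertices) already yields $\eta(K)\ge 17$. For the moreover clause, the specified $2$-anchor $v_j\in\{v_1,v_3,v_5\}$ belongs to $T_1\cup T_2$, so $P_{v_j}$ has at least $7$ vertices; using $P_{v_j}$ together with two $Q_v$-pairs on the other four anchors (each pair contributing at least $4+4=8$ vertices) produces three vertex-disjoint $5^+$-paths of total size at least $7+8+8=23\ge 22$.

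For Statement~2, $K_c$ has no $0$-anchor, so Statement~3 of Lemma~\ref{lemma11} gives at most two anchors in $O_0\cup T_0\cup T_1$. If $|O_0\cup T_0|\le 1$ then $|O_1\cup T_1\cup T_2|\ge 4$ and Lemma~\ref{lemma12} Statement~1 yields $\eta(K)\ge 16$. If $|O_0\cup T_0|=2$ then $|T_1|=0$ and the pair lies at positions $(1,2)$, $(2,4)$, or $(4,5)$; in each case a construction of the form $Q_{v_1}$-$v_1v_2v_3$-$Q_{v_3}$ combined with $Q_{v_4}$-$v_4v_5$-$Q_{v_5}$ (or the appropriate cyclic variant) produces two disjoint paths of sizes at least $8$ and $8$, hence $\eta(K)\ge 16$. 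For the moreover clause I insert $P_{v_j}$ for the specified $2$-anchor $v_j\in\{v_1,v_3,v_5\}$: this path has at least $5$ vertices if $v_j\in T_0$ and at least $7$ otherwise; combined with two disjoint $Q$-pairs on the remaining four anchors (each contributing at least $3+3=6$ vertices) this yields $\eta(K)\ge 5+6+6=17$.

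Statement~1 is the most delicate and is the main obstacle. Case~A (no $0$-anchor) already gives $\eta(K)\ge 16$ by Statement~2. In Case~B (exactly one $0$-anchor $v_k$) I split further by $|O_1\cup T_1\cup T_2|$. If this quantity is at least $3$, then either some anchor lies in $O_0\cup T_0$ (so Lemma~\ref{lemma12} Statement~2 gives $\eta(K)\ge 15$), or else the at-most-two elements of $O_0\cup T_0\cup T_1$ are all in $T_1$, forcing four anchors into $O_1\cup T_1\cup T_2$ and Lemma~\ref{lemma12} Statement~1 gives $\eta(K)\ge 16$. The tight sub-case is $|O_1\cup T_1\cup T_2|=2$, which forces $|T_1|=0$, exactly two non-$0$-anchors in $O_0\cup T_0$ and two in $O_1\cup T_2$; by Lemma~\ref{lemma11} Statement~3 the $O_0\cup T_0$-pair lies at one of $(1,2)$, $(2,4)$, $(4,5)$. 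For each choice of $k$ and each allowed pair I construct two vertex-disjoint paths of the form $Q_{v_i}$-(subpath of $K_c$)-$Q_{v_j}$, one pairing the $O_0\cup T_0$ anchors (size at least $3+3=6$ or $3+4=7$) and the other pairing the $O_1\cup T_2$ anchors, absorbing $v_k$ into the latter whenever $v_k$ lies between them on $K_c$ (size at least $4+4=8$ or $4+1+4=9$). Summing always gives at least $14$, leaving $v_k$ or at most one $Q_v$ stranded in $\widetilde{K}$, which completes the proof.
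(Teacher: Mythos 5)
Your overall strategy is the same as the paper's: exhibit vertex-disjoint $5^+$-paths in $\widetilde K$ by stitching the blocks $Q_{v_i}$ (and $P_{v_j}$ for $2$-anchors) along subpaths of $K_c$, using Remark~\ref{remark01} for the block sizes and Statement~3 of Lemma~\ref{lemma11} for the positional constraint. Statements~2 and~3 are fine apart from a cosmetic slip (you assert $P_{v_j}\ge 7$ whenever $v_j\in T_1\cup T_2$, but Remark~\ref{remark01} only gives $P_{v_j}\ge 6$ for $v_j\in T_1$; your arithmetic $6+8+8=22$ and $5+6+6=17$ still closes, so no harm done).

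The genuine gap is in Statement~1, sub-case B2 (exactly one $0$-anchor $v_k$, $|O_0\cup T_0|=2$, $|O_1\cup T_2|=2$, $T_1=\emptyset$). You propose to form one path pairing the two $O_0\cup T_0$ anchors and a second path pairing the two $O_1\cup T_2$ anchors, along subpaths of $K_c$. When the $O_0\cup T_0$ pair sits at positions $(2,4)$ (allowed by Lemma~\ref{lemma11}), this construction is never vertex-disjoint: the subpath from $v_2$ to $v_4$ passes through $v_3$, and for each choice of $v_k\in\{v_1,v_3,v_5\}$ the $O_1\cup T_2$ pair's connecting subpath re-uses $v_3$ and/or $v_4$ (if $v_k=v_1$ the pair is $\{v_3,v_5\}$, if $v_k=v_3$ it is $\{v_1,v_5\}$, if $v_k=v_5$ it is $\{v_1,v_3\}$; in each case the two connecting subpaths of $K_c$ overlap). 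So the ``pair by type'' scheme does not always produce two disjoint paths, and the claimed total of $14$ is not established. The fix — what the paper does — is to pair by \emph{position}: list the four non-$0$-anchors as $v_{j_1}<v_{j_2}<v_{j_3}<v_{j_4}$ and stitch $(v_{j_1},v_{j_2})$ and $(v_{j_3},v_{j_4})$; these connecting subpaths are automatically disjoint, and since at least two of the four anchors lie in $O_1\cup T_2$ the two resulting paths are either an $8^+$-path plus a $6^+$-path or two $7^+$-paths, giving $\eta(K)\ge 14$ in every configuration.
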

\begin{proof}
{\em Statement~1.} 
Assume that $K_c$ has at most one $0$-anchor.
Then, we can always find four anchors $v_{j_1}, v_{j_2}, v_{j_3}, v_{j_4}$ such that $1 \le j_1 < j_2 < j_3 < j_4 \le 5$ and none of them is a $0$-anchor.
By Statement~3 of Lemma~\ref{lemma11}, at least two of $v_{j_1}, v_{j_2}, v_{j_3}, v_{j_4}$ are both in $O_1 \cup T_2$.
If $\{v_{j_1}, v_{j_2}\} \subseteq O_1 \cup T_2$ or $\{v_{j_3}, v_{j_4}\} \subseteq O_1 \cup T_2$,
then for each $i=\{1, 3\}$, we can use the subpath of $K_c$ from $v_{j_i}$ to $v_{j_{i+1}}$ to connect $Q_{v_{j_i}}$ and $Q_{v_{j_{i+1}}}$ into a path.
One of these paths is an $8^+$-path and the other is a $6^+$-path.
Thus, $\eta(K) \ge 14$ and we are done.
Lastly, we can assume exactly one of $v_{j_1}$ and $v_{j_2}$ and one of $v_{j_3}$ and $v_{j_4}$ are in $O_1 \cup T_2$.
Then, we can construct two $7^+$-paths similarly and obtain $\eta(K) \ge 14$ again.

{\em Statement 2.}  
Assume that $K_c$ has no 0-anchor. 
By Statement~3 of Lemma~\ref{lemma11}, there is an $i\in\{2,4\}$ such that
at most one vertex in $V(K_c)\setminus\{v_i\}$ is in $O_0 \cup T_0 \cup T_1$ and the other three vertices are in $O_1 \cup T_2$.
We assume the case where $i=4$; the other case where $i=2$ is similar.
Then, we can use the edge $\{v_1, v_2\}$ to connect $Q_{v_1}$ and $Q_{v_2}$ into a path, 
and use the path $v_3$-$v_4$-$v_5$ to connect $Q_{v_3}$ and $Q_{v_5}$ into a path.
Since at most one of $v_1$, $v_2$, $v_3$, and $v_5$ is in $O_0 \cup T_0 \cup T_1$, either the two paths are both $8^+$-paths,
or one of them is a $7^+$-path and the other is a $9^+$-path.
In both cases, $\eta(K) \ge 16$ and we are done. 

Next we assume that $v_1$, $v_3$, or $v_5$ is a $2$-anchor.
We assume that $v_1$ is a 2-anchor; the other cases are similar. 
Then, we can use the edge $\{v_j, v_{j+1}\}$ to connect $Q_{v_j}$ and $Q_{v_{j+1}}$ into a $6^+$-path for each $j \in \{2, 4\}$.
The two $6^+$-paths together with the $5^+$-path $P_{v_1}$ imply that $\eta(K) \ge 17$ and we are done. 

{\em Statement~3.}
Assume that $|(O_1 \cup T_1 \cup T_2)\cap V(K_c)| = 5$. 
Then, we can construct an $8^+$-path by connecting $Q_{v_1}$ and $Q_{v_2}$ with the edge $\{v_1, v_2\}$ and
another $9^+$-path by connecting $Q_{v_3}$ and $Q_{v_5}$ with the 3-path $v_3$-$v_4$-$v_5$. 
These two paths imply that $\eta(K) \ge 17$.  

Next we assume that $v_1$, $v_3$, or $v_5$ is a $2$-anchor.
We assume that $v_1$ is a 2-anchor; the other cases are similar. 
Then, we can use the edge $\{v_j, v_{j+1}\}$ to connect $Q_{v_j}$ and $Q_{v_{j+1}}$ into an $8^+$-path for each $j \in \{2, 4\}$.
These two $8^+$-paths together with the $6^+$-path $P_{v_1}$ imply that $\eta(K) \ge 22$ and we are done. 
\end{proof}

\begin{lemma}
\label{lemma16}
Suppose that $K_c$ is a $5$-path without any $2$-anchor.
Then, $\frac {s(K)}{\eta(K)} < \frac {15}8$ and hence $K$ is not critical.
\end{lemma}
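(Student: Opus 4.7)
My plan is a finite case analysis on the distribution of anchor types along $K_c$. I first set up notation: since there are no 2-anchors by hypothesis, every anchor of $K$ is either a 0-anchor or a 1-anchor, and $T_0\cap V(K_c)=T_1\cap V(K_c)=T_2\cap V(K_c)=\emptyset$. Let $x=|O_0\cap V(K_c)|$ and $y=|O_1\cap V(K_c)|$, so $x+y$ counts both the 1-anchors of $K$ and the satellite-elements of $K$, of which $x$ are edges, stars, or triangles and $y$ are bi-stars. By Invariant~\ref{inva01} and Notation~\ref{nota05}, $V(K_c)\cap V(M)=\{v_1,v_2,v_4,v_5\}$, so $K_c$ contributes 4 vertices to $V(M_C)\cap V(K)$, each bi-star satellite contributes 4, and each other satellite contributes 2, giving $s(K)=4+2x+4y$. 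Statement 3 of Lemma~\ref{lemma11} (with $T_0\cup T_1=\emptyset$) forces $x\le 2$, and the composite-ness of $K$ forces $1\le x+y\le 5$.

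Next I would lower-bound $\eta(K)$ using the constructions already established. The trivial bound $\eta(K)\ge 5$ is immediate because $K_c$ itself is a 5-path inside $\widetilde{K}$. For stronger bounds I would invoke: Statement 4 of Lemma~\ref{lemma12} when $x+y\ge 2$ (giving $\eta(K)\ge 6$, or $\ge 7$ when additionally $y\ge 1$); Statement 1 of Lemma~\ref{lemma14} when $y\ge 2$ (giving $\eta(K)\ge 8$); Statements 1-3 of Lemma~\ref{lemma12} when $y\ge 3$ (giving $\eta(K)\ge 13$, 15, or 16 depending on $x$); Statement 1 of Lemma~\ref{lemma15} when $K_c$ has at most one 0-anchor (giving $\eta(K)\ge 14$); Statement 2 of Lemma~\ref{lemma15} when $K_c$ has no 0-anchor (giving $\eta(K)\ge 16$); and Statement 3 of Lemma~\ref{lemma15} when $y=5$ (giving $\eta(K)\ge 17$).

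With both sides bounded, I would enumerate the finitely many pairs $(x,y)$ with $x\le 2$ and $1\le x+y\le 5$, and verify $s(K)/\eta(K)<15/8$ in each. The tightest-looking case is $(x,y)=(1,2)$, where $s(K)=14$ and Statement 1 of Lemma~\ref{lemma14} gives $\eta(K)\ge 8$, yielding $s(K)/\eta(K)=7/4<15/8$. All other pairs are slacker: those with $s(K)\le 8$ (namely $(0,1)$, $(1,0)$, $(2,0)$) are handled by the trivial bound $\eta(K)\ge 5$ since $8/5<15/8$; pairs with $y\ge 3$ or $x+y\ge 4$ get $\eta(K)\ge 13$ or more, far outpacing $s(K)$.

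The main obstacle is matching each sub-case to a sufficient bound on $\eta(K)$ from the available toolkit. The most delicate case is $(x,y)=(0,1)$, which has only one 1-anchor; here Statement 4 of Lemma~\ref{lemma12} does not apply because it requires at least two 1-anchors, so one must fall back on $\eta(K)\ge 5$ from $K_c$ alone, which works only because $s(K)=8$ is small enough. A secondary subtlety is that the positional restrictions on $O_0$-anchors from Statement 3 of Lemma~\ref{lemma11} are not needed once the numeric bound $x\le 2$ is invoked, so those restrictions can simply be discarded.
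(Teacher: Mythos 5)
Your proof is correct and follows essentially the same strategy as the paper's: a finite case analysis on the distribution of $0$- and $1$-anchors along $K_c$, invoking the same toolkit of $\eta(K)$ lower bounds from Lemmas~\ref{lemma12}, \ref{lemma14}, and~\ref{lemma15}, and correctly deriving $s(K)=4+2x+4y$ together with the bound $x\le 2$ from Statement~3 of Lemma~\ref{lemma11}. The only organizational difference is that you enumerate $(x,y)$ pairs directly, whereas the paper first dispenses with the cases $s(K)\le 8$, ``at most one $0$-anchor,'' and $|O_1|\ge 3$ in bulk and then treats only $s(K)\in\{10,12,14\}$; both reduce to the same tight case $(x,y)=(1,2)$ with ratio $14/8=7/4$.
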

\begin{proof} 
Since $K_c$ has no $2$-anchor, $s(K)\le 24$. 
Moreover, $\eta(K) \ge 5$ because $K_c$ itself is a $5$-path. 
So, if $s(K) \le 8$, then $\frac {s(K)}{\eta(K)} \le \frac 85$. 
Furthermore, if $K_c$ has at most one $0$-anchor or $|O_1\cap V(K_c)| \ge 3$,
then by Statement~1 of Lemma~\ref{lemma15} and Statement~3 of Lemma~\ref{lemma12}, $\eta(K) \ge 13$ and in turn $\frac {s(K)}{\eta(K)} \le \frac {24}{13} < \frac {15}8$.
Thus, we may assume that $s(K) > 8$, $K_c$ has at least two $0$-anchors, and $|O_1\cap V(K_c)| \le 2$.
It follows that $s(K) \le 14$ and in turn $s(K) \in \{ 10, 12, 14 \}$.

Case 1: $s(K) = 14$.
In this case, $|O_1\cap V(K_c)| = 2$.
By Statement~1 of Lemma~\ref{lemma14}, $\eta(K) \ge 8$ and in turn $\frac {s(K)}{\eta(K)} \le \frac 74< \frac {15}8$.

Case 2: $s(K) = 12$.
In this case, either $|O_1\cap V(K_c)| = 2$, or $|O_1\cap V(K_c)|=1$ and $|O_0\cap V(K_c)|=2$.
By Statement~1 of Lemma~\ref{lemma14} and Statement~4 of Lemma~\ref{lemma12}, $\eta(K) \ge 7$ and in turn $\frac {s(K)}{\eta(K)} \le \frac {12}7< \frac {15}8$.

Case 3: $s(K) = 10$.
In this case, at least two vertices of $K_c$ are not $0$-anchors. 
By Statement~4 of Lemma~\ref{lemma12}, $\frac {s(K)}{\eta(K)} \le \frac 53$ and we are done. 
\end{proof}

\begin{lemma}
\label{lemma17}
Suppose that $K_c$ is not a $5$-path and has no $2$-anchor.
Then, $\frac {s(K)}{\eta(K)} < \frac {15}8$ and hence $K$ is not critical.
\end{lemma}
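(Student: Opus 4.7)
The plan is to split on the structure of $K_c$. By Statement~2 of Lemma~\ref{lemma09}, $K_c$ is an edge, a star, a bi-star, or a $5$-path; the hypothesis leaves the first three. The no-$2$-anchor hypothesis further forces every anchor of $K$ into $O_0 \cup O_1$.

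First I would handle the case where $K_c$ is an edge or a star. Statement~1 of Lemma~\ref{lemma11} combined with the hypothesis forces every anchor into $O_1$. The only two potential anchors are the endpoints of the unique $M$-edge in $K_c$, so the number $k$ of anchors is $1$ or $2$, and each contributes a bi-star satellite-element with $4$ vertices of $V(M_C)$. Hence $s(K) = 2 + 4k \in \{6, 10\}$. When $k = 2$, I would join the two $4^+$-paths $Q_{v_1}, Q_{v_2}$ (cf.\ Remark~\ref{remark01}) via the edge $\{v_1, v_2\} \in M$ into an $8^+$-path, obtaining $\eta(K) \ge 8$ and $s(K)/\eta(K) \le 10/8 < 15/8$. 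When $k = 1$, prolonging $Q_{v_1}$ by $v_2$ through $\{v_1, v_2\}$ yields a $5^+$-path, so $\eta(K) \ge 5$ and $s(K)/\eta(K) \le 6/5 < 15/8$.

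Next I would handle the case where $K_c$ is a bi-star. Setting $k = |O_1 \cap V(K_c)|$ and $m = |O_0 \cap V(K_c)|$, Statement~2 of Lemma~\ref{lemma11} gives $m \le 2$, and $s(K) = 4 + 4k + 2m$. I would enumerate the admissible pairs $(k, m)$ with $k + m \ge 1$. The pairs with $k + m \ge 2$ are all dispatched by existing lower bounds on $\eta(K)$: $(4, 0)$ by Statement~1 of Lemma~\ref{lemma13} ($\eta \ge 14$); $(3, 1)$ and $(3, 0)$ by Statements~2 and~3 of Lemma~\ref{lemma12} ($\eta \ge 15$ and $\eta \ge 13$); $(2, 2)$ and $(1, 2)$ by Statement~2 of Lemma~\ref{lemma13} ($\eta \ge 11$); $(2, 1)$ and $(2, 0)$ by Statement~3 of Lemma~\ref{lemma13} ($\eta \ge 8$); and $(1, 1)$ and $(0, 2)$ by Statement~4 of Lemma~\ref{lemma12} ($\eta \ge 7$ and $\eta \ge 6$ respectively). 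A direct numerical check then yields $s(K)/\eta(K) < 15/8$ in every case.

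The main obstacle is the two residual single-anchor sub-cases $(k, m) = (1, 0)$ and $(k, m) = (0, 1)$, to which the quantitative statements of Lemmas~\ref{lemma12} and~\ref{lemma13} do not apply. I would handle each directly inside $\widetilde{K}$, in which $K_c$ restricted to $V(M)$ is the $4$-path $v_1$-$v_2$-$v_3$-$v_4$. For $(1, 0)$ the lone anchor $v_i$ extends into a bi-star satellite-element, so $Q_{v_i}$ is a $4^+$-path; walking along $K_c$ from $v_i$ adds at least two more vertices, yielding a $6^+$-path in $\widetilde{K}$ and hence $\eta(K) \ge 6$ with $s(K)/\eta(K) \le 8/6 < 15/8$. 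For $(0, 1)$ the lone anchor $v_i$ extends into an edge or star satellite, so $Q_{v_i}$ is only a $3^+$-path, but concatenating the $4$-path $v_1$-$v_2$-$v_3$-$v_4$ with at least one additional vertex from the satellite through the rescue-edge at $v_i$ always produces a $5^+$-path whichever $i \in \{1, 2, 3, 4\}$ is chosen, giving $\eta(K) \ge 5$ and $s(K)/\eta(K) \le 6/5 < 15/8$.
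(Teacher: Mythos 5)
Your proof is correct and follows the same high-level split as the paper (edge/star vs.\ bi-star, resting on Lemmas~\ref{lemma11}--\ref{lemma13} and Remark~\ref{remark01}), but it organizes the bi-star case differently. You enumerate the anchor-type pairs $(k,m) = (|O_1 \cap V(K_c)|, |O_0 \cap V(K_c)|)$ into eleven explicit sub-cases; the paper instead first dispatches all $s(K) \le 8$ sub-cases with the blanket observation that a composite $K$ with a bi-star center contains a $5^+$-path (so $\eta(K) \ge 5$), and all sub-cases with $|O_1 \cap V(K_c)| \ge 3$ or no $0$-anchor via $\eta(K) \ge 13$ (Statement~3 of Lemma~\ref{lemma12} and Statement~1 of Lemma~\ref{lemma13}), which compresses the residue to just $s(K) \in \{10,12,14\}$. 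Both routes go through; yours trades the blanket bounds for tailored constructions in the low-$s(K)$ sub-cases. One small imprecision in your residual $(0,1)$ case: ``concatenating the $4$-path $v_1$-$v_2$-$v_3$-$v_4$ with at least one additional vertex through the rescue-edge at $v_i$'' does not literally work when $i \in \{2,3\}$, since $v_i$ then already has degree~$2$ in the $4$-path. What one actually builds there is a $3$-path of $K_c$ ending at $v_i$ (e.g.\ $v_4$-$v_3$-$v_2$ for $i = 2$), extended across the rescue-edge $\{v_i,u\}$ and then the $M$-edge $\{u,w\}$ of the satellite (which exists in $\widetilde{K}$ because $u \in V(M)$ by Statement~2 of Lemma~\ref{lemma05}) --- still a $5$-path, so the conclusion holds, but the construction should be stated this way. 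The paper avoids the issue entirely by citing the generic composite-implies-$5^+$-path observation for all of $s(K) \le 8$.
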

\begin{proof}
By Statement~2 of Lemma~\ref{lemma09}, $K_c$ is an edge, star, or bi-star.
If $K_c$ is an edge or star, then by Statement~1 of Lemma~\ref{lemma11}, either $\{v_1, v_2\}\subseteq O_1$, or one of $v_1$ and $v_2$ is in $O_1$ and the other is a $0$-anchor.
In the former case, $s(K) = 10$, and $\eta(K)\ge 8$ because we can construct an $8^+$-path by connecting $Q_{v_1}$ and $Q_{v_2}$ with the edge $\{v_1, v_2\}$. 
Similarly, in the latter case, $s(K) = 6$ and $\eta(K)\ge 5$. 
So, in both cases, $\frac {s(K)}{\eta(K)} \le \frac 54$ and thus $K$ is not critical. 
Thus, we may assume that $K_c$ is a bi-star. 
Then, $6 \le s(K) \le 20$. 
Moreover, since $K$ is a composite component of $H+C$, $K$ always contains a $5^+$-path.
Hence, if $s(K) \le 8$, then $\frac {s(K)}{\eta(K)} \le \frac 85$. 
Consequently, we may assume that $s(K) > 8$.
Furthermore, if $|O_1\cap V(K_c)| \ge 3$ or $K_c$ has no $0$-anchor,
then by Statement~3 of Lemma~\ref{lemma12} and Statement~1 of Lemma~\ref{lemma13}, $\eta(K) \ge 13$ and in turn $\frac {s(K)}{\eta(K)} \le \frac {20}{13}$.
Therefore, we may assume that $|O_1\cap V(K_c)| \le 2$ and $K_c$ has at least one $0$-anchor. 
It follows that $s(K) \le 14$ and in turn $s(K) \in \{ 10, 12, 14 \}$.

If $s(K) = 14$, then $|O_1 \cap V(K_c)| = 2$ and in turn $\eta(K) \ge 8$ by Statement~3 of Lemma~\ref{lemma13}, implying that $\frac {s(K)}{\eta(K)} \le \frac 74$.  
If $s(K) = 12$, then either $|O_1\cap V(K_c)| = 2$, or $|O_1\cap V(K_c)| = 1$ and $|O_0\cap V(K_c)|= 2$; 
by Statements~2 and~3 in Lemma~\ref{lemma13}, $\eta(K) \ge 8$ and in turn $\frac {s(K)}{\eta(K)} \le \frac 32$. 
If $s(K) = 10$, then $|O_0\cap V(K_c)| \le 2$ by Statement~2 of Lemma~\ref{lemma11}, and in turn $|O_0\cap V(K_c)| = |O_1\cap V(K_c)|= 1$;
by Statement~4 of Lemma~\ref{lemma12}, $\eta(K) \ge 6$ and in turn $\frac {s(K)}{\eta(K)} \le \frac 53$.
\end{proof}

\begin{notation}
\label{nota06}
We define the notation $\preceq$ as follows.
Suppose that $K$ is a connected component of $H+C$, and $a, b, c, d$ are all positive integers.
\begin{itemize}
\parskip=0pt
\item
	For a rational fraction $\frac ab$, we write $\frac{s(K)}{\eta(K)} \preceq \frac ab$, whenever $s(K) \le a$ and $\eta(K) \ge b$.
	
	({\em Comment:} If $\frac {s(K)}{\eta(K)}\preceq \frac ab$, then $\frac {s(K)}{\eta(K)}\le \frac ab$, but not vice versa.)
\item
	For a set $\mcF$ of rational fractions, we write $\frac{s(K)}{\eta(K)} \preceq \mcF$ if $\frac{s(K)}{\eta(K)} \preceq \frac ab$ for some $\frac ab \in \mcF$.
\item
	For a rational fraction $\frac ab$ and a set $\mcF$ of rational fractions, we write $\mcF \preceq \frac ab$ if $\frac cd \le \frac ab$ for all $\frac cd \in \mcF$.
\end{itemize}
\end{notation}

The reason why we need Notation~\ref{nota06} is that we not only focus on the ratio of $\frac {s(K)}{\eta(K)}$ but sometimes also on the exact values of $s(K)$ and $\eta(K)$. 
This notation will help us distinguish the structures of critical components and be useful in the performance analysis of the algorithm.

\begin{lemma}
\label{lemma18}
Suppose that $K_c$ has at least three $2$-anchors. 
Then, the following statements hold and hence $K$ is not critical.
\begin{enumerate}
\parskip=0pt
\item If $K_c$ has five $2$-anchors, then $\frac {s(K)}{\eta(K)} \preceq \left\{\frac {40}{25}, \frac {44}{33}\right\}$.
\item If $K_c$ has exactly four $2$-anchors, then $\frac {s(K)}{\eta(K)} \preceq \left\{\frac {32}{20}, \frac {38}{24}, \frac {40}{28}\right\}$.
\item If $K_c$ has exactly three $2$-anchors, then $\frac {s(K)}{\eta(K)} \preceq \frac {2h+24}{h+15}$ for some $h = 0, 1, \ldots, 6$.
\end{enumerate}
\end{lemma}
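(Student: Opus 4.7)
Since $K_c$ has at least three $2$-anchors while an edge or star component has only two potential anchors (Notation~\ref{nota05} and Statement~1 of Lemma~\ref{lemma11}), $K_c$ must be either a bi-star (with four potential anchors) or a $5$-path (with five). The plan is a detailed case analysis partitioning each of the three statements by the structure of $K_c$, the breakdown $(t_0, t_1, t_2) := (|T_0 \cap V(K_c)|, |T_1 \cap V(K_c)|, |T_2 \cap V(K_c)|)$ of the $2$-anchors, and the $O_0/O_1$-split of any remaining $1$-anchors (relevant in Statements~2 and~3).

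In each sub-case I first compute $s(K)$ exactly. Since $|V(K_c) \cap V(M)| = 4$ for both a bi-star and a $5$-path $K_c$, and each bi-star satellite contributes $4$ to $V(M)$ while each edge/triangle/star satellite contributes $2$, one has $s(K) = 4 + 4b + 2o$, where $b$ and $o$ count the bi-star and other satellites, respectively. The pair $(b, o)$ is determined by the anchor types: a $T_j$-anchor contributes $j$ bi-stars and $2 - j$ other satellites, each $O_1$-anchor contributes one bi-star, and each $O_0$-anchor contributes one other satellite.

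To lower bound $\eta(K)$, I combine Lemmas~\ref{lemma12}--\ref{lemma15} with a uniform direct construction: for each $2$-anchor $v_i$, form the path $[\mbox{longest satellite portion ending at } u_i^1]$-$v_i$-$[\mbox{longest satellite portion starting at } u_i^2]$ using only the two rescue-edges incident to $v_i$; these paths are pairwise vertex-disjoint because distinct anchors have disjoint satellite-elements, and each has at least $2 + 1 + 2 = 5$ vertices. The decisive extreme case is Statement~1 with $t_2 = 5$ (hence $s(K) = 44$): each of the five such paths uses two bi-stars and therefore has at least $3 + 1 + 3 = 7$ vertices (even when every rescue-edge lands on a bi-star center), giving $\eta(K) \ge 35 \ge 33$. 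All remaining sub-cases of Statement~1 satisfy $s(K) \le 40$, and the same five-path construction yields total length at least $25$, matching $\preceq \frac{40}{25}$. For Statement~2 the three target fractions $\frac{32}{20}, \frac{38}{24}, \frac{40}{28}$ correspond to the three sub-cases (bi-star $K_c$ with four $2$-anchors; $5$-path $K_c$ with four $2$-anchors plus one $0$-anchor; $5$-path $K_c$ with four $2$-anchors plus one $1$-anchor), each handled by the same recipe, with the $\frac{40}{28}$ bound arising when all four $2$-anchors are in $T_2$ and one uses four disjoint $7^+$-paths $P_{v_i}$. For Statement~3 the parameter $h$ will parameterize the number of bi-star satellites contributed by the two non-$2$-anchors (or an equivalent quantity), so that $s(K) = 2h + 24$ drops out of the contribution formula, while $\eta(K) \ge h + 15$ is obtained by absorbing each additional bi-star into a $5^+$-path extension using the available $K_c$-edges.

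The main obstacle is the sheer combinatorial size of the case analysis, especially in Statement~3 where $(t_0, t_1, t_2)$ and the types of the two non-$2$-anchors must be tracked together; in several sub-cases the off-the-shelf bounds from Lemmas~\ref{lemma12}--\ref{lemma15} fall short, so I must construct vertex-disjoint $5^+$-paths explicitly by linking multiple $Q_v$'s (and sometimes a $P_v$) along edges or sub-paths of $K_c$. Verifying the $\preceq$-bound against the correct fraction in Statements~1 and~2, and identifying the right $h \in \{0, 1, \ldots, 6\}$ in Statement~3, is the most delicate bookkeeping.
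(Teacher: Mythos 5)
Your overall strategy matches the paper's: for each $2$-anchor $v$ of $K_c$ the path $P_v$ lies entirely in $v$ and the two satellite-elements it anchors, so the paths $P_{v}$ over the $2$-anchors are pairwise vertex-disjoint and give $\eta(K)\ge\sum_{i=0}^{2}(5+i)t_i$ where $t_i=|T_i\cap V(K_c)|$, while $s(K)=4+4b+2o$ is tracked by the $(t_0,t_1,t_2)$-split. The bookkeeping, however, has a concrete gap in Statement~1. After handling $t_2=5$ (hence $s(K)=44$), you assert ``all remaining sub-cases satisfy $s(K)\le 40$.'' That is false: $(t_0,t_1,t_2)=(0,1,4)$ gives $s(K)=24+2(t_1+2t_2)=42$, and this configuration is permitted by Statement~3 of Lemma~\ref{lemma11} (only one anchor in $T_1$). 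For $s(K)=42$ you must establish $\eta(K)\ge 33$ to get $\frac{s(K)}{\eta(K)}\preceq\frac{44}{33}$; your generic five-path bound only gives $\eta(K)\ge 25$, and $\preceq\frac{40}{25}$ fails because $42>40$. The paper closes this by noting $s(K)\ge 41$ forces at least four anchors into $T_2$, so four of the five $P_{v_i}$'s are $7^+$-paths and $\eta(K)\ge 5+4\cdot 7=33$.

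Your parameterization in Statement~3 is also off: ``the number of bi-star satellites contributed by the two non-$2$-anchors'' is at most $2$, yet $h$ must range over $\{0,\dots,6\}$. The correct parameter is $h=t_1+2t_2$, the number of bi-star satellites anchored by the three $2$-anchors, giving $s(K)\le\sum_{i=0}^{2}(4+2i)t_i+12=24+2h$ (the $+12$ being $|V(K_c)\cap V(M)|=4$ plus at most $4$ per non-$2$-anchor) and $\eta(K)\ge\sum_{i=0}^{2}(5+i)t_i=15+h$. Likewise, the claimed one-to-one correspondence in Statement~2 between the three fractions and the three structural sub-cases of $K_c$ does not hold: e.g.\ a bi-star $K_c$ with four $T_2$-anchors has $s(K)=36>32$, so $\preceq\frac{32}{20}$ fails even though that is the sub-case you assign it to. The paper instead partitions by the value of $s(K)$ (below $32$, in $[34,38]$, or equal to $40$) and derives the stronger $\eta(K)$ lower bounds $24$ or $28$ from the constraints these values impose on $(t_1,t_2)$; you should do the same rather than partition by $K_c$-structure.
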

\begin{proof} 
{\em Statement~1.} 
Assume that $K_c$ has five $2$-anchors. 
Then, $K_c$ is a $5$-path, and $\eta(K) \ge 5\cdot 5 = 25$ because $P_{v_i}$ is a $5^+$-path for each $v_i \in K_c$. 
If $s(K)\le 40$, then $\frac {s(K)}{\eta(K)} \preceq \frac {40}{25}$. 
So, we may assume that $s(K)\ge 41$. 
Then, since $s(K) \le 44$ is even, $s(K)\in \{42, 44\}$ and in turn at least four vertices of $K_c$ are in $T_2$. 
Consequently, $P_{v_1}$, \ldots, $P_{v_5}$ are $5^+$-paths and at least four of them are indeed $7^+$-paths. 
Therefore, $\eta(K)\ge 5 + 4\cdot 7 =  33$ and in turn $\frac {s(K)}{\eta(K)} \preceq \frac {44}{33}$.

{\em Statement~2.} 
Assume that $K_c$ has exactly four $2$-anchors. 
Then, $\eta(K) \ge 20$ because $P_{v_i}$ is a $5^+$-path for each $2$-anchor $v_i$ in $K_c$. 
Moreover, if $s(K) \le 32$, then $\frac {s(K)}{\eta(K)} \preceq \frac {32}{20}$.
So, we may assume $s(K)\ge 34$ because $s(K)$ is even. 
Then, either at least two anchors of $K_c$ are in $T_2$, or one anchor of $K_c$ is in $T_2$ but the other three are in $T_1$;
it follows that either way $\eta(K) \ge 24$.
Now, if $s(K) \le 38$, then $\frac {s(K)}{\eta(K)} \preceq \frac {38}{24}$. 
Thus, we may assume $s(K) = 40$. 
Then, $K_c$ is a $5$-path, $|T_2\cap V(K_c)|=4$, and four paths among $P_{v_1}$, \ldots, $P_{v_5}$ are $7^+$-paths. 
Consequently, $\eta(K) \ge 4\cdot 7 = 28$ and in turn $\frac {s(K)}{\eta(K)} \preceq \frac {40}{28}$.

{\em Statement~3.}
Assume that $K_c$ has exactly three $2$-anchors. 
Then, $K_c$ is a bi-star or $5$-path. 
Moreover, $s(K) \le \sum_{i=0}^2(4+2i)|T_i\cap V(K_c)| + 2\cdot 4 + 4$. 
Furthermore, since $P_{v_i}$ is a $5^+$-path for each  $2$-anchor $v_i$, $\eta(K)\ge \sum_{i=0}^2 (5+i)|T_i\cap V(K_c)|$.
Now, since $\sum_{i=0}^2 |T_i\cap V(K_c)|=3$, we have 
\[
\frac {s(K)}{\eta(K)} \preceq \frac {\sum_{i=0}^2(4+2i)|T_i\cap V(K_c)| +12}{\sum_{i=0}^2 (5+i)|T_i\cap V(K_c)|} 
= \frac {2|T_1\cap V(K_c)|+4|T_2\cap V(K_c)|+24}{|T_1\cap V(K_c)|+2|T_2\cap V(K_c)|+15} 
\]
Note that $|T_1\cap V(K_c)|+2|T_2\cap V(K_c)|\in \{0,\ldots,6\}$. 
Consequently, the lemma holds.
\end{proof}

\begin{lemma}
\label{lemma19}
Suppose that $K_c$ has exactly two $2$-anchors.
Then, the following statements hold:
\begin{enumerate}
\parskip=0pt
\item If $K_c$ is an edge or star, then $\frac {s(K)}{\eta(K)} \preceq \frac {18}{14}$ and hence $K$ is not critical.
\item If $K_c$ is a bi-star, then $\frac {s(K)}{\eta(K)} \preceq \frac {2h+16}{h+10}$ for some $h = 0, 1, \ldots, 6$, and hence $K$ is not critical.
\end{enumerate}
\end{lemma}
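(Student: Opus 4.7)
The plan is to handle the two statements separately, each reducing to a count of $s(K)$ together with an explicit construction of vertex-disjoint $5^+$-paths in $\widetilde{K}$ witnessing the needed lower bound on $\eta(K)$.

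For Statement~1, since $K_c$ is an edge or star, its only possible anchors are the two endpoints $v_1,v_2$ of the unique edge in $M \cap E(K_c)$. Statement~1 of Lemma~\ref{lemma11} then forbids any anchor from lying in $O_0 \cup T_0 \cup T_1$, so both must be in $T_2$. Hence $K$ has exactly four bi-star satellite-elements, giving $s(K) = 2 + 4 \cdot 4 = 18$; by Remark~\ref{remark01}, $P_{v_1}$ and $P_{v_2}$ are two vertex-disjoint $7^+$-paths, so $\eta(K) \ge 14$. Therefore $\frac{s(K)}{\eta(K)} \preceq \frac{18}{14}$, and $K$ is not critical since $\frac{18}{14} < \frac{15}{8}$.

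For Statement~2, I parametrize by $\tau_i := |T_i \cap V(K_c)|$ for $i \in \{0,1,2\}$ and $\omega_j := |O_j \cap V(K_c)|$ for $j \in \{0,1\}$, subject to $\tau_0 + \tau_1 + \tau_2 = 2$ and $\omega_0 + \omega_1 \le 2$. Counting $V(M_C)$-vertices directly ($4$ from $V(K_c)$, plus $(2i+4)$ satellite $M$-vertices per $T_i$-anchor and $(2j+2)$ per $O_j$-anchor) yields
\[
s(K) = 12 + 2\sigma, \qquad \sigma := \tau_1 + 2\tau_2 + \omega_0 + 2\omega_1 \in \{0,1,\ldots,8\}.
\]
Setting $h := \max(0, \sigma - 2) \in \{0,\ldots,6\}$ makes $s(K) \le 2h+16$ automatic, so the lemma reduces to $\eta(K) \ge h+10 = \max(10, \sigma+8)$.

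To obtain the $\eta(K)$ bound, I use the two $P$-paths of the $2$-anchors, which are vertex-disjoint $5^+$-paths with $|P_v| \ge 5+i$ for $v \in T_i$ (the cases $i \in \{0,2\}$ are stated in Remark~\ref{remark01}, and $i=1$ is a direct computation: one bi-star satellite contributes a sub-path of order $\ge 3$ from its rescue endpoint and the non-bi-star contributes $\ge 2$). Their combined order is $\ge 10 + \tau_1 + 2\tau_2$, which already settles $\sigma \le 2$. When $\sigma \ge 3$, the $1$-anchors must also be harnessed: since their $Q$-paths are only $3^+$- or $4^+$-paths, I either merge two trunk-adjacent $1$-anchors into one combined $5^+$-path of order $\ge 7$ via the trunk edge of $\widetilde{K_c} = v_1\text{-}v_2\text{-}v_3\text{-}v_4$ and their satellites, or I modify a $2$-anchor's $P$-path by dropping one of its satellites and rerouting through an adjacent trunk edge into a $1$-anchor and its satellite, again producing a $5^+$-path of order $\ge 7$. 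Statement~2 of Lemma~\ref{lemma11}, restricting where members of $O_0 \cup T_0 \cup T_1$ may sit in the trunk, prunes the sub-configurations I need to verify.

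The main obstacle is the extremal sub-case $\sigma = 8$, which forces $\tau_2 = \omega_1 = 2$ and no $0$-anchor, and demands $\eta(K) \ge 16$. In the hardest sub-configuration the two $T_2$-anchors are non-adjacent in the trunk (so neither $O_1$-anchor can be paired with another free anchor), and both $P$-paths must be modified as above; careful accounting---using that a rescue-edge endpoint landing even on a bi-star center still admits a sub-path of order $\ge 3$ from it---shows that the two modified $5^+$-paths have total order at least $8 + 8 = 16$, completing the proof.
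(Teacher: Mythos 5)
Your proof of Statement~1 is essentially identical to the paper's: both pin down $v_1,v_2\in T_2$ via Statement~1 of Lemma~\ref{lemma11}, conclude $s(K)=18$, and use the two $7^+$-paths $P_{v_1},P_{v_2}$.

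For Statement~2, you take a genuinely different route. The paper splits into special sub-cases ($s(K)=28$ via Statement~1 of Lemma~\ref{lemma12}; $s(K)=26$ via Statements~1--2 of Lemma~\ref{lemma12}; ``no $0$-anchor'' via Statement~1 of Lemma~\ref{lemma13}) and only in the residual case ($s(K)\le 24$ and at least one $0$-anchor, so $\omega_0+\omega_1\le 1$) runs the pure $P_v$-counting argument. You instead define a single parameter $\sigma=\tau_1+2\tau_2+\omega_0+2\omega_1$, observe the exact identity $s(K)=12+2\sigma$ (which indeed holds, since every rescue-edge enters its satellite at an $M$-vertex and every satellite's $M$-vertices contribute to $s(K)$), and reduce everything to one uniform target $\eta(K)\ge\max(10,\sigma+8)$. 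This is cleaner to state, at the cost of having to re-derive the lower bounds that the paper imports from Lemmas~\ref{lemma12} and~\ref{lemma13}. The approach is correct: when $\omega_0+2\omega_1\le 2$ the two $P_v$'s already give $\ge 10+\tau_1+2\tau_2\ge\sigma+8$, and when $\omega_0+2\omega_1\in\{3,4\}$ all four of $v_1,\dots,v_4$ are anchors (so $\omega_0+\omega_1=2$) and pairing adjacent anchors through a trunk edge supplies the missing amount.

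Two imprecisions in the sketch should be flagged. First, ``merge two trunk-adjacent $1$-anchors into one combined $5^+$-path of order $\ge 7$'' is false when both are in $O_0$: each $Q_v$ is only guaranteed to be a $3^+$-path, so the merged path is only $\ge 6$ (still a valid $5^+$-path, but your inequality chain as stated over-claims). This never actually hurts you---the combination $\omega_0=2$ forces $\tau_0+\tau_1+\omega_0\ge 2$, which by Statement~2 of Lemma~\ref{lemma11} forbids almost all configurations with $\sigma\ge 3$---but the quantitative claim is still wrong as stated. Second, your diagnosis that ``the two $T_2$-anchors are non-adjacent'' is the hardest $\sigma=8$ sub-case is backwards: when the $T_2$-anchors are non-adjacent (say $v_1,v_3\in T_2$, $v_2,v_4\in O_1$), every anchor is adjacent to an opposite-type anchor and pairing $Q_{v_1}$-$Q_{v_2}$ and $Q_{v_3}$-$Q_{v_4}$ directly gives $\ge 8+8=16$; the case that actually needs the longer reroute $[sat]$-$v_2$-$v_3$-$v_4$-$[sat]$ (through two trunk edges rather than one) is the adjacent case $v_1,v_2\in T_2$, $v_3,v_4\in O_1$, where the single-edge ``modified $P$-path'' you describe cannot be used for both $2$-anchors simultaneously. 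A complete write-up needs to exhibit this two-edge reroute explicitly, which your proposal does not describe. These are fixable gaps rather than wrong ideas, but as written the $\sigma\ge 3$ case is not fully established.
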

\begin{proof}
{\em Statement~1.}
Assume that $K_c$ is an edge or star. 
Then, each vertex of $K_c$ is in $T_2$ by Statement~1 of Lemma~\ref{lemma11} and in turn $s(K)=18$. 
Moreover, $\eta(K) \ge 14$ because both $P_{v_1}$ and $P_{v_2}$ are $7^+$-paths. 
Therefore, $\frac {s(K)}{\eta(K)} \preceq \frac {18}{14}$ and hence $K$ is not critical.

{\em Statement~2.}
Assume that $K_c$ is a bi-star. 
Then, $s(K) \le 2\cdot 8 + 2 \cdot 6 = 28$ because $K_c$ has exactly two $2$-anchors. 
Moreover, if $s(K) = 28$ (i.e., $h = 6$), then $|T_2\cap V(K_c)|=|O_1\cap V(K_c)|=2$ and in turn Statement~1 of Lemma~\ref{lemma12} implies that $\eta(K) \ge 16$ and we are done.
Similarly, if $s(K) = 26$ (i.e., $h = 5$), then either $|T_2\cap V(K_c)| = |T_1\cap V(K_c)| = 1$ and $|O_1\cap V(K_c)|=2$, 
or $|T_2\cap V(K_c)| = 2$ and $|O_1\cap V(K_c)|=|O_0\cap V(K_c)|=1$; 
in both cases, Statements~1 and~2 in Lemma~\ref{lemma12} imply that $\eta(K) \ge 15$ and we are done.

Now, we may assume that $s(K) \le 24$. 
If $K_c$ has no $0$-anchor, then $\eta(K) \ge 14$ by Statement~1 of Lemma~\ref{lemma13}, and in turn $\frac {s(K)}{\eta(K)} \preceq \frac {24}{14}$ and we are done.
So, we may further assume that $K_c$ has at least one $0$-anchor.
Then, $s(K) \le \sum_{i=0}^2 (4+2i)|T_i\cap V(K_c)| + 8$. 
Furthermore, since $P_{v_i}$ is a $5^+$-path for each $2$-anchor $v_i$, $\eta(K)\ge \sum_{i=0}^2 (5+i)|T_i\cap V(K_c)|$.
Now, since $\sum_{i=0}^2 |T_i\cap V(K_c)|=2$, we have 
\[
\frac {s(K)}{\eta(K)} \preceq \frac {\sum_{i=0}^2 (4+2i)|T_i\cap V(K_c)| + 8}{\sum_{i=0}^2 (5+i)|T_i\cap V(K_c)|}
= \frac {2|T_1\cap V(K_c)| + 4|T_2\cap V(K_c)| + 16}{|T_1\cap V(K_c)| + 2|T_2\cap V(K_c)| + 10}
\]
Note that $|T_1\cap V(K_c)|+2|T_2\cap V(K_c)|\in \{0,\ldots,4\}$. 
Consequently, the lemma holds.
\end{proof}

\begin{lemma}
\label{lemma20}
Suppose that $K_c$ has exactly two $2$-anchors and $K_c$ is a $5$-path. 
Then, $\{v_2, v_4\} \subseteq T_2$ if $K$ is critical. 
Moreover, $\frac {s(K)}{\eta(K)} = \frac {30}{16}$ if $s(K)=30$, 
while $\frac {s(K)}{\eta(K)} = \frac {32}{17}$ if $s(K)=32$.
Furthermore, if $K$ is not critical, then $\frac {s(K)}{\eta(K)} \preceq \frac {2h+16}{h+10}$ for some $h = 0, 1, \ldots, 8$.
\end{lemma}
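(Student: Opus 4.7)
Given exactly two 2-anchors on the $5$-path $K_c$, let $a=|O_0\cap V(K_c)|$, $b=|O_1\cap V(K_c)|$, $t_j=|T_j\cap V(K_c)|$ for $j\in\{0,1,2\}$, and $z$ be the number of $0$-anchors among $v_1,\ldots,v_5$. Then $t_0+t_1+t_2=2$, $a+b+z=3$, and $s(K)=4+2a+4b+4t_0+6t_1+8t_2\le 32$. The plan is a case analysis on the positions and types of the two 2-anchors, using Lemmas~\ref{lemma11}--\ref{lemma15} to bound $\eta(K)$ in each configuration.

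To prove $\{v_2,v_4\}\subseteq T_2$ whenever $K$ is critical, I would argue by contradiction in two sub-steps. First, suppose some 2-anchor lies at $v_1$, $v_3$, or $v_5$. If $K_c$ has no $0$-anchor, Lemma~\ref{lemma15} Statement~2's moreover clause gives $\eta(K)\ge 17$, forcing $s(K)=32$; but then $a=0$, $b=3$, $t_2=2$ places all five vertices in $O_1\cup T_2$, whence Statement~3's moreover clause gives $\eta(K)\ge 22$, a contradiction. If $K_c$ has a $0$-anchor then $s(K)\le 28$, so criticality demands $\eta(K)\le 14$; however, by combining each $P_{v_j}$ for $v_j\in T_2$ (a $7^+$-path) with extensions of $Q_{v_k}$ through the remaining $K_c$-edges, I would construct vertex-disjoint $5^+$-paths of total order at least $15$, a contradiction. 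Second, suppose both 2-anchors lie at $\{v_2,v_4\}$ but at least one is not in $T_2$. Then $s(K)\le 30$, and an analogous $P$-$Q$ splicing construction (together with Lemma~\ref{lemma12} Statements~3--5 and Lemma~\ref{lemma14}) yields $\eta(K)>8s(K)/15$, again contradicting criticality.

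Once $\{v_2,v_4\}\subseteq T_2$ is established, each $v_i\in\{v_1,v_3,v_5\}$ is a $0$-, $O_0$-, or $O_1$-anchor. Lemma~\ref{lemma11} Statement~3 forbids two of them lying in $O_0$ simultaneously, since the only admissible pairs of vertices in $O_0\cup T_0\cup T_1$ are $(1,2)$, $(2,4)$, $(4,5)$, none of which lies inside $\{v_1,v_3,v_5\}$; hence $a\le 1$. If $s(K)=32$, then $a=0$ and $b=3$, all five vertices lie in $O_1\cup T_2$, Lemma~\ref{lemma15} Statement~3 gives $\eta(K)\ge 17$, and criticality ($\eta\le\lfloor 32\cdot 8/15\rfloor=17$) forces $\eta(K)=17$. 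If $s(K)=30$, then $a=1$, $b=2$, $z=0$; Lemma~\ref{lemma15} Statement~2 gives $\eta(K)\ge 16$ (its moreover clause does not apply since $v_2,v_4\notin\{v_1,v_3,v_5\}$), and criticality forces $\eta(K)=16$. For the non-critical case I would match each remaining configuration to a template $(2h+16)/(h+10)$ with $h\in\{0,\ldots,8\}$: the $\eta$-bounds from the critical-implication step already cover the configurations with a 2-anchor at $\{v_1,v_3,v_5\}$ (e.g., $s\le 32,\eta\ge 22$ fits $h=8$), and the configurations with $\{v_2,v_4\}\subseteq T_2$ and $s(K)\le 28$ yield $\eta\ge 16$ by the same construction, fitting $h=6$; the remaining smaller-$s$ cases are handled similarly.

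The hardest part will be the explicit path constructions in the critical-implication step, particularly when $K_c$ has a $0$-anchor, since Lemma~\ref{lemma15}'s standalone bound $\eta\ge 14$ is insufficient and we must splice $P_{v_j}$ with extensions of $Q_{v_k}$ across edges of $K_c$ to build enough disjoint $5^+$-paths. The construction depends delicately on where the two 2-anchors are placed and on the anchor types of the other three vertices, but Lemma~\ref{lemma11} Statement~3 fortunately restricts which patterns can actually arise, keeping the case enumeration tractable.
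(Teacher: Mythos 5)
Your plan shares the same toolkit as the paper (the formula $s(K)=4+2a+4b+4t_0+6t_1+8t_2$ together with Lemmas~\ref{lemma11}--\ref{lemma15}), but you organize the argument differently: you try to establish $\{v_2,v_4\}\subseteq T_2$ first by contradiction and then read off the ratios, whereas the paper case-splits directly on $s(K)\in\{32,30,\le 28\}$ and derives everything at once. The reorganization is legitimate in principle, but as written the proof has a concrete flaw and several unfinished steps.

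The flaw is in the first sub-step of your contradiction argument. You claim that whenever $K_c$ has a $0$-anchor (so $s(K)\le 28$) and some $2$-anchor sits at an odd position, you can ``construct vertex-disjoint $5^+$-paths of total order at least $15$,'' and you use this as a uniform contradiction with ``$\eta(K)\le 14$.'' That uniform lower bound is false. Take $v_1,v_3\in T_2$ with $v_2,v_4,v_5$ all $0$-anchors, and let each rescue-edge at $v_1$ and $v_3$ land on a center of its bi-star satellite, so that each satellite's trunk has exactly four vertices and admits only a $3$-path from the rescued vertex. Then $P_{v_1}$ and $P_{v_3}$ each have exactly $7$ vertices, the remaining trunk vertices are $v_2,v_4,v_5$ which form no $5^+$-path, and one checks that $\eta(K)=14$, not $\ge 15$. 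The configuration is still non-critical, of course — $s(K)=20$ and $\frac{20}{14}<\frac{15}{8}$ — but it is non-critical because the ratio is small, not because $\eta(K)\ge 15$. The threshold you need to beat is $\frac{8}{15}s(K)$, which depends on $s(K)$; arguing with one fixed bound $15$ does not work. The paper side-steps this by treating $s(K)\le 28$ in one sweep and showing $\eta(K)\ge 16,15,14,\dots$ depending on $s(K)$ (Case~3 of the paper's proof, which invokes Lemma~\ref{lemma12} Statements~1--2 and Lemma~\ref{lemma15} Statements~1--2 and then falls back on the $\sum(5+i)t_i$ bound, as at the end of Lemma~\ref{lemma19}).

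There is a second, related gap: in the sub-case where some $2$-anchor lies at an odd position but is not in $T_2$ (say in $T_0$ or $T_1$), you have no $7^+$-path $P_{v_j}$ to anchor the construction on, yet your argument only considers ``$P_{v_j}$ for $v_j\in T_2$.'' Finally, both the second sub-step of the contradiction and the non-critical clause are stated as ``analogous $P$-$Q$ splicing constructions'' that you never actually execute. You acknowledge at the end that the path constructions are the hard part; unfortunately those are precisely the parts that must be written out for the lemma to count as proved. Rewriting along the paper's lines — cases $s(K)=32$, $s(K)=30$, $s(K)\le 28$, each dispatched with an explicit $\eta(K)$ bound from Lemmas~\ref{lemma12} and~\ref{lemma15} — is the cleanest way to close these gaps.
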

\begin{proof}
Since $K_c$ has exactly two $2$-anchors, $s(K) \le 32$.

Case 1: $s(K) = 32$.
In this case, $|T_2\cap V(K_c)| = 2$ and $|O_1\cap V(K_c)| = 3$. 
If $v_1$, $v_3$, or $v_5$ is in $T_2$, then $\eta(K) \ge 22$ by Statement~3 of Lemma~\ref{lemma15}, and hence $K$ is not critical. 
Otherwise, $\{v_2, v_4\} \subseteq T_2$, and $\eta(K) \ge 17$ again by Statement~3 of Lemma~\ref{lemma15}.
Hence, $\frac {s(K)}{\eta(K)} \le \frac {32}{17}$ in any case. 
In summary, if $K$ is critical, then $\{v_2, v_4\} \subseteq T_2$ and $\frac {s(K)}{\eta(K)} = \frac {32}{17}$.

Case 2: $s(K) = 30$.
In this case, either $|T_2\cap V(K_c)| = |T_1\cap V(K_c)| = 1$ and $|O_1\cap V(K_c)| = 3$, or $|T_2\cap V(K_c)|=|O_1\cap V(K_c)|=2$ and $|O_0\cap V(K_c)|=1$. 
In the former case, by Statement~3 of Lemma~\ref{lemma15}, $\eta(K) \ge 17$ and in turn $\frac {s(K)}{\eta(K)} \le \frac {30}{17}$, implying that $K$ is not critical. 
In the latter case, if $v_1$, $v_3$, or $v_5$ is a $2$-anchor, then $\eta(K) \ge 17$ by Statement~2 of Lemma~\ref{lemma15}; 
otherwise, both $v_2$ and $v_4$ are in $T_2$, and $\eta(K) \ge 16$ by Statement~2 of Lemma~\ref{lemma15}, implying that $\frac {s(K)}{\eta(K)} \le \frac {30}{16}$.
In summary, if $K$ is critical, then $\{v_2, v_4\} \subseteq T_2$ and $\frac {s(K)}{\eta(K)} = \frac {30}{16}$.

Case 3: $s(K) \le 28$.
In this case, if $K_c$ has no $0$-anchor, then by Statement~2 of Lemma~\ref{lemma15}, $\eta(K) \ge 16$ and we are done. 
So, we may assume that $K_c$ has at least one $0$-anchor. 
In the first case where $s(K) = 28$, $|T_2\cap V(K_c)| = |O_1\cap V(K_c)| = 2$ because $K_c$ has at least one $0$-anchor and exactly two $2$-anchors; 
hence, by Statement~1 of Lemma~\ref{lemma12}, $\eta(K) \ge 16$ and we are done.
In the second case where $s(K) = 26$, either $|T_2\cap V(K_c)| = |T_1\cap V(K_c)| = 1$ and $|O_1\cap V(K_c)| = 2$, 
or $|T_2\cap V(K_c)| = 2$ and $|O_1\cap V(K_c)| = |O_0\cap V(K_c)| = 1$; 
hence, by Statements~1 and~2 in Lemma~\ref{lemma12}, $\eta(K) \ge 15$ and we are done. 
Now, we assume the last case where $s(K) \le 24$.
If $K_c$ has at most one $0$-anchor, then by Statement~1 of Lemma~\ref{lemma15}, $\eta(K) \ge 14$ and we are done. 
Thus, we may further assume that $K_c$ has at least two $0$-anchors. 
Then, $s(K) \le \sum_{i=0}^2 (4+2i)|T_i\cap V(K_c)| + 8$. 
Furthermore, since $P_{v_i}$ is a $5^+$-path for each  $2$-anchor $v_i$, $\eta(K)\ge \sum_{i=0}^2 (5+i)|T_i\cap V(K_c)|$. 
Therefore, the same discussion as at the end of the proof of Lemma~\ref{lemma19} applies here too.
\end{proof}

\begin{lemma}
\label{lemma21}
Assume that $K_c$ has exactly one $2$-anchor.
Then, the following statements hold:
\begin{enumerate}
\parskip=0pt
\item If $K_c$ is an edge or star, then $\frac {s(K)}{\eta(K)} \preceq \{\frac {12}7, \frac {14}8\}$ and hence $K$ is not critical.
\item Suppose that $K_c$ is a bi-star. Then, exactly one of $v_2$ and $v_3$ is in $T_2$ if $K$ is critical. 
	Moreover, $\frac {s(K)}{\eta(K)}=\frac {14}7$ if $s(K) = 14$; $\frac {s(K)}{\eta(K)}=\frac {16}8$ if $s(K) = 16$; 
	$\frac {s(K)}{\eta(K)} \in \{\frac {18}8, \frac {18}9\}$ if $s(K) = 18$.
	Furthermore, if $K$ is not critical, then $\frac {s(K)}{\eta(K)} \preceq \frac {2h+8}{h+5}$ for some $h = 0, 1, \ldots, 8$.

	(cf. Figure \ref{fig03} for all possible structures of critical components.)
\end{enumerate}
\end{lemma}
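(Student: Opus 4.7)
For Statement~1, where $K_c$ is an edge or star, apply Lemma~\ref{lemma11} Statement~1: every anchor of $K$ lies in $O_1 \cup T_2$ (or is a 0-anchor), so the unique 2-anchor $v_j$ must be in $T_2$, forcing $P_{v_j}$ to be a $7^+$-path. The companion $v_k$ (the other endpoint of the $M$-edge in $K_c$) is either a 0-anchor (yielding $s(K)=10$, with $\eta(K)\ge 7$ from $P_{v_j}$, so $\frac{s(K)}{\eta(K)}\preceq\frac{12}{7}$) or in $O_1$ (yielding $s(K)=14$; in this case concatenating a path in a bi-star satellite of $v_j$, the $K_c$-edge $\{v_j,v_k\}$, and a path in the bi-star satellite of $v_k$ produces an $8^+$-path, so $\eta(K)\ge 8$ and $\frac{s(K)}{\eta(K)}\preceq\frac{14}{8}$). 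Both ratios are strictly below $\frac{15}{8}$.

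For Statement~2, parametrize the configuration by $i\in\{0,1,2\}$ (with $v_j\in T_i$), $a=|O_1\cap V(K_c)|$, and $b=|O_0\cap V(K_c)|$. Then $s(K)=8+2i+4a+2b$, so writing $h=i+2a+b$ gives $s(K)=2h+8$ with $h\in\{0,1,\ldots,8\}$ by Lemma~\ref{lemma11} Statement~2. The plan is to show that $\eta(K)\ge h+5$ holds in every configuration, except for a short list of exceptional ones in which $v_j\in T_2\cap\{v_2,v_3\}$ forces $\eta(K)$ to be smaller. In the non-exceptional case, $\frac{s(K)}{\eta(K)}\preceq\frac{2h+8}{h+5}$, and $\frac{2h+8}{h+5}<\frac{15}{8}$ for $h\le 8$ by direct arithmetic, so $K$ is not critical; this also immediately yields the first clause of Statement~2 (criticality forces $v_j\in T_2\cap\{v_2,v_3\}$).

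To show $\eta(K)\ge h+5$ in the non-exceptional configurations, I construct two vertex-disjoint $5^+$-paths in $\widetilde{K}$: the first is $P_{v_j}$ itself (of length at least $5+i$ by Remark~\ref{remark01}), and the second traverses the three remaining vertices of $K_c$ and absorbs the $Q_v$-parts of 1-anchors and the satellites of adjacent 0-anchors. A key geometric fact is that when $v_j\in\{v_1,v_4\}$ the three remaining $K_c$-vertices form a 3-path and the extension always reaches length $\ge 5$, while for $v_j\in\{v_2,v_3\}$ those three vertices split as $1+2$ across $v_j$, and the singleton side is wasted if it houses a 0-anchor whose non-bi-star satellite contributes too few vertices. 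The cases $i\in\{0,1\}$ are handled similarly but are strictly easier because $P_{v_j}$ consumes fewer satellite vertices, leaving more room for extensions; Lemmas~\ref{lemma12} and~\ref{lemma13} supply the required $\eta(K)$ lower bounds uniformly.

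For the exact critical ratios when $s(K)\in\{14,16,18\}$, I analyze how $\eta(K)$ depends on the placement of each C-endpoint $u$ inside a bi-star satellite: $u$ being an $M$-center contributes a $3$-vertex portion to the path on its side, while $u$ being an $M$-satellite or a non-$M$ leaf contributes $4$ vertices. Enumerating these choices for the two bi-star satellites of $v_j$ and the at most one bi-star satellite of $v_{O_1}$ yields $\eta(K)=7$ for $s(K)=14$, $\eta(K)=8$ for $s(K)=16$, and $\eta(K)\in\{8,9\}$ for $s(K)=18$ (the value $9$ occurring when exactly one C-endpoint of the two bi-stars of $v_j$ is non-central), producing the claimed ratios $\frac{14}{7},\frac{16}{8},\frac{18}{8},\frac{18}{9}$. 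The main obstacle will be the sheer volume of the case analysis---roughly twenty $(i,a,b,\text{position})$ subcases must each be dispatched by an explicit path construction in $\widetilde{K}$, and for each critical subcase I must carefully rule out longer $5^+$-paths obtainable by rerouting through the $K_c$-edges and the C-edges.
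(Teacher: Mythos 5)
Your parametrization of Statement~2 by $(i,a,b)$ with $s(K)=8+2i+4a+2b=8+2h$ is equivalent to the paper's case-by-case analysis of $s(K)$, and your Statement~1 argument is sound. (In fact your value $s(K)=10$ in the $0$-anchor case is tighter than the $12$ written in the paper, which appears to be a slip, though the conclusion $\preceq\frac{12}{7}$ still holds.) The broad plan---show $\eta(K)\ge h+5$ outside a short list of exceptional configurations via $P_{v_j}$ plus a second $5^+$-path and the bounds from Lemmas~\ref{lemma12} and~\ref{lemma13}---matches the paper's proof in spirit.

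The genuine divergence is in how the exact values $\eta(K)\in\{7,8,9\}$ are pinned down in the critical cases. You propose to enumerate where each $C$-endpoint sits inside its bi-star satellite and then to ``carefully rule out longer $5^+$-paths,'' i.e.\ a two-sided structural argument. The paper never proves the upper bound on $\eta(K)$ that way; it gets it for free from the criticality hypothesis itself: $\frac{s(K)}{\eta(K)}\ge\frac{15}{8}$ combined with $s(K)=14$ forces $\eta(K)\le 7$, with $s(K)=16$ forces $\eta(K)\le 8$, and with $s(K)=18$ forces $\eta(K)\le 9$, so only the lower bounds from Lemmas~\ref{lemma12}--\ref{lemma14} (exactly as you describe them) are required. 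Your specific claim that $\eta(K)=9$ occurs precisely when one $C$-endpoint of $v_j$'s two bi-stars is non-central is neither established nor needed. Beyond this, your characterization of the exceptional configurations as ``$v_j\in T_2\cap\{v_2,v_3\}$'' is only a necessary condition: the ``unless'' clauses of Lemmas~\ref{lemma13} and~\ref{lemma14} show that the position of the adjacent $O_1$ or $O_0$ anchor also matters, so as stated your dichotomy is not tight enough to deduce the first clause of Statement~2 without further casework---and the roughly twenty subcases you flag as ``the main obstacle'' are precisely the paper's proof and are left undispatched in your sketch.
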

\begin{proof}
{\em Statement~1.}
Assume that $K_c$ is an edge or star.
Without loss of generality, we may assume that $v_1$ is a $2$-anchor.
By Statement~1 of Lemma~\ref{lemma11}, $v_1 \in T_2$ and $v_2 \notin O_0$.  
If $v_2$ is a $0$-anchor, then $s(K)=12$, and $\eta(K)\ge 7$ because $P_{v_1}$ is a $7^+$-path.
Otherwise, $v_2\in O_1$, $s(K)=14$, and $\eta(K)\ge 8$ because we can connect $Q_{v_1}$ and $Q_{v_2}$ into an $8^+$-path with the edge $\{v_1, v_2\}$. 
In summary, $\frac {s(K)}{\eta(K)}\preceq \{\frac {12}7, \frac {14}8\}$, and hence $K$ is not critical.

{\em Statement~2.}
Since $K_c$ is a bi-star and has exactly one $2$-anchor, $s(K) \le 24$.
Moreover, $\eta(K) \ge 5$ because $P_{v_i}$ is a $5^+$-path for the $2$-anchor $v_i$. 
Consequently, if $s(K) \le 8$, then we are done. 
So, we may assume $10 \le s(K) \le 24$.
If $K_c$ has no $0$-anchor, then by Statement~1 of Lemma~\ref{lemma13}, $\eta(K) \ge 14$ and in turn $\frac {s(K)}{\eta(K)} \preceq \frac {24}{14}$. 
Hence, we may assume that $K_c$ has at least one $0$-anchor.
It follows that $s(K) \le 20$. 
In particular, if $s(K)=20$, then $|T_2\cap V(K_c)|=1$ and $|O_1\cap V(K_c)|=2$; by Statement~3 of Lemma~\ref{lemma12}, $\eta(K) \ge 13$ and we are done. 
Thus, it suffices to distinguish five cases below, where $i$ denotes the unique integer in $\{0,1,2\}$ with $|T_i\cap V(K_c)|=1$.

Case 1: $s(K) = 18$.
In this case, either $|T_1\cap V(K_c)| = 1$ and $|O_1\cap V(K_c)| = 2$, or $|T_2\cap V(K_c)| = |O_1\cap V(K_c)| = |O_0\cap V(K_c)| = 1$.
In the former case, by Statement~3 of Lemma~\ref{lemma12}, $\eta(K) \ge 13$ and hence $K$ is not critical.
In the latter case, since $|(T_2 \cup O_1)\cap V(K_c)| = 2$, Statement~3 of Lemma~\ref{lemma13} implies that $\eta(K) \ge 8$
and in turn $\frac {s(K)}{\eta(K)} \preceq \frac {18}8$, and $\eta(K) \ge 10$ unless either $v_1 \in O_1$ and $v_2 \in T_2$, or $v_4 \in O_1$ and $v_3 \in T_2$.
Therefore, if $K$ is critical, then $\frac {s(K)}{\eta(K)} \preceq \{ \frac {18}8, \frac {18}9 \}$
and either $v_1 \in O_1$ and $v_2 \in T_2$, or $v_4 \in O_1$ and $v_3 \in T_2$.

Case 2: $s(K)=16$.
If $i=0$, then $|O_1\cap V(K_c)| = 2$ and in turn Statement~5 of Lemma~\ref{lemma12} implies that $\eta(K) \ge 9$ and hence $K$ is not critical.
If $i=1$, then $|O_1\cap V(K_c)| = |O_0\cap V(K_c)| = 1$ and in turn Statement~2 of Lemma~\ref{lemma13} implies that $\eta(K) \ge 11$ and hence $K$ is not critical.
So, we may assume that $i=2$. 
Then, either $|O_0\cap V(K_c)| = 2$ or $|O_1\cap V(K_c)| = 1$. 
In the former case, Statement~2 of Lemma~\ref{lemma13} implies that $\eta(K) \ge 11$ and hence $K$ is not critical.
In the latter case, Statement~3 of Lemma~\ref{lemma13} implies that if $K$ is critical,
then $\frac {s(K)}{\eta(K)} = \frac {16}8$ and either $v_1 \in O_1$ and $v_2 \in T_2$, or $v_4 \in O_1$ and $v_3 \in T_2$.

Case 3: $s(K)=14$.
If $i=0$, then $|O_1\cap V(K_c)| = |O_0\cap V(K_c)| = 1$ and in turn Statement~2 of Lemma~\ref{lemma13} implies that $\eta(K) \ge 11$ and we are done.
If $i = 1$, then Statement~2 of Lemma~\ref{lemma11} implies that $|(O_0 \cup T_0 \cup T_1)\cap V(K_c)| \le 2$ and hence $|O_1\cap V(K_c)| = 1$, 
and in turn Statement~3 of Lemma~\ref{lemma13} implies that $\eta(K) \ge 8$ and we are done.
So, we may assume that $i=2$.
Then, $|O_0\cap V(K_c)| = 1$ and $\eta(K) \ge 7$ because $P_{v_j}$ is a $7^+$-path for the unique $2$-anchor $v_j$ in $K_c$.
Thus, $\frac {s(K)}{\eta(K)} \le \frac {14}7$.
Clearly, $K$ is critical, then $\frac {s(K)}{\eta(K)} = \frac {14}7$. 
Moreover, if $K$ is critical, then by Statement~4 of Lemma~\ref{lemma13},
either $v_2 \in T_2$ and exactly one of $v_1$ and $v_3$ is in $O_0$, or $v_3 \in T_2$ and exactly one of $v_2$ and $v_4$ is in $O_0$.

Case 4: $s(K) = 12$.
If $i=2$, then $\eta(K) \ge 7$ because $P_{v_j}$ is a $7^+$-path for the unique $2$-anchor $v_j$ in $K_c$, and hence $K$ is not critical. 
So, we may assume that $i=0$ or~$1$. 
It follows from Statement~2 of Lemma~\ref{lemma11} that either $|T_1 \cap V(K_c)| = |O_0 \cap V(K_c)| = 1$ or $|T_0 \cap V(K_c)| = |O_1 \cap V(K_c)| = 1$. 
By Statement~4 of Lemma~\ref{lemma12}, $\eta(K) \ge 7$ and hence $K$ is not critical.

Case 5: $s(K)=10$.
In this case, $i=0$ or~$1$. 
If $i=1$, then $\eta(K) \ge 6$ because $P_{v_j}$ is a $6^+$-path for the unique $2$-anchor $v_j$ in $K_c$, implying that $K$ is not critical. 
So, we may assume that $i=0$.
Then, $|T_0\cap V(K_c)| = 1$ and in turn $|O_0\cap V(K_c)| = 1$.
By Statement~4 of Lemma~\ref{lemma12}, $\eta(K) \ge 6$ and $K$ is not critical.
\end{proof}

\begin{lemma}
\label{lemma22}
Suppose that $K_c$ has exactly one $2$-anchor and is a $5$-path. 
Then, exactly one of $v_2$, $v_3$, and $v_4$ is in $T_2$ if $K$ is critical. 
Moreover, $\frac {s(K)}{\eta(K)}=\frac {14}7$ if $s(K) = 14$; 
$\frac {s(K)}{\eta(K)}=\frac {16}8$ or $\frac {16}7$ if $s(K) = 16$; 
$\frac {s(K)}{\eta(K)} = \frac {18}9$ if $s(K) = 18$. 
Furthermore, if $K$ is not critical, then $\frac {s(K)}{\eta(K)} \preceq \frac {2h+8}{h+5}$ for some $h = 0, 1, \ldots, 10$.

(cf. Figure \ref{fig03} for all possible structures of critical components.)
\end{lemma}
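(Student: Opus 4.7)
The plan is to mirror the proof of Lemma~\ref{lemma21}, now with $K_c$ being a $5$-path rather than a bi-star. Writing the unique $2$-anchor as $v_j \in T_t$ and summing the contributions of the anchor types (using that $|V(K_c) \cap V(M_C)| = 4$ since $K_c$ is a $5$-path component of $H$ rescued by $C$), I record
\[
s(K) = 4 + (4+2t) + 2|O_0 \cap V(K_c)| + 4|O_1 \cap V(K_c)|.
\]
Setting $h = |T_1 \cap V(K_c)| + 2|T_2 \cap V(K_c)| + |O_0 \cap V(K_c)| + 2|O_1 \cap V(K_c)|$ gives $s(K) = 2h + 8$, which is the ``$s$'' side of the target envelope.

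The first round of trimming uses generic lower bounds on $\eta(K)$. Statements~1 and~2 of Lemma~\ref{lemma15} yield $\eta(K) \ge 14$ or $16$ whenever $K_c$ has at most one $0$-anchor; Statement~3 of Lemma~\ref{lemma12} yields $\eta(K) \ge 13$ whenever $|(O_1 \cup T_1 \cup T_2) \cap V(K_c)| \ge 3$. Each such case is subcritical and fits the envelope for some $h \le 10$. The residue has $K_c$ with $\ge 2$ zero-anchors and $|(O_1 \cup T_1 \cup T_2) \cap V(K_c)| \le 2$, forcing $s(K) \le 18$. Within the residue, if $v_j \in T_0 \cup T_1$, then Statements~4 and~5 of Lemma~\ref{lemma12} push $\eta(K) \ge 6$--$9$, which, paired with $s(K) \le 16$ in the corresponding subcases, is always subcritical. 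Hence criticality forces $v_j \in T_2$, and the surviving distributions are $(|O_0|, |O_1|) \in \{(1, 0), (2, 0), (0, 1), (1, 1)\}$, corresponding to $s(K) \in \{14, 16, 18\}$.

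The main obstacle is ruling out $v_j \in \{v_1, v_5\}$ in these surviving critical cases, since the $7^+$-path $P_{v_j}$ already gives $\eta(K) \ge 7$ and one must then forbid any additional $5^+$-path construction. For $s(K) = 14$, Statement~2 of Lemma~\ref{lemma14} forces $\eta(K) \ge 8$ except in three listed exceptional configurations, none of which places $v_j$ at an endpoint of $K_c$; in those exceptions $P_{v_j}$ alone delivers $\eta(K) = 7$, so $\frac{s(K)}{\eta(K)} = \frac{14}{7}$. For $s(K) = 16$, the subcase $(|O_0|, |O_1|) = (2, 0)$ is controlled by Statement~3 of Lemma~\ref{lemma14} with its unique exception $v_3 \in T_2$, $\{v_2, v_4\} \subseteq O_0$, and the subcase $(0, 1)$ by Statement~1 of Lemma~\ref{lemma14} whose exceptions require $v_j \in \{v_2, v_4\}$ adjacent to the $O_1$-anchor; the two critical ratios $\frac{16}{7}$ and $\frac{16}{8}$ emerge according to whether $P_{v_j}$ can or cannot be augmented by one extra $5^+$-path built along $K_c$. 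For $s(K) = 18$, a combined use of Statements~1 and~2 of Lemma~\ref{lemma14} confines critical $v_j$ to $\{v_2, v_3, v_4\}$ and yields $\eta(K) = 9$ via $P_{v_j}$ together with a $5^+$-path built along the $K_c$-edge joining the $O_0$- and $O_1$-anchors on the side of $K_c$ disjoint from $v_j$. For the non-critical envelope, collecting the lower bound $\eta(K) \ge h + 5$ from each step of the analysis, in the same manner as in Lemmas~\ref{lemma19} and~\ref{lemma21}, and combining it with $s(K) = 2h + 8$ gives $\frac{s(K)}{\eta(K)} \preceq \frac{2h+8}{h+5}$ for some $h \in \{0, 1, \ldots, 10\}$.
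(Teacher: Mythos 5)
Your high-level structure — compute $s(K) = 2h+8$, trim with generic bounds from Lemmas~\ref{lemma12} and~\ref{lemma15}, then isolate the surviving critical configurations for $s(K) \in \{14, 16, 18\}$ — mirrors the paper's proof, and most of the bookkeeping is correct. However, two of the intermediate justifications have real gaps.

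First, the claim that ``if $v_j \in T_0 \cup T_1$ then Statements~4 and~5 of Lemma~\ref{lemma12} \ldots is always subcritical'' fails in the subcase $v_j \in T_1$, $|O_1 \cap V(K_c)| = 1$, $|O_0 \cap V(K_c)| = 0$. There $s(K) = 14$, Statement~5 of Lemma~\ref{lemma12} does not apply (only one anchor in $O_0 \cup O_1$), and Statement~4 only gives $\eta(K) \ge 7$, so $\frac{s(K)}{\eta(K)}$ could be $\frac{14}{7} = 2 \ge \frac{15}{8}$, i.e.\ still potentially critical. Closing this subcase needs Statement~1 of Lemma~\ref{lemma14} (applicable since $|(O_1 \cup T_1 \cup T_2) \cap V(K_c)| = 2$), giving $\eta(K) \ge 8$ and hence subcriticality; this is what the paper uses.

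Second, the construction you describe for the surviving critical $s(K)=18$ configuration does not exist. By Statement~1 of Lemma~\ref{lemma14} the exceptional critical configuration is (up to symmetry) $v_1 \in O_1$, $v_2 \in T_2$, $v_4 \in O_0$, with $v_3, v_5$ being $0$-anchors, so the $O_0$- and $O_1$-anchors lie on \emph{opposite} sides of $v_j = v_2$: there is no ``$K_c$-edge joining the $O_0$- and $O_1$-anchors on the side of $K_c$ disjoint from $v_j$.'' Moreover, if $P_{v_j}$ (a $7^+$-path) could be combined with a disjoint $5^+$-path, one would obtain $\eta(K) \ge 12$, which would make $\frac{18}{\eta(K)} \le \frac{3}{2}$ and $K$ subcritical — contradicting what you are trying to prove. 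The right argument (as in the paper) is a single $9^+$-path: connect $Q_{v_1}$ and $Q_{v_4}$ using the $4$-subpath $v_1$--$v_2$--$v_3$--$v_4$ of $K_c$, giving $\eta(K) \ge 9$ directly, after first ruling out the cases $v_3 \in O_0$ or $v_5 \in O_0$ (which give two disjoint paths totaling $\ge 13$ vertices).

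As a lesser point, your explanation of why both $\frac{16}{7}$ and $\frac{16}{8}$ occur for $s(K)=16$ (``whether $P_{v_j}$ can be augmented by one extra $5^+$-path'') is not the operative distinction: in the $(|O_0|,|O_1|)=(2,0)$ exceptional configuration no disjoint $5^+$-path exists along $K_c$, and the ratio $\frac{16}{7}$ vs.\ $\frac{16}{8}$ comes from whether $P_{v_j}$ itself has $7$ or $8$ vertices. The conclusion is right, but the reasoning as stated does not justify it.
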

\begin{proof}
Clearly, $\eta(K)\ge 5$ because $K_c$ is a $5$-path. 
So, if $s(K)\le 8$, then $\frac {s(K)}{\eta(K)}\le \frac 85$ and hence $K$ is not critical.
Thus, we may assume that $s(K) \ge 10$. 
Indeed, $s(K) \le 28$ because $K$ has exactly one $2$-anchor. 
Moreover, if $K_c$ has no $0$-anchor, then Statement~2 of Lemma~\ref{lemma15} implies that $\eta(K) \ge 16$ and we are done.
Hence, we may assume that $K_c$ has at least one $0$-anchor and in turn $s(K) \le 24$.
If $K_c$ has at most one $0$-anchor, then Statement~1 of Lemma~\ref{lemma15} implies that $\eta(K) \ge 14$ and in turn $\frac {s(K)}{\eta(K)} \le \frac {24}{14}$ and we are done.
Therefore, we may further assume that $K_c$ has at least two $0$-anchors and in turn $10 \le s(K) \le 20$.
Indeed, if $s(K)=20$, then $|T_2\cap V(K_c)| = 1$ and $|O_1\cap V(K_c)| = 2$, and hence Statement~3 of Lemma~\ref{lemma12} implies that $\eta(K) \ge 13$ and we are done. 
Thus, it suffices to distinguish five cases below, where $i$ denotes the unique integer in $\{0,1,2\}$ with $|T_i\cap V(K_c)|=1$.

Case 1: $s(K) = 18$.
In this case, either $|T_1\cap V(K_c)| = 1$ and $|O_1\cap V(K_c)| = 2$, or $|T_2\cap V(K_c)| = |O_1\cap V(K_c)| = 1$ and $|O_0\cap V(K_c)| = 1$.
In the former case, Statement~3 of Lemma~\ref{lemma12} implies that $\eta(K) \ge 13$ and we are done. 
So, we may assume the latter case. 
Then, by Statement~1 of Lemma~\ref{lemma14}, $\eta(K) \ge 10$ and we are done unless either $v_1 \in O_1$ and $v_2 \in T_2$, or $v_5 \in O_1$ and $v_4 \in T_2$. 
Thus, by symmetry, we may assume that $v_1 \in O_1$ and $v_2 \in T_2$. 
Now, if $v_3$ or $v_5$ is in $O_0$, then we can construct an $8^+$-path by connecting $Q_{v_1}$ and $Q_{v_2}$ with the edge $\{v_1, v_2\}$,
and another $5^+$-path by connecting $Q_{v_3}$ and $Q_{v_5}$ with the 3-path $v_3$-$v_4$-$v_5$, implying that $\eta(K) \ge 13$ and we are done.
Hence, we may further assume that $v_4 \in O_0$. Then, we can construct a $9^+$-path by connecting $Q_{v_1}$ and $Q_{v_4}$ with the $4$-path $v_1$-$v_2$-$v_3$-$v_4$. 
Therefore, $\eta(K) \ge 9$. 
Moreover, if $K$ is critical, then $\frac {s(K)}{\eta(K)} = \frac {18}9$.

Case 2: $s(K) = 16$.
In this case, $|O_1\cap V(K_c)| = 2$ if $i=0$, while $|O_1\cap V(K_c)| = |O_0\cap V(K_c)| = 1$ if $i = 1$. 
So, if $i=0$ or~$1$, then Statement~5 of Lemma~\ref{lemma12} implies that $\eta(K) \ge 9$ and we are done.
Thus, we may assume $i=2$. 
Then, either $|O_1\cap V(K_c)| = 1$ or $|O_0\cap V(K_c)| = 2$.
In the former case, Statement~1 of Lemma~\ref{lemma14} implies that $\eta(K) \ge 8$ and in turn $\frac {s(K)}{\eta(K)} \le \frac {16}8$; 
moreover, if $K$ is critical, then $\frac {s(K)}{\eta(K)} = \frac {16}8$ and either $v_1 \in O_1$ and $v_2 \in T_2$, or $v_5 \in O_1$ and $v_4 \in T_2$.
In the latter case, $\eta(K) \ge 7$ and in turn $\frac {s(K)}{\eta(K)} \le \frac {16}7$ because $P_{v_j}$ is a $7^+$-path for the unique $2$-anchor $v_j$ in $K_c$; 
moreover, if $K$ is critical, then $\frac {s(K)}{\eta(K)} = \frac {16}8$ or $\frac {16}7$ and Statement~3 of Lemma~\ref{lemma14} implies that $\{v_2, v_4\} \subseteq O_0$ and $v_3 \in T_2$.

Case 3: $s(K) = 14$.
In this case, if $i = 0$, then $|O_1\cap V(K_c)| = |O_0\cap V(K_c)| = 1$ and in turn Statement~5 of Lemma~\ref{lemma12} implies that $\eta(K) \ge 9$ and hence $K$ is not critical.
Similarly, if $i = 1$, then either $|O_1\cap V(K_c)| = 1$ or $|O_0\cap V(K_c)| = 2$. In the former case, $\eta(K) \ge 8$ by Statement~1 of Lemma~\ref{lemma14}. 
In the latter case, $\eta(K) \ge 9$ by Statement~5 of Lemma~\ref{lemma12}. 
So, in both cases, $K$ is not critical. 
Thus, we may assume that $i=2$. 
Then, $|O_0\cap V(K_c)| = 1$, and $\eta(K) \ge 7$ because $P_{v_j}$ is a $7^+$-path for the unique $2$-anchor $v_j$ in $K_c$. 
Hence, if $K$ is critical, then $\frac {s(K)}{\eta(K)} = \frac {14}7$. 
Moreover, by Statement~2 of Lemma~\ref{lemma14},
if $K$ is critical, then (1) $v_3 \in T_2$ and $\{v_2, v_4\} \cap O_0 \ne \emptyset$, or (2) $v_2 \in T_2$ and $v_1 \in O_0$, or (3) $v_4 \in T_2$ and $v_5 \in O_0$. 

Case 4: $s(K) = 12$.
In this case, if $i = 2$, then $\eta(K) \ge 7$ because $P_{v_j}$ is a $7^+$-path for the unique $2$-anchor $v_j$ in $K_c$. 
Similarly, if $i=1$, then $|O_0\cap V(K_c)| = 1$ and in turn $\eta(K) \ge 7$ by Statement~4 of Lemma~\ref{lemma12}. 
Moreover, if $i=0$, then either $|O_1\cap V(K_c)| = 1$ or $|O_0\cap V(K_c)| = 2$; 
in both cases, by Statements~4 and~5 in Lemma~\ref{lemma12}, $\eta(K) \ge 7$. 
Therefore, $K$ is always not critical.

Case 5: $s(K) = 10$.
In this case, $i = 0$ or~$1$.
If $i=1$, then $\eta(K) \ge 6$ because $P_{v_j}$ is a $6^+$-path for the unique $2$-anchor $v_j$ in $K_c$. 
So, we may assume $i=0$. Then, $|O_0 \cap V(K_c)| = |T_0 \cap V(K_c)| = 1$.
By Statement~4 of Lemma~\ref{lemma12}, $\eta(K) \ge 6$.
Thus, we can conclude $\eta(K) \ge 6$ and we are done.
\end{proof}

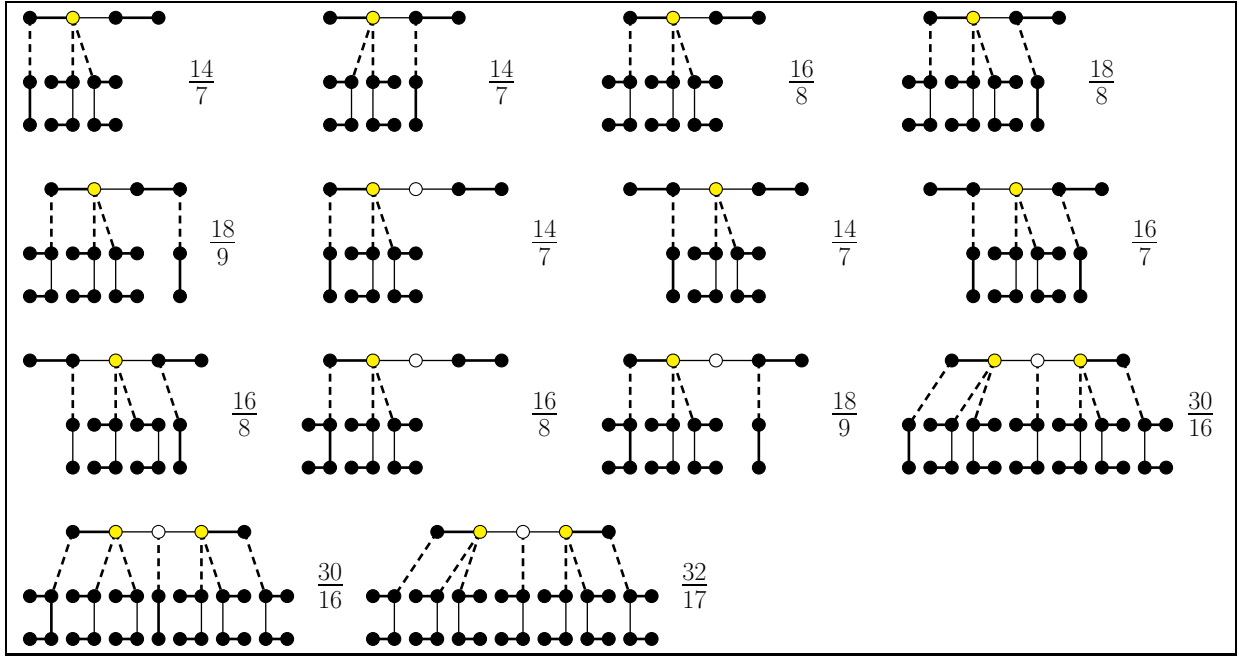
\begin{figure}[thb]
\begin{center}
\framebox{
\begin{minipage}{0.97\textwidth}
\begin{tikzpicture}[scale=0.57,transform shape]
\draw [thick, line width = 1pt] (-8, 8) -- (-7, 8);
\draw [thick, line width = 1pt] (-6, 8) -- (-5, 8);
\draw [thin, line width = 0.5pt] (-7, 8) -- (-6, 8);
\draw [densely dashed, line width = 1pt] (-8, 8) -- (-8, 6.5);
\draw [thick, line width = 1pt] (-8, 6.5) -- (-8, 5.5);
\draw [densely dashed, line width = 1pt] (-7, 8) -- (-7, 6.5);
\draw [thin, line width = 0.5pt] (-7, 6.5) -- (-7, 5.5);
\draw [thick, line width = 1pt] (-7, 6.5) -- (-7.5, 6.5);
\draw [thick, line width = 1pt] (-7, 5.5) -- (-7.5, 5.5);
\draw [densely dashed, line width = 1pt] (-7, 8) -- (-6.5, 6.5);
\draw [thin, line width = 0.5pt] (-6.5, 6.5) -- (-6.5, 5.5);
\draw [thick, line width = 1pt] (-6.5, 6.5) -- (-6, 6.5);
\draw [thick, line width = 1pt] (-6.5, 5.5) -- (-6, 5.5);
\filldraw (-8, 8) circle(.15);
\filldraw[fill = yellow] (-7, 8) circle(.15);
\filldraw (-6, 8) circle(.15);
\filldraw (-5, 8) circle(.15);
\filldraw (-6.5, 6.5) circle(.15);
\filldraw (-6.5, 5.5) circle(.15);
\filldraw (-7, 6.5) circle(.15);
\filldraw (-7, 5.5) circle(.15);
\filldraw (-7.5, 6.5) circle(.15);
\filldraw (-7.5, 5.5) circle(.15);
\filldraw (-6, 6.5) circle(.15);
\filldraw (-6, 5.5) circle(.15);
\filldraw (-8, 6.5) circle(.15);
\filldraw (-8, 5.5) circle(.15);
\node[font=\fontsize{25}{6}\selectfont] at (-4, 6.5) {$\frac {14}7$};

\draw [thick, line width = 1pt] (-1, 8) -- (0, 8);
\draw [thick, line width = 1pt] (1, 8) -- (2, 8);
\draw [thin, line width = 0.5pt] (0, 8) -- (1, 8);
\draw [densely dashed, line width = 1pt] (1, 8) -- (1, 6.5);
\draw [thick, line width = 1pt] (1, 6.5) -- (1, 5.5);
\draw [densely dashed, line width = 1pt] (0, 8) -- (0, 6.5);
\draw [thin, line width = 0.5pt] (0, 6.5) -- (0, 5.5);
\draw [thick, line width = 1pt] (0, 6.5) -- (0.5, 6.5);
\draw [thick, line width = 1pt] (0, 5.5) -- (0.5, 5.5);
\draw [densely dashed, line width = 1pt] (0, 8) -- (-0.5, 6.5);
\draw [thin, line width = 0.5pt] (-0.5, 6.5) -- (-0.5, 5.5);
\draw [thick, line width = 1pt] (-0.5, 6.5) -- (-1, 6.5);
\draw [thick, line width = 1pt] (-0.5, 5.5) -- (-1, 5.5);
\filldraw (-1, 8) circle(.15);
\filldraw[fill = yellow] (0, 8) circle(.15);
\filldraw (1, 8) circle(.15);
\filldraw (2, 8) circle(.15);
\filldraw (1, 6.5) circle(.15);
\filldraw (1, 5.5) circle(.15);
\filldraw (0, 6.5) circle(.15);
\filldraw (0, 5.5) circle(.15);
\filldraw (0.5, 6.5) circle(.15);
\filldraw (0.5, 5.5) circle(.15);
\filldraw (-1, 6.5) circle(.15);
\filldraw (-1, 5.5) circle(.15);
\filldraw (-0.5, 6.5) circle(.15);
\filldraw (-0.5, 5.5) circle(.15);
\node[font=\fontsize{25}{6}\selectfont] at (3,6.5) {$\frac {14}7$};

\draw [thick, line width = 1pt] (6, 8) -- (7, 8);
\draw [thick, line width = 1pt] (8, 8) -- (9, 8);
\draw [thin, line width = 0.5pt] (7, 8) -- (8, 8);
\draw [densely dashed, line width = 1pt] (7, 8) -- (7, 6.5);
\draw [thin, line width = 0.5pt] (7, 6.5) -- (7, 5.5);
\draw [thick, line width = 1pt] (7, 6.5) -- (6.5, 6.5);
\draw [thick, line width = 1pt] (7, 5.5) -- (6.5, 5.5);
\draw [densely dashed, line width = 1pt] (7, 8) -- (7.5, 6.5);
\draw [thin, line width = 0.5pt] (7.5, 6.5) -- (7.5, 5.5);
\draw [thick, line width = 1pt] (7.5, 6.5) -- (8, 6.5);
\draw [thick, line width = 1pt] (7.5, 5.5) -- (8, 5.5);
\draw [densely dashed, line width = 1pt] (6, 8) -- (6, 6.5);
\draw [thin, line width = 0.5pt] (6, 6.5) -- (6, 5.5);
\draw [thick, line width = 1pt] (6, 6.5) -- (5.5, 6.5);
\draw [thick, line width = 1pt] (6, 5.5) -- (5.5, 5.5);
\filldraw (6, 8) circle(.15);
\filldraw[fill = yellow] (7, 8) circle(.15);
\filldraw (8, 8) circle(.15);
\filldraw (9, 8) circle(.15);
\filldraw (7, 6.5) circle(.15);
\filldraw (7, 5.5) circle(.15);
\filldraw (8, 6.5) circle(.15);
\filldraw (8, 5.5) circle(.15);
\filldraw (7.5, 6.5) circle(.15);
\filldraw (7.5, 5.5) circle(.15);
\filldraw (6, 6.5) circle(.15);
\filldraw (6, 5.5) circle(.15);
\filldraw (5.5, 6.5) circle(.15);
\filldraw (5.5, 5.5) circle(.15);
\filldraw (6.5, 6.5) circle(.15);
\filldraw (6.5, 5.5) circle(.15);
\node[font=\fontsize{25}{6}\selectfont] at (10,6.5) {$\frac {16}8$};

\draw [thick, line width = 1pt] (13, 8) -- (14, 8);
\draw [thick, line width = 1pt] (15, 8) -- (16, 8);
\draw [thin, line width = 0.5pt] (14, 8) -- (15, 8);
\draw [densely dashed, line width = 1pt] (13, 8) -- (13, 6.5);
\draw [thin, line width = 0.5pt] (13, 6.5) -- (13, 5.5);
\draw [thick, line width = 1pt] (13, 6.5) -- (12.5, 6.5);
\draw [thick, line width = 1pt] (13, 5.5) -- (12.5, 5.5);
\draw [densely dashed, line width = 1pt] (14, 8) -- (14, 6.5);
\draw [thin, line width = 0.5pt] (14, 6.5) -- (14, 5.5);
\draw [thick, line width = 1pt] (14, 6.5) -- (13.5, 6.5);
\draw [thick, line width = 1pt] (14, 5.5) -- (13.5, 5.5);
\draw [densely dashed, line width = 1pt] (14, 8) -- (14.5, 6.5);
\draw [thin, line width = 0.5pt] (14.5, 6.5) -- (14.5, 5.5);
\draw [thick, line width = 1pt] (14.5, 6.5) -- (15, 6.5);
\draw [thick, line width = 1pt] (14.5, 5.5) -- (15, 5.5);
\draw [densely dashed, line width = 1pt] (15, 8) -- (15.5, 6.5);
\draw [thick, line width = 1pt] (15.5, 6.5) -- (15.5, 5.5);
\filldraw (13, 8) circle(.15);
\filldraw[fill = yellow] (14, 8) circle(.15);
\filldraw (15, 8) circle(.15);
\filldraw (16, 8) circle(.15);
\filldraw (14.5, 6.5) circle(.15);
\filldraw (14.5, 5.5) circle(.15);
\filldraw (14, 6.5) circle(.15);
\filldraw (14, 5.5) circle(.15);
\filldraw (13.5, 6.5) circle(.15);
\filldraw (13.5, 5.5) circle(.15);
\filldraw (15, 6.5) circle(.15);
\filldraw (15, 5.5) circle(.15);
\filldraw (13, 6.5) circle(.15);
\filldraw (13, 5.5) circle(.15);
\filldraw (12.5, 6.5) circle(.15);
\filldraw (12.5, 5.5) circle(.15);
\filldraw (15.5, 6.5) circle(.15);
\filldraw (15.5, 5.5) circle(.15);
\node[font=\fontsize{25}{6}\selectfont] at (17,6.5) {$\frac {18}8$};

\draw [thick, line width = 1pt] (-7.5, 4) -- (-6.5, 4);
\draw [thick, line width = 1pt] (-5.5, 4) -- (-4.5, 4);
\draw [thin, line width = 0.5pt] (-6.5, 4) -- (-5.5, 4);
\draw [densely dashed, line width = 1pt] (-7.5, 4) -- (-7.5, 2.5);
\draw [thin, line width = 0.5pt] (-7.5, 2.5) -- (-7.5, 1.5);
\draw [thick, line width = 1pt] (-7.5, 2.5) -- (-8, 2.5);
\draw [thick, line width = 1pt] (-7.5, 1.5) -- (-8, 1.5);
\draw [densely dashed, line width = 1pt] (-6.5, 4) -- (-6.5, 2.5);
\draw [thin, line width = 0.5pt] (-6.5, 2.5) -- (-6.5, 1.5);
\draw [thick, line width = 1pt] (-6.5, 2.5) -- (-7, 2.5);
\draw [thick, line width = 1pt] (-6.5, 1.5) -- (-7, 1.5);
\draw [densely dashed, line width = 1pt] (-6.5, 4) -- (-6, 2.5);
\draw [thin, line width = 0.5pt] (-6, 2.5) -- (-6, 1.5);
\draw [thick, line width = 1pt] (-6, 2.5) -- (-5.5, 2.5);
\draw [thick, line width = 1pt] (-6, 1.5) -- (-5.5, 1.5);
\draw [densely dashed, line width = 1pt] (-4.5, 4) -- (-4.5, 2.5);
\draw [thick, line width = 1pt] (-4.5, 2.5) -- (-4.5, 1.5);
\filldraw (-7.5, 4) circle(.15);
\filldraw[fill = yellow] (-6.5, 4) circle(.15);
\filldraw (-5.5, 4) circle(.15);
\filldraw (-4.5, 4) circle(.15);
\filldraw (-6, 2.5) circle(.15);
\filldraw (-6, 1.5) circle(.15);
\filldraw (-6.5, 2.5) circle(.15);
\filldraw (-6.5, 1.5) circle(.15);
\filldraw (-7, 2.5) circle(.15);
\filldraw (-7, 1.5) circle(.15);
\filldraw (-5.5, 2.5) circle(.15);
\filldraw (-5.5, 1.5) circle(.15);
\filldraw (-7.5, 2.5) circle(.15);
\filldraw (-7.5, 1.5) circle(.15);
\filldraw (-8, 2.5) circle(.15);
\filldraw (-8, 1.5) circle(.15);
\filldraw (-4.5, 2.5) circle(.15);
\filldraw (-4.5, 1.5) circle(.15);
\node[font=\fontsize{25}{6}\selectfont] at (-3.5,2.75) {$\frac {18}9$};

\draw [thick, line width = 1pt] (-1, 4) -- (0, 4);
\draw [thick, line width = 1pt] (2, 4) -- (3, 4);
\draw [thin, line width = 0.5pt] (0, 4) -- (1, 4) -- (2,4);
\draw [densely dashed, line width = 1pt] (-1, 4) -- (-1, 2.5);
\draw [thick, line width = 1pt] (-1, 2.5) -- (-1, 1.5);
\draw [densely dashed, line width = 1pt] (0, 4) -- (0, 2.5);
\draw [thin, line width = 0.5pt] (0, 2.5) -- (0, 1.5);
\draw [thick, line width = 1pt] (0, 2.5) -- (-0.5, 2.5);
\draw [thick, line width = 1pt] (0, 1.5) -- (-0.5, 1.5);
\draw [densely dashed, line width = 1pt] (0, 4) -- (0.5, 2.5);
\draw [thin, line width = 0.5pt] (0.5, 2.5) -- (0.5, 1.5);
\draw [thick, line width = 1pt] (0.5, 2.5) -- (1, 2.5);
\draw [thick, line width = 1pt] (0.5, 1.5) -- (1, 1.5);
\filldraw (-1, 4) circle(.15);
\filldraw[fill = yellow] (0, 4) circle(.15);
\filldraw[fill = white] (1, 4) circle(.15);
\filldraw (2, 4) circle(.15);
\filldraw (3, 4) circle(.15);
\filldraw (-1, 2.5) circle(.15);
\filldraw (-1, 1.5) circle(.15);
\filldraw (0, 2.5) circle(.15);
\filldraw (0, 1.5) circle(.15);
\filldraw (-0.5, 2.5) circle(.15);
\filldraw (-0.5, 1.5) circle(.15);
\filldraw (0.5, 2.5) circle(.15);
\filldraw (0.5, 1.5) circle(.15);
\filldraw (1, 2.5) circle(.15);
\filldraw (1, 1.5) circle(.15);
\node[font=\fontsize{25}{6}\selectfont] at (4, 2.75) {$\frac {14}7$};

\draw [thick, line width = 1pt] (6, 4) -- (7, 4);
\draw [thick, line width = 1pt] (9, 4) -- (10, 4);
\draw [thin, line width = 0.5pt] (7, 4) -- (8, 4) -- (9,4);
\draw [densely dashed, line width = 1pt] (7, 4) -- (7, 2.5);
\draw [thick, line width = 1pt] (7, 2.5) -- (7, 1.5);
\draw [densely dashed, line width = 1pt] (8, 4) -- (8, 2.5);
\draw [thin, line width = 0.5pt] (8, 2.5) -- (8, 1.5);
\draw [thick, line width = 1pt] (8, 2.5) -- (7.5, 2.5);
\draw [thick, line width = 1pt] (8, 1.5) -- (7.5, 1.5);
\draw [densely dashed, line width = 1pt] (8, 4) -- (8.5, 2.5);
\draw [thin, line width = 0.5pt] (8.5, 2.5) -- (8.5, 1.5);
\draw [thick, line width = 1pt] (8.5, 2.5) -- (9, 2.5);
\draw [thick, line width = 1pt] (8.5, 1.5) -- (9, 1.5);
\filldraw (6, 4) circle(.15);
\filldraw (7, 4) circle(.15);
\filldraw[fill = yellow] (8, 4) circle(.15);
\filldraw (9, 4) circle(.15);
\filldraw (10, 4) circle(.15);
\filldraw (9, 2.5) circle(.15);
\filldraw (9, 1.5) circle(.15);
\filldraw (7, 2.5) circle(.15);
\filldraw (7, 1.5) circle(.15);
\filldraw (8.5, 2.5) circle(.15);
\filldraw (8.5, 1.5) circle(.15);
\filldraw (7.5, 2.5) circle(.15);
\filldraw (7.5, 1.5) circle(.15);
\filldraw (8, 2.5) circle(.15);
\filldraw (8, 1.5) circle(.15);
\node[font=\fontsize{25}{6}\selectfont] at (11,2.75) {$\frac {14}7$};

\draw [thick, line width = 1pt] (13, 4) -- (14, 4);
\draw [thick, line width = 1pt] (16, 4) -- (17, 4);
\draw [thin, line width = 0.5pt] (14, 4) -- (15, 4) -- (16,4);
\draw [densely dashed, line width = 1pt] (14, 4) -- (14, 2.5);
\draw [thick, line width = 1pt] (14, 2.5) -- (14, 1.5);
\draw [densely dashed, line width = 1pt] (15, 4) -- (15, 2.5);
\draw [thin, line width = 0.5pt] (15, 2.5) -- (15, 1.5);
\draw [thick, line width = 1pt] (15, 2.5) -- (14.5, 2.5);
\draw [thick, line width = 1pt] (15, 1.5) -- (14.5, 1.5);
\draw [densely dashed, line width = 1pt] (15, 4) -- (15.5, 2.5);
\draw [thin, line width = 0.5pt] (15.5, 2.5) -- (15.5, 1.5);
\draw [thick, line width = 1pt] (15.5, 2.5) -- (16, 2.5);
\draw [thick, line width = 1pt] (15.5, 1.5) -- (16, 1.5);
\draw [densely dashed, line width = 1pt] (16, 4) -- (16.5, 2.5);
\draw [thick, line width = 1pt] (16.5, 2.5) -- (16.5, 1.5);
\filldraw (13, 4) circle(.15);
\filldraw (14, 4) circle(.15);
\filldraw[fill = yellow] (15, 4) circle(.15);
\filldraw (16, 4) circle(.15);
\filldraw (17, 4) circle(.15);
\filldraw (15.5, 2.5) circle(.15);
\filldraw (15.5, 1.5) circle(.15);
\filldraw (15, 2.5) circle(.15);
\filldraw (15, 1.5) circle(.15);
\filldraw (14.5, 2.5) circle(.15);
\filldraw (14.5, 1.5) circle(.15);
\filldraw (16.5, 2.5) circle(.15);
\filldraw (16.5, 1.5) circle(.15);
\filldraw (16, 2.5) circle(.15);
\filldraw (16, 1.5) circle(.15);
\filldraw (14, 2.5) circle(.15);
\filldraw (14, 1.5) circle(.15);
\node[font=\fontsize{25}{6}\selectfont] at (18, 2.75) {$\frac {16}7$};

\draw [thick, line width = 1pt] (-8, 0) -- (-7, 0);
\draw [thick, line width = 1pt] (-5, 0) -- (-4, 0);
\draw [thin, line width = 0.5pt] (-7, 0) -- (-6, 0) -- (-5,0);
\draw [densely dashed, line width = 1pt] (-7, 0) -- (-7, -1.5);
\draw [thin, line width = 0.5pt] (-7, -1.5) -- (-7, -2.5);
\draw [densely dashed, line width = 1pt] (-6, 0) -- (-6, -1.5);
\draw [thin, line width = 0.5pt] (-6, -1.5) -- (-6, -2.5);
\draw [thick, line width = 1pt] (-6, -1.5) -- (-6.5, -1.5);
\draw [thick, line width = 1pt] (-6, -2.5) -- (-6.5, -2.5);
\draw [densely dashed, line width = 1pt] (-6, 0) -- (-5.5, -1.5);
\draw [thin, line width = 0.5pt] (-5, -1.5) -- (-5, -2.5);
\draw [thick, line width = 1pt] (-5.5, -1.5) -- (-5, -1.5);
\draw [thick, line width = 1pt] (-5.5, -2.5) -- (-5, -2.5);
\draw [densely dashed, line width = 1pt] (-5, 0) -- (-4.5, -1.5);
\draw [thick, line width = 1pt] (-4.5, -1.5) -- (-4.5, -2.5);
\filldraw (-8, 0) circle(.15);
\filldraw (-7, 0) circle(.15);
\filldraw[fill = yellow] (-6, 0) circle(.15);
\filldraw (-5, 0) circle(.15);
\filldraw (-4, 0) circle(.15);
\filldraw (-6.5, -1.5) circle(.15);
\filldraw (-6.5, -2.5) circle(.15);
\filldraw (-7, -1.5) circle(.15);
\filldraw (-7, -2.5) circle(.15);
\filldraw (-5.5, -1.5) circle(.15);
\filldraw (-5.5, -2.5) circle(.15);
\filldraw (-6, -1.5) circle(.15);
\filldraw (-6, -2.5) circle(.15);
\filldraw (-5, -1.5) circle(.15);
\filldraw (-5, -2.5) circle(.15);
\filldraw (-4.5, -1.5) circle(.15);
\filldraw (-4.5, -2.5) circle(.15);
\node[font=\fontsize{25}{6}\selectfont] at (-3, -1.25) {$\frac {16}8$};

\draw [thick, line width = 1pt] (-1, 0) -- (0, 0);
\draw [thick, line width = 1pt] (2, 0) -- (3, 0);
\draw [thin, line width = 0.5pt] (0, 0) -- (1, 0) -- (2,0);
\draw [densely dashed, line width = 1pt] (-1, 0) -- (-1, -1.5);
\draw [thick, line width = 1pt] (-1, -1.5) -- (-1, -2.5);
\draw [thick, line width = 1pt] (-1, -1.5) -- (-1.5, -1.5);
\draw [thick, line width = 1pt] (-1, -2.5) -- (-1.5, -2.5);
\draw [densely dashed, line width = 1pt] (0, 0) -- (0, -1.5);
\draw [thin, line width = 0.5pt] (0, -1.5) -- (0, -2.5);
\draw [thick, line width = 1pt] (0, -1.5) -- (-0.5, -1.5);
\draw [thick, line width = 1pt] (0, -2.5) -- (-0.5, -2.5);
\draw [densely dashed, line width = 1pt] (0, 0) -- (0.5, -1.5);
\draw [thin, line width = 0.5pt] (0.5, -1.5) -- (0.5, -2.5);
\draw [thick, line width = 1pt] (0.5, -1.5) -- (1, -1.5);
\draw [thick, line width = 1pt] (0.5, -2.5) -- (1, -2.5);
\filldraw (-1, 0) circle(.15);
\filldraw[fill = yellow] (0, 0) circle(.15);
\filldraw[fill = white] (1, 0) circle(.15);
\filldraw (2, 0) circle(.15);
\filldraw (3, 0) circle(.15);
\filldraw (-1, -1.5) circle(.15);
\filldraw (-1, -2.5) circle(.15);
\filldraw (-1.5, -1.5) circle(.15);
\filldraw (-1.5, -2.5) circle(.15);
\filldraw (0, -1.5) circle(.15);
\filldraw (0, -2.5) circle(.15);
\filldraw (-0.5, -1.5) circle(.15);
\filldraw (-0.5, -2.5) circle(.15);
\filldraw (0.5, -1.5) circle(.15);
\filldraw (0.5, -2.5) circle(.15);
\filldraw (1, -1.5) circle(.15);
\filldraw (1, -2.5) circle(.15);
\node[font=\fontsize{25}{6}\selectfont] at (4,-1.25) {$\frac {16}8$};

\draw [thick, line width = 1pt] (6, 0) -- (7, 0);
\draw [thick, line width = 1pt] (9, 0) -- (10, 0);
\draw [thin, line width = 0.5pt] (7, 0) -- (8, 0) -- (9, 0);
\draw [densely dashed, line width = 1pt] (6, 0) -- (6, -1.5);
\draw [thick, line width = 1pt] (6, -1.5) -- (6, -2.5);
\draw [thick, line width = 1pt] (6, -1.5) -- (5.5, -1.5);
\draw [thick, line width = 1pt] (6, -2.5) -- (5.5, -2.5);
\draw [densely dashed, line width = 1pt] (7, 0) -- (7, -1.5);
\draw [thin, line width = 0.5pt] (7, -1.5) -- (7, -2.5);
\draw [thick, line width = 1pt] (7, -1.5) -- (6.5, -1.5);
\draw [thick, line width = 1pt] (7, -2.5) -- (6.5, -2.5);
\draw [densely dashed, line width = 1pt] (7, 0) -- (7.5, -1.5);
\draw [thin, line width = 0.5pt] (7.5, -1.5) -- (7.5, -2.5);
\draw [thick, line width = 1pt] (7.5, -1.5) -- (8, -1.5);
\draw [thick, line width = 1pt] (7.5, -2.5) -- (8, -2.5);
\draw [densely dashed, line width = 1pt] (9, 0) -- (9, -1.5);
\draw [thick, line width = 1pt] (9, -1.5) -- (9, -2.5);
\filldraw (6, 0) circle(.15);
\filldraw[fill = yellow] (7, 0) circle(.15);
\filldraw[fill = white] (8, 0) circle(.15);
\filldraw (9, 0) circle(.15);
\filldraw (10, 0) circle(.15);
\filldraw (6, -1.5) circle(.15);
\filldraw (6, -2.5) circle(.15);
\filldraw (5.5, -1.5) circle(.15);
\filldraw (5.5, -2.5) circle(.15);
\filldraw (7, -1.5) circle(.15);
\filldraw (7, -2.5) circle(.15);
\filldraw (6.5, -1.5) circle(.15);
\filldraw (6.5, -2.5) circle(.15);
\filldraw (7.5, -1.5) circle(.15);
\filldraw (7.5, -2.5) circle(.15);
\filldraw (8, -1.5) circle(.15);
\filldraw (8, -2.5) circle(.15);
\filldraw (9, -1.5) circle(.15);
\filldraw (9, -2.5) circle(.15);
\node[font=\fontsize{25}{6}\selectfont] at (11,-1.25) {$\frac {18}9$};

\draw [thick, line width = 1pt] (13.5, 0) -- (14.5, 0);
\draw [thick, line width = 1pt] (16.5, 0) -- (17.5, 0);
\draw [thin, line width = 0.5pt] (14.5, 0) -- (15.5, 0) -- (16.5, 0);
\draw [densely dashed, line width = 1pt] (13.5, 0) -- (12.5, -1.5);
\draw [thick, line width = 1pt] (12.5, -1.5) -- (12.5, -2.5);
\draw [densely dashed, line width = 1pt] (14.5, 0) -- (13.5, -1.5);
\draw [thin, line width = 0.5pt] (13.5, -1.5) -- (13.5, -2.5);
\draw [thick, line width = 1pt] (13.5, -1.5) -- (13, -1.5);
\draw [thick, line width = 1pt] (13.5, -2.5) -- (13, -2.5);
\draw [densely dashed, line width = 1pt] (14.5, 0) -- (14, -1.5);
\draw [thin, line width = 0.5pt] (14, -1.5) -- (14, -2.5);
\draw [thick, line width = 1pt] (14, -1.5) -- (14.5, -1.5);
\draw [thick, line width = 1pt] (14, -2.5) -- (14.5, -2.5);
\draw [densely dashed, line width = 1pt] (15.5, 0) -- (15.5, -1.5);
\draw [thin, line width = 0.5pt] (15.5, -1.5) -- (15.5, -2.5);
\draw [thick, line width = 1pt] (15.5, -1.5) -- (15, -1.5);
\draw [thick, line width = 1pt] (15.5, -2.5) -- (15, -2.5);
\draw [densely dashed, line width = 1pt] (16.5, 0) -- (16.5, -1.5);
\draw [thin, line width = 0.5pt] (16.5, -1.5) -- (16.5, -2.5);
\draw [thick, line width = 1pt] (16.5, -1.5) -- (16, -1.5);
\draw [thick, line width = 1pt] (16.5, -2.5) -- (16, -2.5);
\draw [densely dashed, line width = 1pt] (16.5, 0) -- (17, -1.5);
\draw [thin, line width = 0.5pt] (17, -1.5) -- (17, -2.5);
\draw [thick, line width = 1pt] (17, -1.5) -- (17.5, -1.5);
\draw [thick, line width = 1pt] (17, -2.5) -- (17.5, -2.5);
\draw [densely dashed, line width = 1pt] (17.5, 0) -- (18, -1.5);
\draw [thin, line width = 0.5pt] (18, -1.5) -- (18, -2.5);
\draw [thick, line width = 1pt] (18, -1.5) -- (18.5, -1.5);
\draw [thick, line width = 1pt] (18, -2.5) -- (18.5, -2.5);
\filldraw (13.5, 0) circle(.15);
\filldraw[fill = yellow] (14.5, 0) circle(.15);
\filldraw[fill = white] (15.5, 0) circle(.15);
\filldraw[fill = yellow] (16.5, 0) circle(.15);
\filldraw (17.5, 0) circle(.15);
\filldraw (13.5, -1.5) circle(.15);
\filldraw (13.5, -2.5) circle(.15);
\filldraw (13, -1.5) circle(.15);
\filldraw (13, -2.5) circle(.15);
\filldraw (14.5, -1.5) circle(.15);
\filldraw (14.5, -2.5) circle(.15);
\filldraw (14, -1.5) circle(.15);
\filldraw (14, -2.5) circle(.15);
\filldraw (15, -1.5) circle(.15);
\filldraw (15, -2.5) circle(.15);
\filldraw (16, -1.5) circle(.15);
\filldraw (16, -2.5) circle(.15);
\filldraw (15.5, -1.5) circle(.15);
\filldraw (15.5, -2.5) circle(.15);
\filldraw (16.5, -1.5) circle(.15);
\filldraw (16.5, -2.5) circle(.15);
\filldraw (17, -1.5) circle(.15);
\filldraw (17, -2.5) circle(.15);
\filldraw (18, -1.5) circle(.15);
\filldraw (18, -2.5) circle(.15);
\filldraw (18.5, -1.5) circle(.15);
\filldraw (18.5, -2.5) circle(.15);
\filldraw (17.5, -1.5) circle(.15);
\filldraw (17.5, -2.5) circle(.15);
\filldraw (12.5, -1.5) circle(.15);
\filldraw (12.5, -2.5) circle(.15);
\node[font=\fontsize{25}{6}\selectfont] at (19.3,-1.25) {$\frac {30}{16}$};

\draw [thick, line width = 1pt] (-7, -4) -- (-6, -4);
\draw [thick, line width = 1pt] (-4, -4) -- (-3, -4);
\draw [thin, line width = 0.5pt] (-6, -4) -- (-5, -4) -- (-4,-4);
\draw [densely dashed, line width = 1pt] (-7, -4) -- (-7.5, -5.5);
\draw [thick, line width = 1pt] (-7.5, -5.5) -- (-7.5, -6.5);
\draw [thick, line width = 1pt] (-7.5, -5.5) -- (-8, -5.5);
\draw [thick, line width = 1pt] (-7.5, -6.5) -- (-8, -6.5);
\draw [densely dashed, line width = 1pt] (-6, -4) -- (-6.5, -5.5);
\draw [thin, line width = 0.5pt] (-6.5, -5.5) -- (-6.5, -6.5);
\draw [thick, line width = 1pt] (-6.5, -5.5) -- (-7, -5.5);
\draw [thick, line width = 1pt] (-6.5, -6.5) -- (-7, -6.5);
\draw [densely dashed, line width = 1pt] (-6, -4) -- (-5.5, -5.5);
\draw [thin, line width = 0.5pt] (-5.5, -5.5) -- (-5.5, -6.5);
\draw [thick, line width = 1pt] (-5.5, -5.5) -- (-6, -5.5);
\draw [thick, line width = 1pt] (-5.5, -6.5) -- (-6, -6.5);
\draw [densely dashed, line width = 1pt] (-5, -4) -- (-5, -5.5);
\draw [thick, line width = 1pt] (-5, -5.5) -- (-5, -6.5);
\draw [densely dashed, line width = 1pt] (-4, -4) -- (-4, -5.5);
\draw [thin, line width = 0.5pt] (-4, -5.5) -- (-4, -6.5);
\draw [thick, line width = 1pt] (-4, -5.5) -- (-4.5, -5.5);
\draw [thick, line width = 1pt] (-4, -6.5) -- (-4.5, -6.5);
\draw [densely dashed, line width = 1pt] (-4, -4) -- (-3.5, -5.5);
\draw [thin, line width = 0.5pt] (-3.5, -5.5) -- (-3.5, -6.5);
\draw [thick, line width = 1pt] (-3.5, -5.5) -- (-3, -5.5);
\draw [thick, line width = 1pt] (-3.5, -6.5) -- (-3, -6.5);
\draw [densely dashed, line width = 1pt] (-3, -4) -- (-2.5, -5.5);
\draw [thin, line width = 0.5pt] (-2.5, -5.5) -- (-2.5, -6.5);
\draw [thick, line width = 1pt] (-2.5, -5.5) -- (-2, -5.5);
\draw [thick, line width = 1pt] (-2.5, -6.5) -- (-2, -6.5);
\filldraw (-7, -4) circle(.15);
\filldraw[fill = yellow] (-6, -4) circle(.15);
\filldraw[fill = white] (-5, -4) circle(.15);
\filldraw[fill = yellow] (-4, -4) circle(.15);
\filldraw (-3, -4) circle(.15);
\filldraw (-7, -5.5) circle(.15);
\filldraw (-7, -6.5) circle(.15);
\filldraw (-7.5, -5.5) circle(.15);
\filldraw (-7.5, -6.5) circle(.15);
\filldraw (-6, -5.5) circle(.15);
\filldraw (-6, -6.5) circle(.15);
\filldraw (-6.5, -5.5) circle(.15);
\filldraw (-6.5, -6.5) circle(.15);
\filldraw (-5.5, -5.5) circle(.15);
\filldraw (-5.5, -6.5) circle(.15);
\filldraw (-4.5, -5.5) circle(.15);
\filldraw (-4.5, -6.5) circle(.15);
\filldraw (-5, -5.5) circle(.15);
\filldraw (-5, -6.5) circle(.15);
\filldraw (-4, -5.5) circle(.15);
\filldraw (-4, -6.5) circle(.15);
\filldraw (-3.5, -5.5) circle(.15);
\filldraw (-3.5, -6.5) circle(.15);
\filldraw (-2.5, -5.5) circle(.15);
\filldraw (-2.5, -6.5) circle(.15);
\filldraw (-2, -5.5) circle(.15);
\filldraw (-2, -6.5) circle(.15);
\filldraw (-3, -5.5) circle(.15);
\filldraw (-3, -6.5) circle(.15);
\filldraw (-8, -5.5) circle(.15);
\filldraw (-8, -6.5) circle(.15);
\node[font=\fontsize{25}{6}\selectfont] at (-1, -5.25) {$\frac {30}{16}$};

\draw [thick, line width = 1pt] (1.5, -4) -- (2.5, -4);
\draw [thick, line width = 1pt] (4.5, -4) -- (5.5, -4);
\draw [thin, line width = 0.5pt] (2.5, -4) -- (3.5, -4) -- (4.5, -4);
\draw [densely dashed, line width = 1pt] (1.5, -4) -- (0.5, -5.5);
\draw [thin, line width = 0.5pt] (0.5, -5.5) -- (0.5, -6.5);
\draw [thick, line width = 1pt] (0.5, -5.5) -- (0, -5.5);
\draw [thick, line width = 1pt] (0.5, -6.5) -- (0, -6.5);
\draw [densely dashed, line width = 1pt] (2.5, -4) -- (1.5, -5.5);
\draw [thin, line width = 0.5pt] (1.5, -5.5) -- (1.5, -6.5);
\draw [thick, line width = 1pt] (1.5, -5.5) -- (1, -5.5);
\draw [thick, line width = 1pt] (1.5, -6.5) -- (1, -6.5);
\draw [densely dashed, line width = 1pt] (2.5, -4) -- (2, -5.5);
\draw [thin, line width = 0.5pt] (2, -5.5) -- (2, -6.5);
\draw [thick, line width = 1pt] (2, -5.5) -- (2.5, -5.5);
\draw [thick, line width = 1pt] (2, -6.5) -- (2.5, -6.5);
\draw [densely dashed, line width = 1pt] (3.5, -4) -- (3.5, -5.5);
\draw [thin, line width = 0.5pt] (3.5, -5.5) -- (3.5, -6.5);
\draw [thick, line width = 1pt] (3.5, -5.5) -- (3, -5.5);
\draw [thick, line width = 1pt] (3.5, -6.5) -- (3, -6.5);
\draw [densely dashed, line width = 1pt] (4.5, -4) -- (4.5, -5.5);
\draw [thin, line width = 0.5pt] (4.5, -5.5) -- (4.5, -6.5);
\draw [thick, line width = 1pt] (4.5, -5.5) -- (4, -5.5);
\draw [thick, line width = 1pt] (4.5, -6.5) -- (4, -6.5);
\draw [densely dashed, line width = 1pt] (4.5, -4) -- (5, -5.5);
\draw [thin, line width = 0.5pt] (5, -5.5) -- (5, -6.5);
\draw [thick, line width = 1pt] (5, -5.5) -- (5.5, -5.5);
\draw [thick, line width = 1pt] (5, -6.5) -- (5.5, -6.5);
\draw [densely dashed, line width = 1pt] (5.5, -4) -- (6, -5.5);
\draw [thin, line width = 0.5pt] (6, -5.5) -- (6, -6.5);
\draw [thick, line width = 1pt] (6, -5.5) -- (6.5, -5.5);
\draw [thick, line width = 1pt] (6, -6.5) -- (6.5, -6.5);
\filldraw (1.5, -4) circle(.15);
\filldraw[fill = yellow] (2.5, -4) circle(.15);
\filldraw[fill = white] (3.5, -4) circle(.15);
\filldraw[fill = yellow] (4.5, -4) circle(.15);
\filldraw (5.5, -4) circle(.15);
\filldraw (0.5, -5.5) circle(.15);
\filldraw (0.5, -6.5) circle(.15);
\filldraw (1.5, -5.5) circle(.15);
\filldraw (1.5, -6.5) circle(.15);
\filldraw (2.5, -5.5) circle(.15);
\filldraw (2.5, -6.5) circle(.15);
\filldraw (2, -5.5) circle(.15);
\filldraw (2, -6.5) circle(.15);
\filldraw (3, -5.5) circle(.15);
\filldraw (3, -6.5) circle(.15);
\filldraw (4, -5.5) circle(.15);
\filldraw (4, -6.5) circle(.15);
\filldraw (3.5, -5.5) circle(.15);
\filldraw (3.5, -6.5) circle(.15);
\filldraw (4.5, -5.5) circle(.15);
\filldraw (4.5, -6.5) circle(.15);
\filldraw (5, -5.5) circle(.15);
\filldraw (5, -6.5) circle(.15);
\filldraw (6, -5.5) circle(.15);
\filldraw (6, -6.5) circle(.15);
\filldraw (5.5, -5.5) circle(.15);
\filldraw (5.5, -6.5) circle(.15);
\filldraw (6.5, -5.5) circle(.15);
\filldraw (6.5, -6.5) circle(.15);
\filldraw (1, -5.5) circle(.15);
\filldraw (1, -6.5) circle(.15);
\filldraw (0, -5.5) circle(.15);
\filldraw (0, -6.5) circle(.15);
\node[font=\fontsize{25}{6}\selectfont] at (7.5,-5.25) {$\frac {32}{17}$};
\end{tikzpicture}
\end{minipage}}
\end{center}
\captionsetup{width=1.0\linewidth}
\caption{All possible structures of a critical component of $H + C$. 
	In these structures, the yellow vertices are $2$-anchors and the filled (white, respectively) vertices are in (not in, respectively) $V(M)$.\label{fig03}}
\end{figure}

From Lemmas~\ref{lemma16}--\ref{lemma22} that estimate the quantity $\frac {s(K)}{\eta(K)}$, a critical component $K$ has exactly one $2$-anchor or two $2$-anchors.
Moreover, those $2$-anchors are all in $T_2$. 
This motivates the next definition and Fact~\ref{fact01} follows immediately.

\begin{definition}
\label{def10}
An anchor is {\em critical} if it is a $2$-anchor in a critical component.
A satellite-element is {\em critical} if its rescue-anchor is a critical anchor.
\end{definition}

\begin{fact}
\label{fact01}
A critical anchor is in $T_2$ and a critical satellite-element is a bi-star component of $H$.
\end{fact}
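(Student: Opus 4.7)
The plan is to derive both assertions directly from the case analysis already carried out in Lemmas~\ref{lemma16}--\ref{lemma22}. Together those lemmas exhaustively enumerate the possible structures of a composite component $K$ of $H+C$, stratified by (i) what $K_c$ is (an edge, star, bi-star, or $5$-path) and (ii) how many $2$-anchors sit in $K_c$. My approach is simply to walk through that enumeration once, keep only the cases that can yield a critical component, and read off the conclusion.

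First I would rule out the cases that cannot produce any critical component. Lemmas~\ref{lemma16} and~\ref{lemma17} show that if $K_c$ has no $2$-anchor at all then $K$ is not critical, regardless of whether $K_c$ is a $5$-path. Lemma~\ref{lemma18} shows the same when $K_c$ has three or more $2$-anchors. So a critical $K$ must have $K_c$ carrying either exactly one or exactly two $2$-anchors.

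Next I would walk through the surviving cases. Lemma~\ref{lemma19} eliminates the "exactly two $2$-anchors" case whenever $K_c$ is an edge, star, or bi-star, and Lemma~\ref{lemma20} handles the remaining subcase in which $K_c$ is a $5$-path, where criticality forces $\{v_2,v_4\}\subseteq T_2$. Lemma~\ref{lemma21} addresses the "exactly one $2$-anchor" case: either $K$ is not critical, or $K_c$ is a bi-star and the unique $2$-anchor (being $v_2$ or $v_3$) belongs to $T_2$. Lemma~\ref{lemma22} settles the last case where $K_c$ is a $5$-path and $K_c$ has exactly one $2$-anchor, forcing that anchor (being $v_2$, $v_3$, or $v_4$) to lie in $T_2$ as well. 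Collecting all these outcomes, every $2$-anchor of a critical component lies in $T_2$, which establishes the first claim.

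The second claim is then a one-line consequence of Definition~\ref{def09}: a vertex in $T_2$ is by definition a $2$-anchor that anchors exactly two bi-stars, so both satellite-elements rescued at a critical anchor must be bi-star components of $H$. Combined with Definition~\ref{def10}, any critical satellite-element is therefore a bi-star. The main (and only) "obstacle" is purely bookkeeping: one must be careful in each of Lemmas~\ref{lemma16}--\ref{lemma22} to verify that the remaining critical subcase indeed pins the $2$-anchor(s) into $T_2$ rather than $T_0$ or $T_1$, but this is exactly what the statements of those lemmas already spell out, so no new argument is needed beyond assembling the conclusions.
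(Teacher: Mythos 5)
Your proposal is correct and takes essentially the same route as the paper: the paper states that Fact~\ref{fact01} ``follows immediately'' from Lemmas~\ref{lemma16}--\ref{lemma22} together with Definitions~\ref{def09} and~\ref{def10}, and you have simply unpacked that implication by enumerating the cases and reading off that every surviving critical case pins the $2$-anchors into $T_2$, whence the anchored satellite-elements are bi-stars.
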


For a critical component $K$, $\frac {s(K)}{\eta(K)}$ is relatively large and this challenges the performance ratio of an approximation algorithm.
We design three operations for decreasing the number of critical components in the next subsection.
Before that, we give the definition of a basic {\em moving} operation.

\begin{definition}
\label{def11}
Suppose that $v \in V(H)$ and $S$ is a satellite-element in $H+C$ such that $S$ has a vertex $w$ with $\{v,w\}\in E(G)$. 
Then, {\em moving $S$ to $v$ in $H+C$} is the operation of modifying $C$ by replacing the rescue-edge of $S$ with the edge $\{v,w\}$. 
\end{definition}

We prove an important fact below.

\begin{fact}
\label{fact02}
For each critical component $K$ of $H+C$ and each critical satellite-element $S$ of $K$, the following hold:
\begin{enumerate}
\parskip=0pt
\item $K$ has exactly one or two critical anchors and $s(K) \in \{ 14, 16, 18, 30, 32 \}$.
\item After modifying $H+C$ by moving $S$ to a vertex not in $K$, $K$ is no longer critical and remains not isolated in $H+C$.
\item Moving $S$ to a $0$- or $1$-anchor of $K$ makes $K$ no longer critical.
\end{enumerate}
\end{fact}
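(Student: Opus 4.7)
I intend to prove the three statements in order, leveraging the structural dichotomies established in Lemmas~\ref{lemma16}--\ref{lemma22}.

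Statement~1 is immediate by compiling those lemmas. Lemmas~\ref{lemma16}--\ref{lemma19} rule out critical components that have zero, three, four, or five $2$-anchors, or that have exactly two $2$-anchors but whose $K_c$ is not a $5$-path. Lemmas~\ref{lemma20}--\ref{lemma22} then enumerate every remaining critical case and list $s(K)\in\{14,16,18,30,32\}$, while Fact~\ref{fact01} already pins every $2$-anchor of a critical component in $T_2$.

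For Statement~2, let $v_j$ be the critical anchor of $S$, which lies in $T_2$ by Fact~\ref{fact01}. Moving $S$ out of $K$ shrinks $V(K)$ by $V(S)$; since $S$ remains a bi-star rescued by the new $C$-edge, its two matching edges stay in $M_C$, so the removed contribution to $V(K)\cap V(M_C)$ is exactly four, and thus $s(K)$ drops by $4$. Simultaneously $v_j$ loses one of its two bi-star satellites and becomes a $1$-anchor in $O_1$. If $K$ had one critical anchor, the modified $K$ has no $2$-anchor, and Lemmas~\ref{lemma16}--\ref{lemma17} yield $\frac{s(K)}{\eta(K)}<\frac{15}{8}$. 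If $K$ had two critical anchors, the modified $K$ has exactly one $2$-anchor remaining in $T_2$ with $s(K)\in\{26,28\}$, but Lemmas~\ref{lemma21}--\ref{lemma22} require $s(K)\in\{14,16,18\}$ for criticality in this regime, so $K$ ceases to be critical. Non-isolation is automatic: the other bi-star originally anchored by $v_j$ is still present, so $K$ retains at least one rescue edge in $E(K)$.

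For Statement~3, moving $S$ inside $K$ leaves $V(K)$ unchanged, so $s(K)\in\{14,16,18,30,32\}$ is preserved. The critical anchor $v_j$ descends to $O_1$, while the target $u$ absorbs the bi-star $S$, changing its type as follows: a $0$-anchor becomes $O_1$; an $O_0$ anchor becomes $T_1$; and an $O_1$ anchor becomes $T_2$. If $u$ was a $0$-anchor, the modified $K$ has one fewer $2$-anchor, so it is non-critical by Lemmas~\ref{lemma16}--\ref{lemma17} (if none remain) or by Lemmas~\ref{lemma21}--\ref{lemma22} (if one $T_2$ anchor remains with $s(K)\in\{30,32\}$, outside the required range). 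If $u$ was in $O_0$, the new $2$-anchor at $u$ lies in $T_1$, so the modified $K$ has at least one $2$-anchor outside $T_2$, violating the $T_2$-requirement in every critical characterization of Lemmas~\ref{lemma20}--\ref{lemma22}. If $u$ was in $O_1$, a new $T_2$ anchor appears at $u\ne v_j$; consulting Lemmas~\ref{lemma20}--\ref{lemma22} and Figure~\ref{fig03}, critical configurations force the $T_2$ positions into specific slots (e.g.\ $v_2$ or $v_3$ for bi-star $K_c$ with one $T_2$; the middle three vertices for a $5$-path with one $T_2$; exactly $\{v_2,v_4\}$ for a $5$-path with two $T_2$'s), and the shifted $T_2$ positions produced by the move never fit these slots.

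The main obstacle is the third sub-case of Statement~3, where a new $T_2$ anchor is created: one must verify, for each critical structure in Figure~\ref{fig03}, that every admissible shift of a $T_2$ anchor relocates it off a pinned critical position. This verification is finite and mechanical once the structures are classified by Lemmas~\ref{lemma20}--\ref{lemma22}.
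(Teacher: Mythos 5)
Your proof is correct and follows essentially the same strategy as the paper: compile Lemmas~16--22 for Statement~1, track the anchor-type transitions ($T_2 \to O_1$, $0$-anchor$\to O_1$, $O_0 \to T_1$, $O_1 \to T_2$) for Statements~2 and~3, and fall back on a finite check against Figure~3 for the most delicate sub-case where moving $S$ onto an $O_1$ anchor creates a new $T_2$ anchor.

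One substantive point in your favor: the paper's own proof of Statement~2 only argues non-criticality (via the drop $s(K) \to s(K)-4$ out of the admissible set $\{14,16,18,30,32\}$ and/or the loss of all $2$-anchors) and is silent on the ``remains not isolated'' clause. You close this gap cleanly by observing that the critical anchor $v_j$ is in $T_2$ (Fact~1), so it anchors a second bi-star that stays behind after $S$ is moved away; hence $K$ keeps at least one rescue-edge. (The paper only supplies that style of argument later, inside the proof of Lemma~25 for Operation~3.) Your explicit justification that $s(K)$ drops by exactly~$4$ --- both $M$-edges of the bi-star $S$ remain in $M_C$, since moving $S$ keeps it rescued --- is also a correct detail the paper assumes implicitly.

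Your treatment of the hard sub-case (moving $S$ to an $O_1$ anchor) is at the same level of rigor as the paper's ``using Figure~\ref{fig03}, one can verify,'' though you give a more informative description of what to check: in every critical configuration the $T_2$ positions are pinned (to $\{v_2,v_3\}$ in a bi-star, to $\{v_2,v_3,v_4\}$ in a $5$-path with one $2$-anchor, to exactly $\{v_2,v_4\}$ with two $2$-anchors), and the only available $O_1$ targets lie at positions incompatible with those pins (endpoints $v_1,v_4,v_5$, or $v_3$ in a case where $\{v_2,v_4\}\subseteq T_2$ would then fail). If you wanted to fully discharge this, you would still have to walk the (small) list of structures, but the paper does not do more than that either.
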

\begin{proof}
Statement~1 follows from Lemmas~\ref{lemma16}--\ref{lemma22} immediately.

{\em Statement~2.}
By Statement~1, each critical component of $H+C$ has exactly one or two $2$-anchors.
So, if $K$ has exactly one $2$-anchor, then after $S$ is moved outside $K$, $K$ has no $2$-anchor and hence cannot be critical any more. 
Thus, we may assume that $K$ has exactly two $2$-anchors. 
Then, by Lemmas~\ref{lemma19} and~\ref{lemma20}, $s(K) \in \{ 30, 32 \}$. 
Now, since $S$ is a bi-star component of $H$, we know that after moving $S$ outside $K$, $s(K) \in \{ 26, 28 \}$ and hence $K$ is not critical by Statement~1.

{\em Statement~3.}
First suppose that $v$ is a $0$-anchor of $K$. 
If $K$ has exactly one $2$-anchor, then after moving $S$ to $v$, $K$ has no $2$-anchor and hence is not critical. 
So, we may assume that $K$ has exactly two $2$-anchors. 
Then, by Lemmas~\ref{lemma19} and~\ref{lemma20}, $K_c$ is a $5$-path and both $v_2$ and $v_4$ are in $T_2$.
After moving $S$ to $v$, $v_2$ or $v_4$ is not in $T_2$ but $s(K)$ remains unchanged, implying that $K$ is not critical.

Next suppose that $v$ is a $1$-anchor of $K$.
If $v \in O_0$, then after moving $S$ to $v$, $v \in T_1$ and hence $K$ is not critical. 
So, we may assume that $v \in O_1$. 
Then, using Figure~\ref{fig03}, one can verify that $K$ is not critical after moving $S$ to $v$.
\end{proof}

By Statement~2 of Fact~\ref{fact02}, $K$ is no longer critical after moving one critical satellite-element $S$ of $K$ to a vertex $v$ appearing in a component $K' \ne K$. 
Moreover, if $v \in O_0$, then after the move of $S$, $v \in T_1$ and hence $K'$ does not become critical. 
However, $K'$ may become critical when $v$ is a $0$-anchor or $v \in O_1$. 
This motivates the next definition.

\begin{definition}
\label{def12}
Let $K$ be a composite component of $H+C$. 
If $K$ has a $1$-anchor $v \in O_1$ such that $G$ has an edge between $v$ and some critical satellite-element $S$ of $H+C$ and moving $S$ to $v$ in $H+C$ makes $K$ critical in $H+C$, 
then we call $K$ a {\em responsible component} of $H+C$ and call $v$ a {\em responsible $1$-anchor} of $H+C$.
\end{definition}

By Statement~2 of Fact~\ref{fact02}, if $K$ is critical, then $K$ is no longer critical after moving a satellite-element $S$ not in $K$ to a 1-anchor of $K$. 
This together with Statement~3 of Fact~\ref{fact02} implies that no component $K$ of $H+C$ can be both critical and responsible.
Moreover, the possible structures of responsible components can be obtained by removing exactly one critical satellite-element of a critical component of $H+C$.
All possible structures of a responsible component are listed in Figure~\ref{fig04}.

\begin{figure}[thb]
\begin{center}
\framebox{
\begin{minipage}{0.97\textwidth}
\begin{tikzpicture}[scale=0.58,transform shape]
\draw [thick, line width = 1pt] (-8, 8) -- (-7, 8);
\draw [thick, line width = 1pt] (-6, 8) -- (-5, 8);
\draw [thin, line width = 0.5pt] (-7, 8) -- (-6, 8);
\draw [densely dashed, line width = 1pt] (-8, 8) -- (-8, 6.5);
\draw [thick, line width = 1pt] (-8, 6.5) -- (-8, 5.5);
\draw [densely dashed, line width = 1pt] (-7, 8) -- (-7, 6.5);
\draw [thin, line width = 0.5pt] (-7, 6.5) -- (-7, 5.5);
\draw [thick, line width = 1pt] (-7, 6.5) -- (-7.5, 6.5);
\draw [thick, line width = 1pt] (-7, 5.5) -- (-7.5, 5.5);
\filldraw (-8, 8) circle(.15);
\filldraw[fill = red] (-7, 8) circle(.15);
\filldraw (-6, 8) circle(.15);
\filldraw (-5, 8) circle(.15);
\filldraw (-7, 6.5) circle(.15);
\filldraw (-7, 5.5) circle(.15);
\filldraw (-7.5, 6.5) circle(.15);
\filldraw (-7.5, 5.5) circle(.15);
\filldraw (-8, 6.5) circle(.15);
\filldraw (-8, 5.5) circle(.15);
\node[font=\fontsize{25}{6}\selectfont] at (-4, 6.5) {$\frac {10}7$};

\draw [thick, line width = 1pt] (-1, 8) -- (0, 8);
\draw [thick, line width = 1pt] (1, 8) -- (2, 8);
\draw [thin, line width = 0.5pt] (0, 8) -- (1, 8);
\draw [densely dashed, line width = 1pt] (1, 8) -- (1, 6.5);
\draw [thick, line width = 1pt] (1, 6.5) -- (1, 5.5);
\draw [densely dashed, line width = 1pt] (0, 8) -- (0, 6.5);
\draw [thin, line width = 0.5pt] (0, 6.5) -- (0, 5.5);
\draw [thick, line width = 1pt] (0, 6.5) -- (0.5, 6.5);
\draw [thick, line width = 1pt] (0, 5.5) -- (0.5, 5.5);
\filldraw (-1, 8) circle(.15);
\filldraw[fill = red] (0, 8) circle(.15);
\filldraw (1, 8) circle(.15);
\filldraw (2, 8) circle(.15);
\filldraw (1, 6.5) circle(.15);
\filldraw (1, 5.5) circle(.15);
\filldraw (0, 6.5) circle(.15);
\filldraw (0, 5.5) circle(.15);
\filldraw (0.5, 6.5) circle(.15);
\filldraw (0.5, 5.5) circle(.15);
\node[font=\fontsize{25}{6}\selectfont] at (3,6.5) {$\frac {10}7$};

\draw [thick, line width = 1pt] (6, 8) -- (7, 8);
\draw [thick, line width = 1pt] (8, 8) -- (9, 8);
\draw [thin, line width = 0.5pt] (7, 8) -- (8, 8);
\draw [densely dashed, line width = 1pt] (7, 8) -- (7, 6.5);
\draw [thin, line width = 0.5pt] (7, 6.5) -- (7, 5.5);
\draw [thick, line width = 1pt] (7, 6.5) -- (6.5, 6.5);
\draw [thick, line width = 1pt] (7, 5.5) -- (6.5, 5.5);
\draw [densely dashed, line width = 1pt] (6, 8) -- (6, 6.5);
\draw [thin, line width = 0.5pt] (6, 6.5) -- (6, 5.5);
\draw [thick, line width = 1pt] (6, 6.5) -- (5.5, 6.5);
\draw [thick, line width = 1pt] (6, 5.5) -- (5.5, 5.5);
\filldraw (6, 8) circle(.15);
\filldraw[fill = red] (7, 8) circle(.15);
\filldraw (8, 8) circle(.15);
\filldraw (9, 8) circle(.15);
\filldraw (7, 6.5) circle(.15);
\filldraw (7, 5.5) circle(.15);
\filldraw (6, 6.5) circle(.15);
\filldraw (6, 5.5) circle(.15);
\filldraw (5.5, 6.5) circle(.15);
\filldraw (5.5, 5.5) circle(.15);
\filldraw (6.5, 6.5) circle(.15);
\filldraw (6.5, 5.5) circle(.15);
\node[font=\fontsize{25}{6}\selectfont] at (10,6.5) {$\frac {12}8$};

\draw [thick, line width = 1pt] (13, 8) -- (14, 8);
\draw [thick, line width = 1pt] (15, 8) -- (16, 8);
\draw [thin, line width = 0.5pt] (14, 8) -- (15, 8);
\draw [densely dashed, line width = 1pt] (13, 8) -- (13, 6.5);
\draw [thin, line width = 0.5pt] (13, 6.5) -- (13, 5.5);
\draw [thick, line width = 1pt] (13, 6.5) -- (12.5, 6.5);
\draw [thick, line width = 1pt] (13, 5.5) -- (12.5, 5.5);
\draw [densely dashed, line width = 1pt] (14, 8) -- (14, 6.5);
\draw [thin, line width = 0.5pt] (14, 6.5) -- (14, 5.5);
\draw [thick, line width = 1pt] (14, 6.5) -- (13.5, 6.5);
\draw [thick, line width = 1pt] (14, 5.5) -- (13.5, 5.5);
\draw [densely dashed, line width = 1pt] (15, 8) -- (15, 6.5);
\draw [thick, line width = 1pt] (15, 6.5) -- (15, 5.5);
\filldraw (13, 8) circle(.15);
\filldraw[fill = red] (14, 8) circle(.15);
\filldraw (15, 8) circle(.15);
\filldraw (16, 8) circle(.15);
\filldraw (14, 6.5) circle(.15);
\filldraw (14, 5.5) circle(.15);
\filldraw (13.5, 6.5) circle(.15);
\filldraw (13.5, 5.5) circle(.15);
\filldraw (15, 6.5) circle(.15);
\filldraw (15, 5.5) circle(.15);
\filldraw (13, 6.5) circle(.15);
\filldraw (13, 5.5) circle(.15);
\filldraw (12.5, 6.5) circle(.15);
\filldraw (12.5, 5.5) circle(.15);
\node[font=\fontsize{25}{6}\selectfont] at (17,6.5) {$\frac {14}8$};

\draw [thick, line width = 1pt] (-7.5, 4) -- (-6.5, 4);
\draw [thick, line width = 1pt] (-5.5, 4) -- (-4.5, 4);
\draw [thin, line width = 0.5pt] (-6.5, 4) -- (-5.5, 4);
\draw [densely dashed, line width = 1pt] (-7.5, 4) -- (-7.5, 2.5);
\draw [thin, line width = 0.5pt] (-7.5, 2.5) -- (-7.5, 1.5);
\draw [thick, line width = 1pt] (-7.5, 2.5) -- (-8, 2.5);
\draw [thick, line width = 1pt] (-7.5, 1.5) -- (-8, 1.5);
\draw [densely dashed, line width = 1pt] (-6.5, 4) -- (-6.5, 2.5);
\draw [thin, line width = 0.5pt] (-6.5, 2.5) -- (-6.5, 1.5);
\draw [thick, line width = 1pt] (-6.5, 2.5) -- (-7, 2.5);
\draw [thick, line width = 1pt] (-6.5, 1.5) -- (-7, 1.5);
\draw [densely dashed, line width = 1pt] (-4.5, 4) -- (-4.5, 2.5);
\draw [thick, line width = 1pt] (-4.5, 2.5) -- (-4.5, 1.5);
\filldraw (-7.5, 4) circle(.15);
\filldraw[fill = red] (-6.5, 4) circle(.15);
\filldraw (-5.5, 4) circle(.15);
\filldraw (-4.5, 4) circle(.15);
\filldraw (-6.5, 2.5) circle(.15);
\filldraw (-6.5, 1.5) circle(.15);
\filldraw (-7, 2.5) circle(.15);
\filldraw (-7, 1.5) circle(.15);
\filldraw (-7.5, 2.5) circle(.15);
\filldraw (-7.5, 1.5) circle(.15);
\filldraw (-8, 2.5) circle(.15);
\filldraw (-8, 1.5) circle(.15);
\filldraw (-4.5, 2.5) circle(.15);
\filldraw (-4.5, 1.5) circle(.15);
\node[font=\fontsize{25}{6}\selectfont] at (-3.5,2.75) {$\frac {14}9$};

\draw [thick, line width = 1pt] (-1, 4) -- (0, 4);
\draw [thick, line width = 1pt] (2, 4) -- (3, 4);
\draw [thin, line width = 0.5pt] (0, 4) -- (1, 4) -- (2,4);
\draw [densely dashed, line width = 1pt] (-1, 4) -- (-1, 2.5);
\draw [thick, line width = 1pt] (-1, 2.5) -- (-1, 1.5);
\draw [densely dashed, line width = 1pt] (0, 4) -- (0, 2.5);
\draw [thin, line width = 0.5pt] (0, 2.5) -- (0, 1.5);
\draw [thick, line width = 1pt] (0, 2.5) -- (-0.5, 2.5);
\draw [thick, line width = 1pt] (0, 1.5) -- (-0.5, 1.5);
\filldraw (-1, 4) circle(.15);
\filldraw[fill = red] (0, 4) circle(.15);
\filldraw[fill = white] (1, 4) circle(.15);
\filldraw (2, 4) circle(.15);
\filldraw (3, 4) circle(.15);
\filldraw (-1, 2.5) circle(.15);
\filldraw (-1, 1.5) circle(.15);
\filldraw (0, 2.5) circle(.15);
\filldraw (0, 1.5) circle(.15);
\filldraw (-0.5, 2.5) circle(.15);
\filldraw (-0.5, 1.5) circle(.15);
\node[font=\fontsize{25}{6}\selectfont] at (4, 2.75) {$\frac {10}7$};

\draw [thick, line width = 1pt] (6, 4) -- (7, 4);
\draw [thick, line width = 1pt] (9, 4) -- (10, 4);
\draw [thin, line width = 0.5pt] (7, 4) -- (8, 4) -- (9,4);
\draw [densely dashed, line width = 1pt] (7, 4) -- (7, 2.5);
\draw [thick, line width = 1pt] (7, 2.5) -- (7, 1.5);
\draw [densely dashed, line width = 1pt] (8, 4) -- (8, 2.5);
\draw [thin, line width = 0.5pt] (8, 2.5) -- (8, 1.5);
\draw [thick, line width = 1pt] (8, 2.5) -- (7.5, 2.5);
\draw [thick, line width = 1pt] (8, 1.5) -- (7.5, 1.5);
\filldraw (6, 4) circle(.15);
\filldraw (7, 4) circle(.15);
\filldraw[fill = red] (8, 4) circle(.15);
\filldraw (9, 4) circle(.15);
\filldraw (10, 4) circle(.15);
\filldraw (7, 2.5) circle(.15);
\filldraw (7, 1.5) circle(.15);
\filldraw (7.5, 2.5) circle(.15);
\filldraw (7.5, 1.5) circle(.15);
\filldraw (8, 2.5) circle(.15);
\filldraw (8, 1.5) circle(.15);
\node[font=\fontsize{25}{6}\selectfont] at (11,2.75) {$\frac {10}7$};

\draw [thick, line width = 1pt] (13, 4) -- (14, 4);
\draw [thick, line width = 1pt] (16, 4) -- (17, 4);
\draw [thin, line width = 0.5pt] (14, 4) -- (15, 4) -- (16,4);
\draw [densely dashed, line width = 1pt] (14, 4) -- (14, 2.5);
\draw [thick, line width = 1pt] (14, 2.5) -- (14, 1.5);
\draw [densely dashed, line width = 1pt] (15, 4) -- (15, 2.5);
\draw [thin, line width = 0.5pt] (15, 2.5) -- (15, 1.5);
\draw [thick, line width = 1pt] (15, 2.5) -- (14.5, 2.5);
\draw [thick, line width = 1pt] (15, 1.5) -- (14.5, 1.5);
\draw [densely dashed, line width = 1pt] (16, 4) -- (16, 2.5);
\draw [thick, line width = 1pt] (16, 2.5) -- (16, 1.5);
\filldraw (13, 4) circle(.15);
\filldraw (14, 4) circle(.15);
\filldraw[fill = red] (15, 4) circle(.15);
\filldraw (16, 4) circle(.15);
\filldraw (17, 4) circle(.15);
\filldraw (15, 2.5) circle(.15);
\filldraw (15, 1.5) circle(.15);
\filldraw (14.5, 2.5) circle(.15);
\filldraw (14.5, 1.5) circle(.15);
\filldraw (16, 2.5) circle(.15);
\filldraw (16, 1.5) circle(.15);
\filldraw (14, 2.5) circle(.15);
\filldraw (14, 1.5) circle(.15);
\node[font=\fontsize{25}{6}\selectfont] at (18, 2.75) {$\frac {12}7$};

\draw [thick, line width = 1pt] (-8, 0) -- (-7, 0);
\draw [thick, line width = 1pt] (-5, 0) -- (-4, 0);
\draw [thin, line width = 0.5pt] (-7, 0) -- (-6, 0) -- (-5,0);
\draw [densely dashed, line width = 1pt] (-7, 0) -- (-7, -1.5);
\draw [thin, line width = 0.5pt] (-7, -1.5) -- (-7, -2.5);
\draw [densely dashed, line width = 1pt] (-6, 0) -- (-6, -1.5);
\draw [thin, line width = 0.5pt] (-5.5, -1.5) -- (-5.5, -2.5);
\draw [thick, line width = 1pt] (-6, -1.5) -- (-5.5, -1.5);
\draw [thick, line width = 1pt] (-6, -2.5) -- (-5.5, -2.5);
\draw [densely dashed, line width = 1pt] (-5, 0) -- (-5, -1.5);
\draw [thick, line width = 1pt] (-5, -1.5) -- (-5, -2.5);
\filldraw (-8, 0) circle(.15);
\filldraw (-7, 0) circle(.15);
\filldraw[fill = red] (-6, 0) circle(.15);
\filldraw (-5, 0) circle(.15);
\filldraw (-4, 0) circle(.15);
\filldraw (-7, -1.5) circle(.15);
\filldraw (-7, -2.5) circle(.15);
\filldraw (-5.5, -1.5) circle(.15);
\filldraw (-5.5, -2.5) circle(.15);
\filldraw (-6, -1.5) circle(.15);
\filldraw (-6, -2.5) circle(.15);
\filldraw (-5, -1.5) circle(.15);
\filldraw (-5, -2.5) circle(.15);
\node[font=\fontsize{25}{6}\selectfont] at (-3, -1.25) {$\frac {12}8$};

\draw [thick, line width = 1pt] (-1, 0) -- (0, 0);
\draw [thick, line width = 1pt] (2, 0) -- (3, 0);
\draw [thin, line width = 0.5pt] (0, 0) -- (1, 0) -- (2,0);
\draw [densely dashed, line width = 1pt] (-1, 0) -- (-1, -1.5);
\draw [thick, line width = 1pt] (-1, -1.5) -- (-1, -2.5);
\draw [thick, line width = 1pt] (-1, -1.5) -- (-1.5, -1.5);
\draw [thick, line width = 1pt] (-1, -2.5) -- (-1.5, -2.5);
\draw [densely dashed, line width = 1pt] (0, 0) -- (0, -1.5);
\draw [thin, line width = 0.5pt] (0, -1.5) -- (0, -2.5);
\draw [thick, line width = 1pt] (0, -1.5) -- (-0.5, -1.5);
\draw [thick, line width = 1pt] (0, -2.5) -- (-0.5, -2.5);
\filldraw (-1, 0) circle(.15);
\filldraw[fill = red] (0, 0) circle(.15);
\filldraw[fill = white] (1, 0) circle(.15);
\filldraw (2, 0) circle(.15);
\filldraw (3, 0) circle(.15);
\filldraw (-1, -1.5) circle(.15);
\filldraw (-1, -2.5) circle(.15);
\filldraw (-1.5, -1.5) circle(.15);
\filldraw (-1.5, -2.5) circle(.15);
\filldraw (0, -1.5) circle(.15);
\filldraw (0, -2.5) circle(.15);
\filldraw (-0.5, -1.5) circle(.15);
\filldraw (-0.5, -2.5) circle(.15);
\node[font=\fontsize{25}{6}\selectfont] at (4,-1.25) {$\frac {12}8$};

\draw [thick, line width = 1pt] (6, 0) -- (7, 0);
\draw [thick, line width = 1pt] (9, 0) -- (10, 0);
\draw [thin, line width = 0.5pt] (7, 0) -- (8, 0) -- (9, 0);
\draw [densely dashed, line width = 1pt] (6, 0) -- (6, -1.5);
\draw [thick, line width = 1pt] (6, -1.5) -- (6, -2.5);
\draw [thick, line width = 1pt] (6, -1.5) -- (5.5, -1.5);
\draw [thick, line width = 1pt] (6, -2.5) -- (5.5, -2.5);
\draw [densely dashed, line width = 1pt] (7, 0) -- (7, -1.5);
\draw [thin, line width = 0.5pt] (7, -1.5) -- (7, -2.5);
\draw [thick, line width = 1pt] (7, -1.5) -- (6.5, -1.5);
\draw [thick, line width = 1pt] (7, -2.5) -- (6.5, -2.5);
\draw [densely dashed, line width = 1pt] (9, 0) -- (9, -1.5);
\draw [thick, line width = 1pt] (9, -1.5) -- (9, -2.5);
\filldraw (6, 0) circle(.15);
\filldraw[fill = red] (7, 0) circle(.15);
\filldraw[fill = white] (8, 0) circle(.15);
\filldraw (9, 0) circle(.15);
\filldraw (10, 0) circle(.15);
\filldraw (6, -1.5) circle(.15);
\filldraw (6, -2.5) circle(.15);
\filldraw (5.5, -1.5) circle(.15);
\filldraw (5.5, -2.5) circle(.15);
\filldraw (7, -1.5) circle(.15);
\filldraw (7, -2.5) circle(.15);
\filldraw (6.5, -1.5) circle(.15);
\filldraw (6.5, -2.5) circle(.15);
\filldraw (9, -1.5) circle(.15);
\filldraw (9, -2.5) circle(.15);
\node[font=\fontsize{25}{6}\selectfont] at (11,-1.25) {$\frac {14}9$};

\draw [thick, line width = 1pt] (13.5, 0) -- (14.5, 0);
\draw [thick, line width = 1pt] (16.5, 0) -- (17.5, 0);
\draw [thin, line width = 0.5pt] (14.5, 0) -- (15.5, 0) -- (16.5, 0);
\draw [densely dashed, line width = 1pt] (13.5, 0) -- (12.5, -1.5);
\draw [thick, line width = 1pt] (12.5, -1.5) -- (12.5, -2.5);
\draw [densely dashed, line width = 1pt] (14.5, 0) -- (13.5, -1.5);
\draw [thin, line width = 0.5pt] (13.5, -1.5) -- (13.5, -2.5);
\draw [thick, line width = 1pt] (13.5, -1.5) -- (13, -1.5);
\draw [thick, line width = 1pt] (13.5, -2.5) -- (13, -2.5);
\draw [densely dashed, line width = 1pt] (14.5, 0) -- (14, -1.5);
\draw [thin, line width = 0.5pt] (14, -1.5) -- (14, -2.5);
\draw [thick, line width = 1pt] (14, -1.5) -- (14.5, -1.5);
\draw [thick, line width = 1pt] (14, -2.5) -- (14.5, -2.5);
\draw [densely dashed, line width = 1pt] (15.5, 0) -- (15.5, -1.5);
\draw [thin, line width = 0.5pt] (15.5, -1.5) -- (15.5, -2.5);
\draw [thick, line width = 1pt] (15.5, -1.5) -- (15, -1.5);
\draw [thick, line width = 1pt] (15.5, -2.5) -- (15, -2.5);
\draw [densely dashed, line width = 1pt] (16.5, 0) -- (16.5, -1.5);
\draw [thin, line width = 0.5pt] (16.5, -1.5) -- (16.5, -2.5);
\draw [thick, line width = 1pt] (16.5, -1.5) -- (16, -1.5);
\draw [thick, line width = 1pt] (16.5, -2.5) -- (16, -2.5);
\draw [densely dashed, line width = 1pt] (17.5, 0) -- (17.5, -1.5);
\draw [thin, line width = 0.5pt] (17.5, -1.5) -- (17.5, -2.5);
\draw [thick, line width = 1pt] (17.5, -1.5) -- (17, -1.5);
\draw [thick, line width = 1pt] (17.5, -2.5) -- (17, -2.5);
\filldraw (13.5, 0) circle(.15);
\filldraw[fill = yellow] (14.5, 0) circle(.15);
\filldraw[fill = white] (15.5, 0) circle(.15);
\filldraw[fill = red] (16.5, 0) circle(.15);
\filldraw (17.5, 0) circle(.15);
\filldraw (13.5, -1.5) circle(.15);
\filldraw (13.5, -2.5) circle(.15);
\filldraw (13, -1.5) circle(.15);
\filldraw (13, -2.5) circle(.15);
\filldraw (14.5, -1.5) circle(.15);
\filldraw (14.5, -2.5) circle(.15);
\filldraw (14, -1.5) circle(.15);
\filldraw (14, -2.5) circle(.15);
\filldraw (15, -1.5) circle(.15);
\filldraw (15, -2.5) circle(.15);
\filldraw (16, -1.5) circle(.15);
\filldraw (16, -2.5) circle(.15);
\filldraw (15.5, -1.5) circle(.15);
\filldraw (15.5, -2.5) circle(.15);
\filldraw (16.5, -1.5) circle(.15);
\filldraw (16.5, -2.5) circle(.15);
\filldraw (17, -1.5) circle(.15);
\filldraw (17, -2.5) circle(.15);
\filldraw (17.5, -1.5) circle(.15);
\filldraw (17.5, -2.5) circle(.15);
\filldraw (12.5, -1.5) circle(.15);
\filldraw (12.5, -2.5) circle(.15);
\node[font=\fontsize{25}{6}\selectfont] at (18.5,-1.25) {$\frac {26}{16}$};

\draw [thick, line width = 1pt] (-8, -4) -- (-7, -4);
\draw [thick, line width = 1pt] (-5, -4) -- (-4, -4);
\draw [thin, line width = 0.5pt] (-7, -4) -- (-6, -4) -- (-5, -4);
\draw [densely dashed, line width = 1pt] (-8, -4) -- (-8, -5.5);
\draw [thick, line width = 1pt] (-8, -5.5) -- (-8, -6.5);
\draw [densely dashed, line width = 1pt] (-7, -4) -- (-7, -5.5);
\draw [thin, line width = 0.5pt] (-7, -5.5) -- (-7, -6.5);
\draw [thick, line width = 1pt] (-7.5, -5.5) -- (-7, -5.5);
\draw [thick, line width = 1pt] (-7.5, -6.5) -- (-7, -6.5);
\draw [densely dashed, line width = 1pt] (-6, -4) -- (-6, -5.5);
\draw [thin, line width = 0.5pt] (-6, -5.5) -- (-6, -6.5);
\draw [thick, line width = 1pt] (-6, -5.5) -- (-6.5, -5.5);
\draw [thick, line width = 1pt] (-6, -6.5) -- (-6.5, -6.5);
\draw [densely dashed, line width = 1pt] (-5, -4) -- (-5, -5.5);
\draw [thin, line width = 0.5pt] (-5, -5.5) -- (-5, -6.5);
\draw [thick, line width = 1pt] (-5, -5.5) -- (-5.5, -5.5);
\draw [thick, line width = 1pt] (-5, -6.5) -- (-5.5, -6.5);
\draw [densely dashed, line width = 1pt] (-5, -4) -- (-4.5, -5.5);
\draw [thin, line width = 0.5pt] (-4.5, -5.5) -- (-4.5, -6.5);
\draw [thick, line width = 1pt] (-4.5, -5.5) -- (-4, -5.5);
\draw [thick, line width = 1pt] (-4.5, -6.5) -- (-4, -6.5);
\draw [densely dashed, line width = 1pt] (-4, -4) -- (-3.5, -5.5);
\draw [thin, line width = 0.5pt] (-3.5, -5.5) -- (-3.5, -6.5);
\draw [thick, line width = 1pt] (-3.5, -5.5) -- (-3, -5.5);
\draw [thick, line width = 1pt] (-3.5, -6.5) -- (-3, -6.5);
\filldraw (-8, -4) circle(.15);
\filldraw[fill = red] (-7, -4) circle(.15);
\filldraw[fill = white] (-6, -4) circle(.15);
\filldraw[fill = yellow] (-5, -4) circle(.15);
\filldraw (-4, -4) circle(.15);
\filldraw (-8, -5.5) circle(.15);
\filldraw (-8, -6.5) circle(.15);
\filldraw (-7, -5.5) circle(.15);
\filldraw (-7, -6.5) circle(.15);
\filldraw (-7.5, -5.5) circle(.15);
\filldraw (-7.5, -6.5) circle(.15);
\filldraw (-6.5, -5.5) circle(.15);
\filldraw (-6.5, -6.5) circle(.15);
\filldraw (-5.5, -5.5) circle(.15);
\filldraw (-5.5, -6.5) circle(.15);
\filldraw (-6, -5.5) circle(.15);
\filldraw (-6, -6.5) circle(.15);
\filldraw (-5, -5.5) circle(.15);
\filldraw (-5, -6.5) circle(.15);
\filldraw (-4.5, -5.5) circle(.15);
\filldraw (-4.5, -6.5) circle(.15);
\filldraw (-3.5, -5.5) circle(.15);
\filldraw (-3.5, -6.5) circle(.15);
\filldraw (-3, -5.5) circle(.15);
\filldraw (-3, -6.5) circle(.15);
\filldraw (-4, -5.5) circle(.15);
\filldraw (-4, -6.5) circle(.15);
\node[font=\fontsize{25}{6}\selectfont] at (-2.3,-5.25) {$\frac {26}{16}$};

\draw [thick, line width = 1pt] (0, -4) -- (1, -4);
\draw [thick, line width = 1pt] (3, -4) -- (4, -4);
\draw [thin, line width = 0.5pt] (1, -4) -- (2, -4) -- (3,-4);
\draw [densely dashed, line width = 1pt] (0, -4) -- (-0.5, -5.5);
\draw [thick, line width = 1pt] (-0.5, -5.5) -- (-0.5, -6.5);
\draw [thick, line width = 1pt] (-0.5, -5.5) -- (-1, -5.5);
\draw [thick, line width = 1pt] (-0.5, -6.5) -- (-1, -6.5);
\draw [densely dashed, line width = 1pt] (1, -4) -- (0.5, -5.5);
\draw [thin, line width = 0.5pt] (0.5, -5.5) -- (0.5, -6.5);
\draw [thick, line width = 1pt] (0.5, -5.5) -- (0, -5.5);
\draw [thick, line width = 1pt] (0.5, -6.5) -- (0, -6.5);
\draw [densely dashed, line width = 1pt] (1, -4) -- (1.5, -5.5);
\draw [thin, line width = 0.5pt] (1.5, -5.5) -- (1.5, -6.5);
\draw [thick, line width = 1pt] (1.5, -5.5) -- (1, -5.5);
\draw [thick, line width = 1pt] (1.5, -6.5) -- (1, -6.5);
\draw [densely dashed, line width = 1pt] (2, -4) -- (2, -5.5);
\draw [thick, line width = 1pt] (2, -5.5) -- (2, -6.5);
\draw [densely dashed, line width = 1pt] (3, -4) -- (3, -5.5);
\draw [thin, line width = 0.5pt] (3, -5.5) -- (3, -6.5);
\draw [thick, line width = 1pt] (3, -5.5) -- (2.5, -5.5);
\draw [thick, line width = 1pt] (3, -6.5) -- (2.5, -6.5);
\draw [densely dashed, line width = 1pt] (4, -4) -- (4, -5.5);
\draw [thin, line width = 0.5pt] (4, -5.5) -- (4, -6.5);
\draw [thick, line width = 1pt] (4, -5.5) -- (3.5, -5.5);
\draw [thick, line width = 1pt] (4, -6.5) -- (3.5, -6.5);
\filldraw (0, -4) circle(.15);
\filldraw[fill = yellow] (1, -4) circle(.15);
\filldraw[fill = white] (2, -4) circle(.15);
\filldraw[fill = red] (3, -4) circle(.15);
\filldraw (4, -4) circle(.15);
\filldraw (0, -5.5) circle(.15);
\filldraw (0, -6.5) circle(.15);
\filldraw (-0.5, -5.5) circle(.15);
\filldraw (-0.5, -6.5) circle(.15);
\filldraw (1, -5.5) circle(.15);
\filldraw (1, -6.5) circle(.15);
\filldraw (0.5, -5.5) circle(.15);
\filldraw (0.5, -6.5) circle(.15);
\filldraw (1.5, -5.5) circle(.15);
\filldraw (1.5, -6.5) circle(.15);
\filldraw (2.5, -5.5) circle(.15);
\filldraw (2.5, -6.5) circle(.15);
\filldraw (2, -5.5) circle(.15);
\filldraw (2, -6.5) circle(.15);
\filldraw (3, -5.5) circle(.15);
\filldraw (3, -6.5) circle(.15);
\filldraw (3.5, -5.5) circle(.15);
\filldraw (3.5, -6.5) circle(.15);
\filldraw (4, -5.5) circle(.15);
\filldraw (4, -6.5) circle(.15);
\filldraw (-1, -5.5) circle(.15);
\filldraw (-1, -6.5) circle(.15);
\node[font=\fontsize{25}{6}\selectfont] at (5.3, -5.25) {$\frac {26}{16}$};

\draw [thick, line width = 1pt] (8.5, -4) -- (9.5, -4);
\draw [thick, line width = 1pt] (11.5, -4) -- (12.5, -4);
\draw [thin, line width = 0.5pt] (9.5, -4) -- (10.5, -4) -- (11.5, -4);
\draw [densely dashed, line width = 1pt] (8.5, -4) -- (7.5, -5.5);
\draw [thin, line width = 0.5pt] (7.5, -5.5) -- (7.5, -6.5);
\draw [thick, line width = 1pt] (7.5, -5.5) -- (7, -5.5);
\draw [thick, line width = 1pt] (7.5, -6.5) -- (7, -6.5);
\draw [densely dashed, line width = 1pt] (9.5, -4) -- (8.5, -5.5);
\draw [thin, line width = 0.5pt] (8.5, -5.5) -- (8.5, -6.5);
\draw [thick, line width = 1pt] (8.5, -5.5) -- (8, -5.5);
\draw [thick, line width = 1pt] (8.5, -6.5) -- (8, -6.5);
\draw [densely dashed, line width = 1pt] (9.5, -4) -- (9, -5.5);
\draw [thin, line width = 0.5pt] (9, -5.5) -- (9, -6.5);
\draw [thick, line width = 1pt] (9, -5.5) -- (9.5, -5.5);
\draw [thick, line width = 1pt] (9, -6.5) -- (9.5, -6.5);
\draw [densely dashed, line width = 1pt] (10.5, -4) -- (10.5, -5.5);
\draw [thin, line width = 0.5pt] (10.5, -5.5) -- (10.5, -6.5);
\draw [thick, line width = 1pt] (10.5, -5.5) -- (10, -5.5);
\draw [thick, line width = 1pt] (10.5, -6.5) -- (10, -6.5);
\draw [densely dashed, line width = 1pt] (11.5, -4) -- (11.5, -5.5);
\draw [thin, line width = 0.5pt] (11.5, -5.5) -- (11.5, -6.5);
\draw [thick, line width = 1pt] (11.5, -5.5) -- (11, -5.5);
\draw [thick, line width = 1pt] (11.5, -6.5) -- (11, -6.5);
\draw [densely dashed, line width = 1pt] (12.5, -4) -- (12.5, -5.5);
\draw [thin, line width = 0.5pt] (12.5, -5.5) -- (12.5, -6.5);
\draw [thick, line width = 1pt] (12.5, -5.5) -- (12, -5.5);
\draw [thick, line width = 1pt] (12.5, -6.5) -- (12, -6.5);
\filldraw (8.5, -4) circle(.15);
\filldraw[fill = yellow] (9.5, -4) circle(.15);
\filldraw[fill = white] (10.5, -4) circle(.15);
\filldraw[fill = red] (11.5, -4) circle(.15);
\filldraw (12.5, -4) circle(.15);
\filldraw (7.5, -5.5) circle(.15);
\filldraw (7.5, -6.5) circle(.15);
\filldraw (8.5, -5.5) circle(.15);
\filldraw (8.5, -6.5) circle(.15);
\filldraw (9.5, -5.5) circle(.15);
\filldraw (9.5, -6.5) circle(.15);
\filldraw (9, -5.5) circle(.15);
\filldraw (9, -6.5) circle(.15);
\filldraw (10, -5.5) circle(.15);
\filldraw (10, -6.5) circle(.15);
\filldraw (11, -5.5) circle(.15);
\filldraw (11, -6.5) circle(.15);
\filldraw (10.5, -5.5) circle(.15);
\filldraw (10.5, -6.5) circle(.15);
\filldraw (11.5, -5.5) circle(.15);
\filldraw (11.5, -6.5) circle(.15);
\filldraw (12, -5.5) circle(.15);
\filldraw (12, -6.5) circle(.15);
\filldraw (12.5, -5.5) circle(.15);
\filldraw (12.5, -6.5) circle(.15);
\filldraw (8, -5.5) circle(.15);
\filldraw (8, -6.5) circle(.15);
\filldraw (7, -5.5) circle(.15);
\filldraw (7, -6.5) circle(.15);
\node[font=\fontsize{25}{6}\selectfont] at (13.7,-5.25) {$\frac {28}{17}$};
\end{tikzpicture}
\end{minipage}}
\end{center}
\captionsetup{width=1.0\linewidth}
\caption{All possible structures of a responsible components of $H + C$.
	The filled (white, respectively) vertices are in (not in, respectively) $V(M)$. 
	Moreover, the yellow vertices are $2$-anchors while the red vertices are responsible $1$-anchors.\label{fig04}}
\end{figure}
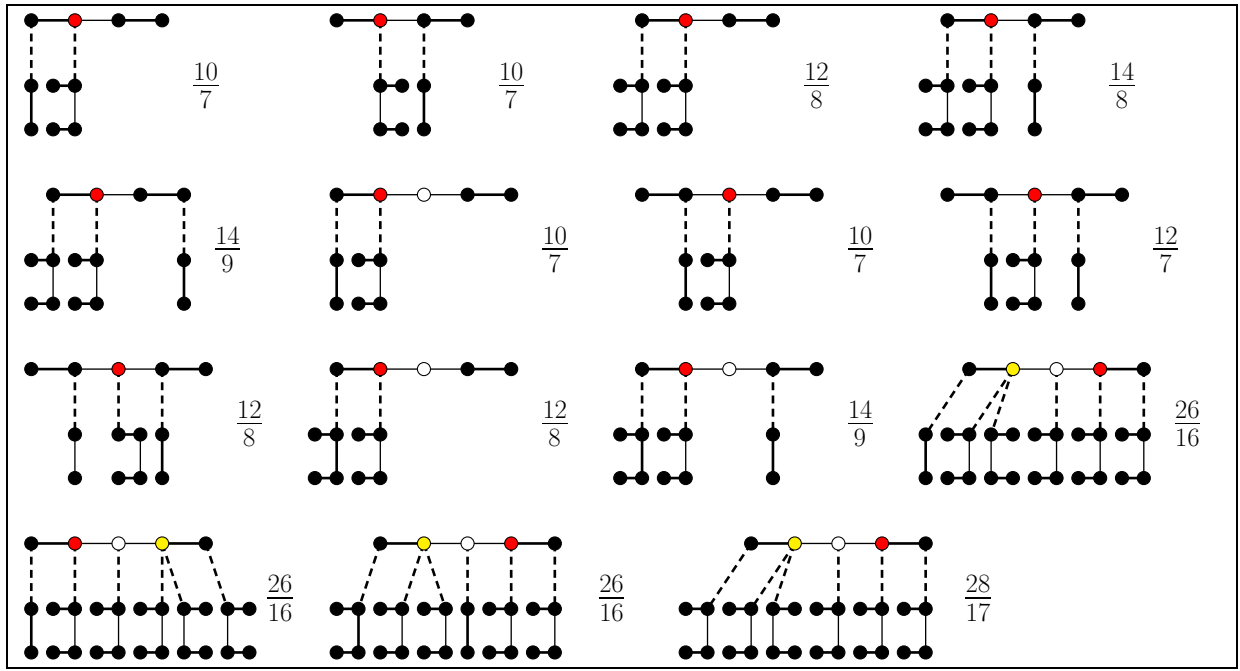

\subsection{Three operations}\label{subsec:op}
In this subsection, we design three operations for reducing the number of critical components of $H+C$. 
We remind the reader that the three operations are applied to a connected component $K$ of $H+C$ rather than its trunk $\widetilde{K}$.
But we still use $\widetilde{K}$ to compute a feasible solution of $K$; 
Lemma~\ref{lemma10} implies that computing a feasible solution takes $O(1)$ time.

\begin{notation}
In the remainder of this subsection, $K_1$ denotes a critical component of $H+C$, $K_2$ denotes a component of $H+C$ (possibly $K_1 = K_2$),
$S_1$ denotes a critical satellite-element of $K_1$, and $\{v_1,v_2\}$ denotes an edge in $E(G)\setminus C$ with $v_1 \in V(S_1)$ and $v_2 \in V(K_2)$. 
Moreover, 
\begin{itemize}
\parskip=0pt
\item $n_0$ denotes the total number of $0$-anchors in $H+C$; 
\item $n_c$ denotes the total number of critical components in $H+C$; 
\item $n_{cc}$ denotes the total number of connected components of $H+C$; 
\item and $g = n_0 + 5n_c - 6n_{cc}$.
\end{itemize}
\end{notation}

Each of the three operations aims to reduce $g$ by at least~$1$.

\begin{operation}
\label{op01}
Suppose that $v_2$ is a $0$-anchor of $K_2$ or a non-responsible $1$-anchor.
Then, the operation moves $S_1$ to $v_2$ by replacing the rescue-edge of $S_1$ with $\{v_1, v_2\}$ (see for an illustration in Figure~\ref{fig05}).
\end{operation}

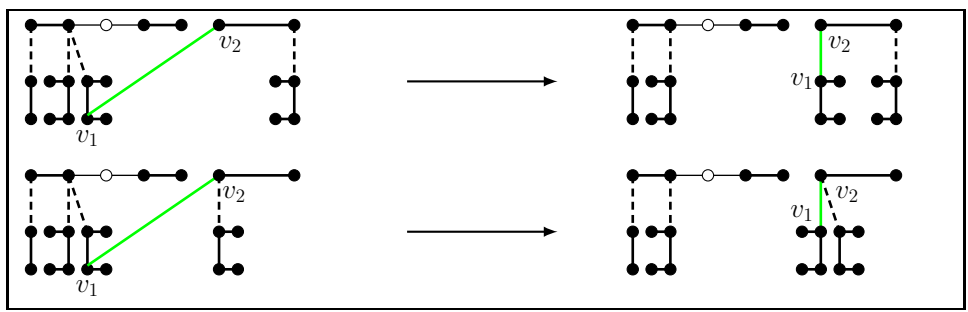
\begin{figure}[thb]
\begin{center}
\framebox{
\begin{minipage}{0.75\textwidth}
\begin{tikzpicture}[scale=0.5,transform shape]
\draw [thick, line width = 1pt] (-8, -6) -- (-7, -6);
\draw [thick, line width = 1pt] (-5, -6) -- (-4, -6);
\draw [thin, line width = 0.5pt] (-7, -6) -- (-6, -6);
\draw [thin, line width = 0.5pt] (-6, -6) -- (-5, -6);
\draw [densely dashed, line width = 1pt] (-8, -6) -- (-8,-7.5);
\draw [thick, line width = 1pt] (-8, -7.5) -- (-8, -8.5);
\draw [densely dashed, line width = 1pt] (-7, -6) -- (-7, -7.5);
\draw [thick, line width = 1pt] (-7, -7.5) -- (-7.5, -7.5);
\draw [thick, line width = 1pt] (-7, -7.5) -- (-7, -8.5);
\draw [thick, line width = 1pt] (-7, -8.5) -- (-7.5, -8.5);
\draw [densely dashed, line width = 1pt] (-7, -6) -- (-6.5,-7.5);
\draw [thick, line width = 1pt] (-6.5,-7.5) -- (-6, -7.5);
\draw [thick, line width = 1pt] (-6.5,-8.5) -- (-6, -8.5);
\draw [thick, line width = 1pt] (-6.5, -7.5) -- (-6.5, -8.5);
\filldraw (-8, -6) circle(.15);
\filldraw (-7, -6) circle(.15);
\filldraw[fill = white] (-6, -6) circle(.15);
\filldraw (-5, -6) circle(.15);
\filldraw (-4, -6) circle(.15);
\filldraw (-8,-7.5) circle(.15);
\filldraw (-8,-8.5) circle(.15);
\filldraw (-7,-7.5) circle(.15);
\filldraw (-7,-8.5) circle(.15);
\filldraw (-7.5,-7.5) circle(.15);
\filldraw (-7.5,-8.5) circle(.15);
\filldraw (-6.5,-7.5) circle(.15);
\filldraw (-6.5,-8.5) circle(.15);
\filldraw (-6,-7.5) circle(.15);
\filldraw (-6,-8.5) circle(.15);
\node[font=\fontsize{20}{6}\selectfont] at (-6.5, -9) {$v_1$};

\draw [thick, line width = 1pt, color=green] (-6.5, -8.4) -- (-3, -6);

\node[font=\fontsize{20}{6}\selectfont] at (-2.7, -6.5) {$v_2$};
\draw [thick, line width = 1pt] (-3, -6) -- (-1, -6);
\draw [densely dashed, line width = 1pt] (-1, -6) -- (-1,-7.5);
\draw [thick, line width = 1pt] (-1, -7.5) -- (-1, -8.5);
\draw [thick, line width = 1pt] (-1, -7.5) -- (-1.5, -7.5);
\draw [thick, line width = 1pt] (-1, -8.5) -- (-1.5, -8.5);
\filldraw (-3, -6) circle(.15);
\filldraw (-1, -6) circle(.15);
\filldraw (-1,-7.5) circle(.15);
\filldraw (-1,-8.5) circle(.15);
\filldraw (-1.5,-7.5) circle(.15);
\filldraw (-1.5,-8.5) circle(.15);

\draw [-latex, thick] (2, -7.5) to (6, -7.5);

\draw [thick, line width = 1pt] (8, -6) -- (9, -6);
\draw [thick, line width = 1pt] (11, -6) -- (12,-6);
\draw [thin, line width = 0.5pt] (9, -6) -- (10, -6);
\draw [thin, line width = 0.5pt] (10, -6) -- (11, -6);
\draw [densely dashed, line width = 1pt] (8, -6) -- (8, -7.5);
\draw [thick, line width = 1pt] (8, -7.5) -- (8, -8.5);
\draw [densely dashed, line width = 1pt] (9, -6) -- (9, -7.5);
\draw [thick, line width = 1pt] (9, -7.5) -- (9, -8.5);
\draw [thick, line width = 1pt] (9, -7.5) -- (8.5, -7.5);
\draw [thick, line width = 1pt] (9, -8.5) -- (8.5, -8.5);
\filldraw (8, -6) circle(.15);
\filldraw (9, -6) circle(.15);
\filldraw[fill = white] (10, -6) circle(.15);
\filldraw (11, -6) circle(.15);
\filldraw (12, -6) circle(.15);
\filldraw (8, -7.5) circle(.15);
\filldraw (8, -8.5) circle(.15);
\filldraw (8.5, -7.5) circle(.15);
\filldraw (8.5, -8.5) circle(.15);
\filldraw (9, -7.5) circle(.15);
\filldraw (9, -8.5) circle(.15);

\node[font=\fontsize{20}{6}\selectfont] at (13.5, -6.5) {$v_2$};
\node[font=\fontsize{20}{6}\selectfont] at (12.5, -7.5) {$v_1$};

\draw [thick, line width = 1pt] (13, -6) -- (15, -6);
\draw [thick, line width = 1pt, color=green] (13, -6) -- (13,-7.5);
\draw [thick, line width = 1pt] (13, -7.5) -- (13, -8.5);
\draw [thick, line width = 1pt] (13, -7.5) -- (13.5, -7.5);
\draw [thick, line width = 1pt] (13, -8.5) -- (13.5, -8.5);
\draw [densely dashed, line width = 1pt] (15, -6) -- (15,-7.5);
\draw [thick, line width = 1pt] (15, -7.5) -- (15, -8.5);
\draw [thick, line width = 1pt] (15, -8.5) -- (14.5, -8.5);
\draw [thick, line width = 1pt] (15, -7.5) -- (14.5, -7.5);
\filldraw (13, -6) circle(.15);
\filldraw (15, -6) circle(.15);
\filldraw (13,-7.5) circle(.15);
\filldraw (13,-8.5) circle(.15);
\filldraw (13.5,-7.5) circle(.15);
\filldraw (13.5,-8.5) circle(.15);
\filldraw (14.5,-7.5) circle(.15);
\filldraw (14.5,-8.5) circle(.15);
\filldraw (15,-7.5) circle(.15);
\filldraw (15,-8.5) circle(.15);

\draw [thick, line width = 1pt] (-8, -10) -- (-7, -10);
\draw [thick, line width = 1pt] (-5, -10) -- (-4, -10);
\draw [thin, line width = 0.5pt] (-7, -10) -- (-6, -10);
\draw [thin, line width = 0.5pt] (-6, -10) -- (-5, -10);
\draw [densely dashed, line width = 1pt] (-8, -10) -- (-8,-11.5);
\draw [thick, line width = 1pt] (-8, -11.5) -- (-8, -12.5);
\draw [densely dashed, line width = 1pt] (-7, -10) -- (-7, -11.5);
\draw [thick, line width = 1pt] (-7, -11.5) -- (-7, -12.5);
\draw [thick, line width = 1pt] (-7, -11.5) -- (-7.5, -11.5);
\draw [thick, line width = 1pt] (-7, -12.5) -- (-7.5, -12.5);
\draw [densely dashed, line width = 1pt] (-7, -10) -- (-6.5,-11.5);
\draw [thick, line width = 1pt] (-6.5, -11.5) -- (-6.5, -12.5);
\draw [thick, line width = 1pt] (-6.5, -11.5) -- (-6, -11.5);
\draw [thick, line width = 1pt] (-6.5, -12.5) -- (-6, -12.5);
\filldraw (-8, -10) circle(.15);
\filldraw (-7, -10) circle(.15);
\filldraw[fill = white] (-6, -10) circle(.15);
\filldraw (-5, -10) circle(.15);
\filldraw (-4, -10) circle(.15);
\filldraw (-8,-11.5) circle(.15);
\filldraw (-8,-12.5) circle(.15);
\filldraw (-7,-11.5) circle(.15);
\filldraw (-7,-12.5) circle(.15);
\filldraw (-7.5,-11.5) circle(.15);
\filldraw (-7.5,-12.5) circle(.15);
\filldraw (-6.5,-11.5) circle(.15);
\filldraw (-6.5,-12.5) circle(.15);
\filldraw (-6,-11.5) circle(.15);
\filldraw (-6,-12.5) circle(.15);

\node[font=\fontsize{20}{6}\selectfont] at (-6.5, -13) {$v_1$};

\draw [thick, line width = 1pt, color=green] (-6.5, -12.4) -- (-3, -10);

\node[font=\fontsize{20}{6}\selectfont] at (-2.6, -10.5) {$v_2$};
\draw [thick, line width = 1pt] (-3, -10) -- (-1, -10);
\draw [densely dashed, line width = 1pt] (-3, -10) -- (-3,-11.5);
\draw [thick, line width = 1pt] (-3, -11.5) -- (-3, -12.5);
\draw [thick, line width = 1pt] (-3, -11.5) -- (-2.5, -11.5);
\draw [thick, line width = 1pt] (-3, -12.5) -- (-2.5, -12.5);
\filldraw (-3, -10) circle(.15);
\filldraw (-1, -10) circle(.15);
\filldraw (-3,-11.5) circle(.15);
\filldraw (-3,-12.5) circle(.15);
\filldraw (-2.5,-11.5) circle(.15);
\filldraw (-2.5,-12.5) circle(.15);

\draw [-latex, thick] (2, -11.5) to (6, -11.5);

\draw [thick, line width = 1pt] (8, -10) -- (9, -10);
\draw [thick, line width = 1pt] (11, -10) -- (12,-10);
\draw [thin, line width = 0.5pt] (9, -10) -- (10, -10);
\draw [thin, line width = 0.5pt] (10, -10) -- (11, -10);
\draw [densely dashed, line width = 1pt] (8, -10) -- (8, -11.5);
\draw [thick, line width = 1pt] (8, -11.5) -- (8, -12.5);
\draw [densely dashed, line width = 1pt] (9, -10) -- (9, -11.5);
\draw [thick, line width = 1pt] (9, -11.5) -- (9, -12.5);
\draw [thick, line width = 1pt] (9, -11.5) -- (8.5, -11.5);
\draw [thick, line width = 1pt] (9, -12.5) -- (8.5, -12.5);
\filldraw (8, -10) circle(.15);
\filldraw (9, -10) circle(.15);
\filldraw[fill = white] (10, -10) circle(.15);
\filldraw (11, -10) circle(.15);
\filldraw (12, -10) circle(.15);
\filldraw (8, -11.5) circle(.15);
\filldraw (8, -12.5) circle(.15);
\filldraw (9, -11.5) circle(.15);
\filldraw (9, -12.5) circle(.15);
\filldraw (8.5, -11.5) circle(.15);
\filldraw (8.5, -12.5) circle(.15);

\node[font=\fontsize{20}{6}\selectfont] at (13.7, -10.5) {$v_2$};
\node[font=\fontsize{20}{6}\selectfont] at (12.5, -11) {$v_1$};

\draw [thick, line width = 1pt] (13, -10) -- (15, -10);
\draw [thick, line width = 1pt, color=green] (13, -10) -- (13,-11.5);
\draw [thick, line width = 1pt] (13, -11.5) -- (13, -12.5);
\draw [thick, line width = 1pt] (13, -11.5) -- (12.5, -11.5);
\draw [thick, line width = 1pt] (13, -12.5) -- (12.5, -12.5);
\draw [densely dashed, line width = 1pt] (13, -10) -- (13.5,-11.5);
\draw [thick, line width = 1pt] (13.5, -11.5) -- (13.5, -12.5);
\draw [thick, line width = 1pt] (13.5, -11.5) -- (14, -11.5);
\draw [thick, line width = 1pt] (13.5, -12.5) -- (14, -12.5);
\filldraw (13, -10) circle(.15);
\filldraw (15, -10) circle(.15);
\filldraw (12.5,-11.5) circle(.15);
\filldraw (12.5,-12.5) circle(.15);
\filldraw (13,-11.5) circle(.15);
\filldraw (13,-12.5) circle(.15);
\filldraw (13.5,-11.5) circle(.15);
\filldraw (13.5,-12.5) circle(.15);
\filldraw (14,-11.5) circle(.15);
\filldraw (14,-12.5) circle(.15);
\end{tikzpicture}
\end{minipage}}
\end{center}
\captionsetup{width=1.0\linewidth}
\caption{Illustration of two representative possible cases in Operation~\ref{op01}, 
	in which $v_2$ is a $0$-anchor and a $1$-anchor, respectively, and the green edge is the edge $\{v_1, v_2\}$.\label{fig05}}
\end{figure}

\begin{lemma}\label{lemma23} 
Operation~1 reduces the value of $g$ by at least~$1$.
\end{lemma}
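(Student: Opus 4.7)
The plan is to track the three quantities $n_0$, $n_c$, and $n_{cc}$ that compose $g$, and show that $\Delta g = \Delta n_0 + 5\,\Delta n_c - 6\,\Delta n_{cc} \le -1$ under Operation~\ref{op01}. First I would verify that $\Delta n_{cc} = 0$: removing the old rescue-edge of $S_1$ disconnects $S_1$ from $K_1 \setminus S_1$, but $K_1 \setminus S_1$ remains connected through its center element and its other satellite-elements, and adding the new edge $\{v_1,v_2\}$ reattaches $S_1$ either within $K_1$ (when $K_1 = K_2$) or into $K_2$ (otherwise); so the number of connected components is unchanged.

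Next I would show that $K_1$ becomes non-critical after the move: this follows from Statement~2 of Fact~\ref{fact02} when $K_1 \ne K_2$, and from Statement~3 of Fact~\ref{fact02} when $K_1 = K_2$, since $v_2$ is by hypothesis a $0$- or $1$-anchor of $K_1$. The only vertices whose anchor-status can possibly change are $v_2$ and the old supporting-anchor $v_{\text{old}}$ of $S_1$, and these two are distinct (since by Fact~\ref{fact01}, $v_{\text{old}} \in T_2$ while $v_2$ is not a $2$-anchor). Moreover, $v_{\text{old}}$ still anchors its other bi-star satellite after losing $S_1$, so it lands in $O_1$; in particular $v_{\text{old}}$ is never a $0$-anchor and contributes $0$ to $\Delta n_0$.

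I would then split on the two subcases allowed by Operation~\ref{op01}. In Case~A ($v_2$ is a $0$-anchor), adding $S_1$ moves $v_2$ into $O_1$, contributing $-1$ to $\Delta n_0$; meanwhile $K_2$ may in the worst case flip from non-critical to critical, contributing at most $+1$ to $\Delta n_c$, so together with the $-1$ from $K_1$ we get $\Delta n_c \le 0$ and $\Delta g \le -1 + 5 \cdot 0 - 6 \cdot 0 = -1$. In Case~B ($v_2$ is a non-responsible $1$-anchor), $v_2$ becomes a $2$-anchor so $\Delta n_0 = 0$; here I would argue that $K_2$ contributes at most $0$ to $\Delta n_c$ by combining the non-responsibility hypothesis (which rules out $K_2$ turning from non-critical to critical) with a short inspection of Lemmas~\ref{lemma19}--\ref{lemma22} and Figure~\ref{fig03}, showing that adding a bi-star at an $O_0$- or $O_1$-anchor of a critical component produces either a $T_1$-anchor or a two-$T_2$-anchor pattern that no critical structure admits, and hence destroys criticality. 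Thus $\Delta n_c \le -1$ and $\Delta g \le 0 + 5 \cdot (-1) - 6 \cdot 0 = -5$.

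The main obstacle I expect is Case~A, where the bound is tight: the argument must carefully balance the single drop in $n_0$ coming from $v_2$ against the possibility that $K_2$ simultaneously becomes a new critical component, and one has to see that Operation~\ref{op01} is invoked at a $0$-anchor (rather than a responsible $1$-anchor) precisely so that this trade-off still yields $\Delta g \le -1$. The Case~B case-check is essentially routine once one observes that no critical structure from Lemmas~\ref{lemma20}--\ref{lemma22} contains a $T_1$-anchor.
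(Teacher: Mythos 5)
Your proposal is correct and follows essentially the same approach as the paper's proof: track $\Delta n_{cc}$, $\Delta n_0$, $\Delta n_c$, apply Statements~2 and~3 of Fact~\ref{fact02} to show $K_1$ drops out of the critical set, and split on whether $v_2$ is a $0$-anchor (where the $-1$ in $n_0$ absorbs a possible $+1$ in $n_c$) or a non-responsible $1$-anchor (where $n_c$ itself drops). The extra discussion you add about the old supporting anchor $v_{\mathrm{old}}$ landing in $O_1$ is a detail the paper glosses over with ``never increases $n_0$,'' and your ``short inspection'' for Case~B is logically superfluous when $K_2$ is already critical (its contribution to $\Delta n_c$ is then automatically $\le 0$), and the phrase ``two-$T_2$-anchor pattern that no critical structure admits'' is not literally accurate since Lemma~\ref{lemma20} exhibits critical $5$-path components with $\{v_2,v_4\}\subseteq T_2$ -- but neither of these affects the validity of your conclusion.
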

\begin{proof}
Clearly, Operation~1 never changes $n_{cc}$ and never increases $n_0$. 
For convenience, let $K'_i$ denote the modified $K_i$ after the operation for each $i\in\{1,2\}$. 
If $K_1 = K_2$, then by Statement~3 of Fact~\ref{fact02}, $n_c$ is reduced by $1$, and in turn  $g$ is reduced by at least~$5$. 
So, we may assume that $K_1 \ne K_2$. 
Then, by Statement~2 of Fact~\ref{fact02}, $K'_1$ is not critical, and hence $n_c$ does not increase. 
Now, if $v_2$ is a $0$-anchor in $K_2$, then because $v_2$ becomes a $1$-anchor in $K'_2$, $n_0$ decreases by~$1$ and hence $g$ decreases by at least~$1$. 
Otherwise, $v_2$ is a non-responsible $1$-anchor and hence the operation reduces $n_c$ by~$1$ but does not change $n_0$, implying that $g$ decreases by at least~$5$.
\end{proof}

\begin{operation}   
\label{op02}
Suppose that the center element of $K_2$ is an edge, star, or bi-star, $K_2$ has exactly one satellite-element $S_2$ and $v_2$ is in $S_2$. 
Then, the operation moves $S_1$ to $v_2$ by replacing the rescue-edge of $S_1$ with $\{v_1, v_2\}$, and updates $S_2$ to be the center element of the new $K_2$
(see for an illustration in Figure~\ref{fig06}).
\end{operation}

\begin{figure}[thb]
\begin{center}
\framebox{
\begin{minipage}{0.75\textwidth}
\begin{tikzpicture}[scale=0.5,transform shape]
\draw [thick, line width = 1pt] (-8, -10) -- (-7, -10);
\draw [thick, line width = 1pt] (-5, -10) -- (-4, -10);
\draw [thin, line width = 0.5pt] (-7, -10) -- (-6, -10);
\draw [thin, line width = 0.5pt] (-6, -10) -- (-5, -10);
\draw [densely dashed, line width = 1pt] (-8, -10) -- (-8,-11.5);
\draw [thick, line width = 1pt] (-8, -11.5) -- (-8, -12.5);
\draw [densely dashed, line width = 1pt] (-7, -10) -- (-7, -11.5);
\draw [thick, line width = 1pt] (-7, -11.5) -- (-7, -12.5);
\draw [thick, line width = 1pt] (-7, -11.5) -- (-7.5, -11.5);
\draw [thick, line width = 1pt] (-7, -12.5) -- (-7.5, -12.5);
\draw [densely dashed, line width = 1pt] (-7, -10) -- (-6.5,-11.5);
\draw [thick, line width = 1pt] (-6.5, -11.5) -- (-6.5, -12.5);
\draw [thick, line width = 1pt] (-6.5, -11.5) -- (-6, -11.5);
\draw [thick, line width = 1pt] (-6.5, -12.5) -- (-6, -12.5);
\filldraw (-8, -10) circle(.15);
\filldraw (-7, -10) circle(.15);
\filldraw[fill = white] (-6, -10) circle(.15);
\filldraw (-5, -10) circle(.15);
\filldraw (-4, -10) circle(.15);
\filldraw (-8,-11.5) circle(.15);
\filldraw (-8,-12.5) circle(.15);
\filldraw (-7,-11.5) circle(.15);
\filldraw (-7,-12.5) circle(.15);
\filldraw (-7.5,-11.5) circle(.15);
\filldraw (-7.5,-12.5) circle(.15);
\filldraw (-6.5,-11.5) circle(.15);
\filldraw (-6.5,-12.5) circle(.15);
\filldraw (-6,-11.5) circle(.15);
\filldraw (-6,-12.5) circle(.15);

\node[font=\fontsize{20}{6}\selectfont] at (-3.5, -11) {$v_2$};
\node[font=\fontsize{20}{6}\selectfont] at (-6.5, -13) {$v_1$};

\draw [thick, line width = 1pt, color=green] (-6.5, -12.4) -- (-3, -11.5);

\draw [thick, line width = 1pt] (-3, -10) -- (-1, -10);
\draw [densely dashed, line width = 1pt] (-3, -10) -- (-3,-11.5);
\draw [thick, line width = 1pt] (-3, -11.5) -- (-3, -12.5);
\draw [thick, line width = 1pt] (-3, -11.5) -- (-2.5, -11.5);
\draw [thick, line width = 1pt] (-3, -12.5) -- (-2.5, -12.5);
\filldraw (-3, -10) circle(.15);
\filldraw (-1, -10) circle(.15);
\filldraw (-3,-11.5) circle(.15);
\filldraw (-3,-12.5) circle(.15);
\filldraw (-2.5,-11.5) circle(.15);
\filldraw (-2.5,-12.5) circle(.15);

\draw [-latex, thick] (2, -11.5) to (6, -11.5);

\draw [thick, line width = 1pt] (8, -10) -- (9, -10);
\draw [thick, line width = 1pt] (11, -10) -- (12,-10);
\draw [thin, line width = 0.5pt] (9, -10) -- (10, -10);
\draw [thin, line width = 0.5pt] (10, -10) -- (11, -10);
\draw [densely dashed, line width = 1pt] (8, -10) -- (8, -11.5);
\draw [thick, line width = 1pt] (8, -11.5) -- (8, -12.5);
\draw [densely dashed, line width = 1pt] (9, -10) -- (9, -11.5);
\draw [thick, line width = 1pt] (9, -11.5) -- (9, -12.5);
\draw [thick, line width = 1pt] (9, -11.5) -- (8.5, -11.5);
\draw [thick, line width = 1pt] (9, -12.5) -- (8.5, -12.5);
\filldraw (8, -10) circle(.15);
\filldraw (9, -10) circle(.15);
\filldraw[fill = white] (10, -10) circle(.15);
\filldraw (11, -10) circle(.15);
\filldraw (12, -10) circle(.15);
\filldraw (8, -11.5) circle(.15);
\filldraw (8, -12.5) circle(.15);
\filldraw (9, -11.5) circle(.15);
\filldraw (9, -12.5) circle(.15);
\filldraw (8.5, -11.5) circle(.15);
\filldraw (8.5, -12.5) circle(.15);

\node[font=\fontsize{20}{6}\selectfont] at (13.5, -10.5) {$v_2$};
\node[font=\fontsize{20}{6}\selectfont] at (14.7, -11) {$v_1$};

\draw [thick, line width = 1pt] (13, -10) -- (14, -10);
\draw [thin, line width = 0.5pt] (14, -10) -- (15, -10);
\draw [thick, line width = 1pt] (15, -10) -- (16, -10);
\draw [densely dashed, line width = 1pt] (14, -10) -- (14,-11.5);
\draw [thick, line width = 1pt] (14, -11.5) -- (14, -12.5);
\draw [thick, line width = 1pt, color = green] (14, -10) -- (14.5,-11.5);
\draw [thick, line width = 1pt] (14.5, -11.5) -- (14.5, -12.5);
\draw [thick, line width = 1pt] (14.5, -11.5) -- (15, -11.5);
\draw [thick, line width = 1pt] (14.5, -12.5) -- (15, -12.5);
\filldraw (13, -10) circle(.15);
\filldraw (14, -10) circle(.15);
\filldraw (15, -10) circle(.15);
\filldraw (16, -10) circle(.15);
\filldraw (14,-11.5) circle(.15);
\filldraw (14,-12.5) circle(.15);
\filldraw (14.5,-11.5) circle(.15);
\filldraw (14.5,-12.5) circle(.15);
\filldraw (15,-11.5) circle(.15);
\filldraw (15,-12.5) circle(.15);
\end{tikzpicture}
\end{minipage}}
\end{center}
\captionsetup{width=1.0\linewidth}
\caption{An illustration of a representative possible case in Operation~\ref{op02},
	in which the green edge is $\{v_1, v_2\}$ and $v_2$ is in a satellite (respectively, center) element of $K_2$ before (respectively, after) the operation.\label{fig06}}
\end{figure}
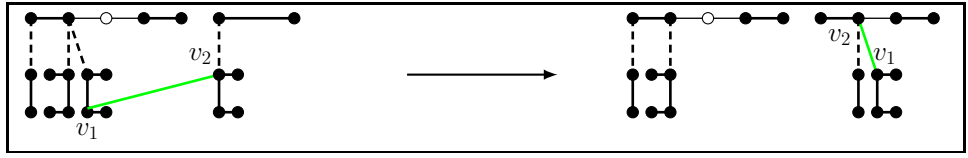

\begin{lemma}\label{lemma24} 
Operation~2 reduces the value of $g$ by at least~$1$.
\end{lemma}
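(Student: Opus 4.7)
The plan is to study how Operation~\ref{op02} changes each of $n_{cc}$, $n_c$, and $n_0$ separately, and then combine via the definition $g = n_0 + 5 n_c - 6 n_{cc}$ to conclude $\Delta g \le -1$ (in fact, $\Delta g \le -3$).

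First, I would verify $\Delta n_{cc} = 0$: Operation~\ref{op02} swaps exactly one $C$-edge, detaching $S_1$ from $K_1$ and reattaching it inside $K_2$. The residue $K_1 \setminus S_1$ remains a non-isolated composite component by Statement~2 of Fact~\ref{fact02}, and $K_2$ together with $S_1$ is still connected through the new edge $\{v_1, v_2\}$, so no components are created or destroyed.

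Second, I would show $\Delta n_c \le -1$. The original $K_1$ is critical by assumption and becomes non-critical by Statement~2 of Fact~\ref{fact02}. The original $K_2$ has only one satellite-element, so its center carries no $2$-anchor, and Lemma~\ref{lemma17} already guarantees $K_2$ was non-critical. It remains to argue that the new composite component $K_2'$ (with center $S_2$ and satellite-elements being the old center of $K_2$ and $S_1$) is also not critical. Observe that $S_2$ must be an edge, a star, or a bi-star: as a former satellite-element of $K_2$ it is a bad component, and it cannot be a triangle, for then Statement~1 of Lemma~\ref{lemma09} would force the old center of $K_2$ to be a $5$-path, contradicting the hypothesis of the operation. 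Letting $u \in V(S_2)$ denote the endpoint of $S_2$'s original rescue-edge to the old center, I would split on whether $v_2 = u$. If yes, $v_2$ is the unique $2$-anchor of $K_2'$, sitting in $T_1$ when the old center is an edge/star and in $T_2$ when it is a bi-star, so the corresponding sub-case of Lemma~\ref{lemma21} (Statement~1 for edge/star $S_2$, Statement~2 for bi-star $S_2$) pushes $s(K_2')/\eta(K_2') < 15/8$. If no, $K_2'$ has no $2$-anchor and Lemma~\ref{lemma17} applies since $K_c' = S_2$ is not a $5$-path. Either way, $K_2'$ is not critical.

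Third, I would bound $\Delta n_0 \le 2$. In $K_1$, the supporting-anchor of $S_1$ lies in $T_2$ by Fact~\ref{fact01}, so after $S_1$ leaves it drops to $O_1$ without becoming a $0$-anchor, and every other anchor of $K_1$ keeps its category. In $K_2$, the $0$-anchor count before the operation equals $1$ (old center an edge or star) or $3$ (old center a bi-star), and after the operation it equals the number of inactive $V(M)$-vertices inside $S_2$, which is at most $3$ because $v_2$ is always active. Hence the change is at most $3 - 1 = 2$, giving $\Delta g \le 2 + 5(-1) - 0 = -3 \le -1$.

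The main obstacle is the $v_2 = u$ case of the non-criticality of $K_2'$: one has to invoke the correct sub-case of Lemma~\ref{lemma21} based on whether $S_2$ is a bi-star or an edge/star and on the resulting membership of $v_2$ in $T_1$ or $T_2$. This bookkeeping is routine once the sub-cases are laid out, but it is the one step where it is easy to slip on a combination.
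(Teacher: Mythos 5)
Your proposal is correct, but it takes a noticeably more elaborate route than the paper's own proof, which is a short computation. The paper first observes (via Lemma~\ref{lemma09}) that $K_2$'s single satellite-element $S_2$ is a bi-star when $K_2$'s center is an edge or star, and that $S_2$ is an edge, star, or bi-star when $K_2$'s center is a bi-star; hence $s(K_2)\in\{6,8\}$. Adding the bi-star $S_1$ gives $s(K_2')\in\{10,12\}$. Then one simply invokes Statement~1 of Fact~\ref{fact02} --- a critical component has $s(K)\in\{14,16,18,30,32\}$ --- to conclude at once that $K_2$ is not critical (so $K_1\ne K_2$) and that $K_2'$ is not critical either; $K_1'$ is not critical by Statement~2 of Fact~\ref{fact02}, and $\Delta n_0 \le 4$ is bounded crudely by $|V(S_2)\cap V(M)|\le 4$. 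You reach the same conclusions but route the non-criticality of $K_2$ through Lemma~\ref{lemma17}, and the non-criticality of $K_2'$ through a case split on whether $v_2$ coincides with the endpoint of $S_2$'s former rescue-edge, invoking Lemma~\ref{lemma21} in the coincident case; this is all valid but substantially more machinery than the one-line $s$-value check, which already disposes of both sub-cases simultaneously. Your sharper bound $\Delta n_0\le 2$ (via anchor bookkeeping in $K_1$ and $K_2'$) is also correct but unnecessary for the stated conclusion $\Delta g\le -1$. One small organizational slip: you invoke Statement~2 of Fact~\ref{fact02} (which presupposes $v_2\notin V(K_1)$, i.e.\ $K_1\ne K_2$) already in your first paragraph when proving $\Delta n_{cc}=0$, yet you only establish $K_1\ne K_2$ --- via the non-criticality of $K_2$ --- in the second paragraph; the order should be reversed, though all the needed content is present.
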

\begin{proof}
Clearly, Operation~2 never changes $n_{cc}$, but could increase $n_0$ by at most~$4$ because $S_2$ is a bad component of $H$ and hence $|V(S_2) \cap V(M)|\le 4$.
For convenience, let $K'_i$ denote the modified $K_i$ after the operation for each $i\in\{1,2\}$. 
If the center element of $K_2$ is an edge or star, then by Statement~2 of Lemma~\ref{lemma09}, $S_2$ is bi-star and in turn $s(K_2) = 6$. 
On the other hand, if the center element of $K_2$ is a bi-star,
then by Statement~1 of Lemma~\ref{lemma09}, $S_2$ may be an edge, star, or bi-star, and hence $s(K_2) = 6$ or $8$.
So, in summary, $s(K_2) \in \{ 6, 8 \}$. 
Hence, by Statement~1 of Fact~\ref{fact02}, $K_2$ is not critical. 
Consequently, $K_1 \ne K_2$, and in turn $K'_1$ is not critical. 
Moreover, since $S_1$ is a bi-star, $s(K'_2) \in \{ 10, 12 \}$; hence, by Statement~1 of Fact~\ref{fact02} again, $K'_2$ is not critical. 
Therefore, the operation decreases $n_c$ by at least~$1$ and hence decreases $g$ by at least~$1$.
\end{proof}

\begin{operation}   
\label{op03}     
Suppose that the center element of $K_2$ is a $5$-path or $K_2$ has two or more satellite-elements, and $v_2$ is in a satellite-element $S_2$ of $K_2$. 
Then, the operation modifies $C$ by replacing the rescue-edges of $S_1$ and $S_2$ with the edge $\{v_1, v_2\}$,
and updates $S_2$ to be the center element of the new component $K_3$ of $H+C$ formed by $S_1$ and $S_2$ together (see for an illustration in Figure~\ref{fig07}).
\end{operation}

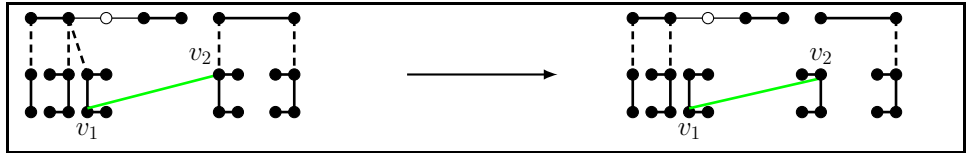
\begin{figure}[thb]
\begin{center}
\framebox{
\begin{minipage}{0.75\textwidth}
\begin{tikzpicture}[scale=0.5,transform shape]
\draw [thick, line width = 1pt] (-8, -10) -- (-7, -10);
\draw [thick, line width = 1pt] (-5, -10) -- (-4, -10);
\draw [thin, line width = 0.5pt] (-7, -10) -- (-6, -10);
\draw [thin, line width = 0.5pt] (-6, -10) -- (-5, -10);
\draw [densely dashed, line width = 1pt] (-8, -10) -- (-8,-11.5);
\draw [thick, line width = 1pt] (-8, -11.5) -- (-8, -12.5);
\draw [densely dashed, line width = 1pt] (-7, -10) -- (-7, -11.5);
\draw [thick, line width = 1pt] (-7, -11.5) -- (-7, -12.5);
\draw [thick, line width = 1pt] (-7, -11.5) -- (-7.5, -11.5);
\draw [thick, line width = 1pt] (-7, -12.5) -- (-7.5, -12.5);
\draw [densely dashed, line width = 1pt] (-7, -10) -- (-6.5,-11.5);
\draw [thick, line width = 1pt] (-6.5, -11.5) -- (-6.5, -12.5);
\draw [thick, line width = 1pt] (-6.5, -11.5) -- (-6, -11.5);
\draw [thick, line width = 1pt] (-6.5, -12.5) -- (-6, -12.5);
\filldraw (-8, -10) circle(.15);
\filldraw (-7, -10) circle(.15);
\filldraw[fill = white] (-6, -10) circle(.15);
\filldraw (-5, -10) circle(.15);
\filldraw (-4, -10) circle(.15);
\filldraw (-8,-11.5) circle(.15);
\filldraw (-8,-12.5) circle(.15);
\filldraw (-7,-11.5) circle(.15);
\filldraw (-7,-12.5) circle(.15);
\filldraw (-7.5,-11.5) circle(.15);
\filldraw (-7.5,-12.5) circle(.15);
\filldraw (-6.5,-11.5) circle(.15);
\filldraw (-6.5,-12.5) circle(.15);
\filldraw (-6,-11.5) circle(.15);
\filldraw (-6,-12.5) circle(.15);

\node[font=\fontsize{20}{6}\selectfont] at (-3.5, -11) {$v_2$};
\node[font=\fontsize{20}{6}\selectfont] at (-6.5, -13) {$v_1$};

\draw [thick, line width = 1pt, color=green] (-6.5, -12.4) -- (-3, -11.5);

\draw [thick, line width = 1pt] (-3, -10) -- (-1, -10);
\draw [densely dashed, line width = 1pt] (-1, -10) -- (-1,-11.5);
\draw [thick, line width = 1pt] (-1, -11.5) -- (-1, -12.5);
\draw [thick, line width = 1pt] (-1, -11.5) -- (-1.5, -11.5);
\draw [thick, line width = 1pt] (-1, -12.5) -- (-1.5, -12.5);
\draw [densely dashed, line width = 1pt] (-3, -10) -- (-3,-11.5);
\draw [thick, line width = 1pt] (-3, -11.5) -- (-3, -12.5);
\draw [thick, line width = 1pt] (-3, -11.5) -- (-2.5, -11.5);
\draw [thick, line width = 1pt] (-3, -12.5) -- (-2.5, -12.5);
\filldraw (-3, -10) circle(.15);
\filldraw (-1, -10) circle(.15);
\filldraw (-3,-11.5) circle(.15);
\filldraw (-3,-12.5) circle(.15);
\filldraw (-2.5,-11.5) circle(.15);
\filldraw (-2.5,-12.5) circle(.15);
\filldraw (-1.5,-11.5) circle(.15);
\filldraw (-1.5,-12.5) circle(.15);
\filldraw (-1,-11.5) circle(.15);
\filldraw (-1,-12.5) circle(.15);

\draw [-latex, thick] (2, -11.5) to (6, -11.5);

\draw [thick, line width = 1pt] (8, -10) -- (9, -10);
\draw [thick, line width = 1pt] (11, -10) -- (12,-10);
\draw [thin, line width = 0.5pt] (9, -10) -- (10, -10);
\draw [thin, line width = 0.5pt] (10, -10) -- (11, -10);
\draw [densely dashed, line width = 1pt] (8, -10) -- (8, -11.5);
\draw [thick, line width = 1pt] (8, -11.5) -- (8, -12.5);
\draw [densely dashed, line width = 1pt] (9, -10) -- (9, -11.5);
\draw [thick, line width = 1pt] (9, -11.5) -- (9, -12.5);
\draw [thick, line width = 1pt] (9, -11.5) -- (8.5, -11.5);
\draw [thick, line width = 1pt] (9, -12.5) -- (8.5, -12.5);
\draw [thick, line width = 1pt] (9.5, -11.5) -- (9.5, -12.5);
\draw [thick, line width = 1pt] (9.5, -11.5) -- (10, -11.5);
\draw [thick, line width = 1pt] (9.5, -12.5) -- (10, -12.5);
\filldraw (8, -10) circle(.15);
\filldraw (9, -10) circle(.15);
\filldraw[fill = white] (10, -10) circle(.15);
\filldraw (11, -10) circle(.15);
\filldraw (12, -10) circle(.15);
\filldraw (8, -11.5) circle(.15);
\filldraw (8, -12.5) circle(.15);
\filldraw (9, -11.5) circle(.15);
\filldraw (9, -12.5) circle(.15);
\filldraw (9.5, -11.5) circle(.15);
\filldraw (9.5, -12.5) circle(.15);
\filldraw (10, -11.5) circle(.15);
\filldraw (10, -12.5) circle(.15);
\filldraw (8.5, -11.5) circle(.15);
\filldraw (8.5, -12.5) circle(.15);

\node[font=\fontsize{20}{6}\selectfont] at (13, -11) {$v_2$};
\node[font=\fontsize{20}{6}\selectfont] at (9.5, -13) {$v_1$};

\draw [thick, line width = 1pt, color=green] (9.5, -12.4) -- (13,-11.6);

\draw [thick, line width = 1pt] (13, -10) -- (15, -10);
\draw [thick, line width = 1pt] (13, -11.5) -- (13, -12.5);
\draw [thick, line width = 1pt] (13, -11.5) -- (12.5, -11.5);
\draw [thick, line width = 1pt] (13, -12.5) -- (12.5, -12.5);
\draw [densely dashed, line width = 1pt] (15, -10) -- (15,-11.5);
\draw [thick, line width = 1pt] (15, -11.5) -- (15, -12.5);
\draw [thick, line width = 1pt] (15, -11.5) -- (14.5, -11.5);
\draw [thick, line width = 1pt] (15, -12.5) -- (14.5, -12.5);
\filldraw (13, -10) circle(.15);
\filldraw (15, -10) circle(.15);
\filldraw (12.5,-11.5) circle(.15);
\filldraw (12.5,-12.5) circle(.15);
\filldraw (13,-11.5) circle(.15);
\filldraw (13,-12.5) circle(.15);
\filldraw (14.5,-11.5) circle(.15);
\filldraw (14.5,-12.5) circle(.15);
\filldraw (15,-11.5) circle(.15);
\filldraw (15,-12.5) circle(.15);
\end{tikzpicture}
\end{minipage}}
\end{center}
\captionsetup{width=1.0\linewidth}
\caption{An illustration of a representative possible case in Operation~\ref{op03},
	in which the green edge is $\{v_1, v_2\}$ and $v_2$ is in a satellite (respectively, center) element before (respectively, after) the operation.\label{fig07}}
\end{figure}

\begin{lemma}\label{lemma25} 
Operation~3 reduces the value of $g$ by at least~$1$.
\end{lemma}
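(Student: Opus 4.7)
The plan is to bound each summand in $g=n_0+5n_c-6n_{cc}$ separately. Unlike Operations~1 and~2, Operation~3 extracts a brand-new component $K_3=S_1\cup S_2$ joined by $\{v_1,v_2\}$, while leaving the shrunk remainders $K_1'$ and $K_2'$ behind; regardless of whether $K_1=K_2$ or not, the number of components goes up by exactly one, so $\Delta n_{cc}=+1$.

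For $\Delta n_c$, Statement~2 of Fact~\ref{fact02} contributes $-1$ from $K_1$. For the new $K_3$, observe first that $S_2$ cannot be a triangle: if it were, Statement~2 of Lemma~\ref{lemma05} applied to $\{v_1,v_2\}$ with $K=S_2$ would force $v_1$ to lie in a $5$-path of $H$, contradicting $v_1\in V(S_1)$ for the bi-star $S_1$. Thus $K_{3,c}=S_2$ is an edge, star, or bi-star, and $v_2$ is the only anchor of $K_3$ that anchors a satellite, so $K_3$ has no $2$-anchor; Lemma~\ref{lemma17} then gives $s(K_3)/\eta(K_3)<15/8$, so $K_3$ is not critical. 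For $K_2'$: if $K_{2,c}$ is a $5$-path and $S_2$ is the only satellite of $K_2$, then $K_2'=K_{2,c}$ is not composite and hence not critical; otherwise $K_2'$ is still composite and could, in the worst case, become critical, contributing at most $+1$. In total, $\Delta n_c\le 0$.

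For $\Delta n_0$, the only affected vertices are $u_1$, $u_2$, $v_2$, and the newly promoted vertices of $V(S_2)\cap V(M)$ inside $K_{3,c}$. By Fact~\ref{fact01}, $u_1\in T_2$, so it drops from a $2$-anchor to a $1$-anchor, contributing $0$; $v_2$ becomes a $1$-anchor of $K_3$, contributing $0$; $u_2$ adds at most $+1$ (the case where $u_2$ was a $1$-anchor and becomes a $0$-anchor, assuming $K_2'$ stays composite); the $|V(S_2)\cap V(M)|-1$ vertices of $K_{3,c}$ other than $v_2$ become $0$-anchors of $K_3$, contributing at most $+3$; and when $K_2'$ becomes non-composite, the four $0$-anchors in $V(K_{2,c})\setminus\{u_2\}$ vanish, subtracting $4$. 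Hence $\Delta n_0\le \max\{1+3,\,-4+3\}=4$.

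Combining, $\Delta g\le 4+5\cdot 0-6\cdot 1=-2\le -1$, which proves the lemma. The main obstacle is the careful anchor bookkeeping: ensuring $\eta(K_3)\ge 5$ uniformly in the positions of $v_1$ inside $S_1$ and $v_2$ inside $S_2$ (so that Lemma~\ref{lemma17} applies), and handling the subcase $K_1=K_2$ where $u_1$ and $u_2$ may coincide or where a single anchor loses two satellites at once, so that the $\Delta n_0$ and $\Delta n_c$ bounds still carry through.
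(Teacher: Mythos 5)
Your proposal follows the same high-level strategy as the paper: bound $\Delta n_0$, $\Delta n_c$, and $\Delta n_{cc}$ separately and combine. The bookkeeping is largely sound, and your $\Delta n_0\le 4$ is in fact tighter than the paper's $\Delta n_0\le 5$ (both suffice). Two remarks on where you diverge from, or fall short of, the paper.

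First, your route to ``$K_3$ is not critical'' is longer than necessary. You first rule out $S_2$ being a triangle via Statement~2 of Lemma~\ref{lemma05} (a valid observation, and one that is genuinely needed if you are going to invoke Lemma~\ref{lemma17}, since its proof relies on Statement~2 of Lemma~\ref{lemma09} to exclude triangle centers), and then apply Lemma~\ref{lemma17}. The paper's argument is a one-liner: $S_1$ and $S_2$ are both bad components, so $s(K_3)\le 8 < 14$, and Statement~1 of Fact~\ref{fact02} says critical components have $s\in\{14,16,18,30,32\}$. The $s(K_3)\le 8$ argument makes the entire anchor/triangle discussion for $K_3$ unnecessary.

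Second, and more substantively, your ``$-1$ from $K_1$'' invokes Statement~2 of Fact~\ref{fact02}, but that statement is about moving $S_1$ to a vertex \emph{not} in $K_1$, so it applies verbatim only when $K_1\ne K_2$. In the subcase $K_1=K_2$, Operation~3 strips \emph{two} satellite-elements ($S_1$ and $S_2$) from $K_1$ and $v_2$ is initially inside $K_1$, so Fact~\ref{fact02} does not directly cover it. You acknowledge this in your closing paragraph but do not actually resolve it. The paper devotes the bulk of its proof precisely to this subcase: if $K_1$ has one $2$-anchor the conclusion is immediate; if it has two, then Lemmas~\ref{lemma19} and~\ref{lemma20} give $s(K_1)\in\{30,32\}$, so after removing a bi-star $S_1$ and $S_2$ one gets $s(K_1')\in\{22,24,26\}$, which is not in $\{14,16,18,30,32\}$, hence $K_1'$ is not critical; and $s(K_1)\ge 14$ forces at least three satellite-elements, so $K_1'$ is not isolated. (As it happens, $\Delta n_c\le 0$ would also follow in the $K_1=K_2$ subcase from the coarser observation that a single critical $K_1$ splits into $K_1'$ and the non-critical $K_3$, giving $\Delta n_c \le [K_1'\ \text{critical}]-1\le 0$; but as written your accounting double-counts $K_1$ and $K_2'$ when they coincide, and it leaves unverified the non-isolation of $K_1'$ that your $\Delta n_0$ bound quietly relies on.) So the paper's case analysis is the genuine content you left out.

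Everything else in the proposal --- the $\Delta n_{cc}=+1$ observation, the handling of $u_1$, $u_2$, $v_2$, the new $0$-anchors in $K_{3,c}$, and the offset when $K_2'$ becomes non-composite --- matches the paper's reasoning (with slightly sharper constants).
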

\begin{proof}
Clearly, Operation~3 increases $n_{cc}$ by~$1$. 
Moreover, since both $S_1$ and $S_2$ are bad components of $H$, $s(K_3) \le 8$ and hence $K_3$ is not critical by Statement~1 of Fact~\ref{fact02}. 
For convenience, let $K'_i$ denote the modified $K_i$ after the operation for each $i\in\{1,2\}$. 

We claim that $K'_1$ is not critical. 
If $K_1 \ne K_2$, the claim holds because of Statement~2 of Fact~\ref{fact02}.
Thus, we may assume that $K_1 = K_2$.
By Statement~1 of Fact~\ref{fact02}, $K_1$ has exactly one or two $2$-anchors.
If $K_1$ has two $2$-anchors, then the operation removes two satellite elements from $K_1$ including a critical satellite $S_1$.
By the first statement of Fact~\ref{fact02}, $s(K_1) \ge 14$ and therefore, $K_1$ has at least three satellite elements.
It follows that $K'_1$ is not isolated.

If $K_1$ has only one $2$-anchor, then $K'_1$ has no $2$-anchor and hence is not critical by Statement~1 of Fact~\ref{fact02}. 
So, we may assume that $K_1$ has exactly two $2$-anchors. 
Then, by Lemmas~\ref{lemma19} and~\ref{lemma20}, $s(K_1) \in \{ 30, 32 \}$. 
Since $S_1$ is a bi-star, the operation removes two satellite-elements from $K_1$ including $S_1$ and hence changes $s(K'_1) \in \{ 22, 24, 26 \}$. 
Therefore, Statement~1 of Fact~\ref{fact02} implies that $K'_1$ is not critical. 
In summary, the claim always holds. 

By the above claim, the operation does not increase $n_c$. 
Moreover, by Notation~\ref{nota05}, the operation increases $n_0$ by at most~$5$ because $|V(K_3) \cap V(M)| \le 4$ and $K'_2$ has at most one more $0$-anchor than $K_2$.
Therefore, the operation decreases $g$ by at least~$1$.
\end{proof}

\begin{lemma}
\label{lemma26}
Operations~\ref{op01}--\ref{op03} can be performed $O(n)$ times in total. 
Moreover, when none of them is applicable, the following hold:
\begin{enumerate}
\parskip=0pt
\item $C$ is still a maximum-weighted path-cycle cover of $G'$.
\item Suppose that $G$ has an edge $\{v_1, v_2\}$ such that $v_1$ is in a critical satellite-element $S_1$ of $H + C$ and $v_2 \notin V(S_1)$.
	Then, $v_2$ is a $2$-anchor or a responsible $1$-anchor.
\end{enumerate}
\end{lemma}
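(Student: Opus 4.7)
The plan is first to establish the $O(n)$ bound. Since $n_0 \le n$, $n_c \le n$, and $n_{cc} \le n$, we have $|g| = O(n)$; by Lemmas~\ref{lemma23}--\ref{lemma25}, each of Operations~\ref{op01}--\ref{op03} decreases $g$ by at least one, so at most $O(n)$ applications are possible. For Statement~1, I would verify by direct inspection that every application of Operations~\ref{op01}--\ref{op03} keeps $C$ a path-cycle cover of $G'$ (each involved vertex has degree at most $2$ after the move) and preserves its weight: the rescue-edge removed from each affected satellite-element is immediately replaced by a new edge still incident to that satellite-element, so every previously rescued bad component remains rescued and no new bad component becomes rescued. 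Hence $C$ stays maximum-weighted.

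For Statement~2, let $K_2$ denote the component of $H+C$ containing $v_2$. If $\{v_1,v_2\}$ is itself the rescue-edge of $S_1$, then $v_2$ is the supporting-anchor of $S_1$, and since $S_1$ is a critical satellite (Definition~\ref{def10}) this supporting-anchor is a $2$-anchor of a critical component, so we are done. Otherwise, case-split on $v_2$'s role in $K_2$. If $K_2$ is composite and $v_2$ lies in some satellite-element, then either Operation~\ref{op02} or Operation~\ref{op03} applies (depending on whether $K_2$ has exactly one satellite with an edge/star/bi-star center, or the center is a $5$-path or there are multiple satellites), contradicting op-termination. If $K_2$ is composite and $v_2$ is a $0$-anchor or a non-responsible $1$-anchor, Operation~\ref{op01} applies, again a contradiction, so $v_2$ must be a $2$-anchor or a responsible $1$-anchor. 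If $K_2$ is composite and $v_2 \in V(K_c)\setminus V(M)$ with a bad center $K_c$ (so $K_c$ is not a $5$-path of $H$), then Lemma~\ref{lemma05}(1) applied to $\{v_1,v_2\}$ forces $v_2 \in V(M)$, a contradiction. If $K_2$ is a non-composite bad component, then adding $\{v_1,v_2\}$ to $C$ keeps $C$ a path-cycle cover (the degrees of $v_1,v_2$ move from $1,0$ to $2,1$) and strictly increases its weight by rescuing $K_2$, contradicting the maximality of $C$.

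The main obstacle is the remaining case, in which $K_2$ is a non-composite $5$-path of $H$: adding $\{v_1,v_2\}$ to $C$ preserves the weight, so maximality alone yields no contradiction. My plan for this case is to swap $S_1$'s rescue-edge $e$ with $\{v_1,v_2\}$, producing another lean maximum-weighted path-cycle cover $C'$ under which $v_2$ becomes a $1$-anchor of a new composite component $K'_2$ whose center is the $5$-path $K_2$ and whose unique satellite is $S_1$. A direct count gives $s(K'_2)=8$, and attaching any further bi-star satellite at $v_2$ only raises the count to $12$, still below the critical thresholds $\{14,16,18,30,32\}$ listed in Fact~\ref{fact01}. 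Hence $v_2$ is a non-responsible $1$-anchor in $C'$, so Operation~\ref{op01} would apply in $C'$, which, once op-termination is interpreted as stability across equivalent lean max-weight covers, contradicts our hypothesis. Making this swap argument watertight is the delicate step; an alternative would be to rule the case out entirely by invoking an earlier augmenting-pair test from Step~1.1 on $K_2$ together with one of the two edge components of $H_1$ that were merged into $S_1$.
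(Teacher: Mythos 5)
Your argument matches the paper on the $O(n)$ bound and on Statement~1 (the paper's Statement~1 justification is simply that no operation creates a new isolated bad component, which is the content of your preservation check). On Statement~2 your case analysis is more granular than the paper's, and you correctly isolate the one genuinely delicate case: $K_2$ being an isolated $5$-path of $H$ in $H+C$. The paper's own proof does not treat this case either --- it passes from ``$v_2$ is not in a satellite-element'' directly to ``$v_2$ is in a center element,'' which silently presupposes that $K_2$ is composite. Maximality of $C$ disposes of the isolated bad-component case, as you note, but not of an isolated $5$-path, so your concern is a real one and it propagates into the use of this lemma in Lemma~\ref{lemma27}.

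That said, neither of your two proposed fixes closes the gap as written. The ``stability across equivalent lean max-weight covers'' interpretation asserts something the algorithm does not enforce: Operations~\ref{op01}--\ref{op03} act on the single current $C$, and the lemma is a statement about $C$ at termination, not about every max-weight cover. The alternative augmenting-pair appeal also does not obviously go through: $v_1$ may be a satellite vertex of the bi-star $S_1$ that was added only in Step~1.3, hence lies in no edge component of $H_1$; and even when $v_1$ does lie in an edge component of $H_1$, the endpoint $v_2$ on $K_2$ need not be an endpoint or middle vertex of the $5$-path as Definition~\ref{def03} requires. The cleanest repair is the one your own $g$-accounting already suggests: read (or amend) Operation~\ref{op01} so that it also fires when $v_2$ lies in an isolated $5$-path component of $H+C$. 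After that move $K_2$ becomes composite with $v_2$ a $1$-anchor, four new $0$-anchors appear (so $n_0$ rises by~$4$), $K_1$ ceases to be critical (so $n_c$ drops by~$1$), and $n_{cc}$ is unchanged; hence $g$ still decreases by at least one, exactly as in Lemma~\ref{lemma23}. With that broadened precondition, the trichotomy ``satellite-element / center element / isolated $5$-path'' becomes exhaustive and Statement~2 follows as the paper intends; you should present the argument that way rather than reasoning about counterfactual covers.
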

\begin{proof}
Since $-6n \le g \le 6n$ by its definition and each operation reduces the value of $g$ by at least~$1$ (cf. Lemmas~\ref{lemma23}--\ref{lemma25}), 
the three operations can be performed $O(n)$ times in total. 
Moreover, it is easy to verify that no operation creates a new isolated bad component in $H+C$,
and hence $C$ is still a maximum-weighted path-cycle cover of $G'$ after each operation.

We next prove Statement~2.
If $v_2$ were in a satellite-element of $H+C$, then clearly Operation~\ref{op02} or~\ref{op03} would have been applicable, a contradiction.
So, we may assume that $v_2$ is in a center element of $H+C$.
Since $S_1$ is a bi-star, Statement~1 of Lemma~\ref{lemma05} implies that $v_2$ is in a $5$-path or $v_2 \in V(M)$.
Thus, by Definition~\ref{def08} and the construction of trunks $\widetilde{K}$, $v_2$ is an anchor. 
Now, since Operation~\ref{op01} is not applicable, $v_2$ is a $2$-anchor or a responsible $1$-anchor.
\end{proof}

\subsection{The complete algorithm}
Let $R$ denote the set of vertices $v \in V(H)$ such that $v$ is a $2$-anchor or a responsible $1$-anchor in $H+C$.
By Lemma~\ref{lemma26}, once none of Operations~\ref{op01}--\ref{op03} is applicable, critical satellite-elements of $H+C$ can be incident to only those vertices of $R$ in $G$. 
For each connected component $K$ of $H+C$, $|R \cap V(K)|$ is bounded by the total number of anchors in $K_c$.
So, $|R \cap V(K)| \le 5$ by Notation~\ref{nota05}. 
In particular, if $K$ is critical, then by Statements~1 and~3 of Fact~\ref{fact02}, 
$K$ contains exactly one or two $2$-anchors but no responsible $1$-anchor, i.e., $|R \cap V(K)| \in \{1, 2\}$.

\begin{itemize}
\parskip=0pt
\item Let $\mcK$ be the set of composite components or isolated $5$-paths of $H+C$.
\item For each $i\in\{0,1,2,3,4,5\}$, let $\mcK_i$ be the set of all $K\in \mcK$ with $|R \cap V(K)| = i$. 
\item For each $i\in\{1,2\}$, let $\mcK_{i,c}$ be the set of critical components in $\mcK_i$. 
\item Let $R_c \subseteq R$ be the set of critical anchors. 
\item $U_c = \bigcup_{v\in R_c} \{w\in V(H) \mid w$ is in a critical satellite-element whose rescue-anchor is $v\}$. 
\item Let $G_c = G[V(G) \setminus (R_c\cup U_c)]$. 
\end{itemize}

We next present a lemma which suggests that we may reduce the problem of finding a good feasible solution for $G$ to finding a good feasible solution for the smaller instance $G_c$.

\begin{lemma}
\label{lemma27} 
$opt(G) \le opt(G_c) + 9\sum_{i=1}^5 i|\mcK_i|$.
\end{lemma}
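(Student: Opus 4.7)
The plan is to exhibit a feasible solution for $G_c$ derived from $OPT(G)$ by deleting the vertices of $R_c \cup U_c$, and to bound the loss by charging credits to the critical anchors and summing.

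First, for each $P \in OPT(G)$, I would delete from $P$ the vertices lying in $R_c \cup U_c$, producing sub-paths supported in $V(G_c)$ whose edges lie in $G_c$. Retaining those sub-paths with at least five vertices yields a collection $\mcP$ of vertex-disjoint $5^+$-paths in $G_c$, so $opt(G_c) \ge |V(\mcP)|$. Consequently
\[
opt(G) - opt(G_c) \le opt(G) - |V(\mcP)| = |V(OPT(G)) \cap (R_c \cup U_c)| + W,
\]
where $W$ is the total vertex-count in the discarded ($<\!5$)-sub-paths.

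Second, I would attribute this loss to critical anchors. By Statement~3 of Fact~\ref{fact02} combined with Definition~\ref{def12}, a critical component contains no responsible $1$-anchor, so every critical $K \in \mcK_{j,c}$ satisfies $|R \cap V(K)| = j$ and therefore $K \in \mcK_j$; in particular $|R_c| = \sum_{j=1}^{2} j|\mcK_{j,c}| \le \sum_{i=1}^{5} i|\mcK_i|$. For each $v \in R_c$, set $A_v := \{v\} \cup V(S_1^v) \cup V(S_2^v)$, where $S_1^v, S_2^v$ are the two critical satellites of $v$; the clusters $\{A_v\}_{v \in R_c}$ are pairwise disjoint and partition $R_c \cup U_c$. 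Letting $W_v$ be the part of $W$ due to cuts inside $A_v$, it then suffices to establish, for each critical anchor $v$,
\[
|V(OPT(G)) \cap A_v| + W_v \le 9.
\]
Summing over $v \in R_c$ gives $opt(G) - opt(G_c) \le 9|R_c| \le 9\sum_{i=1}^5 i|\mcK_i|$, as required.

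The main obstacle is establishing the per-cluster inequality. Any simple path of $G[A_v]$ has at most $9$ vertices: at most $4$ inside each bi-star satellite (since a bi-star admits no $5^+$-path) joined through $v$ via its two rescue-edges; moreover, $S_1^v$ and $S_2^v$ are separated in $G[A_v]$ by $v$ alone because Statement~2 of Lemma~\ref{lemma26} precludes direct edges between them in $G$, so at most one $5^+$-sub-path fits in $G[A_v]$. The external connectivity of critical bi-stars is also highly restricted: Statement~2 of Lemma~\ref{lemma26} forces external edges of critical satellites to reach only $2$-anchors or responsible $1$-anchors, while Statement~2 of Lemma~\ref{lemma05} confines any extra satellite (a bi-star vertex outside $V(M)$ added in Step~1.3) to connect externally only to the internal-but-not-middle vertices of $5$-paths. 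These constraints limit how many vertex-disjoint $OPT(G)$-segments can thread through $A_v$ and how large the $(<\!5)$-pieces adjacent to the deletion can be. The technical heart, requiring a case analysis over the structures enumerated in Figure~\ref{fig03}, is to handle the bi-star extras cleanly: although $|A_v|$ may a priori exceed $9$ when extras are present, the restricted external connectivity of these extras together with the vertex-disjointness of $OPT(G)$-paths should cap the joint contribution $|V(OPT(G)) \cap A_v| + W_v$ at $9$.
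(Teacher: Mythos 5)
Your proposal takes a genuinely different route from the paper, and the route does not close. The paper's proof is short and global: first invoke the WLOG that every path component of $OPT(G)$ has at most $9$ vertices (any longer path can be split into $5^+$-paths without losing vertices); then observe, via Lemma~\ref{lemma26}, that every vertex of a critical satellite-element has all its external $G$-neighbors in $R$, so any $5^+$-path of $OPT(G)$ that meets $R_c\cup U_c$ must contain a vertex of $R$; by disjointness at most $|R|$ paths are destroyed, each contributing at most $9$ lost vertices, giving $opt(G)\le opt(G_c)+9|R| = opt(G_c)+9\sum_{i=1}^5 i|\mcK_i|$. You, by contrast, try to localize the loss into clusters $A_v$ around critical anchors $v\in R_c$ and prove the sharper bound $9|R_c|$. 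That per-cluster inequality $|V(OPT(G))\cap A_v| + W_v \le 9$ is the missing heart of your argument, and it is almost certainly false. A critical satellite-element $S_1^v$ is a bi-star of $H$ and can have arbitrarily many extra leaves added in Step~1.3; each such leaf may be adjacent in $G$ to two distinct $R$-vertices (Lemma~\ref{lemma05} and Lemma~\ref{lemma26} permit this), so $OPT(G)$ can contain many vertex-disjoint $5^+$-paths each threading through a single leaf of $S_1^v$ and exiting to the rest of $G$ via two fresh $R$-vertices. In that situation $|V(OPT(G))\cap A_v|$ alone can already exceed $9$, and the discarded-fragment term $W_v$ makes things worse. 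The restricted external connectivity you cite only says the neighbors lie in $R$; it does not bound how many distinct $R$-vertices are adjacent to $A_v$, which is exactly what your clustered charging would need.

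Two concrete fixes if you want to repair this: charge to $R$ rather than $R_c$ (the lemma only needs $9|R|$, and $|R|=\sum_{i=1}^5 i|\mcK_i|$ by definition of $\mcK_i$), and charge \emph{per destroyed path}, not per cluster. The step you omitted, the normalization that each $OPT(G)$ path has at most $9$ vertices, is what makes the per-path charge $\le 9$; without it even the global bound fails, so your argument needs it regardless of the decomposition chosen.
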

\begin{proof}
First of all, we can assume that each path component of $OPT(G)$ has at most $9$ vertices, 
because otherwise, we can break it into two or more $5^+$-paths without changing the number of vertices in the solution.

Consider a critical satellite-element $S$. 
Recall that $S$ must be a bi-star and thus cannot form a $5^+$-path by itself alone.
By Lemma~\ref{lemma26}, each vertex of $S$ can be adjacent to only those vertices of $R$ in $G$. 
So, removing all the vertices of $R_c \cup U_c$ from $G$ destroys at most $|R|$ paths in $OPT(G)$, 
and each path component of $OPT(G)$ containing no vertex of $R$ remains to be a $5^+$-path. 
Thus, $opt(G_c) \ge opt(G) - 9|R|$.
Now, since $|R| = \sum_{i=1}^5 i|\mcK_i|$, the lemma holds.
\end{proof}

Let $r = \frac {26 + \sqrt{3826}}{35} < 2.511$ be the unique positive root to the quadratic equation $35x^2 - 52x - 90 = 0$.
Our complete algorithm proceeds as follows.

\begin{enumerate}
\parskip=0pt
\item[0.] If $|V(G)| \le 5$, then find an optimal solution by brute-force search, output it, and halt.
\item Compute a maximum matching $M$ in $G$ and initialize a subgraph $H=(V(M),M)$ of $G$.
\item Modify $H$ and $M$ by performing Steps~1.1--1.3 presented in Section~\ref{subsec:k=5}.
\item Compute a set $C$ of edges between the connected components of $M$ in $G$ by performing Steps~2.1--2.3 presented in Section~\ref{subsec:rescue}.
\item Modify $C$ by repeatedly performing Operations~\ref{op01}--\ref{op03} presented in 	Section~\ref{subsec:op} until none of them is applicable.
	({\em Comments:} After this step, some edges of $C$ may become redundant and we may remove them as in Step~2.3.)
\item If no connected component of $H+C$ is critical, or $\frac {\sum_{i=1}^5 i|\mcK_i|}{|\mcK_{1,c}|+2|\mcK_{2,c}|} > \frac {7}9r$, then perform the following two steps:
\begin{itemize}
\parskip=0pt
\item[5.1.] For each connected component $K$ of $H+C$ that is not an isolated bad component of $H$, compute $OPT(\widetilde{K})$ as in Lemma~\ref{lemma10}.
\item[5.2.] Output the union of the sets of $5^+$-paths obtained in Step~5.1, and halt. 
\end{itemize}
\item If there is at least one critical component and $\frac {\sum_{i=1}^5 i|\mcK_i|}{|\mcK_{1,c}|+2|\mcK_{2,c}|} \le \frac {7}9r$, then perform the following three steps:
\begin{itemize}
\parskip=0pt
\item[6.1.] Recursively call the algorithm on the graph $G_c$ to obtain a set $ALG(G_c)$ of vertex-disjoint $5^+$-paths in $G_c$. 
\item[6.2.] For each $v \in R_c$, compute $P_v$ which is a $7^+$-path because $v \in T_2$.
\item[6.3.] Output the union of $ALG(G_c)$ and $\cup_{v\in R_c} P_v$, and halt.
\end{itemize}
\end{enumerate}

\subsection{The remaining analysis for the algorithm}\label{sec:ana4}
The next fact directly follows from Lemmas~\ref{lemma16}--\ref{lemma22}.

\begin{fact}
\label{fact03}
For each connected component $K$ of $H+C$, the following statements hold:
\begin{enumerate}
\parskip=0pt
\item If $K \in \mcK_0$, then $\frac {s(K)}{\eta(K)} < \frac {15}8$.
\item $\frac {s(K)}{\eta(K)} \preceq \{\frac {16}7, \frac {18}8\}$ if $K \in \mcK_{1, c}$;
	while $\frac {s(K)}{\eta(K)} \preceq \frac {2h+8}{h+5}$ for some $h = 0, 1, \ldots, 10$ if $K \in \mcK_1 \setminus \mcK_{1, c}$. 
\item $\frac {s(K)}{\eta(K)} \preceq \max\{\frac {30}{16}, \frac {32}{17}\}$ if $K \in \mcK_{2, c}$;
	while $\frac {s(K)}{\eta(K)} \preceq \frac {2h+16}{h+10}$ for some $h = 0, 1, \ldots, 8$ if $K \in \mcK_2 \setminus \mcK_{2, c}$. 
\item If $K \in \mcK_3$, then $\frac {s(K)}{\eta(K)} \preceq \frac {2h+24}{h+15}$ for some $h = 0, 1, \ldots, 6$. 
\item If $K \in \mcK_4$, then $\frac {s(K)}{\eta(K)} \preceq \{\frac {32}{20}, \frac {38}{24}, \frac {40}{28}\}$.
\item If $K \in \mcK_5$, then $\frac {s(K)}{\eta(K)} \preceq \{\frac {40}{25}, \frac {44}{33}\}$.
\end{enumerate}
\end{fact}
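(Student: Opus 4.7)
The plan is to prove Fact~\ref{fact03} by a six-way case split on $i = |R \cap V(K)|$, reducing each case to the corresponding classification by number of $2$-anchors in $K_c$ treated by Lemmas~\ref{lemma16}--\ref{lemma22}. The principal preliminary observation, justified by Statement~3 of Fact~\ref{fact02} and the remark following Definition~\ref{def12}, is that a critical component contains no responsible $1$-anchor, so $|R \cap V(K)|$ equals the number of $2$-anchors in $K_c$ whenever $K$ is critical. For non-critical $K$, responsible $1$-anchors may contribute to the count, so the $2$-anchor total in $K_c$ can be strictly smaller than $|R \cap V(K)|$.

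For Statement~1, $K \in \mcK_0$ forces $K_c$ to have no $2$-anchor, and Lemmas~\ref{lemma16} and~\ref{lemma17} together give $\frac{s(K)}{\eta(K)} < \frac{15}{8}$. For Statement~2, I would handle $\mcK_{1,c}$ by observing that the unique $R$-vertex is a $2$-anchor, then read off the enumerated critical ratios $\frac{14}{7}, \frac{16}{7}, \frac{16}{8}, \frac{18}{8}, \frac{18}{9}$ from Lemmas~\ref{lemma21} (bi-star center, Statement~2) and~\ref{lemma22} (5-path center) and verify each is $\preceq \{\frac{16}{7}, \frac{18}{8}\}$ by comparing numerators and denominators. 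For $K \in \mcK_1 \setminus \mcK_{1,c}$, I would split according to whether the $R$-vertex is a $2$-anchor (invoke the ``not critical'' parts of Lemmas~\ref{lemma21} and~\ref{lemma22}, giving $\preceq \frac{2h+8}{h+5}$ directly) or a responsible $1$-anchor with $K_c$ having no $2$-anchor (invoke Lemmas~\ref{lemma16}--\ref{lemma17} and cross-check the responsible-component structures catalogued in Figure~\ref{fig04} against the envelope). Statement~3 for $\mcK_2$ is handled analogously, invoking Lemmas~\ref{lemma19} and~\ref{lemma20} for the critical and non-critical subcases.

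For Statements~4--6, Statement~1 of Fact~\ref{fact02} forces $K$ to be non-critical, and when all $R$-vertices of $K$ are $2$-anchors I would apply Statements~3, 2, and~1 of Lemma~\ref{lemma18} respectively. The cases with some $R$-vertex being a responsible $1$-anchor reduce to a smaller $2$-anchor count, so I would chain back to an earlier lemma and verify that the resulting bound is subsumed by the larger envelope claimed for $\mcK_3$, $\mcK_4$, or $\mcK_5$.

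The main obstacle is the non-critical bookkeeping: the lemmas are indexed by $2$-anchor count in $K_c$, whereas $\mcK_i$ is indexed by $|R \cap V(K)|$, and these can disagree precisely when responsible $1$-anchors are present. I will need to verify that the actual $(s(K), \eta(K))$ pairs arising from responsible components, obtainable from the explicit structures in Figure~\ref{fig04}, still fit the claimed envelope $\preceq \frac{2h+b}{h+b'}$ for appropriate $h$. This amounts to a finite enumeration, but requires care, particularly for $\mcK_4$ and $\mcK_5$ where up to five responsible $1$-anchors could in principle coexist with no $2$-anchors, and I expect to rely on the fact that the envelope bounds are monotone enough in the $2$-anchor count to absorb the substitution.
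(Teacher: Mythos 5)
Your high-level plan---split on $|R \cap V(K)|$, reduce to the number of $2$-anchors in $K_c$, and invoke Lemmas~\ref{lemma16}--\ref{lemma22}---matches the paper's intent, and you are right that the crux is the possible mismatch between $|R \cap V(K)|$ and the $2$-anchor count when responsible $1$-anchors are present. Your treatment of Statements~1--3 is essentially sound (modulo carrying out the finite check against Figure~\ref{fig04}).

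However, your handling of Statements~4--6 has a genuine gap. You propose that if some $R$-vertices are responsible $1$-anchors you would ``chain back to an earlier lemma and verify that the resulting bound is subsumed by the larger envelope,'' and you say you ``expect to rely on the fact that the envelope bounds are monotone enough in the $2$-anchor count.'' This will not work: the envelope for $\mcK_3$ is $\frac{2h+24}{h+15}$, i.e.\ it requires $\eta(K)\ge 15$, whereas a component with only one $2$-anchor can, by Lemmas~\ref{lemma21} and~\ref{lemma22}, have $\eta(K)$ as small as $5$; the putative ``smaller $2$-anchor count'' bound $\frac{2h+8}{h+5}$ is simply not $\preceq$-subsumed by $\frac{2h'+24}{h'+15}$ for any admissible $h'$. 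Likewise your scenario of ``up to five responsible $1$-anchors coexisting with no $2$-anchors'' in $\mcK_5$ cannot arise. The correct resolution is structural, not numerical: a responsible component is (by the discussion following Definition~\ref{def12} and Figure~\ref{fig04}) obtained from a critical component by deleting one critical satellite; since critical components have at most two $2$-anchors, the demotion leaves at most one $2$-anchor, and the structural characterizations in Lemmas~\ref{lemma20}--\ref{lemma22} force exactly one responsible $1$-anchor. Hence any component possessing a responsible $1$-anchor satisfies $|R\cap V(K)|\le 2$, so for $i\ge 3$ every $R$-vertex of $K\in\mcK_i$ is a $2$-anchor and Lemma~\ref{lemma18} applies directly with no subsumption argument needed. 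Your proposal, as written, leaves open (and would in fact fail to close) the cases it admits for $\mcK_3,\mcK_4,\mcK_5$.
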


\begin{lemma}
\label{lemma28}
If there is no critical component, then $opt(G) < \frac {75}{32}\cdot alg(G)$.
\end{lemma}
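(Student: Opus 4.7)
The plan is to combine Lemma~\ref{lemma07} (upper bounding $opt(G)$ by $\tfrac54|V(M_C)|$) with Fact~\ref{fact03} (case-by-case bounds on $\tfrac{s(K)}{\eta(K)}$), and with the observation that in the ``no-critical-component'' branch the algorithm halts at Step~5 and returns $\bigcup_K OPT(\widetilde K)$ over every $K$ that is not an isolated bad component of $H$. The identity I want is
\[
alg(G) \;=\; \sum_{K\in\mcK} \eta(K), \qquad |V(M_C)| \;=\; \sum_{K\in\mcK} s(K),
\]
where $\mcK$ is the set of composite components and isolated $5$-paths of $H+C$. The first equality is by construction of Step~5.1--5.2. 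The second follows because $V(M_C)$ consists of endpoints of edges of $M$ lying in $5$-paths of $H$ or in rescued bad components of $H$; every bad component contained in a composite component of $H+C$ is rescued (it has an incident edge of $C$), and isolated bad components contribute $0$ to both $M_C$ and $\mcK$.

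Next I would verify, using Fact~\ref{fact03}, that $\tfrac{s(K)}{\eta(K)}<\tfrac{15}{8}$ for every $K\in\mcK$ once the critical cases are removed. Isolated $5$-paths give $\tfrac{4}{5}$. For $K\in\mcK_0$ this is Statement~1. For $K\in\mcK_1\setminus\mcK_{1,c}$, the worst surviving entry is $\tfrac{2h+8}{h+5}$ at $h=10$, namely $\tfrac{28}{15}$, and $\tfrac{28}{15}<\tfrac{15}{8}$ holds because $28\cdot 8=224<225=15\cdot 15$. The remaining classes $\mcK_2\setminus\mcK_{2,c}$, $\mcK_3$, $\mcK_4$, and $\mcK_5$ yield at worst $\tfrac{16}{9}$, $\tfrac{12}{7}$, $\tfrac{8}{5}$, and $\tfrac{8}{5}$ respectively, each of which is $<\tfrac{15}{8}$.

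Chaining the inequalities then gives
\[
opt(G) \;\le\; \tfrac54\,|V(M_C)| \;=\; \tfrac54\sum_{K\in\mcK} s(K) \;<\; \tfrac54\cdot\tfrac{15}{8}\sum_{K\in\mcK}\eta(K) \;=\; \tfrac{75}{32}\,alg(G),
\]
provided at least one $K\in\mcK$ has $\eta(K)>0$; this holds in every non-trivial case because the $|V(G)|\le 5$ base case of Step~0 is handled exactly and otherwise $V(M)\neq\emptyset$ forces at least one non-isolated-bad component of $H+C$. The only real obstacle is the exhaustive check that removing the critical entries from Fact~\ref{fact03} leaves a strictly sub-$\tfrac{15}{8}$ ratio for every remaining subcase — a mechanical, integer-arithmetic verification, with the single delicate inequality being $\tfrac{28}{15}<\tfrac{15}{8}$ which holds by a margin of just $1/120$.
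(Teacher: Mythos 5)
Your argument is correct and follows the same chain of inequalities as the paper: $opt(G) \le \tfrac{5}{4}|V(M_C)| = \tfrac{5}{4}\sum_{K\in\mcK}s(K) < \tfrac{5}{4}\cdot\tfrac{15}{8}\sum_{K\in\mcK}\eta(K) = \tfrac{75}{32}\,alg(G)$. The one place you do extra work is in re-deriving $\tfrac{s(K)}{\eta(K)} < \tfrac{15}{8}$ from the case analysis of Fact~\ref{fact03}; the paper instead just invokes Definition~\ref{def07}, which \emph{defines} a critical component as one with $\tfrac{s(K)}{\eta(K)} \ge \alpha = \tfrac{15}{8}$, so the strict bound for every non-critical $K$ (in particular for every $K\in\mcK$ in the no-critical branch) is immediate and the detour through Fact~\ref{fact03} is unnecessary.
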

\begin{proof}
By Definition~\ref{def07}, we have $\frac {s(K)}{\eta(K)} < \frac {15}8$ for each connected component $K$ of $H+C$.
So, $alg(G) = \sum_K \eta(K) \ge \frac 8{15} \sum_K s(K) =  \frac 8{15} |V(M_C)|$.
By Lemma \ref{lemma07}, the lemma is proven.
\end{proof}

\begin{lemma}
\label{lemma29}
If there exists a critical component of $H+C$ and $\frac {\sum_{i=1}^5 i |\mcK_i|}{|\mcK_{1, c}| + 2|\mcK_{2, c}|} > \frac {7}9r$, then $opt(G) \le r \cdot alg(G)$.
\end{lemma}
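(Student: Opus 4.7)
The plan is a single weighted-sum argument comparing $5\,s(K)$ against $4r\,\eta(K)$ on each component and absorbing the excess from critical components using the hypothesis. First observe that $alg(G)=\sum_{K\in \mcK}\eta(K)$ (isolated bad components contribute $0$ and every other component contributes $\eta(K)$ by Step~5.1) and $|V(M_C)|=\sum_{K\in \mcK}s(K)$, so by Lemma~\ref{lemma07} it suffices to prove
\[
\sum_{K\in\mcK}\bigl(5\,s(K)-4r\,\eta(K)\bigr)\;<\;0,
\]
from which $opt(G)\le(5/4)|V(M_C)|<r\cdot alg(G)$ follows.

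Set $\alpha=80-28r$ and $\beta=20r-40$; both are positive since $5/2<r<3$. Invoking Fact~\ref{fact03} class by class, I would establish the following per-component inequalities. (i)~If $K\in\mcK_0$ (including isolated $5$-paths, for which $s(K)=4$ and $\eta(K)=5$), then $s(K)/\eta(K)<15/8$ makes $5\,s(K)-4r\,\eta(K)<0$. (ii)~For each critical $K\in\mcK_{j,c}$ with $j\in\{1,2\}$, the two admissible $(s,\eta)$-pairs give $5\,s(K)-4r\,\eta(K)\le\alpha\,j$; the tight case is $s(K)\le16$, $\eta(K)\ge7$ in $\mcK_{1,c}$, while in $\mcK_{2,c}$ the left-hand side is already negative. (iii)~For each non-critical $K\in\mcK_i$ with $i\ge1$, Fact~\ref{fact03} provides a parametric bound $s(K)/\eta(K)\preceq(2h+8i)/(h+5i)$ for $\mcK_1,\mcK_2,\mcK_3$; since $4r(h+5i)-5(2h+8i)=(4r-10)h+\beta\,i$ and $4r-10>0$, this quantity is minimised at $h=0$ with value $\beta\,i$, and the finite lists in $\mcK_4,\mcK_5$ each likewise dominate $\beta\,i$. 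Thus $5\,s(K)-4r\,\eta(K)\le-\beta\,i$ in every non-critical case.

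Summing over $K\in\mcK$ and writing $|R|=\sum_{i=1}^5 i|\mcK_i|$, $|R_c|=|\mcK_{1,c}|+2|\mcK_{2,c}|$, these bounds yield
\[
\sum_{K\in\mcK}\bigl(5\,s(K)-4r\,\eta(K)\bigr)\;\le\;\alpha\,|R_c|-\beta\,(|R|-|R_c|).
\]
The hypothesis $|R|>(7r/9)|R_c|$ rearranges to $|R|-|R_c|>\tfrac{7r-9}{9}|R_c|$, so the right-hand side is strictly negative provided the algebraic identity $(7r-9)\,\beta=9\,\alpha$ holds. I would verify it by expanding $(7r-9)(20r-40)=140r^2-460r+360$ and substituting $35r^2=52r+90$ (from $35r^2-52r-90=0$): this gives $140r^2=208r+360$, hence $(7r-9)\beta=720-252r=9(80-28r)=9\alpha$. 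Combined with the strict hypothesis, this produces $\sum_K(5\,s(K)-4r\,\eta(K))<0$, completing the proof.

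The main obstacle will be the per-component verification in step~(iii): each parametric family in Fact~\ref{fact03} must be shown uniformly to dominate the target value $\beta\,i$, and the finite-list classes $\mcK_4$, $\mcK_5$ must be checked individually (for instance $\mcK_5$'s case $40/25$ gives exact equality $5\,s(K)-4r\,\eta(K)=-5\beta$). The whole argument then rests on the single identity $(7r-9)(20r-40)=9(80-28r)$, which is equivalent to $35r^2-52r-90=0$; this identity is precisely what dictates the numerical constant $7r/9$ appearing in the hypothesis and thereby explains the threshold chosen in Step~6 of the algorithm.
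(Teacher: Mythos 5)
Your proof is correct and takes essentially the same route as the paper's: setting $\alpha=80-28r$ and $\beta=20r-40$ and bounding $5s(K)-4r\,\eta(K)$ component-by-component is just the paper's shifted-weight argument (with $s'(K)=s(K)+i(4r-8)$ for non-critical $K\in\mcK_i$ and $s'(K)=s(K)-i(16-\tfrac{28}{5}r)$ for critical $K$, then $s'(K)\le\tfrac{4r}{5}\eta(K)$) rescaled by a factor of $5$, and your identity $(7r-9)\beta=9\alpha$ is exactly the paper's identity $\frac{(7r-9)(4r-8)}{9}=16-\frac{28}{5}r$. The per-class verifications, the use of Lemma~\ref{lemma07}, and the threshold derivation from the defining equation of $r$ all coincide.
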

\begin{proof}
For each $i \in \{0, 3, 4, 5\}$ and each $K \in \mcK_i$, we define $s'(K) = s(K) + i(4r - 8)$. 
Similarly, for each $i \in \{1, 2\}$ and each $K \in \mcK_i \setminus \mcK_{i, c}$, we define $s'(K) = s(K) + i(4r - 8)$.
Moreover, for each $i \in \{1, 2\}$ and each $K \in \mcK_{i, c}$, we define $s'(K) = s(K) - i(16 - \frac {28}5 r)$.

We claim that $s'(K) \le \frac {4r}5 \cdot \eta(K)$ for every $K \in \mcK$.
If $K \in \mcK_0$, the claim holds because $s'(K) = s(K) \le \frac {15}8 \cdot\eta(K) < \frac {4r}5 \cdot \eta(K)$.
If $K \in \mcK_1 \setminus \mcK_{1, c}$, the claim holds because Statement~2 of Fact~\ref{fact03} 
implies that there is some $h \in \{ 0, 1, \ldots, 10\}$ such that 
\[
\frac {s'(K)}{\eta(K)} \preceq \frac {2h+8 + (4r-8)}{h+5} = \frac {2h+4r}{h+5} \le \frac {4r}5.
\]
If $K \in \mcK_2 \setminus \mcK_{2, c}$, the claim holds because Statement~3 of Fact~\ref{fact03} 
implies that there is some $h \in \{ 0, 1, \ldots, 8\}$ such that
\[
\frac {s'(K)}{\eta(K)} \preceq \frac {2h+16 + 2(4r-8)}{h+10} = \frac {2h+8r}{h+10} \le \frac {4r}5.
\]
If $K \in \mcK_3$, the claim holds because Statement~4 of Fact~\ref{fact03} implies that there is some $h \in \{ 0, 1, \ldots, 6\}$ such that 
\[
\frac {s'(K)}{\eta(K)} \preceq \frac {2h+24 + 3(4r-8)}{h+15} = \frac {2h+12r}{h+15} \le \frac {4r}5.
\]
If $K \in \mcK_4 \cup \mcK_5$, the claim holds because Statements~5 and~6 in Fact~\ref{fact03} imply that 
\[
\frac {s'(K)}{\eta(K)} \preceq \left\{ \frac {16r}{20}, \frac {16r+6}{24}, \frac {16r+8}{28}, \frac {20r}{25}, \frac {20r+4}{33} \right\} \preceq \frac {4r}5.
\]
If $K \in \mcK_{1,c} \cup \mcK_{2, c}$, the claim holds because Statements~2 and~3 in Fact~\ref{fact03} imply that 
\[
\frac {s'(K)}{\eta(K)} \preceq \left\{ \frac {4r}5, \frac {14r+5}{20}, \frac {28r-5}{40}, \frac {56r}{85} \right\} \preceq \frac {4r}5
\]
Thus, the claim always holds.

By Lemma \ref{lemma07}, to complete the proof, it suffices to show that $|V(M_C)| \le \frac {4r}5\cdot alg(G)$. 
To this end, first note that $\frac {(7r-9)(4r - 8)}9 = 16 - \frac {28}5 r$, because $r$ is the root to the equation $35x^2 - 52x - 90 = 0$.
Since $\frac {\sum_{i=1}^5 i |\mcK_i|}{|\mcK_{1, c}| + 2|\mcK_{2, c}|} > \frac {7r}9$, we have 
\begin{eqnarray}
\label{eq01}
& & \sum_{i \in \{0, 3, 4, 5\}} \sum_{K \in \mcK_i} i(4r-8) + \sum_{i \in \{1, 2\}} \sum_{K \in \mcK_i \setminus \mcK_{i, c}} i(4r-8) \nonumber \\
& = & (4r-8) \left (\sum_{i=1}^5 i|\mcK_i| -  |\mcK_{1,c}| - 2 |\mcK_{2, c}|\right) \nonumber \\
& \ge & \frac {(7r-9)(4r-8)}9 \left (|\mcK_{1,c}| + 2 |\mcK_{2, c}|\right) \nonumber \\
& = & (16 - \frac {28}5 r) \left (|\mcK_{1,c}| + 2 |\mcK_{2, c}|\right).
\end{eqnarray}
Since $|V(M_C)| = \sum_{K \in \mcK} s(K)$, we now have
\begin{eqnarray*}
|V(M_C)| & = & \sum_{i \in \{0, 3, 4, 5\}} \sum_{K \in \mcK_i} s(K) + \sum_{i \in \{1, 2\}} \sum_{K \in \mcK_i \setminus \mcK_{i, c}} s(K)
	 + \sum_{i \in \{1, 2\}} \sum_{K \in \mcK_{i, c}} s(K) \\
& = & \sum_{i \in \{0, 3, 4, 5\}} \sum_{K \in \mcK_i} s'(K) + \sum_{i \in \{1, 2\}} \sum_{K \in \mcK_i \setminus \mcK_{i, c}} s'(K)
	 + \sum_{i \in \{1, 2\}} \sum_{K \in \mcK_{i, c}} s(K) \\
&   & - \left( \sum_{i \in \{0, 3, 4, 5\}} \sum_{K \in \mcK_i} i(4r-8) + \sum_{i \in \{1, 2\}} \sum_{K \in \mcK_i \setminus \mcK_{i, c}} i(4r-8) \right) \\
& \le & \sum_{i \in \{0, 3, 4, 5\}} \sum_{K \in \mcK_i} s'(K) + \sum_{i \in \{1, 2\}} \sum_{K \in \mcK_i \setminus \mcK_{i, c}} s'(K)
	 + \sum_{i \in \{1, 2\}} \sum_{K \in \mcK_{i, c}} \left (s(K) - i(16-\frac {28}5 r) \right) \\
& = & \sum_{K \in \mcK} s'(K) \le \frac {4r}5 \sum_{K \in \mcK} \eta(K) 
= \frac {4r}5\cdot alg(G),
\end{eqnarray*}
where the first inequality follows from Eq.(\ref{eq01}), the second inequality follows from the above claim, 
and the last equality holds because of Step~5 of our algorithm.
\end{proof}

\begin{theorem}
\label{thm01}
The algorithm is an $O(n^{2.5} m^2)$-time $r$-approximation algorithm for $MPC^{5+}_v$, where $n = |V(G)|$, $m = |E(G)|$ and $r = \frac {26 + \sqrt{3826}}{35} < 2.511$.
\end{theorem}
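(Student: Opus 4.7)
The plan is to prove the theorem by strong induction on $n = |V(G)|$, handling correctness and running time separately. For the base case $n \le 5$, Step~0 computes an optimal solution by brute force in $O(1)$ time. For the inductive step, I would consider which branch the algorithm enters, and show in each branch that $opt(G) \le r \cdot alg(G)$, assuming the inductive hypothesis $opt(G_c) \le r \cdot alg(G_c)$ when a recursive call is made in Step~6.

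For the non-recursive branch (Step~5), there are two sub-cases. If no component of $H+C$ is critical, then Lemma~\ref{lemma28} directly gives $opt(G) < \frac{75}{32} alg(G) < r \cdot alg(G)$, and we are done. If there is a critical component but $\frac{\sum_{i=1}^5 i|\mcK_i|}{|\mcK_{1,c}|+2|\mcK_{2,c}|} > \frac{7}{9}r$, then Lemma~\ref{lemma29} directly gives $opt(G) \le r \cdot alg(G)$. The key algebraic fact that $r$ is the positive root of $35x^2 - 52x - 90 = 0$ (i.e.\ $\frac{(7r-9)(4r-8)}{9} = 16 - \frac{28r}{5}$) is exactly what makes Lemma~\ref{lemma29}'s bookkeeping close, so the two definitions of $r$ in Step~5 and the theorem statement match.

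For the recursive branch (Step~6), we have at least one critical component and $\frac{\sum_{i=1}^5 i|\mcK_i|}{|\mcK_{1,c}|+2|\mcK_{2,c}|} \le \frac{7}{9}r$. Let $|R_c| = |\mcK_{1,c}| + 2|\mcK_{2,c}|$, since each $K \in \mcK_{i,c}$ contributes exactly $i$ critical anchors by Statement~1 of Fact~\ref{fact02}. Each $P_v$ computed in Step~6.2 is a $7^+$-path (Remark~\ref{remark01}, since $v \in T_2$), and the paths $\{P_v : v \in R_c\}$ are pairwise vertex-disjoint and entirely contained in $R_c \cup U_c$, hence disjoint from $ALG(G_c) \subseteq V(G_c) = V(G) \setminus (R_c \cup U_c)$. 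Therefore $alg(G) \ge alg(G_c) + 7|R_c|$. Combining with the induction hypothesis and Lemma~\ref{lemma27}:
\begin{align*}
r \cdot alg(G) &\ge r \cdot alg(G_c) + 7r|R_c| \ge opt(G_c) + 7r|R_c|\\
&\ge opt(G) - 9\sum_{i=1}^5 i|\mcK_i| + 7r|R_c|.
\end{align*}
It suffices to show $7r|R_c| \ge 9\sum_{i=1}^5 i|\mcK_i|$, which is exactly the branch condition. The main delicacy is that the same constant $r$ simultaneously (i) lets Lemma~\ref{lemma29} close with slack $(4r-8)$ per non-critical anchor and deficit $(16-\frac{28r}{5})$ per critical anchor, and (ii) lets the recursion absorb the $9\sum i|\mcK_i|$ loss in Lemma~\ref{lemma27} via the $7r|R_c|$ gain; both demands reduce to the same quadratic $35r^2 - 52r - 90 = 0$, which is where the specific value of $r$ originates.

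For the running time, Steps 1--3 cost $O(\sqrt{n}\,m) + O(n^{1.5} m^2) + O(mn\log n) = O(n^{1.5} m^2)$ by Lemmas~\ref{lemma02} and~\ref{lemma06}. Step~4 performs $O(n)$ operations by Lemma~\ref{lemma26}, each implementable in $O(nm)$ time by scanning $E(G)$ for an applicable edge. Step~5.1 takes $O(n)$ time since $|V(\widetilde K)| \le 55$ by Lemma~\ref{lemma10}. The recursion in Step~6 decreases $|V(G)|$ by at least $|R_c| + |U_c| \ge 5$ per call (each critical anchor removes itself plus at least a bi-star of four vertices), so the recursion depth is $O(n)$. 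Multiplying the per-call cost $O(n^{1.5} m^2)$ by $O(n)$ recursion levels gives the claimed $O(n^{2.5} m^2)$ total. The most delicate part is verifying the per-operation cost in Step~4, since a naive accounting could blow up; but because Operations~1--3 each modify only constantly many rescue edges, detecting applicability reduces to scanning edges incident to critical satellite-elements, which is polynomial and subsumed by the $O(n^{1.5} m^2)$ already contributed by Step~2.
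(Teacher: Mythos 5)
Your proof is correct and follows essentially the same approach as the paper: induction on $n$ with the base case handled by brute force, the non-recursive branch closed via Lemmas~\ref{lemma28} and~\ref{lemma29}, and the recursive branch closed by combining $alg(G)\ge alg(G_c)+7|R_c|$ with Lemma~\ref{lemma27} and the branch condition. The only cosmetic difference is that you chain the inequalities directly to reduce to $7r|R_c|\ge 9\sum_i i|\mcK_i|$, whereas the paper applies the mediant inequality $\frac{a+b}{c+d}\le\max\{\frac{a}{c},\frac{b}{d}\}$ to the same pair of bounds; both lead to the same conclusion, and your version sidesteps the mild degeneracy issue of $alg(G_c)$ possibly being zero.
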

\begin{proof} 
For each $i\in\{1,2,3\}$, performing Operation~$i$ takes $O(1)$ time and it takes $O(n+m)$ time to check whether Operation~$i$ is applicable. 
So, by Lemmas~\ref{lemma02} and~\ref{lemma06}, it is not difficult to see that other than the recursive call, each step of the algorithm takes $O(n^{1.5} m^2)$ time.
Since the recursion depth is $O(n)$, the total running time is $O(n^{2.5} m^2)$.

We claim that the approximation ratio achieved by the algorithm is $r$. 
We prove the claim by induction on $n$. 
In the base case, $n \le 5$ and the algorithm outputs an optimal solution of $G$, implying that the claim holds.
So, suppose that $n \ge 6$.
By Lemmas~\ref{lemma28} and~\ref{lemma29}, we may further assume that there exists at least one critical component in $H+C$ and 
$\frac {\sum_{i=1}^5 i |\mcK_i|}{|\mcK_{1, c}| + 2|\mcK_{2, c}|} \le \frac {7r}9$.
Then, $alg(G) \ge 7 \left( |\mcK_{1, c}| + 2|\mcK_{2, c}| \right) + alg(G_c)$ by Step~6 of the algorithm.
Moreover, by the inductive hypothesis, $opt(G_c) \le r \cdot alg(G_c)$. 
Now, by Lemma~\ref{lemma27}, we finally have 
\[
\frac {opt(G)}{alg(G)} \le \frac {9\sum_{i=1}^5 i |\mcK_i| + opt(G_c)}{7\left( |\mcK_{1, c}| + 2|\mcK_{2, c}| \right) + alg(G_c)} 
\le \max \left\{ \frac {9 \sum_{i=1}^5 i |\mcK_i|}{7 \left(|\mcK_{1, c}| + 2|\mcK_{2, c}| \right)} , \frac {opt(G_c)}{alg(G_c)}\right\} \le r,
\]
which completes the proof.
\end{proof}

\section{Conclusion}
We studied the problem $MPC^{k+}_v$ to cover as many vertices as possible by a collection of vertex-disjoint $k^+$-paths in the graph,
and the current paper focuses on $k = 5$.
The main contribution is to extend the design and analysis techniques developed in~\cite{GCL23} to an improved $2.511$-approximation algorithm,
from previously best ratio of $2.714$~\cite{GFL22,GEF24},
among which the key idea is to keep most endpoints of a maximum matching in the solution since they contribute up to $\frac 45$ of the optimum.
The processes for achieving this goal are all technically non-trivial.

The presented algorithm design and analysis techniques work well for $k = 4$ and $5$,
and we believe they also work for some other small values of $k$.
Extending the techniques for the general $k$ could be an interesting pursuit.
It would also be theoretically interesting to prove some inapproximability results for $MPC^{k+}_v$.


\begin{thebibliography}{10}

\bibitem{AN07}
K.~Asdre and S.~D. Nikolopoulos.
\newblock A linear-time algorithm for the $k$-fixed-endpoint path cover problem
  on cographs.
\newblock {\em Networks}, 50:231--240, 2007.

\bibitem{AN10}
K.~Asdre and S.~D. Nikolopoulos.
\newblock A polynomial solution to the $k$-fixed-endpoint path cover problem on
  proper interval graphs.
\newblock {\em Theoretical Computer Science}, 411:967--975, 2010.

\bibitem{BK06}
P.~Berman and M.~Karpinski.
\newblock 8/7-approximation algorithm for (1,2)-{TSP}.
\newblock In {\em Proceedings of SODA 2006}, pages 641--648, 2006.

\bibitem{CCC18}
Y.~Cai, G.~Chen, Y.~Chen, R.~Goebel, G.~Lin, L.~Liu, and An~Zhang.
\newblock Approximation algorithms for two-machine flow-shop scheduling with a
  conflict graph.
\newblock In {\em Proceedings of COCOON 2018}, LNCS 10976, pages 205--217,
  2018.

\bibitem{CCL22}
Y.~Chen, Y.~Cai, L.~Liu, G.~Chen, R.~Goebel, G.~Lin, B.~Su, and A.~Zhang.
\newblock Path cover with minimum nontrivial paths and its application in
  two-machine flow-shop scheduling with a conflict graph.
\newblock {\em Journal of Combinatorial Optimization}, 43:571--588, 2022.

\bibitem{CGL19a}
Y.~Chen, R.~Goebel, G.~Lin, B.~Su, Y.~Xu, and A.~Zhang.
\newblock An improved approximation algorithm for the minimum $3$-path
  partition problem.
\newblock {\em Journal of Combinatorial Optimization}, 38:150--164, 2019.

\bibitem{Gab83}
H.~N. Gabow.
\newblock An efficient reduction technique for degree-constrained subgraph and
  bidirected network flow problems.
\newblock In {\em Proceedings of STOC'83}, pages 448--456, 1983.

\bibitem{GW20}
R.~G{\'{o}}mez and Y.~Wakabayashi.
\newblock Nontrivial path covers of graphs: Existence, minimization and
  maximization.
\newblock {\em Journal of Combinatorial Optimization}, 39:437--456, 2020.

\bibitem{GCL23}
M.~Gong, Z.-Z. Chen, G.~Lin, and L.~Wang.
\newblock An approximation algorithm for covering vertices by $4^+$-paths.
\newblock In {\em Proceedings of COCOA 2023}, LNCS 14461, pages 459--470.

\bibitem{GEF24}
M.~Gong, B.~Edgar, J.~Fan, G.~Lin, and E.~Miyano.
\newblock Approximation algorithms for covering vertices by long paths.
\newblock {\em Algorithmica}, 86:2625--2651, 2024.

\bibitem{GFL22}
M.~Gong, J.~Fan, G.~Lin, and E.~Miyano.
\newblock Approximation algorithms for covering vertices by long paths.
\newblock In {\em Proceedings of MFCS 2022}, LIPIcs 241, pages 53:1--53:14,
  2022.

\bibitem{HHS06}
D.~Hartvigsen, P.~Hell, and J.~Szab\'{o}.
\newblock The $k$-piece packing problem.
\newblock {\em Journal of Graph Theory}, 52:267--293, 2006.

\bibitem{KLM22}
K.~Kobayashi, G.~Lin, E.~Miyano, T.~Saitoh, A.~Suzuki, T.~Utashima, and
  T.~Yagita.
\newblock Path cover problems with length cost.
\newblock In {\em Proceedings of WALCOM 2022}, LNCS 13174, pages 396--408,
  2022.

\bibitem{KLM23}
K.~Kobayashi, G.~Lin, E.~Miyano, T.~Saitoh, A.~Suzuki, T.~Utashima, and
  T.~Yagita.
\newblock Path cover problems with length cost.
\newblock {\em Algorithmica}, 85:3348--3375, 2023.

\bibitem{MV80}
S.~Micali and V.~V. Vazirani.
\newblock An {$O(\sqrt{|V|} |E|)$} algorithm for finding maximum matching in
  general graphs.
\newblock In {\em Proceedings of FOCS 1980}, pages 17--27, 1980.

\bibitem{PH08}
L.~L. Pao and C.~H. Hong.
\newblock The two-equal-disjoint path cover problem of matching composition
  network.
\newblock {\em Information Processing Letters}, 107:18--23, 2008.

\bibitem{RTM14}
R.~Rizzi, A.~I. Tomescu, and V.~M{\"a}kinen.
\newblock On the complexity of minimum path cover with subpath constraints for
  multi-assembly.
\newblock {\em BMC Bioinformatics}, 15:S5, 2014.

\end{thebibliography}

\end{document}